\DeclareMathOperator{\var}{Var}
\DeclareMathOperator{\tr}{tr}
\newcommand{\Z}{\mathbb{Z}}
\newcommand{\R}{\mathbb{R}}
\newcommand{\Q}{\mathbb{Q}}
\renewcommand{\P}{\mathbb{P}}
\newcommand{\E}{\mathbb{E}}
\newcommand{\ve}{\varepsilon}
\newtheorem{thm}{Theorem}[section]
\newtheorem{claim}[thm]{Claim}
\newtheorem{lem}[thm]{Lemma}
\newtheorem{remm}[thm]{Remark}
\newtheorem{deffo}[thm]{Definition}
\newtheorem{prop}[thm]{Proposition}
\newtheorem{ex}[thm]{Example}
\numberwithin{equation}{section}
\title[Nonstationary localization and unique continuation on $\Z^2$]{Localization and unique continuation for non-stationary Schr\"odinger operators on the 2D lattice}
\author{Omar Hurtado}
\address{University of California, Irvine\\
ohurtad1@uci.edu}
\newcounter{cnstcntC}
\newcounter{cnstcntc}
\begin{document}
\begin{abstract}
	We extend methods of Ding and Smart in \cite{Ding-Smart} which showed Anderson localization for certain random Schr\"odinger operators on $\ell^2(\Z^2)$ via a quantitative unique continuation principle and Wegner estimate. We replace the requirement of identical distribution with the requirement of a uniform bound on the essential range of potential and a uniform positive lower bound on the variance of the variables giving the potential. Under those assumptions, we recover the unique continuation and Wegner lemma results, using Bernoulli decompositions and modifications of the arguments therein. This leads to a localization result at the bottom of the spectrum. 
\end{abstract}
\maketitle
\section{Introduction}
\subsection{Main results}
The tight binding Anderson model in dimension $d$ is a random Schr\"odinger operator acting on $\ell^2(\Z^d)$ of the form
\begin{equation}\label{schrod} H =  -\Delta + V \end{equation} where $\Delta$ is the discrete Laplacian
\[[\Delta\psi](n) = \sum_{|m-n|=1} (\psi(m)-\psi(n)) \]
and $V = (V_n)$ is a random potential, with $V_n$ (usually) independent and identically distributed. Such random operators model the movement of an electron through disordered materials. There are related continuum models, where the finite difference operator $-\Delta$ is replaced with the usual negative Laplacian, and $V$ is a random potential in $L^\infty(\R^d)$, see e.g. \cite{Bourgain2005}. (We will not make this precise, as we do not examine continuum models in any great detail here.)

In this paper we will extend methods introduced by Ding and Smart in \cite{Ding-Smart} in their breakthrough work proving localization at the bottom of the spectrum for the two-dimensional Anderson Bernoulli model to produce unique continuation and Wegner-type estimates on the probability of resonances in a broader setting than originally considered in \cite{Ding-Smart}. Specifically, throughout we will assume the following on the family of real valued distributions $V = (V_n)_{n\in\Z^2}$:
\begin{enumerate}[label=(\Roman*)]\label{helo}
	\item The $V_n$ are jointly independent \label{ind}
	\item There is a real number $M > 0$ such that  $\P[0 \leq V_n \leq M] = 1$ for all $n \in \Z^2$ \label{unifbdd}
	\item The variables have uniformly positive variance, i.e. $\inf_{n\in\Z^2} \var V_n > 0$ \label{unifvar}
\end{enumerate}
Under these assumptions, we can recover the major results of \cite{Ding-Smart}, which were originally formulated for $V_n$ i.i.d. non-trivial and bounded. (By non-trivial, we mean supported on at least two points.) We obtain a unique continuation result \Cref{ucucthmfinal} analogous to \cite[Theorem 3.5]{Ding-Smart}. We present here a variant which is easier to formulate, and let $\mathcal{E}_{uc}(\Lambda,\alpha, \ve)$ denote the event that for a box $\Lambda \subset \Z^2$, the hypotheses
\begin{equation}
	\begin{cases}
		|\overline{E} - E| \leq e^{-C(L\log L)^{1/2}}\\
		H\psi = E\psi \text{ in }\Lambda\\
		|\psi| \leq 1 \text{ in a } 1-\ve(L\log L)^{-1/2} \text{ fraction of }\Lambda
	\end{cases}
\end{equation}
taken together imply $|\psi| \leq e^{CL\log L}$. And so by contrapositive if $\|\psi\|_{\ell^\infty(\Lambda)} = 1$, there is a $\ve L^{3/2-}$ sized set where $|\psi| \geq e^{-CL\log L/2}$ (if our potential is such that this event holds). This is a property of the potential and thus a random event. The simplified variant of our two dimensional probabilistic unique continuation result is as follows:
\begin{thm}
	Let $V = (V_n)_{n\in\Z^2}$ be a random potential satisfying conditions \ref{ind}, \ref{unifbdd}, and \ref{unifvar}. 
	Then for all sufficiently small $\ve$, there is a corresponding $\alpha > 1$ such that for any fixed $\overline{E} \in [0,8+M]$ and square $\Lambda \in \Z^2$ with side length $L \geq \alpha$, the unique continuation event $\mathcal{E}_{uc}$ defined above satisfies the bound
	\[ \P[\mathcal{E}_{uc}] \geq 1 - e^{-\ve L^{1/4}}\]
	
\end{thm}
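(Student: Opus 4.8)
Before the author's proof, here is the plan I would follow; the statement is a convenient repackaging of the unique continuation Theorem~\ref{ucucthmfinal}, and the substance is showing how the Ding--Smart machinery survives the passage from an i.i.d.\ potential to one satisfying only \ref{ind}, \ref{unifbdd}, \ref{unifvar}.

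\textbf{Step 1 (Bernoulli extraction and conditioning).} By \ref{unifbdd} each $V_n$ is supported in $[N,M]$, and by \ref{unifvar} there is $\sigma^2>0$ with $\var V_n\ge\sigma^2$ for all $n$. Apply the Bernoulli decomposition lemma of Aizenman--Germinet--Klein--Warzel to the law of each $V_n$ and form the product over $n$ (legitimate by \ref{ind}): on an enlarged probability space we realize $V_n=a_n+t_n\ve_n$, where $(\ve_n)_{n\in\Z^2}$ are jointly independent $\mathrm{Bernoulli}(1/2)$ variables, $(a_n,t_n)$ is independent of $(\ve_m)_m$, and $0\le t_n\le M-N$. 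The quantitative half of the lemma, together with the uniform variance and support bounds, yields constants $\delta_0>0$ and $p_0>0$ depending only on $N,M,\sigma$ such that $\P[t_n\ge\delta_0]\ge p_0$ for every $n$. Writing $\omega'=(a_n,t_n)_n$ for the environment and fixing the box $\Lambda$ of side $L$, let $\mathcal A_\Lambda$ be the event that $\#\{n\in\Lambda:t_n\ge\delta_0\}\ge\tfrac{p_0}{2}|\Lambda|$; since the events $\{t_n\ge\delta_0\}$ are independent with probability at least $p_0$, a Chernoff bound gives $\P[\mathcal A_\Lambda]\ge 1-e^{-cL^2}$.

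\textbf{Step 2 (the conditioned model).} Condition on $\omega'\in\mathcal A_\Lambda$. Then $H=-\Delta+V$ on $\Lambda$ is a discrete Schr\"odinger operator with a fixed bounded background $a=(a_n)$ plus a genuine Bernoulli perturbation $t_n\ve_n$ supported on a set $S\subseteq\Lambda$ with $|S|\ge\tfrac{p_0}{2}|\Lambda|$, each coupling lying in $[\delta_0,M-N]$. I would then argue that the Ding--Smart unique continuation argument applies to every such operator, uniformly in $\omega'\in\mathcal A_\Lambda$ and in $\overline E\in[N,8+M]$. That argument rests on (i) a deterministic quantitative unique continuation (``three-region'') estimate for solutions of $Hu=Eu$ on subsets of $\Z^2$, which sees the potential only through the bound $\|V\|_{\ell^\infty}\le\max(|N|,|M|)$ and the energy only through the compact window $[0,8]+[N,M]$; and (ii) a probabilistic resampling/flipping step in which a random subset of the $\ve_n$ with $n\in S$ is re-drawn and (i) is used to show that the event in the definition of $\mathcal E_{uc}$ fails only with stretched-exponentially small probability. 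Ingredient (ii) uses only that there are $\gtrsim L^2$ independent Bernoulli variables available, each with coupling bounded below by $\delta_0$; carrying the $\delta_0$-dependence and the positive-density (rather than full) disorder through the entropy count alters only the implied constants, so the balance between the admissible perturbation scale $e^{-C(L\log L)^{1/2}}$, the admissible growth $e^{CL\log L}$, and the Bernoulli entropy still yields a bound of the form $\P[\mathcal E_{uc}\mid\omega']\ge 1-e^{-2\ve L^{1/4}}$ for all small $\ve$ and all $L\ge\alpha(\ve)$.

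\textbf{Step 3 (removing the conditioning) and the main obstacle.} Integrating over $\omega'$,
\[ \P[\mathcal E_{uc}^{\,c}]\le\P[\mathcal A_\Lambda^{\,c}]+\sup_{\omega'\in\mathcal A_\Lambda}\P[\mathcal E_{uc}^{\,c}\mid\omega']\le e^{-cL^2}+e^{-2\ve L^{1/4}}\le e^{-\ve L^{1/4}} \]
for $\ve$ small and $L\ge\alpha$ enlarged if necessary; uniformity over $\overline E\in[N,8+M]$ is automatic since every constant above depends only on $N,M,\sigma$. The soft parts are Steps~1 and~3. The real work is Step~2: one must re-audit the Ding--Smart proof of \cite[Theorem~3.5]{Ding-Smart} line by line and verify that each estimate --- especially the quantitative unique continuation principle and the combinatorial resampling lemma responsible for the $L^{1/4}$ exponent --- continues to hold when the single i.i.d.\ Bernoulli coupling is replaced by site-dependent couplings $t_n\in[\delta_0,M-N]$ superimposed on an arbitrary bounded background $a_n$, with disorder present only on a positive fraction of sites. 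That bookkeeping is exactly the ``modifications of the arguments therein'' promised in the abstract, and it is where essentially all of the difficulty lies.
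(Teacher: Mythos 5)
Your route is genuinely different from the paper's, and it is the more complicated of the two. The paper does \emph{not} use Bernoulli decompositions to prove unique continuation; it reserves them for the Wegner estimate. For unique continuation it first converts the uniform variance bound into ``uniform anti-concentration'' via Paley--Zygmund (Proposition \ref{var2ac}): there are $\gamma,\rho>0$, depending only on $M$ and the variance bound, such that $\P[V_n\in I]\le 1-\rho$ for every interval $I$ of length $\gamma$ and every $n$. This is then fed \emph{directly} into the one probabilistic step of the key lemma (Lemma \ref{key}): smallness of $\psi_{s',t'}$ forces $V_{s'-1,t_0}$ into an interval whose length is below $\gamma$ once $b\log a$ is large, so each such event has conditional probability at most $1-\rho$ given the potential to the west, and Azuma's inequality for the resulting submartingale gives the stretched-exponential bound. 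No conditioning on a continuous background and no positive-density site set ever appear; the full-strength disorder at \emph{every} unfrozen site is what is used.

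Beyond being a detour, your Step 2 has concrete gaps. First, the event $\mathcal A_\Lambda$ (positive density of strongly-coupled sites in the whole box) is not the right event: the propagation step in the Ding--Smart argument runs along a single tilted diagonal $R_{[1,a],[t_0,t_0]}$ selected by a deterministic claim, and you need the strongly-coupled, unfrozen sites to have positive density on \emph{that} diagonal. Equivalently, the random set $\{n: t_n<\delta_0\}$ acts as an additional frozen set whose $\ve$-sparsity/regularity must be verified at every scale the covering and growth lemmas use (down to side length $\sim L^{1/4}$); your box-level Chernoff bound does not give this. Second, insisting on $p=1/2$ is generally incompatible with a uniform lower bound on the Bernoulli coupling --- for a two-point law $(1-\epsilon)\delta_0+\epsilon\delta_1$ the $p=\tfrac12$ decomposition has $Z(t)=0$ on most of $(0,1)$ --- which is precisely why the paper's Theorem \ref{decompgoodalt} lets $p$ vary over $[p_-,p_+]$ in exchange for $Z(t)\ge\iota$ for \emph{all} $t$; your hybrid claim ($p=\tfrac12$ together with $\P[t_n\ge\delta_0]\ge p_0$) is unproved and weaker than what you would need. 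Third, the entire substance of the theorem is deferred to ``re-audit Ding--Smart line by line''; the paper's point is that after the anti-concentration reformulation this audit is short and localized to a single inequality, whereas in your setup it would have to absorb the new random sparse set and site-dependent couplings. I would recommend abandoning the decomposition here and using anti-concentration directly, as the paper does.
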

Using our unique continuation result, we can obtain a certain variant of the Wegner-type estimate on the probability of a resonance, i.e. larger than expected resolvent at scale $L$ given the size of resolvents at smaller scales. Roughly the result says that if at scale $L$, eigenfunctions (of the finite volume operator on a box of length $L$) in a certain band of energies obey a certain quasi-localization condition then the probability of a resonance is of order $L^{-1/2}$. The precise result is \Cref{bigweg}.

Finally, this Wegner lemma allows the proof of certain resolvent bounds, which in turn imply a localization result for energies near $N$. Our notions of localization are the following:
\begin{deffo}
We say an operator $H$ acting on $\ell^2(\Z^2)$ is Anderson localized in an interval $I$ if it has no continuous spectrum in $I$ and moreover the eigenfunctions associated to eigenvalues in $I$ are exponentially decaying, i.e. for all such eigenfunctions $\psi$, there are positive $c$ and $C$ so that
\[ |\psi(n)| \leq Ce^{-c|n|}\]
\end{deffo}
\begin{deffo}We say a random operator $H$ acting on $\ell^2(\Z^d)$ is strongly dynamically localized (SDL) of order $(s_1,s_2)$ in expectation in $I$ if
		\[ \E\left[\sup_{t\in\R}\left\|\langle X \rangle^{s_1} e^{-itH}\chi_{I}(H)\delta_0\right\|^{s_2}\right] < \infty\]
		(where $\langle X \rangle$ is the multiplication operator \textit{}$\psi(n) \mapsto (n^2+1)^{1/2}\psi(n)$, essentially the position operator.)
\end{deffo}

We obtain two results concerning localization for non-stationary two dimensional random Schr\"odinger operators, the first implying the second.
\begin{thm}\label{sdlocalization}
	Let $V = (V_n)_{n\in\Z^2}$ be a random potential satisfying conditions \ref{ind}, \ref{unifbdd}, and \ref{unifvar}. Then for any positive $s_1$ and any sufficiently small positive $s_2$ depending on $s_1$, there is $E_0>0$ (depending on $s_1$, $s_2$, and the potential $V$)  such that $H$ is strongly dynamically localized in expectation in the interval $[N,N+E_0]$.
\end{thm}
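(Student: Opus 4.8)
The plan is to run a bootstrap multiscale analysis (MSA) in the style of Bourgain--Kenig and Ding--Smart, using the unique continuation theorem above together with the conditional Wegner estimate Lemma~\ref{bigweg} as the two inputs, and then to convert the scale-by-scale decay of finite-volume Green's functions that the MSA produces into the dynamical bound. I fix a slowly growing sequence of scales $L_{k+1}=\lfloor L_k^{\gamma}\rfloor$ with $\gamma>1$ chosen small enough to be compatible with the $(L\log L)^{1/2}$ and $L\log L$ exponents in the definition of $\mathcal E_{uc}$. Writing $G_{\Lambda}(E)=(H_{\Lambda}-E)^{-1}$ for the resolvent of the finite-volume restriction of $H$ to a box $\Lambda$, the inductive claim at scale $L_k$ is the usual one: there is a large fixed $p=p(s_1,s_2)$ such that, with probability $\ge 1-L_k^{-p}$, every box of side $L_k$ is ``good'' for all $E\in[N,N+E_0]$, i.e. $G_{\Lambda}(E)$ exists and its matrix elements decay like $e^{-m_k|y-z|}$ for $|y-z|\gtrsim L_k$, with decay rates $m_k$ bounded below along the scales; all probability bounds must be uniform in the box centre, and this uniformity is where \ref{unifbdd}--\ref{unifvar} get used throughout.

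\textbf{Initial scale.} Hypotheses \ref{unifbdd} and \ref{unifvar} give $\delta,q_0>0$, independent of $n$, with $\P[V_n\ge N+\delta]\ge q_0$ (a variable supported in $[N,M]$ with variance bounded below cannot concentrate near the endpoint $N$). A Lifshitz-tails--flavoured fluctuation estimate then controls $\P[\inf\mathrm{spec}(H_{\Lambda_{L_0}})\le N+\eta]$: a normalized near-ground state $\psi$ with $\langle\psi,H_{\Lambda_{L_0}}\psi\rangle\le N+\eta$ has both Dirichlet energy and $\sum_n(V_n-N)|\psi(n)|^2$ at most $\eta$, hence carries a definite fraction of its mass on some region of diameter $\gtrsim\eta^{-1/2}$ on which $\sum(V_n-N)$ is atypically small --- an event of probability $\le e^{-c\eta^{-1}}$ by independence and boundedness --- and one takes a union bound over the $\lesssim|\Lambda_{L_0}|$ such regions. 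Choosing $L_0$ large (as the MSA numerology, hence $p$, hence $s_1,s_2$, will demand) and then $E_0=E_0(L_0)\asymp(\log L_0)^{-1}$, the base-scale bad event has probability $\le L_0^{-p}$; on its complement $\mathrm{dist}\big(\mathrm{spec}(H_{\Lambda_{L_0}}),[N,N+E_0]\big)\ge\tfrac12E_0$, so a Combes--Thomas estimate yields exponential off-diagonal decay of $G_{\Lambda_{L_0}}(E)$, uniformly for $E\in[N,N+E_0]$, at rate $\gtrsim E_0$; over distances $\sim L_0$ this comfortably beats any stretched-exponential rate $e^{-L_0^{\beta}}$, $\beta<1$, so the base case holds.

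\textbf{Induction step.} Given the scale-$L_k$ claim, I would cover a box $\Lambda$ of side $L_{k+1}$ by boxes of side $L_k$; by the inductive hypothesis and a union bound, all but a negligible proportion of these sub-boxes are good for every $E\in[N,N+E_0]$, and the geometric resolvent identity assembles their resolvents into a decaying resolvent for $\Lambda$ \emph{unless} a resonance occurs, i.e. unless $\mathrm{spec}(H_{\Lambda})$ approaches the energy window to within $e^{-C(L_{k+1}\log L_{k+1})^{1/2}}$. To control resonances I use Lemma~\ref{bigweg} through a Bernoulli decomposition $V_n=a_n+b_n\omega_n$ --- with $b_n$ bounded below uniformly in $n$ thanks to \ref{unifvar} --- conditioning on $(a_n,b_n)_n$ and on all but a sparse set of ``free'' Bernoulli variables $\omega_n$: the inductive hypothesis is exactly the quasi-localization of scale-$L_k$ eigenfunctions that the lemma requires, and it then bounds the conditional probability of a resonance by $\lesssim L_{k+1}^{-1/2}$. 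The unique continuation theorem enters twice --- to guarantee that an approximate eigenfunction of $H_{\Lambda}$ cannot be small on the free-site region without being small on a definite fraction of $\Lambda$ (hence small everywhere, hence zero), so that the free sites genuinely ``see'' it, and to upgrade ``small on the good sub-boxes'' to ``small on all of $\Lambda$''. Combining the sub-box estimate, the resonance bound, and the probability $\ge 1-e^{-\ve L_{k+1}^{1/4}}$ that $\mathcal E_{uc}$ holds at scale $L_{k+1}$, one concludes that $\Lambda$ is good for all $E\in[N,N+E_0]$ with probability $\ge 1-L_{k+1}^{-p}$, the only real loss being the $L^{-1/2}$ from Lemma~\ref{bigweg}, which $\gamma$ and $p$ are chosen to absorb.

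\textbf{From the MSA to SDL, and the main difficulty.} The MSA output --- exponential (or at worst stretched-exponential) decay of all finite-volume Green's functions in $[N,N+E_0]$, with probabilities $\ge 1-L_k^{-p}$ and uniform in the box centre --- is precisely the hypothesis of the standard passage from MSA to dynamical localization (the Germinet--Klein bootstrap, or an eigenfunction-correlation/SUDEC argument), which gives pure point spectrum in $[N,N+E_0]$ with exponentially decaying eigenfunctions and, quantitatively, $\E\big[\sup_{t}\|\langle X\rangle^{s_1}e^{-itH}\chi_{[N,N+E_0]}(H)\delta_0\|^{s_2}\big]<\infty$ as soon as $\sum_k L_k^{\,C+s_1 s_2}\,L_k^{-p}<\infty$, i.e. as soon as $s_2<(p-C)/s_1$; since $p$ may be taken as large as we wish (at the cost of a larger $L_0$ and a smaller $E_0$), this holds for all sufficiently small $s_2$ depending on $s_1$, which is the claim. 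I expect the main obstacle to be the induction step, and specifically running the \emph{conditional} Wegner estimate inside the MSA: one must keep a legitimate reservoir of free Bernoulli sites at every scale, check that the quasi-localization hypothesis of Lemma~\ref{bigweg} really is delivered by the previous scale, and verify that the weak $L^{-1/2}$ resonance bound --- far weaker than a classical Wegner estimate --- still closes the probabilistic bookkeeping when paired with the sub-exponential $\mathcal E_{uc}$ estimate and the chosen scale growth. Pervasive throughout is the need to keep every estimate quantified uniformly in $n$, which is the sole purpose of hypotheses \ref{unifbdd} and \ref{unifvar}.
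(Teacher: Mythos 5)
Your overall architecture (initial scale estimate, then a Bourgain--Kenig-style induction driven by the unique continuation theorem and the conditional Wegner estimate, then a conversion of finite-volume resolvent decay into SDL) is the same as the paper's, and the induction step as you sketch it is essentially what is carried out in Appendices \ref{detmsasec} and \ref{msasec}. But there is a genuine gap in your probabilistic bookkeeping: you assert that the scale-$L_k$ goodness probability can be made $\geq 1-L_k^{-p}$ with $p=p(s_1,s_2)$ \emph{arbitrarily large}, and you close the final summation $\sum_k L_k^{C+s_1s_2}L_k^{-p}<\infty$ on that basis. This is false for this model. The Wegner input, Lemma \ref{bigweg}, bounds the conditional probability of a resonance only by $L_0^{C\ve-1/2}$; this $L^{-1/2+}$ loss recurs at \emph{every} scale and cannot be absorbed by adjusting $\gamma$ or $L_0$, which is why the paper's resolvent bound (Theorem \ref{endofmsa1}/\ref{endofmsa2}, and property (\ref{conditionalgood}) in Theorem \ref{msa}) only asserts goodness with probability $1-L^{-\gamma}$ for $\gamma$ strictly less than $1/2$. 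With $p<1/2$ your summability criterion $s_2<(p-C)/s_1$ is vacuous, so the naive MSA-to-SDL passage does not close.

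This is exactly the known obstruction for singular noise: the weak single-scale estimates forced Bourgain--Kenig to design a modified MSA, and the derivation of dynamical localization in expectation from such sub-$1/2$ polynomial estimates is the nontrivial content of Germinet--Klein \cite{GK2012} in the continuum and of Rangamani--Zhu \cite{rz2023dynamical} in the discrete setting. The paper proves Theorem \ref{sdlocalization} by establishing Theorem \ref{endofmsa2} via the MSA and then invoking \cite{rz2023dynamical} as a black box. To repair your argument you must either cite that result or reproduce its (Borel--Cantelli along scales, eigenfunction-correlator) argument adapted to probabilities $1-L^{-\gamma}$, $\gamma<1/2$; the standard summation you propose is not available.
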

It was shown by Germinet and Klein \cite{GK2012} that SDL in expectation follows from the resolvent bounds obtained via the Bourgain and Kenig MSA in \cite{Bourgain2005}, but this was in the continuum context. Rangamani and Zhu in \cite{rz2023dynamical} extended this work to the discrete context, so that in particular \Cref{sdlocalization} is a corollary of the aforementioned resolvent bounds.  As was mentioned previously, dynamical localization in expectation implies weaker forms of localization; this was first proved in \cite{ks81}; one can also see e.g. \cite{del1996operators} for further discussion of the relationship between spectral and dynamical notions of localization. As a consequence, from the dynamical result we also obtain the following spectral result:
\begin{thm}\label{andlocal}
	Let $V = (V_n)_{n\in\Z^2}$ be a random potential satisfying conditions \ref{ind}, \ref{unifbdd}, and \ref{unifvar}. Then $H$ is almost surely Anderson localized in $[N,N+E_0]$, where $E_0$ is the same as in \Cref{sdlocalization}.
\end{thm}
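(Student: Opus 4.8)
The plan is to deduce Theorem \ref{andlocal} from Theorem \ref{sdlocalization} together with the multiscale resolvent bounds that underlie it, using only soft functional-analytic input for the spectral statement and the standard eigenfunction-decay iteration for the exponential bound. I would begin by upgrading strong dynamical localization in expectation from $\delta_0$ to every $\delta_j$, $j\in\Z^2$: the resolvent estimates produced by the Bourgain--Kenig multiscale analysis (fed by the non-stationary Wegner estimate, Lemma \ref{bigweg}, and the unique continuation theorem, both of which apply to arbitrary boxes $\Lambda\subset\Z^2$) are available for boxes centered at any lattice point, so the Germinet--Klein \cite{GK2012} / Rangamani--Zhu \cite{rz2023dynamical} transfer yields
\[
	\E\!\left[\sup_{t\in\R}\big\|\langle X\rangle^{s_1}e^{-itH}\chi_{[N,N+E_0]}(H)\delta_j\big\|^{s_2}\right]<\infty
\]
for each $j$, with a constant allowed to depend on $j$. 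By Markov's inequality and countable intersection over $j\in\Z^2$, on a set of full measure $\sup_t\|\langle X\rangle^{s_1}e^{-itH}\chi_I(H)\delta_j\|<\infty$ for all $j$, where $I=[N,N+E_0]$.

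On that event I would then rule out continuous spectrum in $I$. For each $j$ the bound forces $\sup_t\|\mathbf{1}_{\{|X|>R\}}\,e^{-itH}\chi_I(H)\delta_j\|=O(R^{-s_1})\to 0$ as $R\to\infty$, so the orbit $\{e^{-itH}\chi_I(H)\delta_j:t\in\R\}$ is tight; being norm-bounded as well, it is precompact in $\ell^2(\Z^2)$, hence $\chi_I(H)\delta_j$ lies in the point-spectral subspace $\mathcal{H}_{\mathrm{pp}}(H)$ by the standard dynamical characterization of $\mathcal{H}_{\mathrm{pp}}$ (RAGE/Ruelle). Since $\{\delta_j\}_{j\in\Z^2}$ is an orthonormal basis, $\{\chi_I(H)\delta_j\}$ spans $\mathrm{ran}\,\chi_I(H)$, so $\mathrm{ran}\,\chi_I(H)\subseteq\mathcal{H}_{\mathrm{pp}}(H)$ and $H$ has pure point spectrum in $I$ almost surely.

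To obtain the exponential decay of the corresponding eigenfunctions — which is not a formal consequence of the polynomially weighted bound above — I would return to the multiscale resolvent estimates directly. Almost surely (via Borel--Cantelli on the scale-$L$ good-box events) every energy in $I$ admits a telescoping chain of annuli on which the finite-volume resolvents are controlled, and the usual Schur-test/iteration argument then shows that any polynomially bounded generalized eigenfunction with eigenvalue in $I$ decays at least sub-exponentially; the standard band-edge bootstrap, available since the relevant multiscale exponent stays bounded below near $N$, sharpens this to $|\psi(n)|\le Ce^{-c|n|}$, exactly as in \cite{Bourgain2005} and its discrete adaptation \cite{rz2023dynamical}. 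Together with the preceding paragraph this gives that $H$ is almost surely Anderson localized in $[N,N+E_0]$.

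I expect the only step that genuinely departs from the i.i.d.\ treatment of \cite{Ding-Smart} to be the first one: without stationarity one cannot transport localization from the origin to an arbitrary site by translation covariance, so one must instead exploit that the probabilistic resolvent estimates — ultimately the Wegner and unique continuation inputs established earlier in this paper — hold uniformly in the location of the box. Granting that (the substantive work having already been done in the earlier sections), the present theorem is essentially an assembly of Theorem \ref{sdlocalization}, the RAGE theorem, and the classical eigenfunction-decay machinery, and I do not anticipate further obstacles.
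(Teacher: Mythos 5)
Your proposal matches the paper's route: the paper derives Theorem \ref{andlocal} exactly as you do, treating the absence of continuous spectrum as a soft consequence of strong dynamical localization (Theorem \ref{sdlocalization}) and deferring the exponential eigenfunction decay to the resolvent bounds of Theorem \ref{endofmsa1} via the machinery of \cite{Bourgain2005} and \cite{Ding-Smart}, with no further details supplied. The one sentence to be careful with is the claim that Borel--Cantelli on the scale-$L$ good-box events controls \emph{every} energy in $I$ --- those events are defined for a fixed $\overline{E}$ and cannot be union-bounded over the uncountable interval, so one must invoke the Bourgain--Kenig elimination-of-energy / double-box argument (which you do cite) rather than a literal Borel--Cantelli over scales.
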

As we mentioned previously, we do not prove localization directly; instead \cite{rz2023dynamical} allows us to reduce the problem to showing certain bounds on the resolvent of finite volume truncations. It is worth emphasizing that Anderson localization was obtained from such resolvent estimates by the earlier paper \cite{Ding-Smart}, and this used the general strategy of \cite{Bourgain2005}, so that in particular \Cref{andlocal} does not require the methods introduced by \cite{rz2023dynamical}. The necessary resolvent bounds are the following:
\begin{thm}\label{endofmsa1}
	Given a random potential satisfying \ref{ind}, \ref{unifbdd} and \ref{unifvar}, for any $0 < \gamma < 1/2$, there are $\alpha > 1 > \ve > 0$ and $E_0>0$ such that for every $0\leq \overline{E} \leq E_0$ and every square $\Lambda \subset \Z^2$ satisfying $L \geq \alpha$, we have the bound
	\[ \P[|(H_\Lambda-\overline{E})^{-1}(x,y)| \leq e^{L^{1-\ve}-\ve|x-y|}\text{ for all }x,y \in \Lambda] \geq 1-L^{-\gamma}\]
	where $H_\Lambda$ is the truncation of $H$ to a box $\Lambda \subset \Z^2$, and again $L$ is the side length.
\end{thm}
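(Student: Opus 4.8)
\medskip
\noindent\emph{Proof strategy.} The plan is to prove Theorem~\ref{endofmsa1} by a multiscale analysis (MSA) of Bourgain--Kenig type \cite{Bourgain2005}, following the lattice adaptation of Ding--Smart \cite{Ding-Smart}, with the unique continuation theorem (Theorem~\ref{ucucthmfinal}) and the Wegner estimate (Lemma~\ref{bigweg}) as the two external inputs. Fix $0 < \gamma < 1/2$. One proves, by strong induction on the side length $L$, the statement
\[ \P\bigl[\, |(H_\Lambda - \overline{E})^{-1}(x,y)| \le e^{L^{1-\ve} - \ve|x-y|} \text{ for all } x,y\in\Lambda \,\bigr] \ge 1 - L^{-\gamma}, \qquad N \le \overline{E} \le N+E_0, \]
uniformly over squares $\Lambda$ of side $L$. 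In the inductive step, one covers a square $\Lambda$ of side $L$ by overlapping subboxes of side $\ell \approx L^{\theta}$ for a fixed $\theta \in (0,1)$ close to $1$; the inductive hypothesis provides, for each such subbox, a resolvent with the decay above at scale $\ell$ except on an event of probability $\le \ell^{-\gamma}$. The base case, for scales $\ell$ below some $L_0 = L_0(\ve, E_0)$, is handled separately; $\alpha$ is then taken larger than $L_0$ and large enough to absorb the finitely many inequalities between powers of $L$ that appear.

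For the inductive step, call a subbox \emph{good} if the scale-$\ell$ resolvent estimate holds on it and \emph{bad} otherwise, and call $\Lambda$ \emph{non-resonant} if the distance from $\overline{E}$ to the spectrum of $H_\Lambda$ exceeds $e^{-L^{1-\ve}}$. The deterministic core is the usual geometric-resolvent-identity iteration: if $\Lambda$ is non-resonant and its bad subboxes are few enough to be confined to a small region (in particular if there are no two disjoint bad subboxes), then chaining the resolvent identity along a corridor of good subboxes from $x$ to $y$ produces the bound $|(H_\Lambda - \overline{E})^{-1}(x,y)| \le e^{L^{1-\ve}-\ve|x-y|}$, the factor $e^{L^{1-\ve}}$ entering once through non-resonance when the chain can no longer be continued. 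On the probabilistic side there are two bad events to exclude. The first, the existence of two disjoint bad subboxes, sits on disjoint collections of sites, so by independence~\ref{ind} and the inductive hypothesis it has probability $\lesssim (L/\ell)^{4}\,\ell^{-2\gamma}$, which is $\le \tfrac12 L^{-\gamma}$ once $\theta$ is close enough to $1$. The second, the resonant event, is controlled by Lemma~\ref{bigweg}: the inductive hypothesis forces any eigenfunction of $H_\Lambda$ with eigenvalue near $\overline{E}$ to decay away from the few bad subboxes, which is exactly the quasi-localization input that Lemma~\ref{bigweg} requires, and the lemma then bounds the resonant probability by $\lesssim L^{-1/2} \le \tfrac12 L^{-\gamma}$ since $\gamma < 1/2$. (It is worth noting that the strength of Lemma~\ref{bigweg} itself rests on the unique continuation theorem, which guarantees that such eigenfunctions cannot collapse onto a negligible fraction of $\Lambda$ and hence feel enough of the randomness.) Adding the two estimates yields the scale-$L$ bound.

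For the base case one uses that $-\Delta_\Lambda \ge 0$ in any of the standard truncations, so $H_\Lambda \ge N$ as an operator; a resonance at a small scale $\ell$ with $\overline{E} \in [N, N+E_0]$ therefore forces the ground-state energy of $H_\Lambda$ to lie below $N + E_0 + e^{-\ell^{1-\ve}}$. A standard initial-length-scale (Lifshitz-tails / Temple-type) argument using~\ref{unifbdd} and~\ref{unifvar} shows this is very unlikely: condition~\ref{unifvar} together with $V_n \ge N$ forces $\E V_n$ to be bounded away from $N$ uniformly in $n$, so a near-ground-state would have to concentrate on an atypically low region of the potential, an event of probability exponentially small in $\ell$ (in particular $\le \ell^{-\gamma}$) provided $E_0$ is chosen small enough; off-diagonal decay at the initial scale then follows from a Combes--Thomas estimate.

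The main obstacle is the interaction between the weakness of the Wegner bound and the bookkeeping of the induction. On $\Z^2$ the Wegner estimate of Lemma~\ref{bigweg} is only polynomially small, of order $L^{-1/2}$, rather than the (sub)exponentially small error available in higher dimensions or for regular potentials, so the MSA has essentially no margin: the exponents $\theta$, $\ve$, $\gamma$ and the scale recursion must be tuned so that, after summing the probability of two disjoint bad subboxes and the resonant probability, the total error is still at most $L^{-\gamma}$, and it is this constraint --- inherited from \cite{Ding-Smart} --- that determines how small $\ve$ and how close to $1/2$ one may take $\gamma$, as well as forcing $\theta$ near $1$. A secondary point, specific to the present non-stationary setting, is that translation covariance is no longer available, so one cannot speak of ``the probability that a box of side $\ell$ is good'' as a single number and must instead carry every estimate uniformly over the location of the box; this costs nothing extra at the level of the MSA because the two inputs, Theorem~\ref{ucucthmfinal} and Lemma~\ref{bigweg}, were proved via Bernoulli decompositions using only the uniform hypotheses~\ref{unifbdd} and~\ref{unifvar}, but it does require care in keeping the quantifiers in order.
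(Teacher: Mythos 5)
Your high-level plan (MSA with Theorem~\ref{ucucthmfinal} and Lemma~\ref{bigweg} as external inputs, a Lifshitz-tails initial scale estimate, and a deterministic resolvent-chaining step) is the right one and matches the paper's Appendices~\ref{detmsasec} and~\ref{msasec}. But there is a genuine gap in how you propose to feed the induction into the Wegner estimate. Lemma~\ref{bigweg} does not merely ask that eigenfunctions near $\overline{E}$ ``decay away from the few bad subboxes'': its hypothesis \ref{glocal} requires $\|\psi\|_{\ell^2(\Lambda)}\le(1+L_0^{-\delta})\|\psi\|_{\ell^2(G)}$ for a set $G$ with $|G|\le L_0^\delta$, i.e.\ essentially all the eigenfunction mass must sit on a set of \emph{polynomially small} cardinality. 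A one-scale-down induction with subboxes of side $\ell\approx L^\theta$, $\theta$ close to $1$, only confines the mass to the union of bad subboxes at scale $\ell$, a region of area $\gtrsim L^{2\theta}$ --- far too large. This is why the paper's Theorem~\ref{msa} tracks \emph{hereditary} bad subsquares nested over $\tilde{M}$ scales, with $\tilde{M}$ chosen via (\ref{emmbound}) so that the bottom scale $L_{k-\tilde M}$ is about $L_k^{\delta}$, and bounds their number by a constant $N$ (Claim~\ref{readyhi}); the quasi-localization onto $G_-$ (Claim~\ref{quasilocal2}) is then an iterated statement across all $\tilde M$ levels, not a single application of the inductive hypothesis. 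Relatedly, the paper's bookkeeping tolerates up to $N$ bad subsquares per scale (handled by the covering Lemma~\ref{coveringlemma} and the weighted-graph argument of Lemma~\ref{detmsa}), rather than forbidding two disjoint bad subboxes.

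The second omission is the frozen-site formalism. Both of your inputs are stated as conditional estimates $\P[\,\cdot\mid V_F=v]$ for an $\eta$-regular frozen set $F$, precisely because the MSA must not resample sites whose values were already used to certify goodness at smaller scales; the paper constructs an increasing sequence $F_k$ of frozen sets, proves they stay $\eta_k$-regular with $\eta_k<\ve$, and makes the goodness of a box a $V_{F_k\cap\Lambda}$-measurable event so that the independence and Sperner arguments inside Lemma~\ref{bigweg} remain valid. Your proposal invokes independence across disjoint subboxes and applies the Wegner lemma without specifying $F$ at all; as written, the conditioning performed at scale $\ell$ would contaminate the probability space used at scale $L$, and the hypotheses of Lemma~\ref{bigweg} (in particular the $\eta$-regularity of $F$ and the inclusion of the defect squares in $F$) would not be verified. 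Your base case is fine in spirit --- the paper's Lemma~\ref{rnet} is exactly the Lifshitz-tails/Combes--Thomas mechanism you describe, with the net taken inside the deterministic initial frozen set --- and your observation that \ref{unifvar} bounds $\E V_n$ away from $N$ is correct, but the inductive step needs the nested-scale and frozen-site machinery to close.
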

We note that many authors take as part of their definition of localization the supposition that the spectrum of the operator in question intersects the interval non-trivially, for the simple reason that localization is true vacuously over all intervals intersecting the spectrum trivially. We will not do this in order to state our results more succinctly. However, in the full generality where we only assume the conditions \ref{ind}, \ref{unifbdd}, and \ref{unifvar} hold, there is no guarantee that these spectral results are not vacuous in this sense. Much of what can usually be expected in the stationary case, e.g. a more or less deterministic spectrum and explicit description thereof as a finite union of intervals, is not necessarily true here. An example of some strange spectral phenomena for non-stationary potentials in one dimension is presented in \cite[Appendix A]{gorodetski2024nonstationary}. However, there is a wide class of potentials for which our results are non-vacuous, and where we really do obtain ``localization at the bottom of the spectrum''; this is the content of \Cref{suffcond}. (Roughly speaking, what is necessary is that $N$ is in the essential range of our potential at most sites in a uniform way.)

We also mention briefly that the restriction to $[0,E_0]$ comes from the availability in that energy range of an ``initial scale estimate'' which serves as the base case of an inductive argument. It is expected that, at least in the stationary (i.i.d. regime), localization holds throughout the spectrum. The only obstacle to proving this using the methods of Ding and Smart is the initial scale estimate; the same is true for our work. In any energy range where one can prove the appropriate estimate, one obtains localization. We discuss this further in \Cref{ltail}.

\subsection{Background} The study of localization for random Schr\"odinger operators is very well developed at this point, initiated (as far as mathematical results are concerned) by the celebrated paper \cite{GolMolPas77} of Gol'dsheid-Molchanov-Pastur. We refer the reader to e.g. \cite{JitErg}, \cite{awbook15} for comprehensive accounts. Our results, like those of Ding and Smart in \cite{Ding-Smart} and those also of Li and Li and Zhang in \cite{Li20} and \cite{Li-Zhang} respectively, concern localization for random operators where the random potential is singular and where the dimension $d > 1$.

Historically, the study of operators with singular noise has always proven technically challenging. Indeed, all the earliest localization results required regularity: for models of the form (\ref{schrod}) the first results appear in \cite{ks81}, and required that the noise be absolutely continuous, with bounded and compactly supported density. Soon after, Fr\"ohlich and Spencer introduced the multiscale analysis (MSA) in \cite{FS83}, which has since been a central method in the study of random Schr\"odinger operators, and shortly there after Fr\"ohlich, Martinelli, Scoppola and Spencer were able to use this method in \cite{Frohlich1985} to prove localization (in the ``expected'' regimes of extreme energy or high disorder) for random operators in arbitrary dimension, provided the noise was regular. 

MSA is an inductive method and in principle can be used for any type of noise so long as certain estimates hold; however in general these estimate are much easier to obtain in the setting of regular noise. Another central method in the study of random Schr\"odinger operators, the fractional moment method (FMM) introduced by Aizenman and Molchanov in \cite{Aizenman1993}, is technically much simpler and in certain ways more flexible, but depends in a fundamental way on the noise being regular.

The first breakthrough for singular noise came in the seminal paper \cite{Carmona1987} by Carmona, Klein, and Martinelli, which proved localization throughout the spectrum in $d=1$. In the one-dimensional setting, they were able to show the requisite estimates hold using the transfer matrix method; in particular the estimates were consequences of results in the Furstenberg theory of random matrix products, detailed in e.g. \cite{bouglac85}. Since the original work in \cite{Carmona1987}, other proofs of this fact have been found; see e.g. \cite{Shubin1998}, \cite{SJZhu19}, \cite{GORODETSKI}  for work in this vein.

For a long time, localization in higher dimensions was inaccessible for the case of singular noise. In the landmark paper \cite{Bourgain2005}, Bourgain and Kenig considered random Schr\"odinger operators in the continuum setting with Bernoulli noise, and were able to show localization at the bottom of the spectrum. Bourgain and Kenig showed that a quantitative form of unique continuation principle held for eigenfunctions of this operator, which gave a lower bound on the magnitude of finite volume eigenfunctions. With this lower bound, it was possible to obtain a Wegner-type estimate via eigenvalue variation. Such an estimate is one of the two major ingredients of MSA, and the other necessary ingredient, the initial scale estimate, is easily available even for singular noise. Hence the work of Bourgain and Kenig was enough to show Anderson localization at the bottom of the spectrum for the model under consideration.

It must be noted that the resolvent estimates obtained by the methods of Bourgain and Kenig were weaker than those used in the strongest known variant of MSA at the time, the bootstrap MSA of Germinet and Klein \cite{Germinet2001}. And so besides producing this estimate the authors also developed a novel variant of the multi scale analysis compatible with their weaker estimate. Germinet and Klein give a comprehensive account in \cite{GK2012}, also extending the Bourgain and Kenig results in a few directions e.g. showing that the results held for arbitrary bounded singular noise, proving dynamical localization, and treating certain non-stationary potentials.

In the discrete case, this approach faced a considerable obstacle; even in a strictly qualitative sense, there is no unique continuation principle available in this setting for eigenfunctions of $H$. This fact was shown in e.g. \cite{JitErg}. Nevertheless, in \cite{Ding-Smart} it was shown that, with high probability, a certain analogue of this quantitative unique continuation principle holds, for $d=2$ specifically. That is, outside a small probability set of pathological configurations, the necessary lower bounds hold on a large enough subset of the space. This unique continuation principle, together with the new variant of multi scale analysis introduced in \cite{Bourgain2005}, enabled Ding and Smart to show Anderson localization for Schr\"odinger operators on $\Z^2$ with i.i.d. bounded potential.

The work of Ding and Smart has since been extended, to $d=3$ by Li and Zhang in \cite{Li-Zhang}, who produce a version of unique continuation suitable for the lattice $\Z^3$ and thereby showed localization near the bottom of the spectrum. In two dimensions, Li showed that for high disorder localization held outside finitely many small intervals, but only for certain kinds of Bernoulli noise \cite{Li20}. Our work is also an extension of the work of Ding and Smart in  \cite{Ding-Smart}, and our main  contribution is to introduce methods by which to treat potentials which are non-stationary. We believe the requirement that the $V_n$ have uniformly bounded essential range is not necessary; certainly one should have a potential bounded below for the notion of localization at the bottom of the spectrum to make sense but we expect that e.g. one sided Gaussian tails should pose no problem.

A lower bound on the variance is also an essential part of argument. Some control is necessary: if we consider e.g. $V_n$ = $e^{-|n|}\xi_n$, where $\xi_n$ is i.i.d. Bernoulli noise, then such operator surely has absolutely continuous spectrum in $[0,4]$; it is a trace class perturbation of $-\Delta$. It is possible that the situation is more complicated if the decay is slow, as shown in \cite{Delyon1985}, \cite{Kotani1988} in a one-dimensional context.

Because a strict lower bound on the variance is at the very least essential to our proof, we note that it is a ``one scale'' analogue of H\"older continuity of the noise.  This is in fact key to our argument; assuming that $N\leq V_n \leq M$ almost surely, $\var V_n \geq \sigma^2 > 0$ is equivalent to the existence of $\gamma > 0$ and $\rho>0$ so that
\begin{equation}\label{unifac} \sup_{n\in\Z^d, x\in \R} \P[V_n \in [x,x+\gamma]] \leq 1-\rho\end{equation}

Note that H\"older regularity in the context of random Schr\"odinger operators roughly amounts to the existence of $\kappa \in (0,1)$ such that
\begin{equation}\label{holderreg} \sup_{n\in\Z^d, x\in \R} \P[V_n \in [x,x+t]] = O(t^\kappa)\end{equation}
as $t \rightarrow 0$.

For treating non-stationary potentials, FMM is robust enough to recover our localization results if our uniform variance condition is replaced with the assumption of sufficient regularity, as discussed in e.g. \cite{awbook15}. Thus, roughly speaking, our main improvement is to show that one can replace (\ref{holderreg}) with (\ref{unifac}) in the non-stationary setting. To do this, we must use MSA and not FMM. (We note that this only works in the extreme energy regime; in the regular potential setting localization has also been obtained throughout the spectrum in the high disorder regime, which we do not achieve.)

Like the requirement of regularity, the requirement of stationarity is not essential to MSA, but the necessary estimates are often harder to obtain outside the stationary context. There has been significant study of decaying potentials; for potentials with ``slow decay'', localization has sometimes been obtained in e.g. \cite{Delyon1985}, \cite{Kotani1988}, \cite{FGKM}. There has also been study of Delone-Anderson models, where noise in only present at certain sites which are at large scales roughly spatially homogeneous, see e.g. \cite{GMR14}, \cite{Muller2022}. 

These models mentioned thus far are non-stationary, but are fundamentally different models. In ours, strength of the noise does not decay at infinity, and there is noise at every site. For such models in one dimension, the author showed that for ``mild'' non-stationarities which in some sense are spatially localized, usual transfer matrix arguments could be used to obtain localization in \cite{Hurtado23}. Very recently a significant advance was made in the one-dimensional case; Gorodetski and Kleptsyn have shown localization for a very wide class of non-stationary potentials in one dimension; in fact they obtain Anderson localization throughout the spectrum under the conditions \ref{ind}, \ref{unifbdd}, and \ref{unifvar} in \cite{gorodetski2024nonstationary}.  

We note that while our hypotheses on the potentials coincide, the methods we use for our two-dimensional results and which Gorodetski and Kleptsyn use to obtain their one-dimensional results are very different. Indeed, their work is based on the transfer matrix method, a central tool in the study of one-dimensional Schr\"odinger operators. Gorodetski and Kleptsyn leverage a non-stationary Furstenberg theorem proved by the same authors in \cite{gorodetski2023nonstationary} to obtain localization for the systems in question; these methods do not extend to the multi-dimensional setting.

\subsection{Strategy and organization of the paper} As mentioned previously our work roughly follows the strategy of \cite{Ding-Smart} which itself iterated on \cite{Bourgain2005}. The methods of Ding and Smart seem not to admit an easy adaptation to the assumptions we make. The Wegner estimate specifically has a strong combinatorial flavor, and at a glance the unique continuation exploits the stationary structure in a non-trivial way. What is necessary for both (though not sufficient for the Wegner estimate) is precisely the equivalence of condition \ref{unifvar} with (\ref{unifac}) (under the assumption of an almost sure bound). This fact is technically simple, following more or less immediately from a simple application of the second moment method. However, it is also surprisingly powerful and moreover crucial for our argument.

Once the problem has been reformulated in terms of ``uniform anti-concentration'', i.e. in terms of potentials satisfying (\ref{unifac}), unique continuation is obtained by a conceptually straightforward (if somewhat technically involved) implementation of the ideas in \cite[Section 3]{Ding-Smart}. This is not the case for the Wegner estimate: the Wegner estimate of Ding and Smart, \cite[Lemma 4.6]{Ding-Smart} relies very much on specific combinatorial bounds coming from Sperner theory. The ``resonant configurations'' are shown to satisfy a certain combinatorial condition, called the $\rho$-Sperner condition in \cite{Ding-Smart} and $\kappa$-Sperner in this work. In the Bernoulli i.i.d. context, Ding and Smart obtained a probabilistic bound for events with such combinatorial structure. Via the theory of Bernoulli decompositions (specifically results in \cite{Aizenman2009}) one can disintegrate general i.i.d. systems in such a way that it is possible to essentially reduce things to Bernoulli i.i.d., demonstrate the $\kappa$-Sperner condition for the set of resonant configurations, and leverage the existing bound.

Without stationarity, there is no hope of reducing to the Bernoulli i.i.d. case. However, we prove the existence of Bernoulli decompositions satisfying certain quantitative and effective bounds for variables which are almost surely bounded and anti-concentrated in \Cref{decompgoodalt}. Our result is essentially an effective version of the bounds in \cite[Remark 2.1]{Aizenman2009}\footnote{We thank Abel Klein for pointing this out to us.}. This allows us to essentially reduce things to the case of Bernoulli variables, independent and satisfying certain uniform bounds but not identically distributed. After we completed this manuscript, we became aware that the existence of such Bernoulli decompositions was used in \cite{cv}; to our knowledge a full proof of this theorem has not previously appeared in print.

The existing estimates on events with the $\kappa$-Sperner property did not suffice for our purposes, and so we needed to prove versions which worked for more general distributions. We used results of Yehuda and Yehudayoff regarding general product distributions on the discrete hypercube in \cite{yehuda2021slicing} to prove the necessary bound in \Cref{combbound}; said theorem plays the role of \cite[Theorem 4.2]{Ding-Smart} in our non-stationary context and controls the probability of resonances in the proof of the Wegner estimate. Our main technical result in this regard is \Cref{yylym}, which generalizes the Lubell-Yamamoto-Meshalkin type bound \cite[Theorem 3]{yehuda2021slicing}. Once this bound is attained, \Cref{combbound} follows by standard arguments. The decomposition result, together with the bounds so obtained, suffice to prove the Wegner estimate \Cref{bigweg} more or less along the lines laid out in \cite{Ding-Smart}. This summarizes the key technical novelties of our argument, the rest of the localization proof follows the same paper quite closely.

Our paper is organized as follows: In \Cref{prelim}, we introduce key definitions and reformulate our results in terms of uniform anti-concentration, proving the key equivalence between positive variance and said condition. We also discuss the conditions under which our results guarantee localization at the bottom of the spectrum, and introduce probabilistic and asymptotic notation used throughout the paper.

In \Cref{ucsec}, we prove a key lemma for a unique continuation result suitable for uniformly anti-concentrated distributions; after reformulating the problem in terms of uniform anti-concentration as mentioned above, it is a reasonably straightforward adaptation of the ideas in \cite[Section 3]{Ding-Smart}; there are nevertheless some technical details to be worked out. From the key lemma, unique continuation is proved in Appendix \ref{key2uc}; the latter half of the proof is moved to an appendix because our work here is primarily of a clarifying nature; we hew quite closely to \cite{Ding-Smart} for this portion of the proof.

In \Cref{antisec}, we prove the key probabilistic and combinatorial results necessary for the Wegner lemma. We first prove the existence of uniform Bernoulli decompositions for uniformly anti-concentrated variables, \Cref{decompgoodalt}, which may be of some independent interest. We then prove the necessary probabilistic estimates for $\kappa$-Sperner families, \Cref{combbound}.

In \Cref{wegnersec}, we prove a Wegner lemma, similar to the main result of \cite[Section 5]{Ding-Smart}. Our approach is again quite similar to \cite{Ding-Smart}; the main new idea here is the use of Bernoulli decompositions. The introduction of these decompositions introduces technical details not previously present. (We are not the first to use these decompositions in the study of random Schr\"odinger operators, see \cite[Theorem 4.2]{Aizenman2009}. Our use of the decompositions is similar in philosophy, but not particularly similar in the details.)

Finally, in Appendices \ref{detmsasec} and \ref{msasec}, we straightforwardly carry out the program of \cite[Sections 6-8]{Ding-Smart}, which itself iterated on the work in \cite{Bourgain2005} to prove the necessary resolvent bounds in \Cref{endofmsa1}.

\section*{Acknowledgments} We would like to thank Lana Jitomirskaya for suggesting the problem and for many helpful conversations. We would also like to thank Nishant Rangamani and Lingfu Zhang for reading earlier drafts and making helpful suggestions, and Abel Klein and Charles Smart for helpful discussions. Finally, we would like to thank Georgia Tech University and University of California, Berkeley for their hospitality; this work was completed in part during visits to these institutions. This work was supported in part by NSF DMS-2052899, DMS-2155211, and Simons
681675.

\section{Preliminaries}\label{prelim}
We will consider a random Schr\"odinger operator $H = -\Delta + V$, where $-\Delta$ is the discrete Laplacian and $V = \sum_{n\in\Z^2} V_n \delta_n$, with $V_n$ independent but not necessarily identically distributed random variables. Moreover, we make a uniform boundedness assumption that $0 \leq V_n \leq M$, for some $M$ uniform in $n$.

\begin{remm}
    Our choice of lower bound 0 is not so important; any fixed lower bound would in fact suffice and the general case is easily reduced to case of a non-negative potential by an additive normalization that does not meaningfully affect the spectral theory.
\end{remm}

Under our supposition of positivity, we say the $V_n$ are uniformly bounded if there is $M>0$ such that $0 \leq V_n \leq M$ almost surely for all $n$. In this case the spectrum is almost surely contained in $[0,8+M]$. An important concept for us is ``uniform anti-concentration'', which is a quantitative formulation of variables being sufficiently ``non-trivial''. 

\begin{deffo}
    We call a random variable $X$ anti-concentrated with gap $\gamma > 0$ and remainder $\rho > 0$ if
    \begin{equation}\label{unifnoncon} \inf_{\substack{t \in [0,M]\\n \in\Z^2}} \P\left[|X - t| > \frac{\gamma}{2}\right]  \geq \rho \end{equation}
\end{deffo}

\begin{deffo}\label{unifacdef}
We say the potential $V = \{V_n\}_{n\in\Z^2}$ is uniformly anti-concentrated with gap $\gamma> 0$ and remainder $\rho >0$ if
all $V_n$ are anti-concentrated with gap $\gamma$ and remainder $\rho$; more generally we will call a potential uniformly anti-concentrated if there are $\gamma > 0$ and $\rho > 0$ such that it is uniformly anti-concentrated with gap $\gamma$ and remainder $\rho$.
\end{deffo}
(Note that all of these definitions are essentially statements about the L\`evy concentration functions of the variables involved. Because these phenomena are not the main focus of the paper we chose to forgo explicit use of the concentration functions.) Most of our proofs for facts about uniformly anti-concentrated families ultimately amount to showing effective bounds in terms of the parameters $\rho$ and $\gamma$ for any $X$.  Already, we will demonstrate this strategy in proving the following simple but essential fact used throughout our work:
\begin{prop}
If $V_n$ are a family of random variables and there is some $M$ such that $0 \leq V_N \leq M$ almost surely for all $n$, the following are equivalent:
\begin{enumerate}\label{var2ac}
	\item There are $\gamma > 0$ and $\rho > 0$ such that $V_n$ are uniformly anti-concentrated with gap $\gamma$ and remainder $\rho$
	\item There is a uniform positive lower bound for $\text{Var }V_n$.
\end{enumerate}
\end{prop}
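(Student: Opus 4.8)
The plan is to prove each implication by an elementary second moment estimate, the key observation being that the almost sure bound $0 \le V_n \le M$ forces $\E V_n \in [0,M]$, so that $\E V_n$ is an admissible choice for the point $x$ appearing in Definition \ref{unifacdef}, and symmetrically forces $|V_n - x| \le M$ for every $x \in [0,M]$.

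For the implication $(1) \Rightarrow (2)$, I would fix $n$ and set $\mu_n = \E V_n \in [0,M]$. Applying uniform anti-concentration with $x = \mu_n$ gives $\P[|V_n - \mu_n| > \gamma/2] \ge \rho$, and then
\[ \var V_n = \E\big[(V_n - \mu_n)^2\big] \ge \E\big[(V_n-\mu_n)^2 \mathbf{1}_{|V_n - \mu_n| > \gamma/2}\big] \ge \frac{\gamma^2}{4}\,\P\Big[|V_n - \mu_n| > \frac{\gamma}{2}\Big] \ge \frac{\gamma^2 \rho}{4}, \]
which is the desired uniform lower bound.

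For the implication $(2) \Rightarrow (1)$, suppose $\var V_n \ge \sigma^2 > 0$ for all $n$. For any constant $x$ we have $\E[(V_n - x)^2] = \var V_n + (\mu_n - x)^2 \ge \var V_n$, while for $x \in [0,M]$, splitting according to whether $|V_n - x| \le \gamma/2$ and using $|V_n - x| \le M$ gives
\[ \E\big[(V_n - x)^2\big] \le \frac{\gamma^2}{4} + M^2\,\P\Big[|V_n - x| > \frac{\gamma}{2}\Big]. \]
Combining, $\P[|V_n - x| > \gamma/2] \ge M^{-2}\big(\var V_n - \gamma^2/4\big) \ge M^{-2}(\sigma^2 - \gamma^2/4)$. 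Taking $\gamma$ small enough that $\gamma^2/4 \le \sigma^2/2$ (e.g.\ $\gamma = \sigma$) and $\rho = \sigma^2/(2M^2)$ yields $\inf_{x \in [0,M],\, n} \P[|V_n - x| > \gamma/2] \ge \rho > 0$, i.e.\ uniform anti-concentration with gap $\gamma$ and remainder $\rho$.

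I do not expect any genuine obstacle here; the proof is essentially the second moment method in two lines each way. The only points requiring a little care are recognizing that the two-sided a.s.\ bound is exactly what allows the choice $x = \E V_n$ in the first direction and the uniform bound $|V_n - x| \le M$ in the second, and keeping explicit track of how $(\gamma,\rho)$ depends on $(\sigma, M)$, since the rest of the paper needs these constants to be effective.
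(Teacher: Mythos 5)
Your proof is correct, and the first implication is exactly the paper's (one-line) argument: apply anti-concentration at $x = \E V_n \in [0,M]$ and truncate the second moment to get $\var V_n \ge \rho\gamma^2/4$. For the converse, you and the paper both run a ``second moment method,'' but the mechanics differ: the paper applies the Paley--Zygmund inequality to $Z_r = |X-r|^2$, using $\E Z_r \ge \var X$ and $\var Z_r \le M^4$, which yields gap $\sigma/\sqrt{2}$ and remainder $\tfrac{1}{16}\sigma^4/(\sigma^4+M^4)$; you instead split $\E[(V_n-x)^2]$ directly according to whether $|V_n - x| \le \gamma/2$, using the deterministic bound $|V_n - x| \le M$ for $x \in [0,M]$, and solve for the tail probability. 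Your version is slightly more elementary (no Paley--Zygmund needed) and gives cleaner, quantitatively better constants ($\gamma = \sigma$, $\rho = \sigma^2/(2M^2)$), which is harmless for the rest of the paper since only positivity and explicit dependence on $(\sigma, M)$ are used downstream. The only thing worth flagging is that your truncation step genuinely needs the two-sided a.s.\ bound to control the large-deviation part by $M^2$, whereas Paley--Zygmund only needs a bound on $\var Z_r$; under the standing hypothesis $0 \le V_n \le M$ both are available, so there is no gap.
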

One direction is obvious; if the $V_n$ are all anti-concentrated with parameters $\rho$ and $\gamma$, then $\var V_n \geq \rho \gamma^2/4$. The other direction is a corollary of the following:
\begin{prop}\label{pzcor}
	Let $X$ be a random variable with $0 \leq X \leq M$ almost surely and $\var X \geq \sigma^2$. Then
	\[ \P\left[|X-r| \geq \frac{\sigma}{2}\right] \geq \frac{9}{16}\frac{\sigma^4}{\sigma^4+M^4}\]
	for any $r \in \R$.
\end{prop}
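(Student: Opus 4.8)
The statement is an anti-concentration estimate, and the natural tool is the second moment method (Paley--Zygmund) applied to the nonnegative variable $Y = (X-r)^2$: its mean is bounded below by the variance hypothesis, while its second moment $\E(X-r)^4$ is bounded above once we dispose of the case in which $r$ lies outside the range $[0,M]$ of $X$.

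\textbf{Step 1: reduce to $r\in[0,M]$.} If $r<0$ then $|X-r| = X-r \ge X = |X-0|$ pointwise, so $\P[|X-r|\ge t] \ge \P[|X-0|\ge t]$ for every $t$; likewise if $r>M$ then $|X-r| = r-X \ge M-X = |X-M|$, so $\P[|X-r|\ge t]\ge \P[|X-M|\ge t]$. Hence it suffices to prove the bound for $r\in[0,M]$, since $0,M\in[0,M]$. For such $r$ we moreover have $0\le (X-r)^2\le M^2$ almost surely.

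\textbf{Step 2: second moment method.} Fix $r\in[0,M]$, put $Y=(X-r)^2$ and $t=\sigma^2/8$. Then $\E Y = \E(X-r)^2 = \var X + (\E X-r)^2 \ge \var X \ge \sigma^2$, while $\E Y^2 = \E(X-r)^4 \le M^4$ since $0\le Y\le M^2$. Splitting the expectation and applying Cauchy--Schwarz on the tail,
\[ \sigma^2 \le \E Y = \E[Y;\, Y<t] + \E[Y;\, Y\ge t] \le t + \sqrt{\E Y^2}\,\sqrt{\P[Y\ge t]} \le \frac{\sigma^2}{8} + M^2\sqrt{\P[Y\ge t]}, \]
so $\sqrt{\P[Y\ge t]}\ge \tfrac{7\sigma^2}{8M^2}$ and therefore $\P[|X-r|\ge \tfrac{\sigma}{2\sqrt2}] = \P[Y\ge t] \ge \tfrac{49\sigma^4}{64M^4} \ge \tfrac{1}{16}\tfrac{\sigma^4}{\sigma^4+M^4}$, the final inequality being elementary. (One may equivalently invoke the Paley--Zygmund inequality directly with $\theta=t/\E Y\le \tfrac18$.)

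\textbf{Main obstacle.} There is no real difficulty; the only point needing care is the uniformity over all real $r$, which is exactly why the monotonicity reduction of Step 1 is convenient (without it $\E(X-r)^4$ is unbounded). I note the stated constant is far from sharp: keeping $\E Y^2 \le M^2\,\E Y$ instead of $\E Y^2\le M^4$ already yields a bound of order $\sigma^2/M^2$, and a symmetrization argument based on $\var X = \tfrac12\E[(X-X')^2]$ for an independent copy $X'$ gives essentially the same; but the weaker polynomial bound above is all that is needed in the sequel.
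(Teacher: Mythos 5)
Your proof is correct and takes essentially the same route as the paper: the second moment method applied to $(X-r)^2$, with the mean bounded below by $\var X$ and the second moment by $M^4$ (the paper invokes Paley--Zygmund directly with $\theta = 1/2$ rather than unwinding it via Cauchy--Schwarz). Your explicit Step 1 reducing to $r \in [0,M]$ is a welcome addition, since the paper's displayed chain is only stated for $r$ in that range and the extension to all real $r$ is left implicit there.
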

\begin{proof} 

First we recall the Paley-Zygmund inequality: for any non-negative random variable $X$ with finite second moment and any $\theta \in [0,1]$, we have
\begin{equation}\label{pz}
    \P[X > \theta \E[X]] \geq (1-\theta)^2\frac{\E[X]^2}{\E[X]^2 + \var X}
\end{equation}

For any $r \in \R$, let $Z_r = |X-r|^2$. $\E[|X-r|^2]$ is minimized by $r = \E X$, so that $\var X \leq \E[Z_r]$. The result follows from an application of the second moment method to the variables $Z_r$.

Applying (\ref{pz}) to $Z_r$, we obtain for all $r \in [0,M]$ and $\theta \in (0,1)$:
	\begin{align*}
		\P[Z_r > \theta \var X] &\geq \frac{(1-\theta)^2 \E[Z_r]^2}{\E[Z_r]^2 + \var Z_r}
        \end{align*}
        Now we consider the function $\frac{x}{x+y}$; note that for $x, y$ both positive, we have $\frac{\partial}{\partial x} \frac{x}{x+y} > 0$ and $\frac{\partial}{\partial y} \frac{x}{x+y} < 0$. In particular, replacing $\E[Z_r]^2$ with something smaller decreases the quantity, and replacing $\var Z_r$ with something larger also decreases the quantity. Hence we obtain:
        \begin{align*}
		\frac{(1-\theta)^2 \E[Z_r]^2}{\E[Z_r]^2 + \var Z_r} &\geq \frac{(1-\theta)^2 (\var X)^2}{(\var X)^2 + \var Z_r}\\
		&\geq \frac{(1-\theta)^2 \sigma^4}{\sigma^4+ M^4}
	\end{align*} As $Z_r > \theta \var X$ implies $|X-r| > \theta^{1/2}\sigma$, we get $\P[|X-r| > \theta^{1/2}\sigma] \geq  \frac{(1-\theta)^2\sigma^4}{\sigma^4 + M^4}$. We conclude by taking $\theta = 1/4$.
\end{proof}

Despite the fact that (in our uniformly bounded context) uniform anti-concentration and uniformly positive variances are equivalent, we introduce the former notion because the specific parameters $\gamma$ and $\rho$ will be used many places in our proofs. Hence going forward, our assumptions on the potential $(V_n)_{n\in\Z^2}$ are that
\begin{enumerate}[label=(\Roman*')]
	\item The $V_n$ are jointly independent\label{aprime}
	\item There is $M > 0$ such that $\P[0 \leq V_n \leq M] = 1$ for all $V_n$\label{bprime}
	\item There are $\gamma > 0$ and $\rho > 0$ such that the $V_n$ are uniformly anti-concentrated with gap $\gamma$ and remainder $\rho$.\label{cprime}
\end{enumerate}
So \Cref{sdlocalization} can be reformulated as follows:
\begin{thm} \label{localizationalt}
	Let $V = (V_n)_{n\in\Z^2}$ be a random potential satisfying conditions \ref{aprime}, \ref{bprime}, and \ref{cprime}. Then there is $E_0 > 0$ (depending on the joint distribution $V$) such that $H = -\Delta +V$ is strongly dynamically localized in expectation for energies in $[0,E_0]$.
\end{thm}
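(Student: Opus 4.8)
The plan is to deduce the statement from the finite-volume resolvent bounds of Theorem~\ref{endofmsa1}, and to establish those bounds by a multiscale analysis. For the first step I would invoke the discrete bootstrap of Rangamani and Zhu \cite{rz2023dynamical} (the lattice counterpart of Germinet--Klein \cite{GK2012}): once one knows that for every $\overline E\in[0,E_0]$ and every square $\Lambda$ of side $L\geq\alpha$ one has $\P[\,|(H_\Lambda-\overline E)^{-1}(x,y)|\leq e^{L^{1-\ve}-\ve|x-y|}\text{ for all }x,y\in\Lambda\,]\geq 1-L^{-\gamma}$ (any fixed $\gamma<1/2$ suffices, since the MSA upgrades this across scales), strong dynamical localization in expectation on $[0,E_0]$ follows. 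So the whole problem reduces to Theorem~\ref{endofmsa1}.

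For Theorem~\ref{endofmsa1} I would run the Bourgain--Kenig/Ding--Smart variant of MSA, an induction over scales whose two inputs are an initial scale estimate and a Wegner estimate. The initial scale estimate is the reason for the restriction to a neighborhood $[0,E_0]$ of the bottom of the spectrum: since $V_n\geq 0$ and $-\Delta\geq 0$, a low-lying eigenvalue of $H_\Lambda$ would force $V$ to be small on a large fraction of $\Lambda$, and uniform anti-concentration at $x=0$ (which gives $\P[V_n\leq\gamma/2]\leq 1-\rho$) makes this exponentially unlikely in $L$; this is exactly as in \cite{Bourgain2005,Ding-Smart}, and it fixes $E_0$ and $\ve$. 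The inductive step is driven by the Wegner estimate, and this is where conditions \ref{aprime}--\ref{cprime} are genuinely used.

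The Wegner estimate (Lemma~\ref{bigweg}) I would obtain in the Bourgain--Kenig spirit. First, the quantitative unique continuation principle (Theorem~\ref{ucucthmfinal}, informally the event $\mathcal E_{uc}$ above) guarantees that, off an event of probability at most $e^{-\ve L^{1/4}}$, a normalized eigenfunction of $H_\Lambda$ in the relevant energy window is bounded below on a set of size $\ve L^{3/2-}$; its proof is the adaptation of \cite[Section~3]{Ding-Smart} that is unlocked by rewriting condition \ref{unifvar} as the uniform anti-concentration condition \eqref{unifac} (Proposition~\ref{var2ac}). Granting that lower bound, first-order eigenvalue variation (the derivative of an eigenvalue in $V_n$ is $|u(n)|^2$) shows that a resonance forces the restriction of $V$ to that large set to lie in a configuration family with the $\kappa$-Sperner property. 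In the i.i.d.\ Bernoulli case Ding--Smart bounded the probability of such families by Sperner theory; without stationarity I would instead (i) produce for each $V_n$ a Bernoulli decomposition whose parameters are controlled effectively in terms of $\gamma$ and $\rho$ alone (Theorem~\ref{decompgoodalt}, an effective version of \cite[Remark~2.1]{Aizenman2009}), so that conditioning on the non-Bernoulli part leaves a \emph{non-identical but uniformly nondegenerate} independent Bernoulli family, and (ii) bound the probability of $\kappa$-Sperner families under general product measures on the hypercube using the slicing/LYM-type estimates of Yehuda--Yehudayoff \cite{yehuda2021slicing} (Lemma~\ref{yylym}, hence Lemma~\ref{combbound}). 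Combining the $\kappa$-Sperner bound with the unique continuation event yields a Wegner bound of order $L^{-1/2}$, which closes the multiscale induction and proves Theorem~\ref{endofmsa1}.

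The main obstacle is precisely the Wegner step in the non-stationary regime, i.e.\ the two new ingredients: constructing Bernoulli decompositions with parameters bounded uniformly in $n$ purely through $\gamma$ and $\rho$ (so that the conditioned Bernoulli variables stay uniformly nondegenerate), and upgrading the Sperner-type probabilistic bound from identical to arbitrary product measures on the cube. Everything else --- the unique continuation argument, the MSA bookkeeping, and the passage from resolvent bounds to dynamical localization --- is a faithful if technically involved transcription of \cite{Ding-Smart}, \cite{Bourgain2005}, and \cite{rz2023dynamical}.
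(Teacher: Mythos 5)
Your proposal is correct and follows essentially the same route as the paper: reduction to the finite-volume resolvent bounds via \cite{rz2023dynamical}, an initial scale estimate near the bottom of the spectrum, and a Wegner estimate built from the anti-concentration reformulation of the variance condition, the quantitative unique continuation principle, uniformly nondegenerate Bernoulli decompositions, and the $\kappa$-Sperner bound for general product measures derived from \cite{yehuda2021slicing}. You have also correctly identified the two genuinely new ingredients (Theorem \ref{decompgoodalt} and Lemma \ref{yylym}) as the crux of the non-stationary adaptation.
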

Throughout, we will take $\gamma$, $\rho$ and $M$ as fixed. In particular, many of our constants will have an implicit dependence on these; we will periodically recall this fact. As was mentioned in the introduction, our localization theorem is a consequence of work in \cite[Theorem 1]{rz2023dynamical} together with appropriate resolvent bounds. We briefly discussed this before, but we here define explicitly the truncated operators $H_\Lambda$; letting $P_\Lambda$ be the orthogonal projection onto the subspace spanned by $\delta_n$ for $n \in \Lambda$, $H_\Lambda := P_\Lambda H P_\Lambda$. Then the necessary result on resolvent bounds is the following:
\begin{thm}\label{endofmsa2}
	Given a random potential satisfying \ref{aprime}, \ref{bprime} and \ref{cprime}, for any $0 < \gamma < 1/2$, there are $\alpha > 1 > \ve > 0$ and $E_0>0$ such that for every $0\leq \overline{E} \leq E_0$ and every square $\Lambda \subset \Z^2$ with side length $L \geq \alpha$, we have the bound
	\[ \P[|(H_\Lambda-\overline{E})^{-1}(x,y)| \leq e^{L^{1-\ve}-\ve|x-y|}\text{ for all }x,y \in \Lambda] \geq 1-L^{-\gamma}\]
\end{thm}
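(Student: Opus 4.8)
The plan is to obtain Theorem \ref{endofmsa2} as the output of a multiscale analysis whose two inductive inputs are (i) a Wegner-type estimate controlling the probability of resonances at a given scale in terms of localization at smaller scales, and (ii) an initial scale estimate serving as the base case, with the MSA machinery itself being the adaptation of \cite[Sections 6--8]{Ding-Smart} to the non-stationary setting (carried out in the appendices). Concretely, I would first fix the energy window: since under \ref{bprime} the spectrum lies in $[0,8+M]$ and we want energies near the bottom, the base case is a large-deviation / Combes--Thomas style estimate showing that for $\overline{E}$ close to $0$, with overwhelming probability on a box of an appropriate initial scale $L_0$ the resolvent $(H_\Lambda - \overline{E})^{-1}$ decays exponentially; this uses only boundedness of the potential (the operator is nonnegative up to the $-\Delta$ part whose bottom is $0$, and $V \geq 0$), plus a soft probabilistic input that the potential is not atypically small on a macroscopic fraction of the box, which is where uniform anti-concentration \ref{cprime} enters to rule out the potential being essentially zero everywhere. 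This initial estimate determines $E_0$.

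Next, the heart of the inductive step is the Wegner estimate, Lemma \ref{bigweg}, which I would invoke essentially as a black box: given that at scales below $L$ eigenfunctions in the relevant energy band are quasi-localized (which is precisely the hypothesis propagated by the MSA), the probability that $H_\Lambda$ at scale $L$ has an eigenvalue within the required distance of $\overline{E}$ — equivalently, that $\|(H_\Lambda - \overline{E})^{-1}\|$ is too large — is bounded by something of order $L^{-1/2}$ (or any power less than $1/2$). This in turn rests on two ingredients proved earlier: the quantitative unique continuation theorem (giving a lower bound $|u| \geq e^{-CL\log L/2}$ on a set of size $\gtrsim \ve L^{3/2}$ for normalized eigenfunctions, which forces eigenvalues to move under perturbations of the potential on that large set), and the combinatorial/probabilistic machinery — the uniform Bernoulli decomposition Theorem \ref{decompgoodalt} together with the $\kappa$-Sperner bound Lemma \ref{combbound} — which replaces the Sperner-theoretic argument of \cite{Ding-Smart} and is exactly what allows the non-i.i.d.\ case to be handled. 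I would then feed the Wegner bound and the initial scale estimate into the Bourgain--Kenig-style MSA of the appendices: one sets up the notion of a ``good'' box (exponentially decaying resolvent off-diagonal), shows that a box at scale $L$ is good with probability $\geq 1 - L^{-\gamma}$ provided its sub-boxes at scale $\ell \sim L^{\sigma}$ (some $\sigma<1$) are good and no Wegner resonance occurs, via the resolvent identity / geometric resolvent expansion and a covering argument, and iterates from $L_0$ upward. The exponent $L^{1-\ve}$ in the statement (rather than genuine exponential decay) is the characteristic loss of the weak-estimate MSA and is what the induction naturally produces.

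The step I expect to be the main obstacle — and the one that justifies the appendices being long — is making the MSA induction close quantitatively with the non-stationary Wegner input: one must track how the quasi-localization hypothesis at scale $\ell$ is actually verified so that Lemma \ref{bigweg} applies, ensure the energy- and disorder-dependent constants ($\alpha$, $\ve$, $E_0$) can be chosen uniformly (they depend on $\gamma,\rho,M$ but must not degrade across scales), and handle the bookkeeping of ``bad'' sub-boxes so that the probabilistic estimate improves from $L^{-\gamma}$ at one scale to $L^{-\gamma}$ at the next (this requires the number of scales, the geometric decay of probabilities, and the Wegner power $1/2 > \gamma$ to be balanced correctly, as in \cite{Bourgain2005}). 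A secondary technical point is that, because the potential is not identically distributed, one cannot reuse translation-covariance to reduce to a single box; every estimate must be stated uniformly over the position of $\Lambda$ in $\Z^2$, which is automatic from conditions \ref{aprime}--\ref{cprime} being uniform in $n$ but requires care in the statements. Once these are in place, Theorem \ref{endofmsa2} follows, and Theorem \ref{endofmsa1} is the same statement after the harmless normalization identifying the lower bound of the potential with $0$.
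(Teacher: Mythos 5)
Your proposal is correct and follows essentially the same route as the paper: Theorem \ref{endofmsa2} is obtained from the MSA of Appendix \ref{msasec}, whose base case is the $R$-net initial scale estimate (Lemma \ref{rnet}, made probabilistic via the anti-concentration bound $\P[V_n \geq \gamma] \geq \rho$) and whose inductive step combines the Wegner estimate Lemma \ref{bigweg} with the deterministic resolvent expansion (Lemmas \ref{detmsa}, \ref{contresbound}, \ref{coveringlemma}), exactly as you outline. The only ingredient you leave implicit is the frozen-sites bookkeeping — the inductive construction of the sets $F_k$ with their $\eta_k$-regularity and $V_{F_{k-1}}$-measurability, which is what allows Lemma \ref{bigweg} and the unique continuation theorem to be applied conditionally at each scale — but this is part of the "closing the induction" step you correctly flag as the main technical burden.
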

Note that (by \Cref{pzcor}) this theorem is just \Cref{endofmsa1} reformulated. We now introduce a sufficient condition for our localization results to be non-vacuous, i.e. that $[0,E_0]$ contains the bottom of the spectrum.
\begin{prop}\label{suffcond}
    Fix $x \geq 0$. If, for any $\varepsilon > 0$, there is $\delta > 0$ such that $\mathbb{P}[x \leq V_n \leq x+\varepsilon] \geq \delta$ holds for a density 1 subset of $\mathbb{Z}^2$, then $[x,x+4] \subset \sigma(H)$ almost surely, so in particular if $x < E_0$,  $[x,\min\{x+4,E_0\}]$ is contained in the spectrum and $H$ is localized in this interval.
\end{prop}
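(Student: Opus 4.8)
The goal is to show that under the stated condition, $[0,4]$ lies in the almost-sure spectrum of $H$. The natural strategy is to use Weyl sequences: for each $\lambda \in [0,4]$ we must produce, almost surely, an approximate eigenfunction. Recall that $[0,4]$ is exactly the spectrum of $-\Delta$ on $\ell^2(\Z^2)$, since the symbol of $-\Delta$ is $4 - 2\cos\theta_1 - 2\cos\theta_2$ ranging over $[0,8]$ — wait, more carefully, the convention here gives $\widehat{-\Delta}(\theta) = 2(1-\cos\theta_1) + 2(1-\cos\theta_2) \in [0,8]$; so plane waves $e^{i n\cdot\theta}$ are generalized eigenfunctions at every value in $[0,8]$. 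To land in $[0,4]$ we restrict attention to $\theta$ with $\widehat{-\Delta}(\theta) \in [0,4]$, and in particular we can take $\theta$ close to $0$ to approximate energies near the bottom. The plan is to fix $\lambda \in [0,4]$, pick $\theta$ with $2(1-\cos\theta_1)+2(1-\cos\theta_2) = \lambda$, and consider truncated, normalized plane waves $\psi_L = c_L \chi_{\Lambda_L} e^{in\cdot\theta}$ on boxes $\Lambda_L$ of side $L$. Then $\|(-\Delta - \lambda)\psi_L\| = O(L^{-1/2}) \to 0$ by the standard boundary-term estimate.

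The remaining issue is the potential: $(H - \lambda)\psi_L = (-\Delta - \lambda)\psi_L + V\psi_L$, and $\|V\psi_L\|$ is roughly $(\text{average of } V_n^2 \text{ over } \Lambda_L)^{1/2}$, which need not be small. To fix this we should choose the boxes $\Lambda_L$ to sit in regions where $V$ is small. This is where the hypothesis enters. First I would use the density-1 set $S \subset \Z^2$ on which $\P[0 \le V_n \le \ve] \ge \delta$, together with independence, to argue via Borel–Cantelli (second, divergence form) that almost surely there exist arbitrarily large boxes $\Lambda$ on which $V_n \le \ve$ for \emph{every} $n \in \Lambda$. The point is that within a large region, the number of sites of $S$ is $\approx |\text{region}|$, and one can tile a large region into many disjoint candidate sub-boxes of fixed size $k\times k$; the events ``$V \le \ve$ on all of $S \cap (\text{sub-box})$, and the $O(1)$ non-$S$ sites in the sub-box are harmless'' — here one needs a little care since non-$S$ sites could have large potential. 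To handle this cleanly I would instead first observe that the density-1 condition lets us, for any fixed $k$, find arbitrarily large $k\times k$ boxes consisting \emph{entirely} of points of $S$ (since a density-1 set contains arbitrarily large such boxes — actually this requires an argument, e.g. if $S$ omitted a $k\times k$ box from every translate in some region its density would drop below $1$; more robustly, since the complement has density $0$, for $R$ large a fraction $\to 1$ of the $k\times k$ sub-boxes of $\Lambda_R$ miss the complement entirely). Then on such an all-$S$ box, the events $\{V_n \le \ve\}$ for $n$ in the box are independent each with probability $\ge \delta$, so the box is ``good'' with probability $\ge \delta^{k^2} > 0$; running over a growing disjoint collection of such boxes and applying Borel–Cantelli gives, almost surely, good boxes of every fixed size $k$ occurring somewhere, and then a diagonal argument over $k \to \infty$ (with $\ve = \ve_k \to 0$) produces the Weyl sequence.

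Assembling: on a good box $\Lambda$ of side $k$ with $V \le \ve$ throughout, set $\psi = c\,\chi_{\Lambda}e^{in\cdot\theta}$; then $\|(H-\lambda)\psi\| \le \|(-\Delta-\lambda)\psi\| + \|V\psi\| \le C k^{-1/2} + \ve$. Choosing a sequence of good boxes with $k \to \infty$ and the associated $\ve \to 0$, these $\psi$ form a Weyl sequence for $H$ at $\lambda$, witnessing $\lambda \in \sigma(H)$ almost surely. Since $[0,4]$ is separable, taking a countable dense set of energies $\lambda$ and intersecting the (countably many) full-measure events gives $[0,4] \subset \sigma(H)$ almost surely, and because the spectrum is closed this upgrades to the full interval. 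Combined with the a.s. inclusion $\sigma(H) \subset [0, 8+M]$ and the fact that $0 = \min[0,4]$, we conclude $0 = \min \sigma(H)$ a.s., so $[0,\tilde E_0]$ with $\tilde E_0 = \min\{E_0,4\}$ genuinely sits at the bottom of the spectrum.

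The main obstacle is the combinatorial/probabilistic step: extracting arbitrarily large boxes on which the potential is uniformly small, using only a density-1 condition on the sites where the small-potential event has probability bounded below. The subtlety is that density $1$ does not by itself force large all-$S$ boxes \emph{at a prescribed location}, so one must set up the disjoint-family-plus-Borel–Cantelli argument carefully (and confirm that a positive-density — indeed density-$1$ — set of $k\times k$ sub-boxes of a large box avoid the complement of $S$, which is the quantitative input needed to make the independent trials numerous enough for the divergence Borel–Cantelli to apply). Everything else is standard Weyl-sequence bookkeeping.
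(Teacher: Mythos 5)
Your proposal is correct and follows essentially the same route as the paper: find, via the density-$1$ hypothesis plus the second Borel--Cantelli lemma over infinitely many disjoint all-$S$ boxes, arbitrarily large squares on which $V \leq \ve$ uniformly, plant a compactly supported approximate eigenfunction of $-\Delta$ there, and conclude by a countable union bound over energies together with closedness of the spectrum. The only cosmetic difference is that you build the Weyl sequence from explicit truncated plane waves, whereas the paper invokes Weyl's criterion for $-\Delta$ abstractly to produce a compactly supported approximate eigenfunction; the probabilistic and combinatorial core is identical.
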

\begin{proof}
	Without loss of generality we will take $x=0$. Throughout this proof, $\|\cdot\|$ denotes the $\ell^2$ norm on $\ell^2(\Z^2)$. We let $q \in [0,4]$ be arbitrary; we will show $q \in \sigma(H)$ almost surely. By a union bound over all $q \in [0,4]\cap \Q$, the result follows. By Weyl's criterion and $q \in \sigma(-\Delta)$, there is for any $\ve > 0$, some $\psi \in \ell^2(\Z^2)$ so that $\|\psi\|=1$ and $\|(-\Delta-q)\psi\|\leq \ve$. Without loss of generality, $\psi$ is compactly supported, and we let $L := 2\,\mathrm{diam}(\mathrm{supp }\,\psi)$, where the diameter is with respect to the $\ell^\infty$ distance on $\Z^2$. In particular, $\mathrm{supp }\,\psi$ is contained within a square $\Lambda \subset \Z^2$ of side length $L$.
	
	Outside a density zero subset $B\subset \Z^2$, one has $\P[V_n \leq \ve/2] \geq \delta$. In particular, there are infinitely many disjoint squares of side length $L$ which do not intersect $B$; were this not the case, $B$ would have lower density at least $1/L^2$. On all the squares of side length $L$ intersecting $B$ trivially, the probability $V_n \leq \ve$ for all sites in the square is at least $\delta^{L^2}$.
	
	By the converse of Borel-Cantelli for independent events, there is almost surely a square of side length $L$ so that $V_n \leq \ve /2$ on the entire square. We let $T$ be a translation sending $\mathrm{supp }\, \psi$ to such a square. Letting $\psi' = \psi \circ T^{-1}$, $\mathrm{supp }\,\psi'$ is contained in such a square, and
	\[ \|(H-q)\psi'\|\leq \ve\]
	Because $\ve$ is arbitrary, $q \in \sigma(H)$ almost surely by Weyl's criterion.
\end{proof}
We now discuss some concrete examples of potentials satisfying the hypotheses of \Cref{localizationalt} and \Cref{suffcond}, i.e. potentials for which our work shows localization at the bottom of the spectrum.
\begin{ex}
	If the $V_n$ are bounded, i.i.d., and $0$ is the bottom of the essential range, we are naturally in this setting; however, our innovations are not needed to treat this case.
\end{ex}
\begin{ex}
	We can treat certain ergodic but non-stationary random potentials, for example periodically random potentials where all the distributions have zero as the bottom of their essential ranges. A small piece of one explicit example is shown in Figure \ref{fig:ergpot}.
	
\begin{figure}[h]\caption{A non-stationary ergodic potential localized at the bottom of the spectrum}\label{fig:ergpot}
	\begin{tikzpicture}[scale=.7]
		\filldraw[black] (0,0) circle (3pt) node{};
		\filldraw[black] (2,0) circle (3pt) node{};
		\filldraw[black] (0,2) circle (3pt) node{};
		\filldraw[black] (2,2) circle (3pt) node{};
		\filldraw[black] (1,1) circle (3pt) node{};
		\filldraw[black](1,3) circle (3pt) node{};
		\filldraw[black] (3,1) circle (3pt) node{};
		\filldraw[black] (3,3) circle (3pt) node{};
		\draw[black] (0,1) circle (3pt) node{};
		\draw[black] (0,3) circle (3pt) node{};
		\draw[black] (1,0) circle (3pt) node{};
		\draw[black] (1,2) circle (3pt) node{};
		\draw[black] (2,1) circle (3pt) node{};
		\draw[black] (2,3) circle (3pt) node{};
		\draw[black] (3,0) circle (3pt) node{};
		\draw[black] (3,2) circle (3pt) node{};
		\draw[white] (3,4) circle (3pt) node{};
		
		\filldraw[black] (5,1) circle (3pt) node[anchor=west]{\quad $\frac{1}{2}(\delta_0+\delta_1)$};
		\draw[black] (5,2) circle (3pt) node[anchor=west]{\quad Uniform on $[0,1]$};
	\end{tikzpicture}
\end{figure}
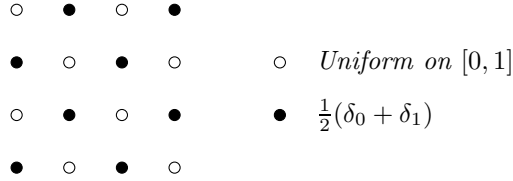
\end{ex}
\begin{ex}
	We can also treat non-ergodic models featuring e.g. an interface between two different types of noise. Again, we show a small piece of an example in Figure \ref{fig:nonergpot}.
	
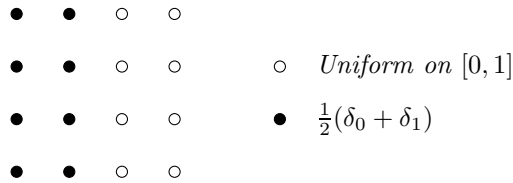
\begin{figure}[h]\caption{A non-ergodic potential localized at the bottom of the spectrum}\label{fig:nonergpot}
	\begin{tikzpicture}[scale = .7]
		\filldraw[black] (0,0) circle (3pt) node{};
		\filldraw[black] (0,1) circle (3pt) node{};
		\filldraw[black] (0,2) circle (3pt) node{};
		\filldraw[black] (0,3) circle (3pt) node{};
		\filldraw[black] (1,0) circle (3pt) node{};
		\filldraw[black](1,1) circle (3pt) node{};
		\filldraw[black] (1,2) circle (3pt) node{};
		\filldraw[black] (1,3) circle (3pt) node{};
		\draw[black] (2,0) circle (3pt) node{};
		\draw[black] (2,1) circle (3pt) node{};
		\draw[black] (2,2) circle (3pt) node{};
		\draw[black] (2,3) circle (3pt) node{};
		\draw[black] (3,0) circle (3pt) node{};
		\draw[black] (3,1) circle (3pt) node{};
		\draw[black] (3,2) circle (3pt) node{};
		\draw[black] (3,3) circle (3pt) node{};
		
		\draw[white] (3,4) circle (3pt) node{};
		
		\filldraw[black] (5,1) circle (3pt) node[anchor=west]{\quad $\frac{1}{2}(\delta_0+\delta_1)$};
		\draw[black] (5,2) circle (3pt) node[anchor=west]{\quad Uniform on $[0,1]$};
	\end{tikzpicture}
	\end{figure}
\end{ex}

\begin{ex}
    Generalizing the previous two examples, if one lets $\mu_1$, $\mu_2$, $\mu_3$, $\dots$, $\mu_N$ be distributions on $\R$ such that they are all compactly supported, all non-trivial (supported on at least two points), and all have $0 \in \mathrm{supp }\,\mu_n\,$, then any potential $\{V_n\}_{n\in\Z^2}$ with all the $V_n$ independent and having law among $\{\mu_1,\dots,\mu_N\}$ will satisfy the hypotheses of Theorems \ref{localizationalt} and \ref{suffcond}, yielding localization at the bottom of the spectrum.
\end{ex}	

We will at times need to condition on collections of variables and $\sigma$-algebras. For any variable $X$ with finite expectation on a space $(\Omega,\P,\mathcal{F})$ and some $\mathcal{F}' \subset \mathcal{F}$, we denote the conditional expectation of $X$ with respect to $\mathcal{F}'$ by $\E[X|\mathcal{F}']$. Given an event $\mathcal{E}$ in the same space, we denote its conditional probability by $\P[\mathcal{E}|\mathcal{F}']$; recall that by definition $\P[\mathcal{E}|\mathcal{F}'] = \E[1_{\mathcal{E}}|\mathcal{F}']$. Outside of \Cref{ucsec}, where we will need to use Azuma's inequality for submartingales, we will not use the $\sigma$-algebraic formalism too much.

By $\E[X|Y=y]$ we mean the expected value of $X$ given the assumption that $Y=y$; in most cases this amounts to considering $\E[X|\sigma(Y)]$ as a function of $y$, and this is how we will discuss conditional quantities for the most part. In all particular places where such expressions arise in this work, well-definition at all $y$ (i.e. regularity of the conditional distribution) is obvious.  $\P[\mathcal{E}|Y=y]$ is defined analogously. In \Cref{bigweg}, we will need to consider expressions of the form $\P[\mathcal{E}|Y=y]$, where $y$ lies outside the essential range of $Y$ for the purposes of an eigenvalue variation argument; as we explain there, there are no issues with well-definition.

We will sometimes make use of asymptotic notation like e.g. $O(n)$, $\Omega(\log n)$, $\Theta(\ve)$; we already have in the introduction. We will avoid using this notation during delicate arguments, instead using explicit constants of the form e.g. $C, C_k$ during these arguments. Nevertheless, when there is no risk of confusion we will freely use e.g. $O(n^2)$, $\Theta(\ve)$, $o(\ell(\Lambda)^{-2})$ with the usual meaning. E.g. $f(n) = O(g(n))$ as $n\rightarrow \infty$ if $\limsup_{n\rightarrow \infty} \frac{f(n)}{g(n)} < \infty$, $f(n) = \Omega(g(n))$ as $n \rightarrow \infty$ if $\liminf_{n \rightarrow \infty} \frac{f(n)}{g(n)} > 0$,  $f(n) = \Theta(g(n))$ if $f(n) = O(g(n))$ and $f(n) = \Omega(g(n))$, and finally $f(n) = o(g(n))$ if $\limsup_{n\rightarrow\infty} \frac{f(n)}{g(n)} = 0$, and the meaning is analogous when some parameter goes to zero.

When we are considering one square in isolation, either a regular square $[a,b]\times [c,d] \subset \Z^2$ or a tilted square (see definitions at the beginning of \Cref{ucsec}), we will denote its side length by $L$ for brevity. We will always use $\Lambda$ and expressions like $\Lambda'$, $\Lambda_n$ to denote standard squares, and $Q$, $Q'$, $Q_n$ to denote tilted squares. When considering multiple squares simultaneously, we let e.g. $\ell(\Lambda)$ denote the side length of $\Lambda$ to avoid ambiguity. For a standard square, side length is exactly the standard meaning; we explain what is meant by side length for tilted squares when we introduce the coordinate system.

Moreover, we often need to consider doublings, halvings, quadruplings, and so on of squares, denoting these by e.g. $Q$. These are the squares with the same center and having twice, or half, or quadruple the area. To be explicit, if $Q = [1,a] \times [1,a]$, $2Q = [1+ (\frac{1}{2}-\frac{\sqrt{2}}{2})a, 1+(\frac{1}{2}+\frac{\sqrt{2}}{2})a]$. Of course, this may yield non-integer coordinates; we simply consider the lattice points lying within the intervals so obtained. 

When we will do some combinatorial arguments regarding anti-chains (in the power set of some finite set ordered with respect to inclusion) we will, for any given finite set $X$ let $2^X$ denote its power set. Any antichains are with respect to the inclusion ordering.

Finally, much of the work concerns the relation between various called and asymptotic estimates as the ``scales'' get large. For example, in both \Cref{wegnersec} and Appendix \ref{msasec}, we will use ``for sufficiently large scales'' to mean for a sufficiently large starting scale; in these contexts the starting scale is denoted by $L_0$. In fact, in a technical sense, one should read ``for sufficiently large scales'' as ``for $L_0$ sufficiently large''; we keep this language of large scales to convey physical intuition.

\section{Unique continuation key lemma}\label{ucsec}

Throughout we use the tilted coordinates for $\Z^2$ given by $(s,t) = (x+y,x-y)$, following \cite{Ding-Smart}. In this coordinate system, we let $R_{[1,a],[1,b]}$ denote the tilted rectangle \[R_{[1,a],[1,b]}:=\{(x,y) \in \Z^2\,:\,(s,t) \in [1,a] \times [1,b]\}\] For the rectangle $R_{[a_1,a_2],[b_1,b_2]}$, we say its dimensions are $(a_2-a_1+1)\times (b_2 -b_1 + 1)$. 
In the standard coordinates, the equation $H\psi = E\psi$ can expressed locally as
\[ (4+V_{n_1,n_2}-E)\psi_{n_1,n_2} = \psi_{n_1-1,n_2}+\psi_{n_1+1,n_2}+\psi_{n_1,n_2-1}+\psi_{n_1,n_2+1}\]
In the tilted coordinates, it can be represented as:
\begin{equation}\label{localhamtilt}
	\psi_{s,t} = (4+V_{s-1,t-1}-E)\psi_{s-1,t-1} -\psi_{s-2,t-2} - \psi_{s-1,t+1} - \psi_{s+1,t-1}
\end{equation}

The following notions were introduced in \cite{Ding-Smart}:

\begin{deffo} We say $F$ is $\ve$-sparse in $R$ if for all $k \in \Z$, we have
\[ | F \cap R \cap \{(s,t) \in \Z^2 \,:\, s = k\}| \leq \ve |R \cap \{(s,t) \in \Z^2 \,:\, s = k\}|\]
and
\[ | F \cap R \cap \{(s,t) \in \Z^2 \,:\, t = k\}| \leq \ve |R \cap \{(s,t) \in \Z^2 \,:\, t = k\}|\]
\end{deffo}
This sparsity is precisely sparsity along diagonals. For tilted squares (tilted rectangles with equal side lengths) we have a notion of ``regularity''  at all scales formulated in terms of this sparsity.
\begin{deffo} We call $F$ $\ve$-regular in a tilted square $Q$ if any disjoint union of $Q_1, \dots, Q_n \subset Q$ with $F$ not $\ve$-sparse in all of the $Q_i$ forces
\[ |\cup Q_i | \leq \ve |Q|\]
\end{deffo}
Roughly, $F$ is $\ve$-regular if it is impossible to cover a significant proportion of $Q$ with disjoint squares where $F$ covers a significant portion of some diagonal.  With these notions defined, we can state our unique continuation theorem; this theorem is a generalization of \cite[Theorem 3.5]{Ding-Smart} for the most part, but it differs in that it considers a wider class of potentials, namely bounded and uniformly anti-concentrated potentials rather than only i.i.d. potentials.
\begin{thm}\label{ucucthmfinal}
	Let $V$ be a random potential defined by a uniformly bounded and anti-concentrated family of independent random variables $\xi_n$, i.e. satisfying \ref{aprime}, \ref{bprime} and \ref{cprime}. We let $H$ be the random operator $H= -\Delta + V$, and moreover we fix $\overline{E} \in [0,8+M]$, $\Lambda\subset \Z^2$ be a square of side length $L$, and $\mathcal{E}_{\text{uc}}(\Lambda,F, \alpha,\ve)$ denote the event that
	\begin{equation}
	\begin{cases}\label{doublesmallnesshyp}
		|E - \overline{E}| \leq e^{-\alpha (L \log L)^{1/2}}\\
		H\psi = E \psi \text{ on } \Lambda\\
		|\psi| \leq 1 \text{ on a } 1-\ve (L \log L)^{-1/2} \text{ fraction of } \Lambda\setminus F
	\end{cases}
	\end{equation}
	implies $ |\psi| \leq e^{\alpha L \log L}$ in $\frac{1}{2}\Lambda$.\\
	
\noindent Then for any $\ve > 0$ sufficiently small, there exists $\alpha = \alpha(\ve)$ such that if
	\begin{enumerate}[label=(\Alph*)]
		\item $L \geq \alpha$
		\item $F$ is $\ve$-regular in $\Lambda$
	\end{enumerate}
	we have $\P[\mathcal{E}_{\text{uc}}(\Lambda,F,\alpha,\ve)|V_F = v] \leq 1- e^{-\ve L^{1/4}}$
	for any $v:F\rightarrow [0,M]$. The requisite smallness of $\ve$ and largeness of $\alpha$ depend on the particular values of $\rho$, $\gamma$ and $M$.
\end{thm}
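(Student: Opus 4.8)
Unwinding the definition, $\mathcal{E}_{\text{uc}}(\Lambda,F,\alpha,\ve)$ is the event that the displayed implication holds, so the substantive content is the bound $\P[\mathcal{E}_{\text{uc}}^c(\Lambda,F,\alpha,\ve)\mid V_F=v]\le e^{-\ve L^{1/4}}$ on the \emph{defect} event $\mathcal{E}_{\text{uc}}^c$ that some pair $(E,\psi)$ satisfies all three hypotheses in (\ref{doublesmallnesshyp}) yet $|\psi|>e^{\alpha L\log L}$ at some site of $\frac{1}{2}\Lambda$; this is what the plan will establish. I work throughout in the tilted coordinates, where the eigenfunction equation is the three-term recursion (\ref{localhamtilt}). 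Since $0\le V_n\le M$ and $-M\le\overline E\le 8+M$, the coefficient $|4+V_{s-1,t-1}-\overline E|$, hence the per-step amplification in (\ref{localhamtilt}), is bounded by a constant $C_0=C_0(M)$; consequently on any tilted square $Q$ of side $\ell$ one has the crude deterministic bound that $\|\psi\|_{\ell^\infty(Q)}$ is at most $C_0^{\ell}$ times the maximum of $|\psi|$ on two consecutive diagonals bounding $Q$. The substance of unique continuation is to propagate \emph{smallness} rather than merely to control this growth.

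As in \cite{Ding-Smart} the argument splits into a deterministic skeleton and a probabilistic input, and only the latter must be changed. The deterministic part is the multiscale growth lemma of \cite[Section 3]{Ding-Smart} — a statement purely about solutions of (\ref{localhamtilt}), hence unchanged: one dyadically tiles $\Lambda$ by tilted squares at scales $\ell=2^{j}$ up to order $L$, calls such a square \emph{defective} when the single-square estimate below fails inside it, and the lemma asserts that if $F$ is $\ve$-regular in $\Lambda$ and the defective set is $\ve$-regular at every scale, then any $(E,\psi)$ as in (\ref{doublesmallnesshyp}) obeys $|\psi|\le e^{\alpha L\log L}$ on $\frac{1}{2}\Lambda$. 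The exponent $L\log L$ is exactly $\sum_{\ell}(L/\ell)\cdot\ell=O(L\log L)$, the worst-case cost over the $O(\log L)$ scales of crossing defective squares by the crude bound above, and the doubled-smallness hypothesis $|E-\overline E|\le e^{-\alpha(L\log L)^{1/2}}$ is calibrated precisely so that $E$ may be replaced by $\overline E$ in the probabilistic step after a coarse net over the admissible eigenvalue window. Thus, on $\{V_F=v\}$ with $F$ $\ve$-regular, $\mathcal{E}_{\text{uc}}^c$ is contained in the event that the defective set fails $\ve$-regularity at some scale, and it suffices to bound the conditional probability of that event.

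The probabilistic input is the single-square spreading estimate proved in this section, which is where uniform anti-concentration (Definition \ref{unifacdef}) takes over the role played in \cite{Ding-Smart} by i.i.d.\ non-triviality. On a tilted square $Q$ of side $\ell$, after conditioning on the potential off $Q$ and on $V_{F\cap Q}$, the claim is that with conditional probability at least $1-e^{-c\ell^{1/2}}$, where $c=c(\gamma,\rho,M)$, no admissible $(E,\psi)$ has $\psi$ small on an $\ve$-regular subset of $Q$ yet uncontrolled on $\frac{1}{2}Q$ — i.e.\ $Q$ is non-defective. The mechanism is that by (\ref{localhamtilt}) the value $\psi_{s,t}$ is an affine function of the single independent variable $V_{s-1,t-1}$ with slope $\psi_{s-1,t-1}$, so revealing that value forces $\psi_{s,t}$ off any prescribed small value with probability at least $\rho$, at scale $\frac{\gamma}{2}|\psi_{s-1,t-1}|$, uniformly in the past and using only independence; exposing the $V_n$ on $Q\setminus F$ one site at a time therefore turns the count of diagonal sites on which $\psi$ stays below the unique-continuation threshold into a process to which Azuma's inequality applies, and this produces the stated bound after a crude net over the eigenvalue window and the relevant boundary data. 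The $\ve$-regularity of $F$ enters here exactly to guarantee that the frozen sites, where randomness is unavailable, are too sparse along each diagonal to obstruct the spreading.

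It remains to assemble these. For a fixed scale $\ell$ and a fixed family of disjoint tilted sub-squares $Q_1,\dots,Q_m$ in $\Lambda$ with $\sum_i|Q_i|>\ve|\Lambda|$, independence across the disjoint squares and the single-square estimate bound the conditional probability that the defective set is non-sparse in all of the $Q_i$; summing this over the at most exponentially many such families, over the $O(\log L)$ scales, over a discretization of the admissible eigenvalue window of size $e^{-\alpha(L\log L)^{1/2}}$, and invoking Azuma once more to absorb the dependencies among squares created by shared boundary data, one gets — provided $\alpha=\alpha(\ve)$ is large, $L\ge\alpha$, and $\ve=\ve(\gamma,\rho,M)$ is small, so that the spreading gain outweighs the combinatorial and net losses at every scale — that the conditional probability of a defect is at most $e^{-\ve L^{1/4}}$, the surviving power $1/4$ being what the balancing yields. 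I expect the main obstacle to be precisely the bookkeeping of these last two steps: verifying that uniform anti-concentration with parameters $(\gamma,\rho)$, invoked \emph{one independent site at a time} so that independence and not identical distribution is all that is used, cleanly replaces every appeal to a common non-trivial law in \cite[Section 3]{Ding-Smart}, and tracking the resulting dependence of $\alpha$, $\ve$ and $c$ on $\gamma$, $\rho$ and $M$ through the multiscale sum. The conceptual architecture is that of \cite[Theorem 3.5]{Ding-Smart}, and the deduction from the single-square lemma is carried out in detail in Appendix \ref{key2uc}.
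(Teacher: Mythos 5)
Your one\mbox{-}site\mbox{-}at\mbox{-}a\mbox{-}time use of uniform anti-concentration inside an Azuma argument is indeed the paper's key probabilistic innovation, but the architecture you wrap around it has a genuine gap: the ``single-square spreading estimate'' you take as the probabilistic input cannot be proved by the route you sketch (expose sites, apply Azuma, then union over ``a crude net over \dots the relevant boundary data''). For a tilted square of side $\ell$ the boundary data lives on $\Theta(\ell)$ sites, and to be compatible with the propagation bounds (Lemmas \ref{unexbound}, \ref{envarbound}) the net must have mesh of order $e^{-C\ell\log\ell}$, so it has cardinality $e^{C\ell^2\log\ell}$; the Azuma gain from sweeping a diagonal is only $e^{-c\ell}$, so the union bound is hopeless at square aspect ratio. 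This entropy-versus-gain mismatch is exactly why the paper's probabilistic input, Lemma \ref{key}, is formulated on long thin tilted rectangles $R_{[1,a],[1,b]}$ with $a\geq \alpha b^2\log a$: there the boundary-data net has size $e^{C\beta b^2\log a}\leq e^{ca/2}$ and is beaten by the $e^{-ca}$ gain along the long direction. The square-level statement (Lemma \ref{growth}) is then itself partly probabilistic: it is obtained by intersecting the events $\mathcal{E}_{ni}$ over a family of thin rectangles and marching across the square in steps of width $\sim (a/\alpha'\log a)^{1/2}$ (done in eight directions to fill out $2Q$), which is also where the $\ve$-sparsity of $F$ and the choice of the next good diagonal pair $b_{k+1}$ enter. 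Without the thin-rectangle reduction, or some substitute for it, your central estimate is unsupported and the rest of the assembly has nothing to stand on.

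Two further points of divergence from the actual proof are worth flagging. First, the exponent $1/4$ does not arise from ``balancing'' spreading gains against combinatorial losses in a multiscale sum: in the paper it is simply the minimal side length $\ell(Q)^{1/4}$ imposed on the subsquares $Q'\in\mathcal{Q}$ in the maximal-square argument of Appendix \ref{key2uc}, so that the union bound over the polynomially many events $\mathcal{E}_{ex}(Q',F)$, each of probability $1-e^{-\ve\ell(Q')}$, yields $1-e^{-\delta\ell(Q)^{1/4}}$. Second, the deterministic assembly is not a ``defective set is $\ve$-regular at every dyadic scale'' lemma, and there is no second application of Azuma ``to absorb dependencies among squares'': the deduction is the covering of $\Lambda$ by $O(1)$ tilted squares followed by the Vitali/maximal-square argument (Claims \ref{smallbc} and \ref{bigbd}), with all conditioning handled through the frozen set $F$ inside Lemma \ref{key} itself. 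So while your proposal correctly identifies what must replace the i.i.d.\ Bernoulli structure of \cite{Ding-Smart}, it omits the aspect-ratio mechanism (key lemma on thin rectangles plus the directional growth lemma) that makes the probabilistic estimate provable, and that omission is a gap rather than an alternative route.
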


\begin{remm}
	The set $F$ is a collection of ``frozen'' sites. As the MSA is carried out, it is necessary to sample certain sites to find bounds on the probability of certain events at each stage of the inductive procedure. At each stage, we presume that the requisite bounds hold at previous stages. In particular, it is crucial that we not sample the same sites more than once. Hence a site, once sampled, becomes ``frozen'', and our estimates must hold regardless of the value of the potential at the frozen sites. (We are allowed to assume that they obey the almost sure bound, but otherwise cannot use information regarding these sites.) This idea goes back to \cite{Bourgain2005}, and it introduces non-trivial difficulties to the proof of unique continuation.
\end{remm}
This event depends on the parameters $\alpha$ and $\ve$, and the same is true for many events under consideration. We will generally make this dependence explicit in theorem statements, but suppress this in the notation during the course of proofs. We rarely need to vary $\alpha$ and $\ve$ beyond enlarging the former and shrinking the latter to ensure the validity of certain estimates.

The proof more or less follows that in \cite{Ding-Smart} where the $V_n$ are i.i.d. Bernoulli-$\frac{1}{2}$ variables. A few results are deterministic and so their proofs require no modification. These facts require notions of boundary and interior for tilted rectangles; the notion of the western boundary, at least implicitly, was used in \cite{Buhovsky22}, where the question of unique continuation was studied under the assumption $V\equiv 0$.
\begin{deffo}
	The west boundary of $R = R_{[1,a],[1,b]}$ is given by
	\[ \partial^{w}R = R_{[1,a],[1,2]} \cup R_{[1,2],[1,b]}\]
\end{deffo}
Also important, and used extensively in \cite{Ding-Smart}, \cite{Buhovsky22}, is what we will call the interior.
\begin{deffo}
	The interior of $R$ is given by
	\[ R^{\circ} = R_{[2,a-1],[2,b-1]}\]
\end{deffo}
We now state three results about how $\psi$, an eigenfunction of $H_R$ associated to some $E$ ``propagates'' from the western boundary. Throughout, $c_k$ (respectively $C_k)$ are small (respectively large) universal constants which do not meaningfully depend on any of the parameters involved in the inequalities they appear in. They will be allowed to depend on certain coarse features of the potential $V$, i.e. on $M$, $\rho$, and $\gamma$, but they are completely universal for fixed values of these quantities.
\begin{lem}[\cite{Ding-Smart}]\label{unex}
	 For fixed $E \in [0,8+M]$ and $R = R_{[1,a],[1,b]}$, given any initial data $\psi : \partial^{w}R \rightarrow \R$, there is a unique extension $\psi: R \rightarrow \R$ satisfying $H \psi = E \psi$ on $R^\circ$. Moreover, the mapping from the initial data to the extension is linear.
\end{lem}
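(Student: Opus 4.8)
The plan is to view this as a discrete Cauchy problem, with the data on $\partial^w R$ playing the role of initial conditions and the local form \eqref{localhamtilt} of the eigenvalue equation propagating them into the rest of $R$. The structural fact that makes everything go through is that in \eqref{localhamtilt} exactly one of the five occurrences of $\psi$, namely $\psi_{s-1,t-1}$, is multiplied by the coefficient $4+V_{s-1,t-1}-E$, which may vanish, while the other four are multiplied by $\pm 1$. Consequently one can always rearrange \eqref{localhamtilt} to isolate one of the four unit-coefficient entries without ever dividing; for instance, solving for $\psi_{s+1,t-1}$ expresses that value explicitly as a combination, with coefficients $\pm 1$ and $\pm(4+V_{s-1,t-1}-E)$, of $\psi$ at the four sites $(s,t)$, $(s-1,t-1)$, $(s-2,t-2)$, $(s-1,t+1)$, each of which has strictly smaller $s$-coordinate than $(s+1,t-1)$.

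Granting this, I would produce the extension by forward substitution. Enumerate $R\setminus\partial^w R$ by increasing $s$-coordinate, ties broken arbitrarily; since $\partial^w R$ contains both of the columns $\{s\le 2\}$ and both of the rows $\{t\le 2\}$, the prescribed data already cover everything needed to start the recursion. The claim to verify is that each site $p\in R\setminus\partial^w R$ is the isolated entry of exactly one instance of \eqref{localhamtilt} --- the instance centered at the corresponding point of $R^\circ$ --- and that in that instance every other participating site lies either in $\partial^w R$ or strictly earlier in the enumeration; the equality $|R\setminus\partial^w R| = |R^\circ|$ is precisely what makes such a pairing of unknowns with equations possible. Once this is checked, define $\psi$ on $R\setminus\partial^w R$ by substituting along the enumeration. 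The resulting function satisfies every instance of \eqref{localhamtilt}, i.e. $H\psi = E\psi$ on $R^\circ$, by construction; it is the unique such extension, since any solution must obey the same recursion site by site and is therefore completely determined by its values on $\partial^w R$; and the data-to-extension map is linear because every substitution step is a linear combination of previously determined values and boundary data with no division ever performed, and a composition of linear maps is linear.

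The one genuinely substantive step is the bookkeeping in the previous paragraph: checking that the chosen enumeration triangulates the system with no circular dependence and that no participating site of a used instance of \eqref{localhamtilt} falls outside $R$. This is exactly the point at which the $L$-shape of $\partial^w R$ and its two-fold thickness, together with the placement of $R^\circ = R_{[2,a-1],[2,b-1]}$, are used, and a little care is needed near the corner where the two strips of $\partial^w R$ meet and near the two opposite edges of $R$, where one may have to isolate a different one of the unit-coefficient entries of \eqref{localhamtilt}. It is, however, entirely deterministic and combinatorial, which is why this lemma transfers from \cite{Ding-Smart} without modification. If one prefers to avoid committing to a single global ordering, an equivalent route is to induct on $a+b$, peeling off one diagonal $\{s+t = \mathrm{const}\}$ of $R$ at a time and solving the three-term recurrence that \eqref{localhamtilt} induces along it, with the inhomogeneity supplied by the two already-determined diagonals and the two needed boundary values supplied by the thin part of $\partial^w R$; in either formulation the content is simply that the relevant coefficient matrix is triangular, hence invertible, with the extension map read off from back-substitution.
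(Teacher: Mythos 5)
Your proof is correct and is essentially the argument from \cite{Ding-Smart} that the paper cites rather than reproduces: the tilted-coordinate form of the eigenvalue equation is triangular with respect to propagation away from $\partial^w R$, the entry being isolated always carries coefficient $\pm 1$ (so the possibly vanishing factor $4+V-E$ never needs to be inverted), the equations centered on $R^\circ$ are in bijection with the unknowns on $R\setminus\partial^w R$, and existence, uniqueness and linearity all drop out of forward substitution. The only point to watch is that with the correctly indexed stencil one of the unit-coefficient sites shares the $s$-coordinate of the site being determined, so the enumeration should be by the diagonal $s+t$ (or by $s$ with ties broken by increasing $t$) rather than by $s$ with arbitrary tie-breaking --- but the diagonal-by-diagonal variant you already offer covers this.
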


Not only is there a unique extension, but its magnitude can be controlled in terms of the dimensions of $R$ and the magnitude of the initial data.
\begin{lem}[\cite{Ding-Smart}]\label{unexbound}For $E \in [0,8+M]$, $R= R_{[1,a],[1,b]}$, and $\psi$ solving $H\psi = E \psi$ in $R^\circ$, we have
	\[\|\psi\|_{\ell^\infty(R)} \leq e^{C_1 b\log a} \|\psi\|_{\ell^\infty(\partial^{w}R) }\]
\end{lem}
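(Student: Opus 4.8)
The plan is to turn the estimate into a controlled iteration of the local recursion (\ref{localhamtilt}). By Lemma \ref{unex} the function $\psi$ is the unique extension of its restriction $g := \psi|_{\partial^{w}R}$ to all of $R$, and this extension depends linearly on $g$; so it suffices to bound the norm of the extension operator $\ell^\infty(\partial^{w}R)\to\ell^\infty(R)$. Write $G := \|g\|_{\ell^\infty(\partial^{w}R)}$. Since $E\in[0,8+M]$ and $0\le V_n\le M$ by condition \ref{bprime}, every coefficient occurring in (\ref{localhamtilt}) is bounded in absolute value by a constant $C_0$ depending only on $M$ (indeed $|4+V_n-E|\le 4+M$), so a single application of (\ref{localhamtilt}) writes one value of $\psi$ as a linear combination, with coefficients of modulus at most $C_0$, of four values of $\psi$ at neighbouring sites.

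I would then induct on the $t$-side length $b$. When $b\le 2$ we have $R=\partial^{w}R$ and there is nothing to prove. For the inductive step, observe that the sub-rectangle $R':=R_{[1,a],[1,b-1]}$ has west boundary $\partial^{w}R'=R_{[1,a],[1,2]}\cup R_{[1,2],[1,b-1]}\subseteq\partial^{w}R$, and that $\psi|_{R'}$ still solves $H\psi=E\psi$ on $(R')^{\circ}$; by uniqueness (Lemma \ref{unex}) it is the extension of $\psi|_{\partial^{w}R'}$, so the inductive hypothesis gives $\|\psi\|_{\ell^\infty(R')}\le e^{C(b-1)\log a}\,\|\psi\|_{\ell^\infty(\partial^{w}R')}\le e^{C(b-1)\log a}\,G$. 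It then remains to bound $\psi$ on the top row $R_{[1,a],\{b\}}$: its two westmost entries belong to $\partial^{w}R$ and are at most $G$, while (\ref{localhamtilt}), read as a rule that produces the values on row $b$ from the values on rows $\le b-1$ together with the already-known westmost entries of row $b$, recovers the remaining entries of that row. Following \cite{Ding-Smart}, carrying this out amounts to propagating the data once along a line, and one shows it costs only a factor polynomial in $a$, say $a^{C_1}$. Granting this, $\|\psi\|_{\ell^\infty(R)}\le a^{C_1}\,e^{C(b-1)\log a}\,G=e^{Cb\log a}\,G$ provided $C$ is chosen at least $C_1$ plus the constant from the previous stage, which closes the induction.

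The step I expect to be the main obstacle is precisely this polynomial-in-$a$ bound for the single-line propagation. A generic three-term recursion of length $a$ with coefficients of size $\sim M$ can amplify its data by $C_0^{\,a}$, which is far too large for the claimed bound; the gain to a merely polynomial factor is the delicate deterministic point, and it is here that the tilted-coordinate geometry is essential — one uses that $\partial^{w}R$ prescribes $\psi$ on \emph{both} the two bottom rows and the two left columns, so that the propagation across the rectangle can be organised with only polynomial total loss (in the cleaner formulation of \cite{Ding-Smart} the propagation is taken along south-west-to-north-east anti-diagonals rather than rows). This is exactly the estimate established in \cite{Ding-Smart}; it uses nothing about the law of $V$ beyond the almost sure bound $0\le V_n\le M$ and the restriction $E\in[0,8+M]$, so it transfers to the non-stationary setting verbatim and requires no modification.
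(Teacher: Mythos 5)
Your skeleton — induct on $b$, show that producing each new row from the previous ones costs only a factor polynomial in $a$ — is the right one, and it is the shape of the argument in \cite{Ding-Smart}. Note that the present paper does not reprove this lemma at all: it is purely deterministic (only the almost sure bound \ref{bprime} and $E\in[0,8+M]$ enter), so it is imported verbatim, and in that sense your final appeal to \cite{Ding-Smart} mirrors the paper's own treatment. The reduction to $R'=R_{[1,a],[1,b-1]}$ via Lemma \ref{unex} and the observation $\partial^{w}R'\subset\partial^{w}R$ are both fine.

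As a standalone proof, however, the proposal is empty at exactly the point you flag: the entire content of the lemma is the assertion that one row of length $a$ can be filled in at cost $a^{C_1}$ rather than $C_0^{\,a}$, and "one shows it costs only a factor polynomial in $a$" is not an argument — deferring that step to \cite{Ding-Smart} is deferring the lemma itself. The mechanism is concrete and you should supply it. Telescoping the equation (\ref{localhamtilt}) down a column (this is precisely the computation behind (\ref{thing}), carried out without the assumption that $\psi$ vanishes on $R_{[1,a],[1,2]}$, so with extra terms controlled by the bottom boundary data) expresses each value in row $t'$ as
\[
\psi_{s',t'} \;=\; -\,\psi_{s'-2,t'} \;+\; \sum\bigl(\text{at most }O(b)\text{ terms from rows }<t'\text{, coefficients bounded by }4+M+|E|\bigr).
\]
The decisive feature is that the unique same-row term $\psi_{s'-2,t'}$ enters with coefficient exactly $-1$: it comes from the Laplacian alone and carries no factor of $4+V-E$. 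So the recursion in $s'$ along the row is \emph{additive}, not multiplicative: iterating it $O(a)$ times from the westmost entries (which lie in $\partial^{w}R$ and are bounded by $G$) gives
\[
\max_{\text{row }t'}|\psi|\;\le\; G \;+\; C\,a\,b\,(4+M+|E|)\max_{\text{rows}<t'}|\psi|\;\le\; a^{C}\max\bigl(G,\max_{\text{rows}<t'}|\psi|\bigr)
\]
in the regime $b\le a$ where the lemma is used, which is exactly your $a^{C_1}$ and closes the induction. Two further remarks: your instinct that "a generic three-term recursion amplifies by $C_0^{\,a}$" is right, and the point above (unit-modulus homogeneous coefficient, hence no amplification) is the precise reason this one does not; and the direction of propagation is not cosmetic — running the same telescoped identity as a recursion in $s$ column-by-column yields only $e^{Ca\log b}$, which is useless in the application (Lemma \ref{key} has $a\ge\alpha b^{2}\log a$), so the row-by-row organisation you chose is forced.
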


Of course, the value of $C$ will crucially depend on $M$, the almost sure bound on the potential. We can  take $C_1 = C\max\{1,\log M\}$, for some $C$ universal. Because $M$ is more or less ``universal'' for any fixed potential, this does not cause any serious issues or even require particularly attentive bookkeeping. Because for any fixed $E$ and $R$ the mapping from initial data to extension is linear, a straightforward consequence of this is that if a perturbation of our initial data $\psi|_{\partial^{w}R}$ corresponds to a perturbation of $\psi$ on all of $R$ by at most $e^{Cb\log a}$ times the norm of the perturbation carried out on our initial data. We have a similar result for making a small perturbation in energy instead of initial data.
\begin{lem}[\cite{Ding-Smart}]\label{envarbound} If $E_0, E_1 \in [-0,8+M]$, and $\psi_0$, $\psi_1$ are solutions to $H\psi_i = E_i \psi_i$ on $R^\circ$ with common initial data $\psi:\partial^{w}\rightarrow \R$, then we have
	\[ \|\psi_0 - \psi_1\|_{\ell^\infty(R)} \leq e^{C_1 b\log a} |E_1-E_0| \|\psi\|_{\ell^\infty(\partial^{w}R)}\]
\end{lem}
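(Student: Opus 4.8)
The plan is to deduce Lemma~\ref{envarbound} from Lemma~\ref{unexbound} by regarding the difference $\phi := \psi_0 - \psi_1$ as the solution of an \emph{inhomogeneous} version of the tilted eigenvalue recursion \eqref{localhamtilt}. First I would note that, since $\psi_0$ solves $H\psi_0 = E_0\psi_0$ on $R^\circ$, it also solves $(H-E_1)\psi_0 = (E_0-E_1)\psi_0$ there; writing this out in the local tilted form and subtracting the corresponding homogeneous identity $(H-E_1)\psi_1 = 0$ gives that $\phi$ satisfies
\[ \phi_{s,t} = (4+V_{s-1,t-1}-E_1)\phi_{s-1,t-1} - \phi_{s-2,t-2} - \phi_{s-1,t+1} - \phi_{s+1,t-1} + f_{s-1,t-1} \]
on $R^\circ$, with source $f_{s-1,t-1} = (E_1-E_0)(\psi_0)_{s-1,t-1}$, and moreover $\phi \equiv 0$ on $\partial^{w}R$ because $\psi_0$ and $\psi_1$ carry the same initial data. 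Applying Lemma~\ref{unexbound} to $\psi_0$ bounds this source uniformly: $\|f\|_{\ell^\infty} \le |E_1-E_0|\,\|\psi_0\|_{\ell^\infty(R)} \le |E_1-E_0|\,e^{Cb\log a}\|\psi\|_{\ell^\infty(\partial^{w}R)}$.

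It then remains to establish an inhomogeneous analogue of Lemma~\ref{unexbound}: a solution $\phi$ of the above recursion with vanishing western data and bounded additive source obeys $\|\phi\|_{\ell^\infty(R)} \le e^{Cb\log a}\|f\|_{\ell^\infty}$. I would get this from the same diagonal-sweep (discrete Duhamel) argument used for Lemma~\ref{unexbound}: by the linearity in Lemma~\ref{unex}, $\phi = \sum_p f_p\,\Phi_p$ over the recursion sites $p$, where $\Phi_p$ is the extension with zero initial data and a unit source at $p$ alone; each $\Phi_p$ propagates forward exactly as size-$1$ initial data would, so $\|\Phi_p\|_{\ell^\infty(R)} \le e^{Cb\log a}$ by (the proof of) Lemma~\ref{unexbound} applied east of $p$, and summing over the $O(ab)$ sites costs only a polynomial factor, absorbed into $e^{Cb\log a}$ after enlarging $C$. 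Combining with the source bound yields $\|\psi_0 - \psi_1\|_{\ell^\infty(R)} \le e^{Cb\log a}|E_1-E_0|\,\|\psi\|_{\ell^\infty(\partial^{w}R)}$ with $C$ twice the constant of Lemma~\ref{unexbound}, hence still universal in the relevant sense.

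The step needing the most care is this inhomogeneous propagation estimate: one must check that the forward bound of Lemma~\ref{unexbound}, stated for data on $\partial^{w}R$, genuinely applies to a unit source at an arbitrary interior site, and that the polynomial loss from summing over all sites is harmless. An essentially equivalent route, which I would use if one prefers to sidestep the discrete bookkeeping, is to differentiate the linear solution map $E \mapsto T_E(\psi)$ of Lemma~\ref{unex}: $\partial_E T_E(\psi)$ solves the $E$-recursion with zero initial data and source $-T_E(\psi)$, so $\|\partial_E T_E(\psi)\|_{\ell^\infty(R)} \le e^{Cb\log a}\|T_E(\psi)\|_{\ell^\infty(R)} \le e^{2Cb\log a}\|\psi\|_{\ell^\infty(\partial^{w}R)}$, and integrating over $[E_0,E_1]$ gives the claim. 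Since Lemmas~\ref{unex}--\ref{envarbound} are deterministic and quoted from \cite{Ding-Smart}, I would keep the write-up short and refer there for the combinatorial details of the sweep.
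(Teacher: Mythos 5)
The paper does not prove this lemma --- it is quoted verbatim from \cite{Ding-Smart} with no argument supplied --- so the comparison is with the original proof there, which your proposal essentially reproduces: the difference $\psi_0-\psi_1$ satisfies the $E_1$-recursion with zero western data and an additive source $(E_1-E_0)\psi_0$ controlled by Lemma~\ref{unexbound}, and the same diagonal-sweep induction (equivalently your Duhamel superposition, with the harmless $O(ab)$ factor absorbed into $e^{Cb\log a}$) closes the estimate. Your argument is correct and matches the standard route.
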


(One proves this lemma using the former and obtains a worse constant; we will just take the worse of the two and proceed with said constant in order to reduce the number of constants in play.) These results are instrumental in proving the key lemma, where virtually all the differences from the stationary case will arise insofar as the unique continuation result is concerned. (Note that these differences are more or less technical details and the ideas are the same; nevertheless the proof of the key lemma is  where the technical details appear.) Our key lemma, which plays the same role as \cite[Lemma 3.12]{Ding-Smart}, is:
\begin{lem}\label{key}
	Let $H$ be as in the hypothesis of \Cref{ucucthmfinal}. We fix $\overline{E} \in [0,8+M]$, and let $\mathcal{E}_{\text{ni}}(R_{[1,a],[1,b]}, \alpha,\ve,F)$ denote the event that
	\[ \begin{cases}
		|E-\overline{E}| \leq e^{-\alpha b \log a}\\
		H\psi = E \psi \text{ in } R^\circ \\
		|\psi| \leq 1 \text{ in } R_{[1,a],[1,2]}\\
		|\psi| \leq 1 \text{ in a } 1-\ve \text{ fraction of } R_{[1,a],[b-1,b]}\setminus F
	\end{cases}\]
	implies $|\psi| \leq e^{\alpha b \log a}$. Then there are $\alpha > 1 > \ve$ such that if
	\begin{enumerate}[label=(\Alph*)]
		\item $ a \geq \alpha b^2 \log a \geq \alpha^2$
		\item $F \subset \Z^2$ is $\ve$-sparse in $R_{[1,a],[1,b]}$
	\end{enumerate}
	then $\P[\mathcal{E}_{\text{ni}}(R_{[1,a],[1,b]},\alpha,\ve,F)|V_F = v] \geq 1 - e^{-\ve a}$ for any $v:F\rightarrow [0,M]$.
	The values of $\alpha$ and $\ve$ will depend on the constants $M$, $\rho$ and $\gamma$.
\end{lem}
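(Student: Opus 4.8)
The plan is to follow the structure of the argument for \cite[Lemma 3.12]{Ding-Smart}, replacing the use of Bernoulli-$\frac12$ anticoncentration at each site with the uniform anticoncentration hypothesis \ref{cprime}, and carefully tracking the freezing set $F$. The key geometric picture is that an eigenfunction $\psi$ satisfying $H\psi = E\psi$ in $R^\circ$ is, by Lemma \ref{unex}, determined by its restriction to the west boundary $\partial^w R$, and by Lemma \ref{unexbound} its sup norm over all of $R$ is at most $e^{Cb\log a}$ times its sup norm on $\partial^w R$. So the only way for the conclusion $|\psi| \le e^{\alpha b\log a}$ to \emph{fail} (for $\alpha$ large compared to $C$) is for $\|\psi\|_{\ell^\infty(R)}$ to be achieved somewhere far from $\partial^w R$ while being comparatively small on $\partial^w R$; one then wants to show that for most realizations of the potential, this forces $|\psi|$ to be large on a $>\ve$ fraction of the east strip $R_{[1,a],[b-1,b]}\setminus F$, contradicting the last hypothesis of $\mathcal E_{\text{ni}}$.

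First I would set up a \textbf{net/union-bound scheme}: the ``bad'' event is that $H\psi = E\psi$ in $R^\circ$ for some $E$ with $|E-\overline E|\le e^{-\alpha b\log a}$, with $|\psi|\le 1$ on $R_{[1,a],[1,2]}$, $|\psi|\le 1$ on a $1-\ve$ fraction of $R_{[1,a],[b-1,b]}\setminus F$, yet $\|\psi\|_{\ell^\infty(R)} > e^{\alpha b\log a}$. Using Lemma \ref{envarbound} to fix $E = \overline E$ up to the negligible error $e^{-\alpha b\log a}$, and Lemma \ref{unex} to reduce to initial data, one reduces to a statement about the finite-dimensional space of extensions with bounded west-boundary data. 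I would discretize $\partial^w R$-data (and the second strip $R_{[1,2],[1,b]}$) on an exponentially fine net — the net size is controlled because $|R|$ is polynomial and the relevant magnitudes are at most $e^{Cb\log a}$ — so the union bound over the net costs only $e^{O(b\log a)}$, which is beaten by the target bound $e^{-\ve a}$ once hypothesis (A), $a \ge \alpha b^2\log a$, is invoked.

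The \textbf{heart of the argument}, and the step I expect to be the main obstacle, is the \emph{per-configuration probabilistic estimate}: for fixed west-boundary data (on the net) one must show that, conditionally on $V_F = v$, with probability $\ge 1 - e^{-\ve' a}$ over the remaining potential values $\{V_n : n \notin F\}$ in $R^\circ$, the propagated solution $\psi$ is either small everywhere or large on more than an $\ve$ fraction of $R_{[1,a],[b-1,b]}\setminus F$. This is where the tilted recursion \eqref{localhamtilt} enters: $\psi_{s,t}$ depends on $(4 + V_{s-1,t-1} - E)\psi_{s-1,t-1}$ minus earlier terms, so varying a single unfrozen $V_{s-1,t-1}$ moves $\psi_{s,t}$ by $V_{s-1,t-1}\psi_{s-1,t-1}$ — hence if $|\psi_{s-1,t-1}|$ is non-negligible, anticoncentration \eqref{unifnoncon} (with gap $\gamma$, remainder $\rho$) forces $|\psi_{s,t}|$ to avoid a window of width $\gamma|\psi_{s-1,t-1}|/2$ with probability $\ge \rho$, and this propagates geometrically along diagonals. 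One then needs a combinatorial counting of ``many independent chances'' along the $s$- and $t$-diagonals of the east strip; the $\ve$-sparsity of $F$ in $R_{[1,a],[1,b]}$ (hypothesis (B)) guarantees that on each relevant diagonal a $(1-\ve)$ fraction of sites are unfrozen and thus genuinely random, so that the number of independent anticoncentration events is $\Omega(a)$, giving a large-deviation bound of the form $e^{-c\rho a}$ after summing geometric failure probabilities. Assembling: fix the net, apply the per-configuration bound with the appropriate $\ve'$, union bound, and absorb the net cardinality using (A) to obtain $\P[\mathcal E_{\text{ni}} \mid V_F = v] \ge 1 - e^{-\ve a}$ with $\ve, \alpha$ depending only on $M$, $\rho$, $\gamma$. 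The delicate bookkeeping is ensuring the anticoncentration windows, the exponential net, and the sparsity-driven independence count all have compatible constants; this is precisely the place where the non-i.i.d.\ generalization requires care beyond \cite{Ding-Smart}.
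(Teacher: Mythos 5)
Your overall architecture matches the paper's: reduce via Lemmas \ref{unex}, \ref{unexbound}, \ref{envarbound} to a fixed energy $\overline{E}$ and to solutions determined by west-boundary data, prove a per-initial-datum probabilistic estimate using the tilted recursion, anti-concentration, and the $\ve$-sparsity of $F$, then close with an exponentially fine net and a union bound absorbed by hypothesis (A). That is exactly the route the paper takes.

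However, there is a genuine gap at what you yourself identify as the heart of the argument. For the one-step anti-concentration to bite at an east-strip site $(s',t')$, you need the coefficient $\psi_{s'-1,t_0}$ multiplying the random variable $V_{s'-1,t_0}$ to be bounded \emph{below} (by, say, $e^{-Cb\log a}$): the bad event $|\psi_{s',t'}| \le e^{-\tilde{C}b\log a}$ confines $V_{s'-1,t_0}$ to an interval of length $2e^{-\tilde{C}b\log a}/|\psi_{s'-1,t_0}|$, and this is only smaller than the anti-concentration gap $\gamma$ if $|\psi_{s'-1,t_0}|$ is not too small. Your sketch says ``if $|\psi_{s-1,t-1}|$ is non-negligible'' and then appeals to geometric propagation along diagonals, but never explains how to guarantee this non-negligibility; a priori the solution could die out before reaching the east strip and then anti-concentration gives nothing. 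The paper resolves this with a purely deterministic input (Claim \ref{bigrow}, imported from Ding--Smart): for any initial data $\psi^0$ with $\|\psi^0\|_\infty = 1$ there is a single tilted row $R_{[1,a],[t_0,t_0]}$ on which $|\psi| \ge e^{-Cb\log a}$ \emph{uniformly in the potential}; each east-strip site then draws its randomness from the unfrozen site directly below it in that fixed row, and sparsity of $F$ guarantees $\Omega(a)$ such sites. Without this deterministic lower bound your ``combinatorial counting of many independent chances'' does not obviously close. Two smaller points: the net over initial data has cardinality $e^{\Theta(b^2\log a)}$ (dimension $\Theta(b)$ times mesh exponent $\Theta(b\log a)$), not $e^{O(b\log a)}$ — though you correctly invoke (A) with $b^2\log a$ to beat it; and the anti-concentration events along the row are not independent, only conditionally bounded below given the potential to the west, so the large-deviation step requires a martingale inequality (the paper uses Azuma) rather than independence.
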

\begin{proof}
We begin by using our results regarding propagation from the western boundary to make a reduction. Specifically, we define the event $\mathcal{E}_{\text{ni}'}(R,\alpha,\ve,F)$ as the event (set of operators) for which
\[\begin{cases}
	H\psi'' = \overline{E} \psi \text{ in } R^\circ\\
	\psi'' = 0 \text{ in } R_{[1,a],[1,2]}\\
	\max_{R_{[1,2],[1,b]}}|\psi''| \geq 1
\end{cases}\]
implies $|\psi| \geq e^{-\frac{1}{2}\alpha b\log a}$ in a $2\ve$ fraction of $R_{[1,a],[b-1,b]}\setminus F$. We will show that for sufficiently large $\alpha$, we in fact have $\mathcal{E}_{\text{ni}'}(R) \subset \mathcal{E}_{\text{ni}}(R)$, and hence that it suffices to bound the former below.

To show this inclusion, we suppose $|E - \overline{E}| \leq e^{-\alpha b \log a}$, $H\psi = E \psi$, $|\psi| \leq 1$ on $R_{[1,a],[1,2]}$ and $|\psi| \leq 1$ on a $1-\ve$ fraction of $R_{[1,a],[b-1,b]}\setminus F$. (Note that $\mathcal{E}_{\text{ni}}$ says precisely that under these conditions, $\psi$ obeys a specific bound on $R$. Hence $\mathcal{E}_{\text{ni}'} \subset \mathcal{E}_{\text{ni}}$ follows if these assumptions together with $H \in \mathcal{E}_{\text{ni}'}$ imply the bound.) 

We define $\psi'$ and $\psi''$ as follows (uniqueness and existence are given by \Cref{unex}):
\[
\begin{cases}
	H\psi' = E \psi' \text{ in } R^\circ\\
	\psi' = 0 \text{ in } R_{[1,a],[1,2]}\\
	\psi' = \psi \text{ in } R_{[1,2],[3,b]}
\end{cases}\]
and
\[ \begin{cases}
	H\psi'' = \overline{E}\psi''\\
	\psi'' = 0 \text{ in } R_{[1,a],[1,2]}\\
	\psi'' = \psi \text{ in } R_{[1,2],[3,b]}
\end{cases}\]
If $H \in \mathcal{E}_{\text{ni}'}$, we necessarily have $|\psi''| \geq e^{-\frac{1}{2}\alpha b \log a} \max_{R_{[1,2],[3,b]}}|\psi|$ in a 2$\ve$ fraction of $R_{[1,a],[b-1,b]}$. The energy variation bound in \Cref{envarbound} gives us
\[ \max_R |\psi' - \psi''| \leq e^{(C_1- \alpha )b \log a} \max_{R_{[1,2],[3,b]}}|\psi|\]
whence
\begin{align*}
	|\psi'| &\geq |\psi''| - |\psi' - \psi''|\\
	& \geq \left(e^{-\frac{1}{2}\alpha b \log a} - e^{(C_1-\alpha)b \log a}\right) \max_{R_{[1,2],[3,b]}}|\psi|\\
	&\geq \frac{1}{2} e^{-\frac{1}{2} \alpha b \log a} \max_{R_{[1,2],[3,b]}}|\psi|
\end{align*}
on a 2$\ve$ fraction of $R_{[1,a],[b-1,b]}\setminus F$, so long as $\alpha$ is sufficiently large, e.g. $\alpha > 2C_1$. We also have, by our unique extension bound in \Cref{unexbound}, that:
\begin{align*}|\psi| &\geq \Big| |\psi'|-|\psi-\psi'|\Big|\\
&\geq \max\{|\psi'| - e^{C_1b\log a},0\} \\
&\geq |\psi'|-e^{C_1b\log a} \end{align*}
holds on all of $R$, and in particular on $R_{[1,a],[b-1,b]}$.

By combining the two estimates we get on a $\ve$ proportion of sites on $R_{[1,a],[b-1,b]}\setminus F$ (and so in particular on at least one site) the following:
\[
	1 \geq |\psi| \geq \frac{1}{2}e^{-\frac{1}{2}\alpha b \log a}\max_{R_{[1,2],[3,b]}}|\psi|  -e^{C_1b\log a}
\]
which immediately yields $\max_{R_{[1,2],[3,b]}}|\psi| \leq 2e^{(C_1+\frac{1}{2}\alpha)b\log a}$. (Presuming $C_1b\log a > \log 2$.) 
Using \Cref{unexbound} again (recall that we have by hypothesis a bound on $\psi$ restricted to $R_{[1,a],[1,2]}$) gives:
	\[\max_R |\psi| \leq e^{(2C_1+\frac{1}{2}\alpha )b\log a} \]
	
Hence $\mathcal{E}_{\text{ni}'} \subset \mathcal{E}_{\text{ni}}$ so long as $\alpha$ is chosen sufficiently large; at this point $\alpha \geq 4C_1$ suffices. The rest of the proof is concerned with estimating $\mathcal{E}_{\text{ni}'}$. Towards accomplishing this, we consider certain random linear mappings determined by the potential, distinct from our $H$. The event $\mathcal{E}_{\text{ni}'}$ is entirely concerned with $\psi$ which are zero on $R_{[1,a],[1,2]}$. Such $\psi$ are entirely determined by their values on $R_{[1,2],[3,b]}$, what remains of $\partial^{w}R$.

Henceforth we call this data $\psi^0$, and so $\psi^0$ are maps $R_{[1,2],[b-1,b]} \rightarrow \R$. We are also concerned with the amount of large support in $R_{[1,a],[3,b]}$, where we hope to find at least a $2\ve$ fraction of sites in this region with $|\psi| \geq 1$. Henceforth we refer to $\psi$ restricted to this region by $\psi^1$.

Under the assumption that $\psi = 0$ on $R_{[1,a],[1,2]}$, $\psi^0$ totally determines $\psi^1$ for any fixed potential. So $\psi^0 \mapsto \psi^1$ is a random linear map, with its randomness coming from the potential. In these terms, $V \in \mathcal{E}_{\text{ni}'}$ can be reformulated as
\[ \inf_{\|\psi^0\|_\infty = 1} |\{|\psi^1| \geq e^{-\frac{1}{2}\alpha b \log a}\}| \geq 2\ve |R_{[1,a],[b-1,b]}|\]
with the implicit dependence on $V$ of this expression coming from the random map sending $\psi^0$ to $\psi^1$. Towards showing that this bound holds with high probability, we first show that the bound below
\[ |\{|\psi^1| \geq e^{-\frac{1}{2}\alpha b \log a}\}| \geq 2\ve |R_{[1,a],[2,b]}|\]
holds with high probability for any fixed $\psi^0$ with $\|\psi^0\|_{\ell^\infty} = 1$.

We make use of an entirely deterministic result from \cite{Ding-Smart}.
\begin{claim}\label{bigrow}
    There is a positive constant $C_2$ (depending on $M, \rho, \gamma$) so that for any $\psi^0: R_{[1,2],[3,b]} \rightarrow \R$, there is $(s_0,t_0) \in R_{[1,2],[3,b]}$ such that
	\[ |\psi| \geq e^{-C_2b\log a} \|\psi^0\|_\infty \text{ in } R_{[1,a],[s_0,t_0]}\]
\end{claim}

(The precise result is \cite[Claim 3.15]{Ding-Smart}.) More concisely, there is a ``tilted'' row where $\psi$ is not too small regardless of the choice of potential. From here we propagate this largeness down to $R_{[1,a],[b-1,b]}$ via the structure of $H$. This claim in particular is where the technical differences between our work and that of \cite{Ding-Smart} arise.
\begin{claim}
    There are positive constants $C_3$ and $c_3$ (allowed to depend on $M$, $\rho$ and $\gamma$ such that for any $\psi^0: R_{[1,2],[3,b]}$  with $\|\psi^0\|_\infty =1$  such that
	\begin{equation} \P[|\{|\psi^1| \geq e^{-C_3b\log a} \}| \geq \ve a\,|\,V_F = v] \geq 1-e^{-c_3a}
	\end{equation}
\end{claim}
We let $(s_0,t_0) \in R_{[1,2],[3,b]}$ be a point satisfying the conclusion of \Cref{bigrow}. Suppose that $(s',t')$ is some point in $R_{[3,a],[b-1,b]}$ with $(s'-1,t_0) \in R_{[2,a],[t_0,t_0]} \setminus F$. As a consequence of $H\psi = \overline{E}\psi$ and $\psi= 0$ on $R_{[1,a],[1,2]}$, we obtain
\begin{equation}\label{thing} \psi_{s',t'} = - \psi_{s'-2,t'} + \sum_{0 \leq k \leq \frac{t'-1}{2}} (-1)^k (4-\overline{E} + V_{s'-1,t'-1-2k}) \psi_{s'-1,t'-1-2k}\end{equation}

In particular, this follows by induction, (\ref{localhamtilt}) and our assumption that $\psi \equiv 0$ on $R_{[1,a],[1,2]}$. Because of the way in which solutions propagate, in general $\psi_{s,t}$ is determined entirely by the initial data $\psi|_{\partial^wR}$ together with the potential on $R_{[1,s-1],[1,b]} \cup R_{[1,a],[1,t-1]}$. In this particular case where the boundary data on $R_{[1,a],[1,2]}$ is zero however, we in fact have that $\psi_{s,t}$ is entirely determined by $\psi^0$ together with the potential on $R_{[1,s-1],[1,b]}$. In particular, for fixed $\psi^0$ the random functions $\psi_{s'-1, t'-1-2k}$ are measurable with respect to the $\sigma$-algebra generated by the potential on $R_{[1,s'-2],[1,b]}$. The first important consequence of this is that the dependence of $\psi_{s',t'}$ on $V_{s'-1,t_0}$ is entirely via the term $(-1)^{t'-t_0-1}(4-\overline{E}+V_{s'-1,t_0})\psi_{s'-1,t_0}$.

We rewrite (\ref{thing}) as
\[ \psi_{s',t'} = K \pm V_{s'-1,t_0}\psi_{s'-1,t_0}\]
collapsing all the terms not depending on $V_{s'-1,t_0}$ into $K$. We then have $|\psi_{s',t'}| \leq e^{-C_3b\log a}$ equivalent to
\[ |V_{s'-1,t_0}\psi_{s'-1,t_0} \pm K| \leq 2e^{-C_3b\log a}\]
or
\[ V_{s'-1,t_0} \in \left[\frac{\pm K-e^{-C_3b\log a}}{|\psi_{s'-1,t_0}|},\frac{\pm K+e^{-C_3b\log a}}{|\psi_{s'-1,t_0}|}\right]\] 
Hence, in particular, smallness of $\psi_{s',t'}$ is contingent on $V_{s'-1,t_0}$ falling in some small interval of length $\frac{2e^{-C_3b\log a}}{|\psi_{s'-1,t_0}|}$. However, by our lower bound on $|\psi_{s'-1,t_0}|$, we in fact have that the interval has size at most $2e^{(C_2-C_3)b\log a}$. As long we have:
\[ C_3 \geq C_2 - \frac{\log(\gamma)-\log 2}{\alpha^2}\]
we obtain $2e^{-(C_3-C)b\log a} \leq \gamma$ for $b\log a \geq \alpha ^2$. (Recall that $\gamma$ is one of the parameters witnessing the uniform anti-concentration of the variables $V_{s,t}$, see (\ref{unifnoncon}).) Then in particular, for $b\log a$ sufficiently large and $C_3$ so chosen
\[\P[|\psi_{s',t'}| \geq e^{-C_3b\log a}] \geq \rho\]
where $\rho$ is the other parameter witnessing the uniform anti-concentration. Moreover, because this analysis relied entirely on the study of $V_{s'-1,t_0}$, if we let $\mathcal{F}'$ denote the $\sigma$-algebra generated by the potential on $R_{[1,s'-2],[1,b]} \cup F$, we have
\[ \P[|\psi_{s',t'}| \geq e^{-C_3 b\log a}\,|\, \mathcal{F}'] \geq \rho\]
Now we let $s_1,\dots s_K$ be an increasing enumeration of the $(s,t) \in R_{[3,a],[b-1,b]}$ with $(s-1,t_0) \notin F$. Note that there is only one possible choice of $t$ because of the structure of the tilted coordinates, so that either all these points are $(s_k,b-1)$ or all of them are $(s_k,b)$, and so we call this common value $t_1$. If $\mathcal{F}_k$ is defined to be the $\sigma$-algebra generated by the potential on $R_{[1,s_k-2],[1,b]}$, then by the earlier discussion we have
\[ \P[|\psi_{s_k,t_1}| \geq e^{-C_3b\log a} | \mathcal{F}_k] \geq \rho\]

We now let $I_k$ be the indicator events
\[ I_k = \begin{cases} 0 \text{ if } |\psi_{s_k,t_1}| < e^{-C_3b\log a} \\ 1 \text{ if } |\psi_{s_k,t_1}| \geq e^{-C_3b\log a}\end{cases}\]
and $S_k = \sum_{j=1}^k (I_j - \rho)$. Clearly $S_k$ is a submartingale with respect to the filtration $\mathcal{F}_k$. In fact, if we let $\mathcal{F}$ denote the $\sigma$-algebra generated by the potential on $F$, then the same thing is true if we consider $\tilde{S}_k = \E[S_k|\mathcal{F}]$ and the filtration of $\tilde{\mathcal{F}}_k$ with $\tilde{\mathcal{F}}_k = \sigma(\mathcal{F}\cup \mathcal{F}_k)$. Henceforth we consider $\tilde{S}_K$, noting that
\[ \tilde{S}_K + K\rho \leq |\{|\psi^1| \geq e^{-C_3 b \log a}\}|\]

\noindent Using the Azuma inequality, we get for any $\ve' > 0$
\[ \P[ \tilde{S}_K + K \rho \leq  K (\rho-\ve') ] \leq e^{-\frac{K(\ve')^2}{2}}\]
By the sparsity of $F$, we have $K \geq \frac{1}{3}(1-\ve) a$, where $\ve$ is the sparsity parameter. Assuming e.g. $\ve < \frac{1}{2}$, we obtain $\frac{1}{6}a \leq K \leq a$ and hence
\[ \P[\{|\psi^1| \geq e^{-C_3b \log a} \}| \leq a(\rho - \ve')] \leq e^{-\frac{a(\ve')^2}{12}}\]
\noindent Letting $\ve'' = (\rho - \ve')$, we obtain
\[ \P[\{|\psi^1| \geq e^{-C_3b \log a} \}| \leq a\ve''] \leq e^{-\frac{1}{12}a(\rho-\ve'')^2}\]
For all $\ve''< \frac{\rho}{2}$ we have
\[\frac{1}{12}a(\rho - \ve'') \geq \frac{1}{48}a\rho^2\]
Hence for small $\ve''$ we have
\[ \P[\{|\psi^1| \geq e^{-C_3b \log a} \}| \leq ac\ve''] \leq e^{-c_3a}\]
with $c_3 = \frac{\rho^2}{48}$.
This is precisely what was sought; $|R_{[1,a],[b-1,b]}| \geq \frac{1}{3}a$. Hence if we set $\ve'' = 6\ve$, we have
\[ \P[\{|\psi^1| \geq e^{-C_3b \log a} \}| \leq 2\ve |R_{[1,a],[b-1,b]}] \leq e^{-c_3a}\]
and moreover because everything can be done using the conditioned variables $\tilde{S}_k$, we obtain
\[ \P[\{|\psi^1| \geq e^{-C_3b \log a} \}| \geq 2\ve |R_{[1,a],[b-1,b]} |\,|\, V_F = v] \geq 1- e^{-ac}\]
as was sought. (Because we assume $\ve''\leq \rho/2$, this imposes the condition $\ve \leq \frac{\rho}{12}$.)

To finish the proof we establish the existence of $\tilde{X}$ a large finite subset of the $\ell^\infty(R_{[1,2],[3,b]})$ unit ball so that for any $\psi^0$ with $\|\psi^0\|_\infty = 1$, we have the existence of some $\tilde{\psi}^0 \in \tilde{X}$ with $\|\psi^0 - \tilde{\psi}^0\|_\infty \leq e^{-Cb\log a}$. Specifically, we have the following:
\begin{claim}
	For any $\beta > 0$, there is $\tilde{X} \subset \{ \|\psi^0\|_\infty =1 \}$ with $|\tilde{X}| \leq e^{2\beta b^2\log a}$ and \[\sup_{\|\psi^0\|_\infty = 1}\inf_{\tilde{\psi}^0 \in \tilde{X}}  \|\psi^0 - \tilde{\psi}^0\|_\infty \leq e^{-2\beta b\log a} \]
    so long as $b$ is sufficiently large; the requisite largeness of $b$ only depends on $\beta$, and not on $M$, $\rho$ or $\gamma$.
\end{claim}

We briefly let $\eta= e^{-2\beta b \log a}$, and recall some basic notions from metric geometry.

\begin{deffo}
    A (non-empty) subset $\tilde{X}$ of a metric space $(x,d)$ is called an $\eta'$-net (for $\eta' > 0$) if
    \[ \sup_{x \in X} \inf_{x' \in X'} d(x,x') \leq \eta' \]
\end{deffo}

\begin{deffo}
    A subset $\tilde{X}$ of a metric space $(X,d)$ is called an $\eta'$-packing if
    \[ \inf_{\substack{x,x' \in \tilde{X}\\ x \neq x'}} d(x,x') \geq \eta'\]
\end{deffo}

We will also use the following well-known and elementary fact:
\begin{prop}\label{packing2net}
    Any maximal $\eta'$ packing (with respect to inclusion) is an $\eta'$-net. 
\end{prop}
Because of its brevity we include the proof: presume an $\eta'$-packing is not an $\eta'$-net. Then there is a point $x$ at a distance $\eta'$ from all points in our packing, and so in particular one can add this point and preserve the packing property. Hence if a packing is not a net, it is not maximal. From this proposition, we see that to demonstrate the existence of small $\eta$-nets (which is what the claim  seeks to do) it suffices to show a bound on the size of $\eta$-packings.

The claim essentially follows from the fact the in a $d$-dimensional space, the surface area of a unit ball of radius $\eta$ in in the unit sphere is on the order of $\eta^{d-1}$; however, since we are dealing with the $\ell^\infty$ unit ball there is no natural notion of surface area. Moreover, we are not just taking $\eta' \to 0$ with the dimension fixed; we are concerned with the case where $\eta'$ is exponentially small in the dimension, as $d \rightarrow \infty$. We will use a bound which takes both of these into account. 

We let $\Phi$ be the map $x \mapsto x/\|x\|_\infty$; this gives a homeomorphism from the $\ell^2$ unit sphere to the $\ell^\infty$ unit sphere. We let $m$ be the pushforward of surface area via $\Phi$. Then the claim follows more or less from Proposition \ref{packing2net} and the following bound, valid for any $0 < \eta' < 1/2$ with some completely universal $c$:

\begin{equation}\label{packingbd}
    \frac{m(\{\,\|x\|_\infty =1\,\text{ and } \|x-y\|_\infty \leq \eta'\})} {m(\{\|x\|_\infty = 1\})} \geq c (\eta')^{d-1} d^{\frac{d-1}{4}}
\end{equation}

We do not provide a comprehensive proof of this bound; it is tedious but straightforward. We briefly discuss the main ingredients. One of the key steps is to transport the problem from $\ell^\infty$ to $\ell^2$, via the map $\Phi$. The following basic bound for vectors in $\R^d$ is necessary:
\[ \|x\|_\infty \leq \|x\|_2 \leq \sqrt{d} \|x\|_\infty \]
This means that any $\eta'$ ball in $\ell^\infty$ has preimage under $\Phi$ containing an $\eta'/\sqrt{d}$ ball. Up to introducing additional polynomial (and so for our purposes easily controlled) factors, the problem is now reduced to $\ell^2$. There are formula for the areas of spheres in dimension $d$ and spherical caps of a given angle in terms of special functions; we use formulae for the latter from \cite{li2010concise}. Combined with inequalities for the relevant special functions from \cite{cerone2007special}, the bound follows.

Note that for any $\eta'$-packing, the $\eta'$ neighborhoods are disjoint. (Our particular definitions do allow some overlap, but of positive codimension, which can be ignored measure theoretically.) In particular, a lower bound on the mass of $\eta$-balls in proportion to the mass of the whole unit sphere yields an upper bound on the size of any $\eta$-packing. Plugging in $\eta$ and $d = |R_{[3,a],[1,b]}| \in [b/2, b]$ to \ref{packingbd} , we get an upper bound on the size of any $\eta$-packing: the largest any such packing can be is $Ce^{\beta b^2 \log a}b^{b/4}$, with $C$ absolute. Taking $b$ sufficiently large (depending only on $\beta$) one can absorb the terms $Cb^{b/4}$ by increasing the constant in the exponent of the other term, giving the upper bound $e^{2\beta b^2 \log a}$ for all $b$ sufficiently large, proving the claim.

Combining this claim with our bound from \Cref{unexbound} yields
\[ \inf_{\|\psi^0\|_\infty = 1} |\{|\psi^1| \geq e^{-\beta b \log a} - e^{(C_1-2\beta)b\log a}\}| \geq \inf_{\tilde{\psi}^0 \in \tilde{X}} |\{|\tilde{\psi}^1| \geq e^{-\beta b \log a}\}| \]

Hence as long as $\beta$ is sufficiently large, we have
\begin{align*} &\P\left[ \inf_{\|\psi^0\|_\infty = 1} |\{|\psi^1| \geq e^{-\frac{1}{2} \beta \log a}\}| \geq 2\ve |R_{[1,a],[b-1,b]}|\right]  \geq \\
& \P\left[ \inf_{\tilde{\psi}^0 \in \tilde{X}} |\{|\tilde{\psi}^1| \geq e^{-2 \beta \log a}\}| \geq 2\ve |R_{[1,a],[b-1,b]}|\right]
\end{align*}
whence (possibly assuming $\beta$ even larger) we can bound the right hand side from below by $1 - e^{C\beta b^2\log a-ca}$ by a union bound. Taking $\alpha$ large enough compared to $C\beta b^2 \log a$, we obtain a bound of $1-e^{-\ve a}$ for $a > \alpha$. Moreover, while we omitted this from the notation for the sake of space, this all holds true on the level of conditioning on $V_F =v$, proving the key lemma.\end{proof}

We have thus far followed \cite{Ding-Smart} somewhat closely, but we have made some changes. Having proven \Cref{key}, \Cref{ucucthmfinal} follows using more or less the exact strategy used by Ding and Smart in the original paper \cite{Ding-Smart} to prove their Theorem 3.5 from their Lemma 3.12. The rest of the details will be carried out in Appendix \ref{key2uc}.

\section{Anti-concentration estimates}\label{antisec}
We recall in this section certain notions introduced in \cite{Ding-Smart}, and prove probabilistic results necessary for the proof of the Wegner lemma in \Cref{wegnersec}. We will be working always with certain measures on $2^{\{1,\dots,N\}}$ (the powerset of $\{1,\dots,N\}$) which have a product structure. To clarify things, it is useful to identify subsets $\boldsymbol{\xi} \subset \{1,\dots,N\}$  with elements of the discrete hypercube $\{0,1\}^N$ in the natural way: we abuse notation and use $\boldsymbol{\xi}$ to also denote the vector $(\xi_1,\dots,\xi_N) \in \{0,1\}^N$ with $\xi_k =1$ if and only if $k \in \boldsymbol{\xi}$. We will generally use $\boldsymbol{\xi}$ for a random subset/element of the discrete hypercube, and $\boldsymbol{\eta}$ for a deterministic one in this section. We will also use $\boldsymbol{\psi}_k$ also to denote certain random sets/points in the discrete hypercube, but only those forming part of a random maximal chain. We denote by e.g. $\boldsymbol{\xi}^c$ the complement of $\boldsymbol{\xi}$.

\begin{deffo}
	A probability measure $\P$ on $2^{\{1,\dots,N\}}$ (equivalently $\{0,1\}^N$) has product structure if the component variables $\xi_k$ are all independent. We say moreover it has non-trivial product structure if $p_k := \P[\xi_k = 1] \in (0,1)$. We may also call distributions with these properties product distributions and non-trivial product distributions.
\end{deffo}
Note that the $\xi_k$ are automatically Bernoulli variables, taking values in $\{0,1\}$, so that product structure means they are independent Bernoulli variables. We are concerned with showing that given certain $\mathcal{A} \subset 2^{\{1,\cdots,N\}}$ with combinatorial structure, $\P[\boldsymbol{\xi} \in \mathcal{A}]$ is small for $\P$ with non-trivial product structure. In the second half of this section we will develop bounds for events depending on $N$ Bernoulli variables; in the first half we prove that such bounds can tell us about events depending on $N$ general variables which are uniformly anti-concentrated.
\subsection{Bernoulli decompositions}
The fact that we can essentially reduce to the case of independent Bernoulli variables will be shown using the theory of Bernoulli decompositions, representations of random variables in terms of uniform noise together with a Bernoulli component. Specifically:
\begin{deffo}
	We say a pair of measurable real valued functions $(Y(t),Z(t))$ with domain $(0,1)$ are a $p$-Bernoulli decomposition of a real valued random variable $X$ if, for $t$ uniformly distributed on $[0,1]$, $\xi$ a Bernoulli random variable with probability $p$ of taking value $1$, and $\xi$ and $t$ independent we have
	\[ X \overset{\mathcal{D}}{=} Y(t) + Z(t)\xi \]
	where the equality is in the sense of distributions.
\end{deffo}
Aizenman, Germinet, Klein and Warzel showed the existence of Bernoulli distributions optimizing certain quantities related to $Z(t)$, which should be understood as the strength or influence of the Bernoulli noise, in \cite[Theorems 2.1,2.2]{Aizenman2009}. The original paper also detailed various applications, including some to the theory of random Schr\"odinger operators. These methods were also used in \cite{GK2012} to push the methods of \cite{Bourgain2005}; this allowed them to consider general singular noise. The proofs for results on general i.i.d. noise in fact required results for certain non-stationary operators, so called ``generalized Anderson Hamiltonians''.

The absence of stationarity presents some difficulties; the potential at each site admits a decomposition, even a non-trivial one, but these may vary wildly. (By non-trivial, we mean that $\inf_t Z(t)$, the strength of the Bernoulli noise, and $p(1-p)$, the variance of the Bernoulli part, are both positive.) Prima facie, there is also no reason that the variance or strength of the Bernoulli parts can't get arbitrarily small, even if they are guaranteed to be positive everywhere by \cite[Remark 2.1]{Aizenman2009}. Crucially for our whole argument, we show that we can find decompositions so that this is not the case; the variance and strength of the Bernoulli noise are uniformly bounded below.
\begin{thm}\label{decompgoodalt}
	Let $0<\gamma$, $0<\rho<1$ and $M>0$. Then there are $0 < p_-(\gamma,\rho,M) < p_+(\gamma,\rho,M) < 1$ and $\iota(\gamma,\rho,M) > 0$ such that any $X$ which satisfies $0 \leq X \leq M$ almost surely and is anti-concentrated with respect to the parameters $\gamma$ and $\rho$ admits a $p$-Bernoulli decomposition $X \overset{\mathcal{D}}{=} Y(t) + Z(t)\xi$ where
	\begin{enumerate}[label=(\Alph*)]
		\item $p_- \leq p \leq p_+$
		\item $Z(t) \geq \iota > 0$ for all $t \in (0,1)$
	\end{enumerate}
\end{thm}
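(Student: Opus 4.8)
The goal is to produce a $p$-Bernoulli decomposition $X \overset{\mathcal{D}}{=} Y(t) + Z(t)\xi$ for a bounded, anti-concentrated random variable, with $p$ bounded away from $0$ and $1$ and $Z(t)$ bounded below, all quantitatively in terms of $\gamma$, $\rho$, $M$. I would follow the Aizenman--Germinet--Klein--Warzel recipe from \cite{Aizenman2009}: the decomposition is built from the distribution function $F(x) = \P[X \leq x]$ by, roughly, choosing a level $p$ and splitting the ``mass below'' from the ``mass above''. Concretely, for a fixed $p \in (0,1)$ one sets $Z(t)$ to be a suitable gap in the support of $X$ (or more precisely a quantity measured from the quantile function $F^{-1}$), and $Y(t)$ to be the residual, so that conditionally on the Bernoulli coin one lands in the lower or upper portion of the distribution. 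The content of \cite[Remark 2.1]{Aizenman2009} is that as long as $X$ is non-degenerate one can do this with $Z(t) > 0$; what I must add is the \emph{uniform} lower bound $Z(t) \geq \iota$ and the uniform window $p_- \leq p \leq p_+$, with $\iota, p_-, p_+$ depending only on $(\gamma,\rho,M)$.

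\textbf{Key steps.} First, translate anti-concentration into a statement about the quantile function. By Proposition \ref{var2ac} and Definition \ref{unifacdef}, $\P[|X - x| > \gamma/2] \geq \rho$ for every $x \in [0,M]$; applied with $x$ ranging over $[0,M]$ this forces the distribution of $X$ to have ``spread'': there is no interval of length $\gamma$ carrying all but $\rho$ of the mass. Equivalently, writing $Q = F^{-1}$ for the (left-continuous) quantile function on $(0,1)$, one gets that $Q(1 - \rho/2) - Q(\rho/2) \geq \gamma/2$ — i.e. the ``middle $(1-\rho)$'' of the distribution spans a range of at least order $\gamma$. (I would prove this by contradiction: if $Q(1-\rho/2)-Q(\rho/2) < \gamma/2$, take $x$ to be the midpoint of $[Q(\rho/2),Q(1-\rho/2)]$; then $\P[|X-x| \leq \gamma/2] \geq \P[Q(\rho/2) \leq X \leq Q(1-\rho/2)] \geq 1-\rho$, contradicting anti-concentration.) Second, fix the coin probability in the safe window, say $p := \rho/2$, so automatically $p_- := \rho/2 \leq p \leq 1 - \rho/2 =: p_+$; this already pins down $p_\pm$ in terms of $\rho$ alone. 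Third, build the decomposition: following \cite{Aizenman2009}, define for $t \in (0,1)$ the pair $(Y(t), Z(t))$ so that $Y(t) + Z(t)\xi$ reproduces $F$ — the standard construction places the ``$\xi = 0$'' branch on the bottom $1-p$ quantiles reparametrized and the ``$\xi=1$'' branch on the top $p$, with $Z(t)$ equal to the horizontal distance between matched quantiles. Fourth, and this is the crux, bound $Z(t)$ below: because $Z(t)$ is (in the AGKW construction) comparable to a difference of quantiles straddling the level $1-p = 1 - \rho/2$, and we have just shown $Q(1-\rho/2) - Q(\rho/2) \geq \gamma/2$, one extracts $Z(t) \geq \iota$ with $\iota$ a fixed multiple of $\gamma$ (the precise multiple falling out of the explicit formula for $Z$; the boundedness $0 \leq X \leq M$ ensures $Y, Z$ are finite and $Z \leq M$). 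Finally, verify the distributional identity $X \overset{\mathcal{D}}{=} Y(t) + Z(t)\xi$ with $t$ uniform and $\xi$ independent Bernoulli$(p)$ — this is exactly the AGKW computation and requires no new idea, just bookkeeping with $F$ and its quantile inverse.

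\textbf{Main obstacle.} The real work is the fourth step: getting the \emph{uniform} lower bound $Z(t) \geq \iota > 0$ rather than merely $Z(t) > 0$ pointwise. The AGKW existence proof gives positivity but is not obviously quantitative, so I would not invoke it as a black box for the bound; instead I would write the decomposition explicitly enough that $Z(t)$ is visibly a gap between two quantiles of $X$ that lie on opposite sides of the level $1-p$, and then feed in the quantile-spread estimate from step one. A secondary subtlety is checking that the explicit construction genuinely yields a valid decomposition for \emph{every} admissible $X$ (including atomic distributions, e.g. Bernoulli $\tfrac12(\delta_0 + \delta_1)$, where the quantile function has jumps) — here one must be careful with left/right continuity conventions and with the case where a quantile level is hit by an atom, but these are routine measure-theoretic checks rather than genuine difficulties. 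Once $Z(t) \geq \iota$ and $p \in [\rho/2, 1-\rho/2]$ are in hand with $\iota = c\gamma$, the theorem is proved with $p_-(\gamma,\rho,M) = \rho/2$, $p_+(\gamma,\rho,M) = 1-\rho/2$, and $\iota(\gamma,\rho,M) = c\gamma$.
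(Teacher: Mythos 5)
Your overall framework (the AGKW quantile construction $Y(t)=G((1-p)t)$, $Z(t)=G(1-p+pt)-G((1-p)t)$ with $G$ the inverse CDF) is the same as the paper's, but the crux step — the uniform lower bound on $Z(t)$ — does not work as you propose. The problem is your second step: you fix $p=\rho/2$ \emph{independently of the distribution of} $X$ and then try to deduce $\inf_t Z(t)\geq c\gamma$ from the quantile spread $Q(1-\rho/2)-Q(\rho/2)\geq \gamma/2$. This deduction is false. The two quantile levels entering $Z(t)$ are $(1-p)t$ and $1-p+pt$; as $t\to 1$ the lower level tends to $1-p$ from below, so if the level $1-p$ lands strictly inside the quantile range of a single atom, both levels eventually map to the same value and $Z(t)=0$ there. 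The simplest admissible variable already kills the plan: for $X\sim\frac12(\delta_0+\delta_1)$ (anti-concentrated with $\gamma=\rho=1/2$, $M=1$), your choice $p=\rho/2=1/4$ gives $1-p=3/4$, and then $G(3t/4)=G(3/4+t/4)=1$ for all $t>2/3$, so $\inf_t Z(t)=0$ — even though $Q(3/4)-Q(1/4)=1$. The spread between widely separated quantile levels tells you nothing about how $G$ increases across the specific level $1-p$, which is what controls $\inf_t Z(t)$.

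The missing idea is that $p$ must be \emph{adapted to} $X$ (while staying in a uniform window $[p_-,p_+]$): one must place the splitting level $1-p=\P[X\leq \frac{a+b}{2}]$ at the midpoint of an interval $[a,b]$ of values whose width is bounded below and which carries much less mass than either of the two sides $\{X\leq\frac{a+b}2\}$, $\{X\geq\frac{a+b}2\}$ — then $G((1-p)t)$ stays at or below $a$ (resp.\ $\frac{a+b}2$) and $G(1-p+pt)$ stays at or above $\frac{a+b}2$ (resp.\ $b$), giving $\inf_t Z(t)\geq\frac{b-a}{2}$. The paper produces such an interval by a pigeonhole argument: it finds a point $x$ with mass $\geq\gamma/2M$ nearby and mass $\geq\rho/2$ on one side, and then runs over a geometric sequence of nested subintervals of width $\gamma 2^{-k}$ until one of them carries less than $\frac12\min\{\gamma/M,\rho/2\}$ of the mass; this must happen within $O(\max\{1/\rho,M/\gamma\})$ steps. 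The price is that $\iota$ comes out exponentially small in $\max\{1/\rho,M/\gamma\}$ rather than your claimed $c\gamma$, and $p_\pm$ come out as $\min\{\rho/2,\gamma/M\}$ and $1-\min\{\rho/2,\gamma/M\}$ rather than being determined by $\rho$ alone. (A minor further point: your contradiction argument for the spread estimate only yields $\P[|X-x|>\gamma/2]\leq\rho$, which is consistent with the hypothesis $\geq\rho$, so even that preliminary step needs its constants adjusted — but this is cosmetic compared to the main gap.)
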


\begin{proof}
	Our Bernoulli decompositions will be defined in terms of the ``inverse'' cumulative distribution function $G_X(t):= \inf\{u \in \R\,:\, \P[X \leq u]\geq t\}$. Though the following reformulation is obvious, it will clarify certain aspects of the proof:
	\begin{equation}\label{invcdfiff} G_X(t) \leq u \Leftrightarrow \P[X\leq u]\geq t\end{equation}
	We will henceforth drop the explicit dependence on $X$ in our notation, as we will consider one variable at a time. It is verified in e.g. \cite{Aizenman2009} that
	\begin{equation}\label{berform}
		Y(t)=G((1-p)t)\qquad Z(t) = G(1-p+pt)-G((1-p)t)
	\end{equation}
	is a $p$-Bernoulli decomposition of $X$. Aizenman et al. showed that for any non-trivial $X$, there is some $p\in (0,1)$ such that $\inf_t Z(t) > 0$. Our result then fundamentally says this can be uniformized over all $X$ obeying a uniform bound and uniformly anti-concentrated with respect to parameters $\gamma$ and $\rho$.
	
	Na\"ively, we may hope that $\inf_t G(1-p+pt) > \sup_t G((1-p)t)$; this can occur, but requires some kind of ``gap'', e.g. if $X\sim \text{Unif}(0,1)$, then for any $p$ we have
	\[ \lim_{t\rightarrow 1} G((1-p)t) = \lim_{t\rightarrow 0} G(1-p+pt)\]
	$G$ is monotone, so these limits give the supremum and infimum. However, finding a gap where a very small amount of mass is concentrated is still the key to our proof. Specifically:
	\begin{claim}\label{abc}
		If $a<b$ are such that
		$\P[X \in [a,b]] < \frac{1}{2}\min\{\P[X\leq \frac{b+a}{2}],\P[X \geq \frac{b+a}{2}]\}$
		then
		\[ \inf_t [G(1-p+pt)-G((1-p)t)] \geq \frac{b-a}{2}\]
		for $1-p = \P[X\leq \frac{b+a}{2}]$.
	\end{claim}
	The claim follows immediately from the bounds
	\begin{align*} G((1-p)t) &\leq \begin{cases} a &\text{ for }t < \frac{1}{2}\\ \frac{b+a}{2}&\text{ for }t \geq \frac{1}{2}\end{cases}\\
		G(1-p+pt) &\geq \begin{cases} \frac{b+a}{2} &\text{ for } t < \frac{1}{2}\\ 
			b&\text{ for } t \geq \frac{1}{2} \end{cases}\end{align*}
	by applying (\ref{invcdfiff}). Hence it remains to show that appropriate $a$ and $b$ can be found.
	
    We now assume without loss of generality that $\gamma \leq M/4$; anti-concentration is a monotone condition so we can take $\gamma$ smaller without issue. (Our bounds will also be worse with smaller $\gamma$, so uniformity is not violated.) Clearly (as one can see by covering $[0,M]$ with intevals of size $\gamma/2$) we have $\sup_{x \in [0,M]} \P\left[X \in \left[x-\frac{\gamma}{4},x+\frac{\gamma}{4}\right]\right] \geq \frac{\gamma}{2M}$. In particular, there is some $x$ with $\P[ X \in \left[ x-\frac{\gamma}{4}, x + \frac{\gamma}{4}\right] \geq \gamma/2M$. Simultaneously, clearly for any $x$ either $\P[X > x+\frac{\gamma}{2}] \geq \rho/2$ or $\P[X < x-\frac{\gamma}{2}] \geq \rho/2$. We will treat the former case explicitly, and from now on take $x$ fixed so that
	\[ \P\left[X \in \left[x-\frac{\gamma}{4},x+\frac{\gamma}{4}\right]\right] \geq \frac{\gamma}{2M}\]
	and
	\[ \P\left[X > x+\frac{\gamma}{2}\right] \geq \frac{\rho}{2}\]
    The rough idea is to find a subinterval $I \subset \left[ x+\frac{\gamma}{4}, x + \frac{\gamma}{2}\right]$ with $\P[I]$ suitable for using \Cref{abc}; towards this end we partition $\left[ x+\frac{\gamma}{4}, x + \frac{\gamma}{2}\right]$ into $K$ many intervals for \begin{equation*} K := 8\max\left\lceil\left\{\frac{2}{\rho},\frac{2M}{\gamma}\right\}\right\rceil + 1\end{equation*} 
    
    Indeed, we let\begin{equation*}
        I_k := \left[ x+\frac{\gamma}{4} + \frac{k}{K}\frac{\gamma}{4}, x + \frac{\gamma}{4} + \frac{k+1}{K}\frac{\gamma}{4}\right]
    \end{equation*}
    for $k = 0,\cdots, K-1$.
    \begin{claim}\label{kadjclaim}
        There is $k \in \{0,\cdots, K-2\}$ such that 
        \begin{equation}\label{adjk}
            \max\{\P[X \in I_k],\P[X \in I_{k+1}]\} \leq \frac{1}{2} \min\left\{\frac{\rho}{2}, \frac{\gamma}{2M}\right\}\
        \end{equation}
    \end{claim}
    Indeed, we have the following elementary bound
	\begin{equation}\label{elementary}\sum_{k=0}^{K-1} \P[X \in I_k] \leq 2\P[X \in \cup_{k=0}^{K-1} I_k] \leq 2\end{equation}
    In particular, there can be at most $4\max\left\{\frac{2}{\rho},\frac{2M}{\gamma}\right\}$ intervals  $I_k$ such that $\P[X \in I_k] > \frac{1}{2}\min\left\{\frac{\rho}{2},\frac{\gamma}{2M}\right\}$. By our choice of $K$, it is impossible that there is no $k$ satisfying (\ref{adjk}).
	
    Fixing $k$ satisfying the conclusion of \Cref{kadjclaim}, we set $1-p = \P[X \leq x + \frac{\gamma}{4} + \frac{k+1}{K}\frac{\gamma}{4}]$. Note that this is precisely the midpoint of the interval $I:=I_k \cup I_{k+1}$. We obtain
	\[\inf_t[G(1-p+pt)-G((1-p)t)] \geq \frac{|I|}{2} = \frac{\gamma}{4K}\]
	by \Cref{abc}. By construction $\frac{\gamma}{2M}\leq 1-p \leq 1-\frac{\rho}{2}$. This concludes the case where at least half the mass outside $[x-\frac{\gamma}{2},x+\frac{\gamma}{2}]$ is above the interval; accounting for the symmetric case gives the theorem with
	\begin{align*}
		p_- &= \min\left\{\frac{\rho}{2},\frac{\gamma}{2M}\right\}\\
		p_+ &= 1-\min\left\{\frac{\rho}{2},\frac{\gamma}{2M}\right\}\\
		\iota &= \frac{\gamma}{4}\frac{1}{8\max\left\{\frac{2}{\rho},\frac{2M}{\gamma}\right\} + 1}
	\end{align*}

\end{proof}

We mention briefly that this readily yields a version for almost surely bounded variables with positive variance by using \Cref{pzcor}.
\begin{thm}
	For any positive $M$ and $\sigma^2$,  there are $p_-(M,\sigma^2) > 0$, $p_+(M,\sigma^2)<1$ and $\iota(M,\sigma^2)>0$ such that any random variable $X$ with $0 \leq X \leq M$ almost surely and $\var X \geq \sigma^2$ has a $p$-Bernoulli decomposition $X \overset{\mathcal{D}}{=} Y(t) + \xi Z(t)$ with
	\begin{enumerate}
		\item $p_-\leq p \leq p_+$
		\item $Z(t) \geq \iota > 0$ for all $t \in (0,1)$
	\end{enumerate}
\end{thm}

\subsection{The $\kappa$-Sperner property and related probabilistic bounds}

Our work in showing uniform Bernoulli decompositions exist mean that in some sense it suffices to understand the case where the potential variables are independent Bernoulli variables which are not necessarily identically distributed. This will be made precise in the proof of the Wegner estimate; in that same proof it will also be shown that under certain circumstances, configurations of the potential for which a certain resolvent is inconveniently large have a certain combinatorial property introduced in \cite{Ding-Smart}:

\begin{deffo}
	We say $\mathcal{A} \subset 2^{\{1,\dots,N\}}$ is a $\kappa$-Sperner family for $\kappa \in (0,1]$ if for every $\boldsymbol{\eta} \in \mathcal{A}$ there is $B(\boldsymbol{\eta}) \subset \boldsymbol{\eta}^c$ satisfying $|B(\boldsymbol{\eta})| \geq \kappa|\boldsymbol{\eta}^c|$ such that for any $\boldsymbol{\eta}' \in \mathcal{A}$, $\boldsymbol{\eta}' \supset \boldsymbol{\eta}$ implies $\boldsymbol{\eta}' \cap B(\boldsymbol{\eta}) =\varnothing$.
\end{deffo}
The $\kappa =1$ case corresponds to the classical notion of a Sperner family, a collection of subsets which is an antichain with respect to inclusion. The bound proved in \cite{Ding-Smart} is the following:
\begin{thm}[\cite{Ding-Smart}]\label{rhosperner}
	If $\mathcal{A}\subset \{1,\dots,N\}$ is $\kappa$-Sperner for $\kappa \in (0,1]$, then
	\[ |\mathcal{A}| \leq \frac{2^N}{\kappa \sqrt{N}}\]
\end{thm}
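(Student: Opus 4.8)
The plan is to adapt the classical chain-counting proof of Sperner's theorem. I would work with a uniformly random maximal chain in the Boolean lattice $2^{\{1,\dots,N\}}$: take a uniformly random permutation $\sigma$ of $\{1,\dots,N\}$ and set $\boldsymbol{\psi}_k = \{\sigma(1),\dots,\sigma(k)\}$, so that $\varnothing = \boldsymbol{\psi}_0 \subsetneq \boldsymbol{\psi}_1 \subsetneq \cdots \subsetneq \boldsymbol{\psi}_N = \{1,\dots,N\}$. Let $T$ denote the number of indices $k$ with $\boldsymbol{\psi}_k \in \mathcal{A}$. Since $\P[\boldsymbol{\psi}_{|\boldsymbol{\eta}|} = \boldsymbol{\eta}] = \binom{N}{|\boldsymbol{\eta}|}^{-1}$ for each fixed $\boldsymbol{\eta} \subset \{1,\dots,N\}$, linearity of expectation gives $\E[T] = \sum_{\boldsymbol{\eta} \in \mathcal{A}} \binom{N}{|\boldsymbol{\eta}|}^{-1} \geq |\mathcal{A}|\,\binom{N}{\lfloor N/2 \rfloor}^{-1}$, using that the central binomial coefficient is the largest one. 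It therefore suffices to prove $\E[T] \leq 1/\kappa$ and then invoke the elementary estimate $\binom{N}{\lfloor N/2\rfloor} \leq 2^N N^{-1/2}$, valid for all $N \geq 1$.

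The core of the argument is the bound $\E[T] \le 1/\kappa$, which I would obtain from the one-step tail estimate $\P[T \geq k+1 \mid T \geq k] \leq 1-\kappa$ for every $k \geq 1$; this telescopes to $\P[T \ge k] \le (1-\kappa)^{k-1}$, hence $\E[T] = \sum_{k \ge 1}\P[T\ge k] \le 1/\kappa$. To prove the conditional bound, condition on the history of the chain up to and including the moment $\tau_k$ of its $k$-th visit to $\mathcal{A}$, with $\boldsymbol{\psi}_{\tau_k} = \boldsymbol{\eta} \in \mathcal{A}$. Given this history, the remaining elements $\sigma(\tau_k+1),\dots,\sigma(N)$ are inserted in uniformly random order, i.e. they form a uniformly random ordering of $\boldsymbol{\eta}^{c}$. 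If the chain visits $\mathcal{A}$ again, the next visited set $\boldsymbol{\eta}'$ satisfies $\boldsymbol{\eta}' \supsetneq \boldsymbol{\eta}$ and $\boldsymbol{\eta}' \in \mathcal{A}$, so the $\kappa$-Sperner property applied to $\boldsymbol{\eta}$ forces $\boldsymbol{\eta}' \cap B(\boldsymbol{\eta}) = \varnothing$. But $\boldsymbol{\eta}' \setminus \boldsymbol{\eta}$ is an initial segment of the random ordering of $\boldsymbol{\eta}^{c}$, so in particular its first element $\sigma(\tau_k+1)$ must lie in $\boldsymbol{\eta}^{c} \setminus B(\boldsymbol{\eta})$; since $\sigma(\tau_k+1)$ is uniform on $\boldsymbol{\eta}^{c}$ and $|B(\boldsymbol{\eta})| \geq \kappa |\boldsymbol{\eta}^{c}|$, this happens with probability at most $1-\kappa$. (When $\boldsymbol{\eta} = \{1,\dots,N\}$ no further visit is possible and the bound is trivial.) This yields the conditional estimate, and the theorem follows.

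The main obstacle is not conceptual but bookkeeping: one must set up the filtration generated by the chain stopped at its successive visits to $\mathcal{A}$ carefully, and verify that, conditioned on $\{T \ge k\}$ together with the stopped history, the not-yet-inserted elements are genuinely placed in exchangeable (uniformly random) order — this is what legitimizes the step "the first new element is uniform on $\boldsymbol{\eta}^{c}$", and it is exactly the place where one pays attention to the fact that we are conditioning on a stopping level rather than a deterministic one. Once this is in place, the $\kappa$-Sperner hypothesis is used precisely once per step, and the passage from $\E[T] \le 1/\kappa$ to the stated inequality is the standard symmetric-chain computation plus the central binomial coefficient bound. This recovers \cite{Ding-Smart} in the form stated.
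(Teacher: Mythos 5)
Your proof is correct and follows essentially the same route as the source: the random maximal chain argument establishes the LYM-type bound $\sum_{\boldsymbol{\eta}\in\mathcal{A}}\binom{N}{|\boldsymbol{\eta}|}^{-1}=\E[T]\le 1/\kappa$ (which is Lemma \ref{dslym}), and the one-step conditional estimate $\P[T\ge k+1\mid T\ge k]\le 1-\kappa$ via the first newly inserted element is exactly the mechanism the paper itself adapts (and must work harder for) in the non-uniform setting of Lemma \ref{yylym}. The final passage through the central binomial coefficient is the standard one.
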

If $\boldsymbol{\xi}$ is uniformly distributed, then this result is equivalent to:
\[
	\P[\boldsymbol{\xi} \in \mathcal{A}] \leq \frac{1}{\kappa\sqrt{N}}\]
for any $\mathcal{A} \subset \{1,\dots,N\}$ which is $\kappa$-Sperner.
We will prove an analogous result for general non-trivial product distributions.

\begin{lem}\label{combbound}
	For any non-trivial product distribution with $\beta \leq \min\{p_k,1-p_k\}$ for all $k$, and $\mathcal{A}$ a $\kappa$-Sperner family in $2^{\{1,\dots,N\}}$, we have:
	\begin{equation}\label{probboundouch}\P[\boldsymbol{\xi} \in \mathcal{A}] \leq \frac{C_\beta}{\kappa \sqrt{N}}\end{equation}
	where $C_\beta$ is a constant depending only on $\beta$.
\end{lem}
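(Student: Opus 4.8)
The plan is to deduce Lemma~\ref{combbound} from the Lubell--Yamamoto--Meshalkin-type inequality for product distributions (Lemma~\ref{yylym}, obtained from the hypercube-slicing machinery of Yehuda and Yehudayoff) together with an elementary anti-concentration bound for the ``level statistic'' $|\boldsymbol{\xi}| = \sum_{k} \xi_k$. Write $L_k = \{\boldsymbol{\eta}\subset\{1,\dots,N\} : |\boldsymbol{\eta}| = k\}$ for the $k$-th level of the cube and $\mathcal{A}_k = \mathcal{A}\cap L_k$. The whole argument rests on the trivial decomposition
\[
\P[\boldsymbol{\xi}\in\mathcal{A}] = \sum_{k=0}^{N}\P[\boldsymbol{\xi}\in\mathcal{A}_k] = \sum_{k=0}^{N}\frac{\P[\boldsymbol{\xi}\in\mathcal{A}_k]}{\P[|\boldsymbol{\xi}|=k]}\,\P[|\boldsymbol{\xi}|=k] \le \Bigl(\max_{0\le m\le N}\P[|\boldsymbol{\xi}|=m]\Bigr)\sum_{k=0}^{N}\frac{\P[\boldsymbol{\xi}\in\mathcal{A}_k]}{\P[|\boldsymbol{\xi}|=k]},
\]
so it suffices to bound the maximal atom of $|\boldsymbol{\xi}|$ by $C_\beta N^{-1/2}$ and the weighted level-sum by $C/\kappa$, and then multiply.

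For the first factor, $|\boldsymbol{\xi}|$ is a sum of $N$ independent Bernoulli variables, the $k$-th having variance $p_k(1-p_k)\ge\beta(1-\beta)$. A standard characteristic-function estimate (or the Kolmogorov--Rogozin inequality applied to the L\'evy concentration function at scale $1$) gives $\max_m\P[|\boldsymbol{\xi}|=m]\le C\bigl(\sum_k p_k(1-p_k)\bigr)^{-1/2}\le C\bigl(\beta(1-\beta)N\bigr)^{-1/2}=C_\beta N^{-1/2}$; concretely one bounds $|\E e^{it|\boldsymbol{\xi}|}| = \prod_k|1-p_k+p_k e^{it}| \le e^{-c_\beta N t^2}$ on $|t|\le\pi$ and integrates the Fourier inversion formula. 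This is the $\beta$-dependent analogue of the uniform fact $2^{-N}\binom{N}{\lfloor N/2\rfloor}=\Theta(N^{-1/2})$ used implicitly in Theorem~\ref{rhosperner}.

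The second factor is where the $\kappa$-Sperner structure and Lemma~\ref{yylym} enter, via the classical random-maximal-chain argument. One samples a random maximal chain $\boldsymbol{\psi}_0\subsetneq\boldsymbol{\psi}_1\subsetneq\cdots\subsetneq\boldsymbol{\psi}_N$ in the cube, built compatibly with the product measure so that each $\boldsymbol{\psi}_k$ has the conditional law $\P[\,\cdot\mid|\boldsymbol{\xi}|=k\,]$; then $\sum_k \P[\boldsymbol{\xi}\in\mathcal{A}_k]/\P[|\boldsymbol{\xi}|=k] = \sum_k \P[\boldsymbol{\psi}_k\in\mathcal{A}] = \E\,\#\{k:\boldsymbol{\psi}_k\in\mathcal{A}\}$. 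The $\kappa$-Sperner hypothesis bounds this expectation: if $\boldsymbol{\psi}_k=\boldsymbol{\eta}\in\mathcal{A}$, then every later member of $\mathcal{A}$ along the chain is a superset of $\boldsymbol{\eta}$ and hence disjoint from the forbidden set $B(\boldsymbol{\eta})\subset\boldsymbol{\eta}^c$ of relative density $\ge\kappa$, so as soon as the chain inserts a coordinate of $B(\boldsymbol{\eta})$ it can never return to $\mathcal{A}$; a waiting-time estimate on how long until a density-$\kappa$ subset of the remaining coordinates is first inserted shows the chain meets $\mathcal{A}$ at most $1+\kappa^{-1}\le 2\kappa^{-1}$ times in expectation. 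Combining the two factors gives $\P[\boldsymbol{\xi}\in\mathcal{A}]\le 2C_\beta/(\kappa\sqrt N)$, which is the claim.

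I expect the main obstacle to be precisely the construction of the measure-compatible maximal chain and the waiting-time estimate in the \emph{biased} setting: in the uniform case the chain comes from a uniformly random permutation, so the residual insertion order is exchangeable and both ``$\boldsymbol{\psi}_k$ is correctly distributed'' and ``a density-$\kappa$ residual set is hit within $O(1/\kappa)$ insertions on average'' are immediate, whereas for a non-trivial product measure the insertion order is biased toward high-$p_k$ coordinates. This is exactly the point at which one must invoke the Yehuda--Yehudayoff slicing of the biased hypercube into level-consistent chains, packaged as Lemma~\ref{yylym}; granting that, the remaining steps --- the local-CLT maximal-atom bound and the final assembly --- are routine.
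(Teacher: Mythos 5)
Your proposal is correct and follows essentially the same route as the paper: decompose $\mathcal{A}$ by levels, bound the weighted level-sum $\sum_k \P[\boldsymbol{\xi}\in\mathcal{A}_k]/\P[|\boldsymbol{\xi}|=k]$ by $C_\beta/\kappa$ via the random-maximal-chain LYM inequality (Lemma \ref{yylym}), and multiply by the maximal atom $\max_m\P[|\boldsymbol{\xi}|=m]\le C(\beta(1-\beta)N)^{-1/2}$. The only cosmetic difference is that the paper cites \cite[Theorem 5]{yehuda2021slicing} for the maximal-atom bound where you sketch a direct characteristic-function/Kolmogorov--Rogozin argument, which is an equally valid way to obtain the same estimate.
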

\Cref{rhosperner} is then the special case $\beta = 1/2$ of this result. Under assumptions weaker than $\beta = 1/2$ but stronger than those of our theorem, the result easily follows from work in \cite{Aizenman2009}. Specifically, the case $\kappa=1$ is proven in \cite[Lemma 3.2]{Aizenman2009}. If $\kappa < 1$ but the distribution of $\boldsymbol{\xi}$ is uniform in the sense that $p_k \equiv p \in (0,1)$, there is technically no proof in the literature, but, making use of (\ref{cardlym}) below, it is trivial to adapt the proof of \cite[Lemma 3.1]{Aizenman2009} to obtain a proof for said case.

We were not able to modify these proofs to work for our purposes, at least not on a technical level. At a high level, the strategy used in both \Cref{rhosperner} and \cite[Lemma 3.1]{Aizenman2009} is the one we end up using; one exploits some variant of the Lubell-Yamamoto-Meshalkin (LYM) inequality. For the former result specifically, one uses the following:
\begin{lem}(\cite{Ding-Smart})\label{dslym}
	If $\mathcal{A} \subset 2^{\{1,\dots,N\}}$ is a $\kappa$-Sperner family,
	\begin{equation}\label{cardlym} \sum_{\boldsymbol{\eta} \in \mathcal{A}} \binom{N}{|\boldsymbol{\eta}|}^{-1}\leq \frac{1}{\kappa}\end{equation}
\end{lem}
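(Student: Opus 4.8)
The plan is to generalize the standard probabilistic proof of the LYM inequality for Sperner families, which proceeds by choosing a uniformly random maximal chain in the Boolean lattice $2^{\{1,\dots,N\}}$ and bounding the expected number of elements of $\mathcal{A}$ it meets. For the classical ($\kappa = 1$) case, an antichain meets any maximal chain at most once, and since a uniformly random maximal chain passes through a fixed set $\boldsymbol{\eta}$ with probability $\binom{N}{|\boldsymbol{\eta}|}^{-1}$, linearity of expectation gives $\sum_{\boldsymbol{\eta}\in\mathcal{A}}\binom{N}{|\boldsymbol{\eta}|}^{-1} = \E[\#(\text{chain}\cap\mathcal{A})] \leq 1$. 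The goal here is to show that for a $\kappa$-Sperner family the expected count is at most $1/\kappa$, which yields \eqref{cardlym}.

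First I would set up the random maximal chain $\varnothing = \boldsymbol{\psi}_0 \subset \boldsymbol{\psi}_1 \subset \cdots \subset \boldsymbol{\psi}_N = \{1,\dots,N\}$, built by adjoining a uniformly random not-yet-chosen coordinate at each step, so that $\P[\boldsymbol{\psi}_{|\boldsymbol{\eta}|} = \boldsymbol{\eta}] = \binom{N}{|\boldsymbol{\eta}|}^{-1}$ for each fixed $\boldsymbol{\eta}$. The key observation is that for a $\kappa$-Sperner family $\mathcal{A}$, the chain cannot meet $\mathcal{A}$ too often: if the chain meets $\mathcal{A}$ at some set $\boldsymbol{\eta}$, then before it can meet $\mathcal{A}$ again at a set $\boldsymbol{\eta}' \supsetneq \boldsymbol{\eta}$, it must first pass through all of $B(\boldsymbol{\eta})$, a set of at least $\kappa|\boldsymbol{\eta}^c|$ coordinates that $\boldsymbol{\eta}'$ must avoid — but a chain only grows, so once it adds any element of $B(\boldsymbol{\eta})$ it can never again land in $\mathcal{A}$ above $\boldsymbol{\eta}$. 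So I would argue as follows: condition on the first time the chain hits $\mathcal{A}$, say at step $j$ with $\boldsymbol{\psi}_j = \boldsymbol{\eta}$. The next element of $\mathcal{A}$ on the chain (if any) must be reached before the chain picks any coordinate in $B(\boldsymbol{\eta})$; since the remaining $N - j = |\boldsymbol{\eta}^c|$ coordinates are added in uniformly random order and $|B(\boldsymbol{\eta})| \geq \kappa|\boldsymbol{\eta}^c|$, the number of steps until the first $B(\boldsymbol{\eta})$-coordinate appears is stochastically dominated by a geometric-type variable with expectation at most $1/\kappa$ (more precisely, the expected number of non-$B(\boldsymbol{\eta})$ coordinates drawn before the first $B(\boldsymbol{\eta})$ coordinate is $\frac{|\boldsymbol{\eta}^c| - |B(\boldsymbol{\eta})|}{|B(\boldsymbol{\eta})| + 1} \leq \frac{1-\kappa}{\kappa}$). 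Iterating, or more cleanly bounding the total expected number of $\mathcal{A}$-hits by $1 + \frac{1-\kappa}{\kappa} = \frac{1}{\kappa}$ via this renewal structure, gives $\E[\#(\text{chain}\cap\mathcal{A})] \leq 1/\kappa$.

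Then, combining with $\E[\#(\text{chain}\cap\mathcal{A})] = \sum_{\boldsymbol{\eta}\in\mathcal{A}} \P[\boldsymbol{\psi}_{|\boldsymbol{\eta}|} = \boldsymbol{\eta}] = \sum_{\boldsymbol{\eta}\in\mathcal{A}} \binom{N}{|\boldsymbol{\eta}|}^{-1}$ by linearity of expectation finishes the proof. The main obstacle is making the ``renewal / stochastic domination'' step rigorous: one must be careful that after hitting $\mathcal{A}$ at $\boldsymbol{\eta}$, the sets $B(\boldsymbol{\eta})$ associated to different possible hit points interact correctly, and that the bound on the expected gap to the next hit holds conditionally on the entire history up to time $j$. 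I expect the cleanest route is to define, along the chain, the stopping times $\tau_1 < \tau_2 < \cdots$ at which the chain lies in $\mathcal{A}$, and to show by the strong Markov property of the chain-building process that $\P[\tau_{i+1} < \infty \mid \tau_i < \infty, \boldsymbol{\psi}_{\tau_i}] \leq 1 - \kappa$ — because conditioned on $\boldsymbol{\psi}_{\tau_i} = \boldsymbol{\eta}$, with probability at least $\kappa$ the \emph{first} of the remaining coordinates drawn lies in $B(\boldsymbol{\eta})$, after which no further hit above $\boldsymbol{\eta}$ is possible, and any hit $\boldsymbol{\eta}'$ at a later time necessarily satisfies $\boldsymbol{\eta}' \supsetneq \boldsymbol{\eta}$ since the chain is increasing. (Here one uses that elements of $\mathcal{A}$ on a single chain are totally ordered.) This gives $\P[\tau_{i} < \infty] \leq (1-\kappa)^{i-1}$, hence $\E[\#\{i : \tau_i < \infty\}] \leq \sum_{i\geq 1}(1-\kappa)^{i-1} = 1/\kappa$, as desired.
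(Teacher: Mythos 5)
Your argument is correct and is essentially the same as the paper's: the lemma itself is cited from Ding--Smart without proof here, but the paper's proof of its generalization, Lemma \ref{yylym}, is exactly this random-maximal-chain argument (run there with the Yehuda--Yehudayoff chain distribution for non-uniform product measures), and your stopping-time formulation at the end --- $\P[\tau_{i+1}<\infty\mid \tau_i<\infty,\boldsymbol{\psi}_{\tau_i}]\leq 1-\kappa$ because the very next coordinate added lands in $B(\boldsymbol{\psi}_{\tau_i})$ with probability at least $\kappa$, then summing the geometric series --- is precisely its specialization to the uniform chain. The ``expected gap'' computation in your middle paragraph is not the right quantity to track (it bounds steps, not hits, and the renewal accounting there is shaky), but the clean stopping-time version you give afterwards supersedes it and is complete.
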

The original LYM inequality is the special case $\kappa=1$. For our purposes this version of the LYM inequality does not suffice; the estimates relating $|\mathcal{A}|$ to  $\P[\boldsymbol{\xi}\in \mathcal{A}]$ are very weak, even under reasonable assumptions on the size (e.g. $\P[\boldsymbol{\xi} \in \mathcal{A}] = \Theta(1)$, $|\mathcal{A}| \geq 2^{cN}$) or structure (e.g. $|\boldsymbol{\eta}| =k$ for all $\boldsymbol{\eta} \in \mathcal{A}$) of $\mathcal{A}$. However, the following generalization of \cite[Theorem 3]{yehuda2021slicing} suffices:
\begin{lem}\label{yylym}
	Let $\boldsymbol{\xi}$ be distributed with respect to some non-trivial product distribution, with $0<\beta \leq \min\{p_k,1-p_k\}$ for all $k$. There is a constant $C_\beta$ only depending on $\beta$ such that for any $\kappa$-Sperner family $\mathcal{A} \subset 2^{\{1,\dots,N\}}$:
	\begin{equation}
		\sum_{0\leq k\leq N} \P[\boldsymbol{\xi} \in \mathcal{A}\,|\, |\boldsymbol{\xi}| = k] \leq \frac{(1-\beta)^2}{\beta^2\kappa}
	\end{equation}
\end{lem}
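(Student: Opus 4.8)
The plan is to generalize the random-maximal-chain argument that underlies the classical LYM inequality and its weighted variants, following the structure of the proof of \cite[Theorem 3]{yehuda2021slicing} but carrying the non-uniform product weights $p_k$ through it. Concretely, I would build a random maximal chain $\varnothing = \boldsymbol{\psi}_0 \subsetneq \boldsymbol{\psi}_1 \subsetneq \cdots \subsetneq \boldsymbol{\psi}_N = \{1,\dots,N\}$ in $2^{\{1,\dots,N\}}$, constructed so that the marginal law of $\boldsymbol{\psi}_k$ (the level-$k$ set) is exactly the product distribution $\P$ conditioned on $\{|\boldsymbol{\xi}| = k\}$. The point of the $\kappa$-Sperner hypothesis is that, unlike for a genuine antichain, a chain may meet $\mathcal{A}$ in more than one point; however, the sets $B(\boldsymbol{\eta})$ furnished by the definition give us quantitative control: once the chain passes through some $\boldsymbol{\eta}\in\mathcal{A}$ at level $k$, the next element of $\mathcal{A}$ on the chain (if any) cannot be reached until the chain has added an element of $\boldsymbol{\eta}^c \setminus B(\boldsymbol{\eta})$, and $|B(\boldsymbol{\eta})| \geq \kappa|\boldsymbol{\eta}^c|$. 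So each ``extra'' visit to $\mathcal{A}$ costs, in expectation, a factor reflecting $\kappa$, and summing the visit-probabilities over levels yields $\sum_k \P[\boldsymbol{\psi}_k \in \mathcal{A}] \leq C_\beta/\kappa$, which is exactly $\sum_k \P[\boldsymbol{\xi}\in\mathcal{A} \mid |\boldsymbol{\xi}| = k] \leq C_\beta/\kappa$ once we verify the marginal identity.

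**The key steps, in order.** First, define the random chain: the natural choice is to insert elements one at a time, where at step $k$ the new element is chosen from the remaining coordinates with probabilities proportional to the appropriate likelihood ratios of the $p_j$'s, so that $\boldsymbol{\psi}_k$ has the law of $\boldsymbol{\xi}$ conditioned on size $k$; I would verify this marginal claim by a direct induction (this is the place where the $p_k$ enter and where the uniform case is cleanest). Second, prove the ``one-step'' or ``transition'' estimate: conditioned on $\boldsymbol{\psi}_k = \boldsymbol{\eta} \in \mathcal{A}$, the probability that the chain returns to $\mathcal{A}$ at some later level before ``escaping'' through $B(\boldsymbol{\eta})$ is bounded — using that each newly added coordinate lands in $B(\boldsymbol{\eta})$ with conditional probability at least some function of $\kappa$ and $\beta$, since $B(\boldsymbol{\eta})$ occupies a $\kappa$-fraction of $\boldsymbol{\eta}^c$ and the weights are bounded between $\beta$ and $1-\beta$. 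Here the boundedness $\beta \leq p_k \leq 1-\beta$ is used to convert ``$\kappa$-fraction of the coordinates'' into ``$\geq c_\beta \kappa$-fraction of the conditional probability mass''. Third, combine via linearity of expectation: $\sum_k \P[\boldsymbol{\psi}_k\in\mathcal{A}] = \E[\#\{k : \boldsymbol{\psi}_k \in \mathcal{A}\}]$, and the transition estimate shows this expected number of hits is $O_\beta(1/\kappa)$ — essentially because the hits along a single chain form something like a sub-geometric process with escape probability $\gtrsim_\beta \kappa$ at each relevant step. Finally, translate back: since the conditional law of $\boldsymbol{\psi}_k$ is the size-$k$ conditional law of $\boldsymbol{\xi}$, we get $\P[\boldsymbol{\psi}_k \in \mathcal{A}] = \P[\boldsymbol{\xi}\in\mathcal{A} \mid |\boldsymbol{\xi}| = k]$ (interpreting terms with $\P[|\boldsymbol{\xi}|=k]=0$ as zero), giving the stated bound.

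**The main obstacle** I expect is the second step — making the transition/escape estimate genuinely quantitative in the non-uniform setting. In the uniform case the chain is the uniformly random maximal chain and a hit at $\boldsymbol{\eta}$ followed by a later hit forces the chain to pass through $B(\boldsymbol{\eta})$ with a clean combinatorial probability; with general $p_k$ the conditional probability of adding a particular coordinate depends on all the remaining weights, so I must show that the fraction of conditional mass sitting on $B(\boldsymbol{\eta})$ cannot be squeezed below $c(\beta)\cdot\kappa$. This should follow from a pigeonhole/averaging argument using $\beta \leq p_j \leq 1-\beta$ — each coordinate's conditional ``pull'' is within a bounded multiplicative factor of every other's — but making the bookkeeping clean across all levels $k \geq$ (the level at which the first hit occurred), and handling the possibility of multiple hits before escape, is the delicate part. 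A secondary nuisance is bookkeeping around levels with $\P[|\boldsymbol{\xi}|=k]=0$ (possible, since some $p_k$ may be tiny but they are bounded below by $\beta$, so in fact all levels have positive probability — worth noting explicitly to avoid degenerate conditioning). Once Lemma \ref{yylym} is in hand, Lemma \ref{combbound} follows by the standard argument: split $\P[\boldsymbol{\xi}\in\mathcal{A}]$ over the levels $|\boldsymbol{\xi}|=k$, bound the number of ``heavy'' levels using concentration of $|\boldsymbol{\xi}|$ around its mean (again using $\beta \leq p_k \leq 1-\beta$ to get $\var|\boldsymbol{\xi}| = \Theta(N)$), and apply Lemma \ref{yylym} to the contribution of those levels, exactly as in \cite{Ding-Smart}.
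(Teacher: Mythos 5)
Your proposal is correct and follows essentially the same route as the paper: the paper also uses the random maximal chain of \cite[Lemma 4]{yehuda2021slicing} with level-$k$ marginals equal to the size-conditioned law of $\boldsymbol{\xi}$, writes the sum as $\E[\#\{k : \boldsymbol{\psi}_k\in\mathcal{A}\}]$, and bounds it by a geometric series using an escape probability of at least $c_\beta\kappa$ into $B(\boldsymbol{\psi}_k)$ at each hit. The ``delicate part'' you flag is exactly where the paper's work lies: it is carried out via the explicit transition formula $\P[\boldsymbol{\psi}_{k+1}=\boldsymbol{\psi}_k\cup\{j\}] = q_j h_{\boldsymbol{\psi}_k,j}/g_{k+1}(\boldsymbol{q})$ and a bijection argument showing $q_j h_{\boldsymbol{\psi}_k,j}/(q_{j'}h_{\boldsymbol{\psi}_k,j'})$ lies in $[\beta^2/(1-\beta)^2,(1-\beta)^2/\beta^2]$, which delivers the averaging bound you anticipated.
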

Specifically, Yehuda and Yehudayoff proved, for any Sperner family $\mathcal{A}\subset \{1,\dots,N\}$, that:
\[ \sum_{0\leq k\leq N} \P[\boldsymbol{\xi} \in \mathcal{A}\,|\, |\boldsymbol{\xi}| = k] \leq 1\]
To prove \Cref{yylym}, we need to make use of a technical lemma from the same paper, \cite[Lemma 4]{yehuda2021slicing}.
\begin{lem}[\cite{yehuda2021slicing}]\label{chaindist}
	Let $\xi$ be distributed with respect to some probability measure with non-trivial product structure on $\{1,\dots,N\}$. There is a probability distribution on maximal chains
	\[ \varnothing = \boldsymbol{\psi}_0 \subset \boldsymbol{\psi}_1 \subset \dots \subset \boldsymbol{\psi}_N = \{1,\dots,N\}\]
	such that for all $k$, $\boldsymbol{\psi}_k$ is distributed with respect to the distribution of $\boldsymbol{\xi}$ conditioned on $|\boldsymbol{\xi}| = k$. To be precise, for any deterministic $\boldsymbol{\eta}$ with $|\boldsymbol{\eta}| = k$,
	\[ \P[\boldsymbol{\psi}_k = \boldsymbol{\eta}] = \frac{\P[\boldsymbol{\xi} = \boldsymbol{\eta}]}{\P[|\boldsymbol{\xi}| = k]}\]
\end{lem}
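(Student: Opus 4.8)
\emph{Proof proposal for Lemma~\ref{chaindist}.}
The plan is to realize the desired distribution on maximal chains as a Markov chain run from the top down, obtained by iterating suitable ``cover-respecting'' down-kernels. Write $\nu_k$ (or $\nu_k^{[N]}$) for the probability measure on the $k$-subsets of $\{1,\dots,N\}$ given by $\nu_k(\boldsymbol{\eta}) = \P[\boldsymbol{\xi} = \boldsymbol{\eta}]/\P[|\boldsymbol{\xi}| = k]$; this is exactly the law we want $\boldsymbol{\psi}_k$ to carry, and note $\nu_N = \delta_{\{1,\dots,N\}}$ and $\nu_0 = \delta_\varnothing$. First I would reduce the lemma to the following claim: for every $1 \le m \le N$ there is a Markov kernel $D_m$ from $m$-subsets to $(m-1)$-subsets which is supported on the covering relation (for each $m$-set $\boldsymbol{\eta}$, $D_m(\boldsymbol{\eta},\cdot)$ is supported on $\{\boldsymbol{\eta}\setminus\{j\}:j\in\boldsymbol{\eta}\}$) and satisfies $\nu_m D_m = \nu_{m-1}$. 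Granting this, one sets $\boldsymbol{\psi}_N := \{1,\dots,N\}$ and lets $\boldsymbol{\psi}_{m-1}$ be one step of $D_m$ applied to $\boldsymbol{\psi}_m$; a downward induction gives $\boldsymbol{\psi}_m\sim\nu_m$ for all $m$, the $\boldsymbol{\psi}_m$ form a maximal chain by construction, and $\boldsymbol{\psi}_0 = \varnothing$ automatically.

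To build the kernels $D_m$ I would induct on the size $N$ of the ground set, allowing arbitrary success probabilities $p_1,\dots,p_N\in(0,1)$; the base case $N=1$ is the deterministic map $\{1\}\mapsto\varnothing$. For the inductive step, condition on whether the last coordinate $N$ lies in the set. Writing $q_j := \P[\xi_1 + \dots + \xi_{N-1} = j]$ for the Poisson--binomial weights of the first $N-1$ coordinates and $a_m := \P[N\in\boldsymbol{\xi}\mid|\boldsymbol{\xi}|=m] = p_N q_{m-1}/\bigl(p_N q_{m-1} + (1-p_N)q_m\bigr)$, one checks directly (using only that $\xi_N$ is independent of the other coordinates) that $\nu_m$ decomposes as follows: with probability $a_m$ the set contains $N$ and, with $N$ deleted, is distributed as $\nu_{m-1}^{[N-1]}$; with probability $1-a_m$ it omits $N$ and is distributed as $\nu_m^{[N-1]}$, where the superscript $[N-1]$ denotes the slice measures of the product measure on the first $N-1$ coordinates. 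I would then define $D_m$ on an $m$-set $\boldsymbol{\eta}$ thus: if $N\notin\boldsymbol{\eta}$, output $D_m^{[N-1]}(\boldsymbol{\eta})$; if $N\in\boldsymbol{\eta}$, put $\boldsymbol{\sigma} := \boldsymbol{\eta}\setminus\{N\}$ and, with probability $\alpha_m := 1 - a_{m-1}/a_m$ output $\boldsymbol{\sigma}$, otherwise output $D_{m-1}^{[N-1]}(\boldsymbol{\sigma})\cup\{N\}$. Using the decomposition above together with the inductive hypothesis applied to $D_m^{[N-1]}$ and $D_{m-1}^{[N-1]}$, a short computation — the output contains $N$ with probability exactly $a_m(1-\alpha_m) = a_{m-1}$, and conditionally on containing, resp.\ omitting, $N$ the law of the output is $\nu_{m-2}^{[N-1]}$ shifted by $\{N\}$, resp.\ $\nu_{m-1}^{[N-1]}$, which are precisely the two pieces of $\nu_{m-1}$ — shows $\nu_m D_m = \nu_{m-1}$; and the output is visibly a one-element deletion of $\boldsymbol{\eta}$, so $D_m$ respects covers.

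The one point that is not bookkeeping, and that I expect to be the crux, is showing $\alpha_m\in[0,1]$, i.e.\ $a_{m-1}\le a_m$ — that the conditional probability $\P[N\in\boldsymbol{\xi}\mid|\boldsymbol{\xi}|=m]$ is nondecreasing in $m$. From the closed form for $a_m$, clearing denominators reduces this exactly to $q_{m-1}^2 \ge q_{m-2}q_m$, i.e.\ to the log-concavity of the Poisson--binomial weight sequence $(q_j)$. This is classical: the generating polynomial $\prod_{i=1}^{N-1}\bigl((1-p_i)+p_i x\bigr)$ has only real roots, so its coefficients satisfy Newton's inequalities and hence form a log-concave sequence. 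One should also check the boundary cases, but there everything is automatic: $a_0 = 0$ and $a_N = 1$ (with the convention $q_{-1}=q_N = 0$), so e.g.\ $\alpha_1 = 1$, meaning that when $|\boldsymbol{\eta}|=1$ and $N\in\boldsymbol{\eta}$ the kernel always deletes $N$, neatly avoiding the otherwise ill-defined branch $D_0^{[N-1]}$.

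For completeness I note an alternative to the explicit construction: a cover-respecting kernel with $\nu_m D_m = \nu_{m-1}$ exists, by a Hall-type (max-flow/Strassen) feasibility criterion for the transportation problem on the cover graph, if and only if $\nu_m(S)\le\nu_{m-1}(\partial^- S)$ for every family $S$ of $m$-sets, where $\partial^- S$ is the lower shadow; this is a weighted ``normalized matching'' property of the Boolean lattice. However, establishing that inequality for general product measures is itself not routine (the weights can decrease along covers when $p_j<\tfrac12$, so the usual symmetric-chain argument does not apply directly), so I would favor the explicit recursive construction above, where the only real input is the log-concavity of the Poisson--binomial distribution.
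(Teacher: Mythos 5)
Your argument is correct, and it proves the lemma by a genuinely different route from the paper: the paper does not prove this statement at all but cites \cite{yehuda2021slicing}, whose construction runs the chain from the bottom up with an explicit one-step rule — precisely the transition formula (\ref{jchance}) reproduced in Section \ref{antisec} — verified by direct computation. You instead build the chain top-down from cover-respecting kernels $D_m$ with $\nu_m D_m=\nu_{m-1}$, constructed by induction on $N$ after conditioning on the last coordinate; I checked the mixture computation (the output contains $N$ with probability $a_m(1-\alpha_m)=a_{m-1}$, and conditionally on containing, resp.\ omitting, $N$ its law is $\nu_{m-2}^{[N-1]}$, resp.\ $\nu_{m-1}^{[N-1]}$), the boundary cases ($a_0=0$, $a_N=1$, $\alpha_1=1$ so the undefined branch $D_0^{[N-1]}$ is never invoked), and the crux $a_{m-1}\le a_m$, which does reduce to $q_{m-1}^2\ge q_{m-2}q_m$ and hence to log-concavity of the Poisson--binomial weights via Newton's inequalities (or closure of log-concave sequences under Bernoulli convolution); since the product structure is non-trivial, all $q_j$ with $0\le j\le N-1$ are positive and no degenerate divisions occur. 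What your approach buys is a self-contained existence proof whose only external input is a classical fact about real-rooted polynomials; what the cited construction buys is an explicit formula for the upward transition probabilities. That distinction matters downstream: couplings with the prescribed level marginals are not unique, and the proof of Lemma \ref{yylym} in this paper uses not merely the existence of the chain but the specific formula (\ref{jchance}) to compare $\P[\boldsymbol{\psi}_{k+1}=\boldsymbol{\psi}_k\cup\{j\}]$ for different $j\notin\boldsymbol{\psi}_k$. Your chain would not automatically come with such a formula, so while your proof fully establishes Lemma \ref{chaindist} as stated, substituting it for the cited construction elsewhere in the paper would require separately establishing comparable control on the one-step probabilities of your coupling.
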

We will not reproduce the proof of this lemma, though we will use certain facts from the proof in proving \Cref{yylym}.
\begin{proof}[Proof of \Cref{yylym}]We let $\boldsymbol{\Psi} = \{\boldsymbol{\psi}_0,\dots,\boldsymbol{\psi}_N\}$ be a maximal chain distributed with respect to the distribution defined in \Cref{chaindist}. For $\mathcal{A}$ which is $\kappa$-Sperner, we let $L$ denote the count of $\boldsymbol{\psi}_k \in \mathcal{A}$. Clearly
	\begin{align*}
		\E L &=  \sum_{0 \leq k \leq N} \P[\boldsymbol{\psi}_k \in \mathcal{A}]\\
		&=\sum_{0\leq k\leq N} \P[\boldsymbol{\xi}\in \mathcal{A}\,|\,|\boldsymbol{\xi}| =k ]
	\end{align*} 
	so that in particular, the result follows as soon as we demonstrate
	\[ \E L \leq \frac{C_\beta}{\kappa}\]
	
	At the same time, $\P[L \geq m+1\,|\,L\geq m] \leq \P[\boldsymbol{\psi}_{k_m+1}\setminus \boldsymbol{\psi}_{k_m} \notin B(\boldsymbol{\psi}_{k_m})]$, where $\boldsymbol{\psi}_{k_m}$ is the $m$-th element of the random chain to intersect $\mathcal{A}$. Here we abuse notation, identifying the singleton set $\boldsymbol{\psi}_{k_m+1}\setminus \boldsymbol{\psi}_{k_m}$ with its lone element. It thus suffices to show (for any $k$) \begin{equation}\P[\boldsymbol{\psi}_{k+1}\setminus \boldsymbol{\psi}_k \notin B(\boldsymbol{\psi}_k)] \leq 1-c_\beta\kappa \end{equation} Indeed, from it one immediately obtains:
	\begin{align*}
		\E L &= \sum_{k=1}^N \P[L\geq k+1\,|\, L\geq k]\\
		&\leq \sum_{k=0}^\infty (1-c_\beta\kappa)^k \\
		&\leq \frac{1}{c_\beta\kappa}
	\end{align*} 
	This is more or less the strategy pursued in the proof of \cite[Theorem 4.2]{Ding-Smart}, with the complication that in this setting not all $j \in \boldsymbol{\psi}_k^c$ are equally likely to be chosen. However, \cite{yehuda2021slicing} provides an explicit formula for the distribution of $\boldsymbol{\psi}_{k+1}\setminus \boldsymbol{\psi}_k$:
	\begin{equation}\label{jchance} \P[\boldsymbol{\psi}_{k+1} = \boldsymbol{\psi}_k \cup \{j\}] = \frac{q_j h_{\boldsymbol{\psi}_k,j}}{g_{k+1}(\boldsymbol{q})}\end{equation}
	where
	\begin{align*}
		q_j &= \frac{p_j}{1-p_j}\\
		\boldsymbol{q} &= (q_1,\dots,q_N)\\
		h_{\boldsymbol{\psi}_k,j} &= \sum_{\substack{|\boldsymbol{\eta}|=k\\\boldsymbol{\eta}\not\ni j}} \frac{1}{|\boldsymbol{\psi}_k \setminus \boldsymbol{\eta}|+1} \prod_{\ell \in \boldsymbol{\eta}} q_\ell &&= \E\left[\frac{1}{|\boldsymbol{\psi}_k\setminus \boldsymbol{\xi}|+1}\,\Big|\, |\boldsymbol{\xi}| = k+1,\; j\not\in \boldsymbol{\xi}\right]\\
		g_{k+1}(\boldsymbol{q}) &= \sum_{|\boldsymbol{\eta}|=k}\prod_{\ell \in \boldsymbol{\eta}} q_\ell &&= \P[|\boldsymbol{\xi}| = k+1]
	\end{align*}
	(Where we give two expressions, the second is given to clarify; we will only use the first definition in our arguments.) We note that of course (\ref{jchance}) is only valid for $j \notin \boldsymbol{\psi}_k$, and that $\sum_{j\notin \boldsymbol{\psi}_k} \P[\boldsymbol{\psi}_{k+1}\setminus \boldsymbol{\psi}_k =j] =1$. In particular, because $|B(\boldsymbol{\psi}_k)| \geq \kappa |\boldsymbol{\psi}_k^C|$, it suffices to show the existence of positive $c_\beta$ such that
	\begin{equation}\label{localjbd}
		q_jh_{\boldsymbol{\psi}_k,j} \geq c_\beta q_{j'}h_{\boldsymbol{\psi}_k,j'}
	\end{equation}for any $j, j' \in \boldsymbol{\psi}_k^C$. 
	Indeed; this implies $\P[\boldsymbol{\psi}_{k+1}\setminus \boldsymbol{\psi}_k =j]$ is, for all $j$, at least $c_\beta/|\boldsymbol{\psi}_k^C|$. By symmetry, there must also be a corresponding upper bound, and in fact we will prove an upper and lower bound. The main work is in the analysis of the ratio $h_{\boldsymbol{\psi}_k,j}/ h_{\boldsymbol{\psi}_k,j'}$, as it is easy to deal with $q_j/q_{j'}$ at the end. We fix $j, j'$, and write
	\[ \tilde{h}_{\boldsymbol{\psi}_k, j} = \sum_{\substack{|\boldsymbol{\eta}|=k\\\boldsymbol{\eta}\not\ni j\\\boldsymbol{\eta}\ni j'}} \frac{1}{|\boldsymbol{\psi}_k \setminus \boldsymbol{\eta}|+1} \prod_{\ell \in \boldsymbol{\eta}} q_\ell\]
	and $\tilde{h}_{\boldsymbol{\psi}_k,j'}$ symmetrically. We also define
	\[h_{\boldsymbol{\psi}_k,j,j'} = \sum_{\substack{|\boldsymbol{\eta}|=k\\ \boldsymbol{\eta}\not\ni j,j'}} \frac{1}{|\boldsymbol{\psi}_k\setminus \boldsymbol{\eta}|+1} \prod_{\ell \in \boldsymbol{\eta}} q_\ell \]
	Because \[ h_{\boldsymbol{\psi}_k,j,j'}= h_{\boldsymbol{\psi}_k,j} - \tilde{h}_{\boldsymbol{\psi}_k,j} = h_{\boldsymbol{\psi}_k,j'} - \tilde{h}_{\boldsymbol{\psi}_k,j'} \geq 0 \]
	we have the following implications:
	\begin{align*}
		\tilde{h}_{\boldsymbol{\psi}_k,j} \leq \tilde{h}_{\boldsymbol{\psi}_k,j'} &\Rightarrow \frac{\tilde{h}_{\boldsymbol{\psi}_k,j}}{\tilde{h}_{\boldsymbol{\psi}_k,j'}} \leq \frac{h_{\boldsymbol{\psi}_k,j}}{h_{\boldsymbol{\psi}_k,j'}} \leq 1\\
		\\
		\tilde{h}_{\boldsymbol{\psi}_k,j} \geq \tilde{h}_{\boldsymbol{\psi}_k,j'} &\Rightarrow \frac{\tilde{h}_{\boldsymbol{\psi}_k,j}}{\tilde{h}_{\boldsymbol{\psi}_k,j'}} \geq \frac{h_{\boldsymbol{\psi}_k,j}}{h_{\boldsymbol{\psi}_k,j'}} \geq 1
	\end{align*}
	(This follows by e.g. differentiating $\frac{a+x}{b+x}$.) It thus remains to estimate $\tilde{h}_{\boldsymbol{\psi}_k,j}/\tilde{h}_{\boldsymbol{\psi}_k,j'}$. But this ratio is exactly $q_j/q_{j'}$. Indeed, let $f$ be the map for the sets $\boldsymbol{\eta}$ with size $k$ and containing $j'$ but not $j$ by $f(\boldsymbol{\eta}) = \boldsymbol{\eta} \cup \{j\} \setminus \{j'\}$. Clearly
	\[ \tilde{h}_{\boldsymbol{\psi}_k,j'} = \sum_{\substack{|\boldsymbol{\eta}|=k\\\boldsymbol{\eta}\not\ni j\\\boldsymbol{\eta}\ni j'}} \frac{1}{|\boldsymbol{\psi}_k\setminus f(\boldsymbol{\eta})|+1}\prod_{\ell \in f(\boldsymbol{\eta})} q_\ell \]
	and moreover
	\[ \frac{1}{|\boldsymbol{\psi}_k\setminus f(\boldsymbol{\eta})| + 1} \prod_{\ell \in f(\boldsymbol{\eta})}q_\ell  = \frac{q_j}{q_{j'}}\frac{1}{|\boldsymbol{\psi}_k\setminus \boldsymbol{\eta}| + 1} \prod_{\ell \in \boldsymbol{\eta}} q_\ell\]
	whence we obtain another pair of bounds:
	\begin{align*}
		1 \leq \frac{h_{\boldsymbol{\psi}_k,j}}{h_{\boldsymbol{\psi}_k,j'}} \leq \frac{q_{j'}}{q_j} &\quad \text{ if } q_j \leq q_{j'}\\
		\frac{q_{j'}}{q_j} \leq \frac{h_{\boldsymbol{\psi}_k,j}}{h_{\boldsymbol{\psi}_k,j'}} \leq 1  &\quad \text{ if } q_j \geq q_{j'}
	\end{align*}
	Finally, these bounds immediately imply:
	\begin{align*}
		\frac{q_j}{q_{j'}} \leq \frac{q_jh_{\boldsymbol{\psi}_k,j}}{q_{j'}h_{\boldsymbol{\psi}_k,j'}} \leq 1 &\quad \text{ if } q_j \leq q_{j'}\\
		1 \leq \frac{q_jh_{\boldsymbol{\psi}_k,j}}{q_{j'}h_{\boldsymbol{\psi}_k,j'}} \leq \frac{q_j}{q_{j'}}  &\quad \text{ if } q_j \geq q_{j'}
	\end{align*}
	Obviously 
	\[\frac{\beta^2}{(1-\beta)^2} \leq \frac{q_j}{q_{j'}} \leq \frac{(1-\beta)^2}{\beta^2}\]
	so that the theorem follows.
\end{proof}
We now prove \Cref{combbound} largely along the lines of the proof of \cite[Theorem 4.2]{Ding-Smart} and \cite[Lemma 3.1]{Aizenman2009}. 
\begin{proof}[Proof of \Cref{combbound}]
	Recall that $\mathcal{A}$ denotes a $\kappa$-Sperner family, and $\boldsymbol{\xi}$ is distributed with respect to some product distribution with $0<\beta \leq \min\{p_k,1-p_k\}$ for all $k$. We let $\mathcal{A}_k$ denote the sub-family of $\mathcal{A}$ consisting solely of size $k$ subsets. We can then reformulate \Cref{yylym} as
	\[ \sum_{k=0}^N \frac{\P[\boldsymbol{\xi} \in \mathcal{A}_k]}{\P[|\boldsymbol{\xi}| =k]} \leq \frac{(1-\beta)^2}{\beta^2\kappa}\] 
	We immediately get
	\begin{align*} \P[\boldsymbol{\xi}\in \mathcal{A}] &= \sum_{k=0}^N \P[\boldsymbol{\xi} \in \mathcal{A}_k]\\
		&\leq \max_{0\leq k\leq N} \P[|\boldsymbol{\xi}|=k] \sum_{k=0}^N \frac{\P[\boldsymbol{\xi} \in \mathcal{A}_k]}{\P[|\boldsymbol{\xi}| =k]}\\
		&\leq \frac{(1-\beta)^2}{\beta^2}\max_{0\leq k\leq N}\P[|\boldsymbol{\xi}|=k]
	\end{align*} 
	It follows from \cite[Theorem 5]{yehuda2021slicing} that \[\max_{0\leq k\leq N}\P[|\boldsymbol{\xi}|=k] \leq \frac{C}{\sqrt{\beta(1-\beta)N}}\] with $C$ a universal constant. Hence we obtain the lemma with $C_\beta = C\beta^{-5/2}(1-\beta)^{3/2}$.
\end{proof}
\begin{remm}
	As we have already discussed to some degree, under stronger assumptions on either the combinatorial structure ($\kappa =1$) or on the distribution of $\boldsymbol{\xi}$ (constant $p_k$), this was either already known or very easily extrapolated from what was known. We mention that in these cases, the dependence of the constant on $\beta$ is also better. In the former case, one can take $C_\beta = C\beta^{-1}$ with $C$ universal, and in the latter $C_\beta =C\beta^{-1/2}(1-\beta)^{-1/2}$, with $C$ universal. We believe our $C_\beta$ to be suboptimal in its dependence on $\beta$; we did not seriously try to optimize the constant in \Cref{yylym}, and it is possible gains could be made there.
\end{remm}

\section{Wegner estimate}\label{wegnersec}
In this section, we obtain our analogue of the Wegner lemma by more or less the method of \cite{Ding-Smart} combined with \Cref{combbound} and \Cref{decompgoodalt}. The main novelty here is the use of the results on Bernoulli decompositions; however these decompositions are used within the a key step of the lemma, rather than before or after the main arguments, so we go through the entirety of the proof. The main novelty is contained in Claims \ref{layerclaim} and \ref{smallcountclaim}. Precisely, we prove:
\begin{lem}\label{bigweg}
	If $\overline{E}$ is fixed and the potential $V$ satisfies the assumptions \ref{aprime}, \ref{bprime} and \ref{cprime} (so that \Cref{ucucthmfinal} is applicable), and moreover the following hold:
	\begin{enumerate}[label=(\Roman*)]
		\item$\eta > \ve > \delta> 0$ are sufficiently small
		\item $K \geq 1$ is an integer
		\item $L_0 \geq \cdots L_5 \geq C_{\eta,\ve,\delta,K}$ are dyadic scales satisfying $L_k^{1-2\delta} \geq L_{k+1} \geq L_k^{1-\ve}$
		\item $\Lambda \in \Z^2$ with $\ell(\Lambda) = L_0$
		\item $\Lambda_1',\dots,\Lambda_j' \subset \Lambda$ with $\ell(\Lambda_j') = L_3$
		\item $G \subset \cup_k \Lambda_j'$ with $0 < |G| \leq L_0^\delta$
		\item $F \subset \Z^2$ is $\eta$-regular in every $\Lambda' \subset \Lambda \setminus \cup_j \Lambda_j'$ with $\ell(\Lambda')=L_3$
		\item \label{glocal} $V:\Lambda \rightarrow [0,M]$, $V_F = v$, $|E - \overline{E}| \leq e^{-L_5}$, and $H_\Lambda\psi = E\psi$ implies
		\[ e^{L_4}\|\psi\|_{\ell^2(\Lambda\setminus \cup_k \Lambda_j')} \leq \|\psi\|_{\ell^2(\Lambda)} \leq  (1+L_0^{-\delta})\|\psi\|_{\ell^2(G)}\]
	\end{enumerate}
	then
	\[ \P[\|(H_\Lambda-\overline{E})^{-1}\| \leq e^{L_1}\,|\, V_F = v] \geq 1-L_0^{C\ve-1/2}\]
	Here $(H_\Lambda-\overline{E})^{-1}$ is the inverse of $H_\Lambda -\overline{E}$ in $\ell^2(\Lambda)$, not in $\ell^2(\Z^2)$, and the smallness required of $\eta$, $\ve$, $\delta$ and the value of $C_{\eta,\ve,\delta,K}$ depend on $M$, $\rho$ and $\gamma$.
\end{lem}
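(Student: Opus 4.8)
The plan is to run the argument of \cite[Section 5]{Ding-Smart}, inserting the uniform Bernoulli decompositions of Theorem \ref{decompgoodalt} where that paper uses i.i.d.\ Bernoulli variables and the combinatorial estimate Lemma \ref{combbound} where it uses \cite[Theorem 4.2]{Ding-Smart}. First one reduces to a resonance count: since $H_\Lambda$ is self-adjoint on $\ell^2(\Lambda)$, the event $\{\|(H_\Lambda-\overline E)^{-1}\|>e^{L_1}\}$ is exactly the event that $H_\Lambda$ has an eigenvalue in the window $I:=(\overline E-e^{-L_1},\overline E+e^{-L_1})$; call such a potential configuration \emph{resonant}. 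Because $e^{-L_1}\le e^{-L_5}$, hypothesis \ref{glocal} applies to every eigenpair $(E,\psi)$ of $H_\Lambda$ with $E\in I$, so $\psi$ is exponentially small off $\bigcup_j\Lambda_j'$ and, up to a factor $1+L_0^{-\delta}$, carried by $G$; $G$ is tiny and the exterior block of $H_\Lambda-\overline E$ is invertible with the required quantitative bounds deterministically from \ref{glocal} and the scale hierarchy. A Feshbach--Schur reduction with respect to $P_{\bigcup_j\Lambda_j'}$ (or $P_G$) then lets one replace ``resonant'' by near-singularity of an effective self-adjoint operator of the form $V_{\bullet}+R$ whose return term $R$ does not depend on the potential at the sites that will be perturbed.

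Next, decompose the potential: by Theorem \ref{decompgoodalt} write $V_n=Y_n(t_n)+Z_n(t_n)\xi_n$ with $\xi_n$ Bernoulli of parameter $p_n\in[p_-,p_+]$, $Z_n(t_n)\ge\iota>0$, and $(t_n),(\xi_n)$ jointly independent; condition on $V_F=v$, on all $t_n$, and on the $\xi_n$ outside the free region. On this conditioned space the remaining randomness is a genuine product of Bernoulli variables which, though not identically distributed, is uniform in the sense that $\beta:=\min\{p_-,1-p_+\}\le\min\{p_n,1-p_n\}$ for all $n$, so Lemma \ref{combbound} applies with this $\beta$. The heart of the proof is then to show that the set $\mathcal A$ of free Bernoulli configurations that are resonant is $\kappa$-Sperner for a suitable $\kappa$. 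Given $\boldsymbol\eta\in\mathcal A$ with normalized eigenfunction $\psi$, one uses eigenvalue monotonicity $\partial E/\partial V_k=|\psi(k)|^2\ge 0$ together with the uniform jump $Z_k(t_k)\ge\iota$: switching $\xi_k$ on at a free site $k$ raises the resonant eigenvalue by at least $\iota$ times the (path-integrated) value of $|\psi(k)|^2$, so $\boldsymbol\eta\cup\{k\}$ is non-resonant as soon as that value exceeds $2e^{-L_1}$. Hypothesis \ref{glocal}, pinning a definite fraction of $\|\psi\|_2=1$ onto $G$ (hence forcing a local sup-norm of $\psi$ to be at least $\sim L_0^{-\delta/2}$), together with the quantitative unique continuation Theorem \ref{ucucthmfinal} applied at an intermediate scale --- which is exactly where the $\eta$-regularity hypothesis (vii) on $F$ enters, and which holds off an event of probability at most $(\text{number of boxes})\cdot e^{-\ve L_2^{1/4}}\ll L_0^{-1/2}$ --- produces a set $B(\boldsymbol\eta)\subset\boldsymbol\eta^c$ of ``good'' free sites with $|B(\boldsymbol\eta)|\ge\kappa|\boldsymbol\eta^c|$; here the scale hierarchy ($L_2\log L_2\ll L_1$, etc.) is what guarantees the good sites carry $|\psi|^2\gg e^{-L_1}$, and $\kappa$ comes out of order $L_0^{-1/2-O(\ve)}$.

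Finally, Lemma \ref{combbound} gives $\P[\boldsymbol\xi\in\mathcal A\mid \text{conditioning}]\le C_\beta/(\kappa\sqrt N)$ with $N$ the number of free sites; inserting the values of $\kappa$ and $N$ and absorbing the $(L_k\log L_k)^{\pm1/2}$ and discretization losses into the exponent yields $\le L_0^{C\ve-1/2}$, and then averaging over the conditioning and adding back the negligible unique-continuation exceptional probability completes the proof.

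The step I expect to be the main obstacle is the $\kappa$-Sperner verification of Step 3. One must (i) make the Feshbach reduction quantitative enough that ``resonant'' is a clean event measurable with respect to the free Bernoulli variables; (ii) combine \ref{glocal} and Theorem \ref{ucucthmfinal} to control, uniformly over $\boldsymbol\eta\in\mathcal A$ and over the conditioning, on how large a set $|\psi|^2$ is big enough to move the eigenvalue out of the exponentially narrow window $I$; (iii) rule out configurations $\boldsymbol\eta$ so large that the good sites have already been switched on, which are automatically non-resonant because the cumulative push is too big; and (iv) cope with the fact that $\psi$, hence $B(\boldsymbol\eta)$, changes as Bernoulli variables are switched on, handled (as in \cite{Ding-Smart}) by arguing throughout with eigenvalue monotonicity rather than with a fixed eigenfunction. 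Everything else parallels \cite[Sections 4--5]{Ding-Smart}; the only genuinely new inputs are Theorem \ref{decompgoodalt} and Lemma \ref{combbound}, which occupy the roles played there by the i.i.d.\ Bernoulli structure and \cite[Theorem 4.2]{Ding-Smart}.
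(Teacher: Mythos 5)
Your overall strategy is the paper's: reduce to a resonance event, insert the uniform Bernoulli decomposition of Theorem \ref{decompgoodalt}, condition on $V_F=v$ and on all the $t_n$, show the bad Bernoulli configurations form a $\kappa$-Sperner family with $\kappa\sim L_0^{-2}L_4^{3/2}$, and close with Lemma \ref{combbound}; you also correctly identify that unique continuation at scale $L_3$ (via the $\eta$-regularity of $F$ and a Lemma \ref{needsq}-type selection of a clean box) supplies the $\gtrsim L_4^{3/2}$ sites of large $|\psi|$, at negligible exceptional probability. However, two steps you flag as ``the main obstacle'' are genuinely incomplete as stated, and in one place you substitute a device the paper does not use.

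First, the set $\mathcal{A}$ of \emph{all} resonant configurations is not what gets shown to be $\kappa$-Sperner, and your one-line justification (``switching $\xi_k$ on raises the resonant eigenvalue by at least $\iota|\psi(k)|^2$, so $\boldsymbol\eta\cup\{k\}$ is non-resonant'') does not survive scrutiny: pushing $E_{k_1}$ out of the window does not preclude another eigenvalue entering it, and the Sperner condition must hold for \emph{all} supersets $\boldsymbol\eta'\supset\boldsymbol\eta$ in $\mathcal{A}$. The paper instead refines resonance into events $\mathcal{E}_{k_1,k_2,\ell}$ indexed by a pair of eigenvalue indices and an annulus $s_\ell=e^{-L_1+\ell(L_2-L_4+C)}$ that is \emph{free} of eigenvalues; this built-in spectral gap is exactly hypothesis (C) of the eigenvalue-variation Lemma \ref{eigenvar}, which is what converts a rank-one bump of size $\geq\iota$ at a large-$|\psi_{k_1}|$ site into a strict increase of $\tr 1_{[r_1,\infty)}$ and hence forced exit from the event. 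One must also split into $\mathcal{E}_{k_1,k_2,\ell,\pm}$ according to whether the large-support set sits mostly in $\boldsymbol\eta$ or in $\boldsymbol\eta^c$, since $B(\boldsymbol\eta)$ must lie in $\boldsymbol\eta^c$ and flips only push eigenvalues up. Second, your Feshbach--Schur reduction is not in the paper and is not carried out; it is doing the job of the paper's counting claim, namely that uniformly over all $\boldsymbol\xi$ at most $C L_0^{\delta}$ eigenvalue indices can ever satisfy $|E_k-\overline E|\leq e^{-L_2}$. The paper proves this by interpolating $\boldsymbol\xi_s=\boldsymbol\xi'+s(\boldsymbol\xi-\boldsymbol\xi')$ to show any such eigenvalue stays within $e^{-L_4}$ of $\overline E$ for every configuration, so the corresponding eigenvectors all have $\|(1-P_G)\psi_k\|\leq CL_0^{-\delta}$, and then invoking the almost-orthogonal-vectors bound (Lemma \ref{taobound}) in $\ell^2(G)$ to cap their number by $C|G|$. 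Without this (or a fully quantitative Feshbach substitute), the union over eigenvalue indices costs a factor $L_0^2$ and destroys the $L_0^{C\ve-1/2}$ bound. Everything else in your outline matches the paper's proof.
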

However, before we can prove this result, we need to recall various technical results, all of which are deterministic in nature and hence require no changes for the non-stationary context. The first of these is \cite[Lemma 5.2]{Ding-Smart}:
\begin{lem}[\cite{Ding-Smart}]\label{eigenvar}
	Let $A$ be an $n\times n$ real symmetric matrix, with eigenvalues $E_1 \geq \cdots \geq E_n$ and associated orthonormal eigenbasis $v_1,\dots v_n$. We also let $e_1, \dots, e_n$ denote the standard basis for $\R^n$ and $e_k \otimes e_k$ the rank one matrix acting by $v \mapsto \langle e_k, v\rangle e_k$. If
	\begin{enumerate}[label=(\Alph*)]
		\item $0 < r_1 < r_2 < r_3 < r_4 < r_5 < 1$\label{basicchain}
		\item $r_1 \leq c \min\{r_3r_5,r_2r_3/r_4\}$\label{tech1}
		\item $0 < E_j \leq E_i < r_1 < r_2 < E_{i-1}$\label{gap}
		\item $|\langle v_j, e_k \rangle|^2 \geq r_3$\label{highcon}
		\item $\sum_{r_2 < E_\ell < r_5} |\langle v_\ell, e_k\rangle|^2 \leq r_4$\label{lowcon}
	\end{enumerate}
	then $\tr 1_{[r_1,\infty)}(A) < \tr 1_{[r_1,\infty)}(A + e_k\otimes e_k)$.
\end{lem}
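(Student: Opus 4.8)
The plan is to read the two traces as counting the eigenvalues in $[r_1,\infty)$ and to show that the positive rank-one bump $e_k\otimes e_k$ drags the eigenvalue $E_i$ up across the level $r_1$, increasing the count by at least one. Write $A_1:=A+e_k\otimes e_k$ with eigenvalues $\tilde E_1\ge\cdots\ge\tilde E_n$. By \ref{basicchain} and \ref{gap} we have $E_1\ge\cdots\ge E_{i-1}>r_2>r_1>E_i\ge E_j>0$, and since the $E_\ell$ are monotone, \emph{no} eigenvalue of $A$ lies in the window $[r_1,r_2]$; in particular $r_1\notin\sigma(A)$ and $\tr 1_{[r_1,\infty)}(A)=i-1$. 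Because $e_k\otimes e_k\succeq 0$, eigenvalues move up under the perturbation, so $\tilde E_\ell\ge E_\ell$ for every $\ell$ and thus $\tilde E_1,\dots,\tilde E_{i-1}>r_1$ already; it therefore suffices to show $\tilde E_i\ge r_1$.

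The core of the argument is the purely spectral inequality
\[ \langle (A-r_1I)^{-1}e_k,\,e_k\rangle \;=\; \sum_{\ell}\frac{|\langle v_\ell,e_k\rangle|^2}{E_\ell-r_1} \;<\; -1, \]
which I would establish from \ref{basicchain}--\ref{lowcon} by splitting the sum according to the location of $E_\ell$. Since $[r_1,r_2]$ contains no eigenvalue, every $\ell$ with $E_\ell>r_1$ in fact has $E_\ell>r_2$, and I split these into $E_\ell\ge r_5$ and $r_2<E_\ell<r_5$. By \ref{tech1} with $c$ a small absolute constant one has $r_1<r_5/2$ and $r_1<r_2/2$, so $E_\ell-r_1\ge r_5/2$ resp.\ $E_\ell-r_1\ge r_2/2$ on the two pieces, and using $\sum_\ell|\langle v_\ell,e_k\rangle|^2=\|e_k\|^2=1$ together with \ref{lowcon} these pieces contribute at most $2/r_5+2r_4/r_2$. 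Among the terms with $E_\ell<r_1$ all summands are nonpositive, so I keep only $\ell=j$: here $0<E_j<r_1$ gives $0<r_1-E_j<r_1$, and \ref{highcon} gives $|\langle v_j,e_k\rangle|^2\ge r_3$, so that single term is strictly below $-r_3/r_1$. Altogether $\langle (A-r_1I)^{-1}e_k,e_k\rangle< 2/r_5+2r_4/r_2-r_3/r_1$, and \ref{tech1} (say $c\le 1/5$ works) forces $r_3/r_1\ge 1+2/r_5+2r_4/r_2$, whence the displayed inequality.

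To conclude, I would invoke the matrix determinant lemma: as $r_1\notin\sigma(A)$,
\[ \det(A_1-r_1I) \;=\; \bigl(1+\langle (A-r_1I)^{-1}e_k,e_k\rangle\bigr)\,\det(A-r_1I), \]
and the scalar factor is strictly negative by the previous step. Hence $\det(A_1-r_1I)$ and $\det(A-r_1I)$ are nonzero of opposite sign, so the number of negative eigenvalues of $A_1-r_1I$ has parity opposite to that of $A-r_1I$; since the perturbation only raises eigenvalues, that number can only have decreased, hence it decreased by at least one. Equivalently $\tr 1_{[r_1,\infty)}(A_1)\ge \tr 1_{[r_1,\infty)}(A)+1=i$, which is the claim.

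I expect the only genuine work to be the bookkeeping in the spectral estimate — in particular extracting from \ref{gap} that the band $[r_1,r_2]$ is eigenvalue-free, so that the only dangerous positive contributions are exactly those controlled by \ref{lowcon} and by $\sum_\ell|\langle v_\ell,e_k\rangle|^2=1$ — together with tracking the absolute constant hidden in \ref{tech1}. Everything else reduces to monotonicity of eigenvalues under positive perturbations and the matrix determinant lemma.
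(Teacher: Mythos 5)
Your proof is correct. The paper does not prove this lemma at all --- it is quoted from \cite{Ding-Smart} --- so there is nothing to diverge from; your argument is the standard one for such eigenvalue-movement statements. The key inequality $\langle (A-r_1I)^{-1}e_k,e_k\rangle<-1$ is exactly the condition that the secular function $1+\sum_\ell |\langle v_\ell,e_k\rangle|^2/(E_\ell-r_1)$ of the rank-one update is negative at $r_1$, which forces the root interlacing between $E_i$ and $E_{i-1}$ to land above $r_1$; your matrix-determinant-lemma/parity formulation is a clean way to package this that sidesteps degeneracies and vanishing weights, and your bookkeeping with the constant $c$ (splitting at $r_2$ and $r_5$, using $r_3<1$ and $r_3<r_4$ to get $r_1\leq r_5/2$ and $r_1\leq r_2/2$) checks out.
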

This result says roughly that under certain technical conditions, we can move an eigenvalue $E_j$ across a threshold $r_1$ by adding $e_k \otimes e_k$ to the matrix if the associated eigenvector $v_k$ has enough mass at coordinate $k$ and other eigenvectors ($v_\ell$) corresponding to nearby energies ($r_2 < E_\ell < r_5$) don't have much mass there.

We also need a bound on the possible number of ``almost orthogonal'' eigenvectors in $\R^n$, a weaker form of the Kabatyanskii-Levenshtein bounds \cite{klcode} on spherical codes, for which a proof is given by Tao in \cite{taobound}:
\begin{lem}[\cite{taobound}]\label{taobound}
	If $v_1,\dots, v_m \in \R^n$ satisfy $|\langle v_i,v_j \rangle - \delta_{ij}| < \frac{1}{\sqrt{5n}}$, then $m \leq (5-\sqrt{5})n/2$.
\end{lem}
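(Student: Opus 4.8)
The plan is to work with the $m\times m$ Gram matrix $G=(\langle v_i,v_j\rangle)_{i,j}$. Writing $B$ for the $n\times m$ matrix whose columns are $v_1,\dots,v_m$, we have $G=B^{\mathsf T}B$, so $G$ is symmetric positive semidefinite and $\operatorname{rank}G\le n$. Set $M=G-I_m$; the hypothesis says precisely that \emph{every} entry of $M$ (including the diagonal, since $|\langle v_i,v_i\rangle-1|<\tfrac1{\sqrt{5n}}$) has modulus $<\epsilon:=\tfrac1{\sqrt{5n}}$. The whole proof is then a comparison of two estimates for $\tr(M^2)=\|M\|_{\mathrm{HS}}^2$. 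We may assume $m>n$, since $\tfrac{5-\sqrt5}2>1$ makes the conclusion trivial otherwise; we will also want $m(1-\epsilon)\ge n$ in the main step, and if that fails then $m<\tfrac n{1-\epsilon}$, which is already $\le\tfrac{5-\sqrt5}2\,n$ once $n$ is large enough, the finitely many small $n$ being checked directly (e.g. in $\R^1$ two nearly-unit vectors cannot have inner product below $1-\tfrac1{\sqrt5}>\tfrac1{\sqrt5}$, and for $n=2$ a third vector would force $\tr(M^2)\ge m-n=1>\tfrac9{10}>m^2\epsilon^2$).

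The upper estimate is immediate: $\tr(M^2)=\sum_{i,j}M_{ij}^2<m^2\epsilon^2=\tfrac{m^2}{5n}$. For the lower estimate I would exploit the spectral constraints forced by $G\succeq0$ and $\operatorname{rank}G=r\le n$: $G$ has the eigenvalue $0$ with multiplicity $m-r$, so $M$ has the eigenvalue $-1$ with multiplicity $m-r$, while its remaining eigenvalues are $\mu_i=\lambda_i-1>-1$ ($i=1,\dots,r$), the $\lambda_i>0$ being the positive eigenvalues of $G$. Here $\tr M=\sum_k M_{kk}$ forces $|\tr M|<m\epsilon$ and $\sum_{i=1}^r\mu_i=\tr M+(m-r)$, while $\tr(M^2)=(m-r)+\sum_{i=1}^r\mu_i^2$. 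By Cauchy--Schwarz, $\sum_{i=1}^r\mu_i^2\ge\tfrac1r\bigl(\sum_{i=1}^r\mu_i\bigr)^2\ge\tfrac1r\bigl(m(1-\epsilon)-r\bigr)^2$, the last step being legitimate because $m(1-\epsilon)-r\ge m(1-\epsilon)-n\ge0$ in the regime we have reduced to. Hence $\tr(M^2)\ge(m-r)+\tfrac1r\bigl(m(1-\epsilon)-r\bigr)^2=m-2m(1-\epsilon)+\tfrac{m^2(1-\epsilon)^2}{r}$, and this expression is decreasing in $r$, so since $r\le n$ it is at least $\tfrac{m^2(1-\epsilon)^2}{n}-m(1-2\epsilon)$.

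Combining the two estimates and dividing by $m$ gives $\tfrac mn\bigl[(1-\epsilon)^2-\tfrac15\bigr]<1-2\epsilon$; since $\epsilon=\tfrac1{\sqrt{5n}}\le\tfrac1{\sqrt5}$ makes both $(1-\epsilon)^2-\tfrac15>0$ and $1-2\epsilon>0$, this rearranges to $m<n\cdot\dfrac{1-2\epsilon}{(1-\epsilon)^2-1/5}$. The one genuinely delicate point is the elementary single-variable fact that on $(0,\tfrac1{\sqrt5}]$ the function $\epsilon\mapsto\dfrac{1-2\epsilon}{(1-\epsilon)^2-1/5}$ attains its maximum at $\epsilon=\tfrac{1-1/\sqrt5}{2}$, where its value is exactly $\tfrac{5-\sqrt5}2$; therefore $m<\tfrac{5-\sqrt5}2\,n$, which gives the assertion. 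I expect the main obstacles to be just (i) the sign bookkeeping in the Cauchy--Schwarz step together with the separate disposal of the $m(1-\epsilon)<n$ and small-$n$ cases, and (ii) noticing that the constant $\tfrac{5-\sqrt5}2$ — the relevant root of $x^2-5x+5$ — is precisely what this final optimization produces; everything else is routine linear algebra.
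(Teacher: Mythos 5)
Your argument is correct, and it is worth noting that the paper itself offers no proof of this lemma at all: it is quoted from Tao's note \cite{taobound} (as a cheap form of the Kabatyanskii--Levenshtein bound), so there is nothing internal to compare against line by line. Your proof is a self-contained version of the same genre of argument Tao uses — Gram matrix $G=B^{\mathsf T}B\succeq 0$ of rank $r\le n$, a trace/rank comparison, and Cauchy--Schwarz over the nonzero spectrum — but organized through $M=G-I_m$ so that the two estimates $\tr(M^2)<m^2\epsilon^2$ and $\tr(M^2)\ge (m-r)+\tfrac1r\bigl(m(1-\epsilon)-r\bigr)^2$ collide, and the exact constant emerges from the one-variable maximization: I checked that the critical point of $\epsilon\mapsto\frac{1-2\epsilon}{(1-\epsilon)^2-1/5}$ on $\bigl(0,\tfrac1{\sqrt5}\bigr]$ is $\epsilon_0=\tfrac{1-1/\sqrt5}{2}$ with value exactly $\tfrac{5-\sqrt5}{2}$, and that $(1-\epsilon)^2-\tfrac15>0$ and $1-2\epsilon>0$ throughout, so the final division is legitimate. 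Two cosmetic points, neither a gap: (i) your $n=2$ spot-check as written only treats $m=3$, but that suffices because the hypothesis is inherited by any subset of the vectors (and in fact for $n=2$ the degenerate case $m(1-\epsilon)<n$ already forces $m\le 2$, so no extra check was needed); (ii) you should observe $r\ge1$ before dividing by $r$, which is immediate since the diagonal hypothesis forces each $v_i\neq 0$. With those trivial remarks, the proof stands, and it has the merit of making the otherwise mysterious constant $\tfrac{5-\sqrt5}{2}$ (a root of $x^2-5x+5$) appear as the output of an explicit optimization rather than as a quoted black box.
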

The last technical result we need is the following, which guarantees the existence of small scale boxes away from ``bad'' sets where eigenfunctions have non-negligible magnitude compared to their magnitude on a large scale box. This is \cite[Lemma 5.3]{Ding-Smart}
\begin{lem}[\cite{Ding-Smart}]\label{needsq}
	For every integer $K \geq 1$, if
	\begin{enumerate}[label=(\Alph*)]
		\item $L \geq C_K L' \geq L' \geq C_K$
		\item $\Lambda \subset \Z^2$ with $\ell(\Lambda) = L$
		\item $\Lambda'_k \subset \Z^2$ for $k=1,\dots,K$ with $\ell(\Lambda'_k) = L'$
		\item $H_\Lambda \psi = E \psi$
	\end{enumerate}
	then there is some $\Lambda'$ satisfying
	\begin{enumerate}[label=(\roman*)]
		\item $\ell(\Lambda') = L'$
		\item $2\Lambda' \subset \Lambda \setminus \cup_{k=1}^K \Lambda'_k$
		\item $\|\psi\|_{\ell^\infty(\Lambda')} \geq e^{-C_KL'}\|\psi\|_{\ell^\infty(\Lambda)}$
	\end{enumerate}
\end{lem}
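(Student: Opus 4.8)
Here is the plan. The argument is entirely deterministic and splits into one propagation estimate plus a pigeonhole over nested square frames.

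First I would normalize so that $\|\psi\|_{\ell^\infty(\Lambda)}=1$ and fix a site $x_0\in\Lambda$ with $|\psi(x_0)|=1$; since $\psi$ is an eigenfunction of the bounded self-adjoint operator $H_\Lambda$ with $0\le V\le M$, one has $|E|\le 8+M$. The propagation estimate I would record (this is the purely deterministic content carried over unchanged from \cite{Ding-Smart}, in the spirit of Lemmas \ref{unex}--\ref{unexbound}) is: there is $C=C(M)$ such that for every square $Q\subseteq\Lambda$ of side $\ell$,
\[ \|\psi\|_{\ell^\infty(Q)}\le e^{C\ell}\,\|\psi\|_{\ell^\infty(\partial^{\mathrm{fat}}Q)},\qquad \partial^{\mathrm{fat}}Q:=\{x\in Q:\operatorname{dist}_\infty(x,\Z^2\setminus Q)\le 2\}. \]
The proof is the usual one: the local form of $H\psi=E\psi$ expresses the value of $\psi$ at a site in terms of its values on the two preceding rows of $Q$ with a bounded multiplicative loss $C_0=C_0(M)$ per row, and the two outermost columns of $Q$ (which lie in $\partial^{\mathrm{fat}}Q$) supply the row-endpoints that the recursion would otherwise drop, so all of $Q$ is reconstructed from $\partial^{\mathrm{fat}}Q$; this survives $Q$ abutting $\partial\Lambda$, since the coefficients of $H_\Lambda$ at $\partial\Lambda$ only replace the vertex degree $4$ by a smaller one and stay bounded.

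Next I would set $R:=C_1KL'$ for a suitable universal $C_1$ and $N:=10K$, and — treating first the case $\operatorname{dist}_\infty(x_0,\partial\Lambda)\ge R$, which hypothesis (A) guarantees once $C_K$ is large enough relative to $C_1$ — take the squares $Q_1\subset\cdots\subset Q_N$ centered at $x_0$ with $\ell(Q_i)=6iL'$, so that $\ell(Q_N)\le R$, every $Q_i\subseteq\Lambda$ with $\operatorname{dist}_\infty(Q_i,\partial\Lambda)\ge 2L'$, and the frames $A_i:=Q_{i+1}\setminus Q_i$ are pairwise disjoint square annuli of width $3L'$. Each forbidden box $\Lambda'_k$, even fattened by $2L'$, is a connected set of $\ell^\infty$-diameter $O(L')$, hence meets at most a bounded number of the $A_i$; so for $C_1$ chosen appropriately there is a frame $A_{i_0}$ disjoint from every $\Lambda'_k$ together with its $2L'$-neighborhood. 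Applying the propagation estimate to $Q_{i_0+1}$, and using $x_0\in Q_{i_0+1}$,
\[ 1=|\psi(x_0)|\le\|\psi\|_{\ell^\infty(Q_{i_0+1})}\le e^{CR}\,\|\psi\|_{\ell^\infty(\partial^{\mathrm{fat}}Q_{i_0+1})}; \]
since $\ell(Q_{i_0+1})-\ell(Q_{i_0})=6L'$ exceeds the collar width, $\partial^{\mathrm{fat}}Q_{i_0+1}\subseteq A_{i_0}$, so there is a site $q\in A_{i_0}$ with $|\psi(q)|\ge e^{-CR}$. Finally I would take $\Lambda'$ to be a side-$L'$ square containing $q$, placed inside the width-$3L'$ frame $A_{i_0}$ with $q$ interior to $\Lambda'$; then $\ell(\Lambda')=L'$, the box $2\Lambda'$ lies within $O(L')$ of $q\in A_{i_0}$ and therefore (by the choice of $A_{i_0}$) inside $\Lambda\setminus\bigcup_k\Lambda'_k$, and $\|\psi\|_{\ell^\infty(\Lambda')}\ge|\psi(q)|\ge e^{-CR}=e^{-CC_1KL'}$, which is the asserted bound with $C_K:=CC_1K$ (enlarged so all the geometric requirements of the construction hold).

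The step I expect to be the real work is the propagation estimate in the collar form: getting the bookkeeping right so that the recursion, which naively loses the two endpoints of each row, is closed up by the width-$2$ collar, and doing this uniformly for squares that may touch $\partial\Lambda$. The other point needing care is the boundary case $\operatorname{dist}_\infty(x_0,\partial\Lambda)<R$; there one reruns the same nested-frame argument with the $Q_i$ anchored at the corner of $\Lambda$ nearest $x_0$, so the frames become $\mathsf{L}$-shaped but remain thin, and the pigeonhole still produces a clean frame — a routine variant I would check but not belabour. Everything here is deterministic, so no probabilistic input enters.
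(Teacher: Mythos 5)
The paper never proves Lemma \ref{needsq}: it is imported from \cite{Ding-Smart} as a purely deterministic fact, so there is no in-paper argument to compare with, and your proposal has to stand on its own. Its bulk case does: the width-$2$ collar propagation bound $\|\psi\|_{\ell^\infty(Q)}\le e^{C\ell}\|\psi\|_{\ell^\infty(\partial^{\mathrm{fat}}Q)}$ is correct (row-by-row reconstruction with the side collars supplying the endpoints, losses bounded since $|E|\le 8+M$), and the pigeonhole over $O(K)$ disjoint frames of width $3L'$ around the maximizer $x_0$, followed by placing $\Lambda'$ around the large site $q$, gives (i)--(iii) whenever $\mathrm{dist}_\infty(x_0,\partial\Lambda)\ge C_1KL'$.

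The genuine gap is the boundary case, which you wave off as routine but which is exactly where the construction breaks; note that hypothesis (A) only gives $L\ge C_KL'$, and eigenfunctions of the truncated operator $H_\Lambda$ can perfectly well peak adjacent to $\partial\Lambda$ (a potential well on the boundary), so this regime is generic. Two problems arise. First, for a corner-anchored square a whole side (or two) of the collar simply carries no data; your remark that the coefficients of $H_\Lambda$ stay bounded at $\partial\Lambda$ does not address this. The estimate can in fact be rescued, but only by using the truncation structure itself: the missing neighbor terms act as zero Dirichlet data, so the propagation cones do not shrink along $\partial\Lambda$ and the interior-side collars suffice — an argument you would have to supply. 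Second, and decisively, even with that estimate the interior-side collars of a corner-anchored $Q_{i_0+1}$ run all the way down to $\partial\Lambda$ at their ends, so the clean L-shaped frame contains sites within distance $2$ of $\partial\Lambda$, and the large-value site $q$ your argument produces may be such a site. Then no $L'$-square $\Lambda'$ containing $q$ can satisfy $2\Lambda'\subset\Lambda$ (that forces $\mathrm{dist}_\infty(q,\partial\Lambda)\gtrsim L'$), so conclusion (ii) fails for the box you build; and re-running the same estimate on a small square around $q$ does not help, because its interior-side collars again touch $\partial\Lambda$ and the largeness can keep hugging the boundary. So the final placement step of your proof fails in the boundary case as sketched, and fixing it requires a genuinely different geometric arrangement (for instance one that keeps all sampled data at distance $\gtrsim L'$ from $\partial\Lambda$ while exploiting the zero extension of $\psi$ across $\partial\Lambda$), not merely L-shaped frames plus the same pigeonhole.
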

With this, we are ready to prove \Cref{bigweg}.
\begin{proof}[Proof of \Cref{bigweg}]
	Throughout, $C > 1 > c > 0$ will be constants allowed to depend on $\ve, \delta, \eta$ and $K$, but crucially not on the scale. (As with most of our results, there is also an implicit dependence on parameters of the potential $M$, $\gamma$, and $\rho$.) We let $E_1(H_\Lambda) \geq \cdots \geq E_{L_0^2}(H_\Lambda)$ be eigenvalues, and further let $\psi_k(H_\Lambda)$ be associated normalized eigenfunctions, treating these as deterministic functions of the potential living in $[0,M]^\Lambda$.
	\begin{claim}
		Without loss of generality we can assume $\cup_k \Lambda'_k \subset F$.
	\end{claim}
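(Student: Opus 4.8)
The plan is to enlarge the frozen set and then integrate out the newly frozen sites. First I would set $\tilde F:=F\cup\bigcup_k\Lambda_k'\subset\Z^2$; then $\bigcup_k\Lambda_k'\subset\tilde F$ automatically, so it suffices to establish the Lemma under the additional hypothesis $\bigcup_k\Lambda_k'\subset F$. The reduction rests on two points: (a) for any extension $\tilde v:\tilde F\to[0,M]$ of $v$ (that is, $\tilde v|_F=v$), the hypotheses (I)--(VIII) continue to hold with $(\tilde F,\tilde v)$ in place of $(F,v)$; and (b) the conclusion of the Lemma for $(\tilde F,\tilde v)$, for all such $\tilde v$, implies the conclusion for $(F,v)$.

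For (a): hypotheses (I)--(VI) make no reference to the frozen set and are untouched. For (VII): the newly frozen sites $\tilde F\setminus F$ all lie in $\bigcup_j\Lambda_j'$, so for every box $\Lambda'\subset\Lambda\setminus\bigcup_j\Lambda_j'$ one has $\tilde F\cap\Lambda'=F\cap\Lambda'$; since $\eta$-regularity of a set in a box depends only on the trace of the set on that box (the definition quantifies only over sub-boxes $Q_i$ and the sparsity of the set in each), $\tilde F$ is $\eta$-regular in $\Lambda'$ exactly when $F$ is. For (VIII): every $V:\Lambda\to[0,M]$ with $V_{\tilde F}=\tilde v$ also satisfies $V_F=v$, so the implication asserted by (VIII) for $v$ applies verbatim and delivers (VIII) for $\tilde v$.

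For (b): by the special case, for each $\tilde v$ as above we have $\P[\|(H_\Lambda-\overline{E})^{-1}\|\le e^{L_1}\,|\,V_{\tilde F}=\tilde v]\ge 1-L_0^{C\ve-1/2}$, with $C$ depending only on $\ve,\delta,\eta,K$ and not on $\tilde v$. Since $(V_n)_{n\in\tilde F\setminus F}$ is independent of $V_F$, averaging this bound over the law of $V_{\tilde F\setminus F}$ and invoking the tower property of conditional expectation yields
\[\P[\|(H_\Lambda-\overline{E})^{-1}\|\le e^{L_1}\,|\,V_F=v]=\E\big[\,\P[\|(H_\Lambda-\overline{E})^{-1}\|\le e^{L_1}\,|\,V_{\tilde F}]\,\big|\,V_F=v\big]\ge 1-L_0^{C\ve-1/2},\]
which is precisely the conclusion of the Lemma for $(F,v)$. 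Hence one may, and henceforth does, assume $\bigcup_k\Lambda_k'\subset F$.

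I do not expect a genuine obstacle in this reduction; the single step that requires a moment's care is the persistence of hypothesis (VII) under the enlargement of $F$, and that goes through exactly because every added frozen site is drawn from the excluded region $\bigcup_j\Lambda_j'$, on which (VII) imposes no condition.
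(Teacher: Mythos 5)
Your proposal is correct and follows essentially the same route as the paper: enlarge the frozen set to $F\cup\bigcup_k\Lambda_k'$, note that the regularity hypothesis is unaffected because the added sites lie in the excluded region $\bigcup_j\Lambda_j'$, and recover the bound for $V_F=v$ by averaging the uniform conditional bound over the realizations of the newly frozen sites via the tower property. Your write-up is in fact slightly more explicit than the paper's about why each hypothesis persists.
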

	Let $F' = \cup_k \Lambda'_k \setminus F$. Clearly $F \cup F'$ still satisfies the sparsity assumptions in the statement of \Cref{bigweg}. Further, we have for any event $\mathcal{E}$
	\[\P[\mathcal{E}|V_F = v] = \E[\P[\mathcal{E}|V_{F\cup F'} = v\cup v']|V_F=v]\]
	so that in particular, were we able to show the requisite bounds conditioned on $V_{F\cup F'} = v\cup v'$, the requisite bounds conditioned only on $V_F = v$ follow by averaging over all the realizations $V_{F'} = v'$.
	\begin{claim}We let $\mathcal{E}_{uc}$ denote the event that for energies $E$ with $|E - \overline{E}|< e^{-L_5}$, and associated eigenfunctions $H_\Lambda \psi = E \psi$, we have
		\[ |\{|\psi|\geq e^{-L_2}\|\psi\|_{\ell^\infty(\Lambda)}\}\setminus F|\geq L_4^{3/2}\]
	Then 
	$\P[\mathcal{E}_{uc}|V_F=v] \geq 1-e^{-L_0^\ve}$.
	\end{claim}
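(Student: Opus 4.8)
The plan is to realise $\mathcal{E}_{uc}$ as containing a high-probability event built directly from the unique continuation result, and then to argue deterministically. Let $\alpha=\alpha(\eta)$ be the constant furnished by Theorem~\ref{ucucthmfinal} for the parameter value $\eta$ (we add ``$\eta$ below the threshold of Theorem~\ref{ucucthmfinal}'' to the standing smallness requirements on $\eta$, which is legitimate since $\eta$-regularity of $F$ implies $\eta'$-regularity for every $\eta'\ge\eta$), and let $\mathcal{G}$ be the event that $\mathcal{E}_{\mathrm{uc}}(Q,F,\alpha,\eta)$ holds, for the fixed reference energy $\overline{E}$, for every square $Q$ with $\ell(Q)=L_3$ and $Q\subset\Lambda\setminus\bigcup_j\Lambda_j'$. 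For each such $Q$, $F$ is $\eta$-regular in $Q$ by hypothesis (VII), so Theorem~\ref{ucucthmfinal} gives $\P[\mathcal{E}_{\mathrm{uc}}(Q,F,\alpha,\eta)\mid V_F=v]\ge 1-e^{-\eta L_3^{1/4}}$ for each of the at most $L_0^2$ such $Q$, and a union bound yields $\P[\mathcal{G}\mid V_F=v]\ge 1-L_0^2e^{-\eta L_3^{1/4}}$. The chained hypotheses $L_{k+1}\ge L_k^{1-\ve}$ give $L_3\ge L_0^{(1-\ve)^3}$, so, taking $\ve$ small enough that $(1-\ve)^3/4>\ve$ and $L_0\ge C_{\eta,\ve,\delta,K}$, this is at least $1-e^{-L_0^\ve}$. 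It thus remains to prove the deterministic inclusion $\mathcal{G}\subset\mathcal{E}_{uc}$.

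Fix a configuration in $\mathcal{G}$, an energy $|E-\overline{E}|<e^{-L_5}$, and an eigenfunction $H_\Lambda\psi=E\psi$; normalise so that $\|\psi\|_{\ell^\infty(\Lambda)}=1$. Using Lemma~\ref{needsq} --- with ambient square $\Lambda$, with $\Lambda_1',\dots,\Lambda_j'$ (together with a bounded-overlap covering of the $L_3$-neighbourhood of $\partial\Lambda$) as the excluded squares, at scale a fixed fraction of $L_3$ --- and a routine covering argument on $\Z^2$, one produces a square $Q$ with $\ell(Q)=L_3$, $Q\subset\Lambda\setminus\bigcup_j\Lambda_j'$, and $\|\psi\|_{\ell^\infty(\tfrac12 Q)}\ge e^{-C_K L_3}$; there is no boundary subtlety since $H_\Lambda\psi=E\psi$ in fact forces $(H\psi)(x)=E\psi(x)$ at every $x\in\Lambda$, and this step is carried out exactly as in \cite{Ding-Smart}. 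Now apply the contrapositive of $\mathcal{E}_{\mathrm{uc}}(Q,F,\alpha,\eta)$ to $\widetilde{\psi}:=e^{\lambda L_3\log L_3}\psi$, with $\lambda=\lambda(\alpha,C_K)$ chosen large. Its first hypothesis holds because $L_5\ge L_3^{(1-\ve)^2}$ with $(1-\ve)^2>1/2$ for $\ve$ small, so $|E-\overline{E}|\le e^{-L_5}\le e^{-\alpha(L_3\log L_3)^{1/2}}$; its second hypothesis $H\widetilde{\psi}=E\widetilde{\psi}$ on $Q$ also holds. But $\|\widetilde{\psi}\|_{\ell^\infty(\tfrac12 Q)}\ge e^{\lambda L_3\log L_3-C_K L_3}>e^{\alpha L_3\log L_3}$ for $L_3$ large, so the conclusion of the event fails; hence its third hypothesis must fail, i.e. $|\widetilde{\psi}|>1$ --- equivalently $|\psi|>e^{-\lambda L_3\log L_3}$ --- on more than an $\eta(L_3\log L_3)^{-1/2}$ fraction of $Q\setminus F$.

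To conclude, applying $\eta$-regularity of $F$ to the single box $Q$ forces $F$ to be $\eta$-sparse in $Q$, whence $|Q\setminus F|\ge(1-\eta)L_3^2$; therefore $|\psi|>e^{-\lambda L_3\log L_3}$ on at least $\eta(1-\eta)L_3^{3/2}(\log L_3)^{-1/2}$ sites of $Q\setminus F\subset\Lambda\setminus F$. The scale conditions $L_3\le L_2^{1-2\delta}$ and $L_4\le L_3^{1-2\delta}$ give, once $L_3\ge C_{\eta,\ve,\delta,K}$, both $\lambda L_3\log L_3\le L_2$ --- so that $\{|\psi|>e^{-\lambda L_3\log L_3}\}\subset\{|\psi|\ge e^{-L_2}\}$ --- and $\eta(1-\eta)L_3^{3/2}(\log L_3)^{-1/2}\ge L_4^{3/2}$. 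Hence $|\{|\psi|\ge e^{-L_2}\|\psi\|_{\ell^\infty(\Lambda)}\}\setminus F|\ge L_4^{3/2}$, and since $E$ and $\psi$ were arbitrary the configuration lies in $\mathcal{E}_{uc}$, completing the proof that $\mathcal{G}\subset\mathcal{E}_{uc}$.

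I expect the real difficulty to be organisational rather than conceptual: one must check that the interlocking inequalities $L_k^{1-2\delta}\ge L_{k+1}\ge L_k^{1-\ve}$ simultaneously deliver all the comparisons used above --- $L_5\gtrsim L_3^{1/2+}$ to feed Theorem~\ref{ucucthmfinal}, $L_2\gtrsim L_3^{1+}$ to absorb the $e^{\lambda L_3\log L_3}$ loss into the $e^{-L_2}$ threshold, $L_4\lesssim L_3^{1-}$ so the extracted super-level set dominates $L_4^{3/2}$, and $L_3\gtrsim L_0^{1-}$ with $(1-\ve)^3/4>\ve$ so the union bound beats $e^{-L_0^\ve}$ --- which is what fixes how small $\eta>\ve>\delta$ must be taken and how large $C_{\eta,\ve,\delta,K}$ must be; and one must carry out the deterministic localisation of $\psi$'s near-maximum onto the central half of a good $L_3$-square via Lemma~\ref{needsq}, which is done exactly as in \cite{Ding-Smart}. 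Note that hypothesis \ref{glocal} and the set $G$ play no role here; the $\ell^2$-delocalisation they encode is used only later, in the eigenvalue-variation part of the proof of Lemma~\ref{bigweg}.
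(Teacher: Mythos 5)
Your proposal is correct and follows essentially the same route as the paper: a union bound over the unique continuation events $\mathcal{E}_{\mathrm{uc}}(\Lambda',F)$ for all $L_3$-squares $\Lambda'\subset\Lambda\setminus\cup_j\Lambda_j'$, then Lemma \ref{needsq} to locate a square where $\psi$ is within $e^{-CL_3}$ of its maximum, followed by the contrapositive of Theorem \ref{ucucthmfinal} and the scale comparisons $L_5\gtrsim L_3^{1/2+}$, $L_2\gtrsim L_3\log L_3$, $L_4^{3/2}\lesssim L_3^{3/2}(\log L_3)^{-1/2}$. The extra bookkeeping you supply (the normalization $\widetilde\psi=e^{\lambda L_3\log L_3}\psi$ and the $(1-\eta)$-lower bound on $|Q\setminus F|$) is implicit in the paper's terser argument.
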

	We let $\alpha' > 1 > \ve'$ be constants such that \Cref{ucucthmfinal} holds, and without loss of generality assume $\ve' > \eta$. Then, letting $\mathcal{E}_{uc}(\Lambda',F)$ denote the unique continuation event from \Cref{ucucthmfinal} and $\mathcal{E}_{uc}'$ denote the event
	\[ \mathcal{E}_{uc}' := \bigcap_{\substack{\Lambda' \subset \Lambda\setminus\cup_k \Lambda_j'\\\ell(\Lambda')=L_3}}\mathcal{E}_{uc}(\Lambda',F)\]
	we obtain\[ \P[\mathcal{E}_{uc}'|V_F=v] \geq 1-e^{-\ve'L^{1/4}_3+C\log L_0} \geq 1-e^{-L_0^\ve} \]
	with the last inequality holding for sufficiently large scales. So in particular it suffices to show $\mathcal{E}_{uc} \supset \mathcal{E}_{uc}'$ (again at sufficiently large scales). This follows straightforwardly; using \Cref{needsq}, there is some $\Lambda'$ such that $\Lambda' \subset \Lambda\setminus\cup_k \Lambda'_k$ and
	\[ \|\psi\|_{\ell^\infty(\frac{1}{2}\Lambda')} \geq e^{-CL_3}\|\psi\|_{\ell^\infty(\Lambda)}\]
	so that in particular, for relevant $E$ and $\psi$, $\mathcal{E}_{uc}'$ implies
	\[ |\{|\psi|\geq e^{-\alpha'L_3\log L_3}\|\psi\|_{\ell^\infty(\frac{1}{2}\Lambda')}\}\cap \Lambda'\setminus F| \geq \frac{L_3^{3/2}}{(\log L_3)^{1/2}}\]
	Small powers of the scales dominate constants and logarithmic terms, so we obtain:
	\[|\{|\psi|\geq e^{-L_2}\|\psi\|_{\ell^\infty(\Lambda)}\}\cap \Lambda \setminus F| \geq L_4^{3/2}\]
	which shows the claim; $\mathcal{E}_{uc} \supset \mathcal{E}_{uc}'$ for sufficiently large scales.
	We've shown in particular that $\mathcal{E}_{uc}$ failing is very unlikely; it remains to show that $\|(H-\overline{E})^{-1}\| > e^{L_1}$ is very unlikely if $\mathcal{E}_{uc}$ holds. This follows via eigenvariation on the large support.
	\begin{claim}\label{layerclaim}
		Letting $s_\ell = e^{-L_1+\ell(L_2-L_4+C)}$ and further letting $\mathcal{E}_{k_1,k_2,\ell}$ denote the event that
		\[ |E_{k_1}-\overline{E}|,|E_{k_2}-\overline{E}| < s_\ell\quad\text{and}\quad |E_{k_1-1}-\overline{E}|,|E_{k_2+1}-\overline{E}| \geq s_{\ell+1}\]
		we have the bounds
		\[\P[\mathcal{E}_{k_1,k_2,\ell}\cap \mathcal{E}_{uc}|V_F =v] \leq CL_0L_4^{-3/2}\]
		for $L_0$ sufficiently large.
	\end{claim}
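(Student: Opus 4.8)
The plan is to peel the continuous part off each potential variable with a Bernoulli decomposition, reduce to a product Bernoulli measure with a uniform nondegeneracy parameter $\beta$, and then realize $\mathcal{E}_{k_1,k_2,\ell}\cap\mathcal{E}_{uc}$ as (a union of two) $\kappa$-Sperner events with $\kappa\approx L_4^{3/2}/N$, so that Lemma \ref{combbound} gives the bound. Concretely, I would first apply Theorem \ref{decompgoodalt} to fix, for each $n\in\Lambda\setminus F$, a $p_n$-Bernoulli decomposition $V_n\overset{\mathcal{D}}{=}Y_n(t_n)+Z_n(t_n)\xi_n$ with $p_-\le p_n\le p_+$ and $\iota\le Z_n(t_n)\le M$, realized on a product space with coordinates $(t_n,\xi_n)_{n\in\Lambda\setminus F}$ (plus the frozen data $V_F=v$). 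Conditioning on $V_F=v$ and on all the $t_n$, the remaining randomness is the independent vector $\boldsymbol{\xi}=(\xi_n)_{n\in\Lambda\setminus F}$, a non-trivial product distribution with $\min\{p_n,1-p_n\}\ge\beta:=\min\{p_-,1-p_+\}>0$, and flipping $\xi_n$ from $0$ to $1$ replaces $H_\Lambda$ by $H_\Lambda+Z_n(t_n)\,e_n\otimes e_n$. Since the target bound will be uniform in the $t_n$, it suffices to prove it after this conditioning; write $N:=|\Lambda\setminus F|\le L_0^2$, and note $N\ge L_4^{3/2}$ since $\mathcal{E}_{uc}$ already forces that many sites outside $F$. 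Let $\mathcal{A}\subset\{0,1\}^{\Lambda\setminus F}$ be the set of $\boldsymbol{\xi}$ for which $H_\Lambda\in\mathcal{E}_{k_1,k_2,\ell}\cap\mathcal{E}_{uc}$.

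The next step locates the invariant shared by all members of $\mathcal{A}$. For $\boldsymbol{\xi}\in\mathcal{A}$, the monotone ordering of eigenvalues together with $|E_{k_1}-\overline{E}|,|E_{k_2}-\overline{E}|<s_\ell$ and $|E_{k_1-1}-\overline{E}|,|E_{k_2+1}-\overline{E}|\ge s_{\ell+1}>s_\ell$ forces $E_{k_1-1}\ge\overline{E}+s_{\ell+1}$ and $E_{k_2+1}\le\overline{E}-s_{\ell+1}$, hence $\tr 1_{[\overline{E}+s_\ell,\infty)}(H_\Lambda)=k_1-1$ for \emph{every} $\boldsymbol{\xi}\in\mathcal{A}$; this common value is the lever to be pulled. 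In the relevant range of $\ell$ — namely $s_\ell\le e^{-L_5}$, which the definition of $s_\ell$ shows is exactly the range $\ell\lesssim L_1/L_2$ one sums over in $\mathcal{A}$ — the eigenvalue $E_{k_1}$ lies within $e^{-L_5}$ of $\overline{E}$, so $\mathcal{E}_{uc}$ supplies $S(\boldsymbol{\xi})\subset\Lambda\setminus F$ with $|S(\boldsymbol{\xi})|\ge L_4^{3/2}$ on which $|\psi_{k_1}(x)|\ge e^{-L_2}\|\psi_{k_1}\|_{\ell^\infty(\Lambda)}$. Combining this with hypothesis \ref{glocal} — which gives $\|\psi_{k_1}\|_{\ell^\infty(\Lambda)}\gtrsim L_0^{-\delta/2}$, and for any near-$\overline{E}$ eigenfunction $\psi_m$ and $x\notin\bigcup_k\Lambda'_k$ both $|\psi_m(x)|\le e^{-L_4}$ and (by a dimension count on $\ell^2(G)$, i.e.\ Lemma \ref{taobound}) that there are only $O(L_0^\delta)$ such eigenvalues — provides exactly the inputs demanded by Lemma \ref{eigenvar}: an eigenvector mass $\ge r_3:=\tfrac14 e^{-2L_2}L_0^{-\delta}$ at $x$, and a band sum controlled by $r_4\approx L_0^\delta e^{-2L_4}$.

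Now split $\mathcal{A}=\mathcal{A}_0\cup\mathcal{A}_1$ according to whether at least half of $S(\boldsymbol{\xi})$ has $\xi_x=0$ or at least half has $\xi_x=1$, and set $\kappa:=L_4^{3/2}/(2N)$. For $\boldsymbol{\xi}\in\mathcal{A}_0$ take $B(\boldsymbol{\xi}):=S(\boldsymbol{\xi})\cap\boldsymbol{\xi}^c$, so $B(\boldsymbol{\xi})\subset\boldsymbol{\xi}^c$ and $|B(\boldsymbol{\xi})|\ge\tfrac12 L_4^{3/2}\ge\kappa|\boldsymbol{\xi}^c|$; if $\boldsymbol{\xi}'\supseteq\boldsymbol{\xi}$ meets $B(\boldsymbol{\xi})$ at a site $x$, then $H_\Lambda(\boldsymbol{\xi}')\succeq H_\Lambda(\boldsymbol{\xi})+Z_x(t_x)\,e_x\otimes e_x$, and since $\tr 1_{[\overline{E}+s_\ell,\infty)}$ is monotone under PSD perturbations it suffices that adding $Z_x(t_x)\,e_x\otimes e_x$ strictly increases it. That is the conclusion of Lemma \ref{eigenvar}, applied to a shifted and rescaled copy of $H_\Lambda$ with the threshold $r_1$ placed in the spectral gap separating the cluster $E_{k_1},\dots,E_{k_2}$ (just below $r_1$, with eigenvector mass $\ge r_3$ at $x$) from $E_{k_1-1}$ (at distance $\ge s_{\ell+1}$ above), the auxiliary band chosen so that \ref{lowcon} holds with $r_4$ as above, and the non-unit increment $Z_x(t_x)\in[\iota,M]$ absorbed by the rescaling of the lemma. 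Thus $\tr 1_{[\overline{E}+s_\ell,\infty)}(H_\Lambda(\boldsymbol{\xi}'))\ge k_1>k_1-1$, so $\boldsymbol{\xi}'\notin\mathcal{A}$, and $\mathcal{A}_0$ is $\kappa$-Sperner; symmetrically $\{\boldsymbol{\xi}^c:\boldsymbol{\xi}\in\mathcal{A}_1\}$ is $\kappa$-Sperner via down-flips at sites of $S(\boldsymbol{\xi})\cap\boldsymbol{\xi}$ (which strictly decrease the count by the same eigenvalue-variation reasoning). Applying Lemma \ref{combbound} to the law of $\boldsymbol{\xi}$ and to its complement (both with parameter $\beta$),
\[ \P[\boldsymbol{\xi}\in\mathcal{A}]\le\P[\boldsymbol{\xi}\in\mathcal{A}_0]+\P[\boldsymbol{\xi}\in\mathcal{A}_1]\le\frac{2C_\beta}{\kappa\sqrt{N}}=\frac{4C_\beta\sqrt{N}}{L_4^{3/2}}\le\frac{4C_\beta L_0}{L_4^{3/2}}, \]
and averaging over the conditioning on $(t_n)$ and $V_F=v$ preserves this, giving the claim with $C=4C_\beta$.

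The genuine obstacle is the verification of \ref{highcon} and \ref{lowcon} of Lemma \ref{eigenvar} at the flipped site uniformly over $\boldsymbol{\xi}\in\mathcal{A}$ (in both the up and down directions), and in particular arranging $r_1<\dots<r_5$ so that the technical inequality $r_1\le c\min\{r_3 r_5,\,r_2 r_3/r_4\}$ survives the tiny values $r_3\approx e^{-2L_2}$ and $r_4\approx L_0^\delta e^{-2L_4}$ forced above. This is precisely where hypothesis \ref{glocal} of Lemma \ref{bigweg} is indispensable — without the concentration of the near-$\overline{E}$ eigenfunctions on the size-$L_0^\delta$ set $G$ the band sum in \ref{lowcon} cannot be made admissibly small — and it is also where the precise hierarchy of $L_1,\dots,L_5$ and the admissible range of $\ell$ get pinned down. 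Everything else is bookkeeping along the lines of \cite[Theorem 4.2 and Section 5]{Ding-Smart}.
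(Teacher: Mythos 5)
Your proposal is correct and follows essentially the same route as the paper: Bernoulli decomposition (Theorem \ref{decompgoodalt}), conditioning on $V_F=v$ and the uniform variables $\mathbf{t}$, splitting the event into two pieces according to whether the large-support set of $\psi_{k_1}$ lies mostly in $\boldsymbol{\xi}$ or $\boldsymbol{\xi}^c$, showing each piece is $\kappa$-Sperner with $\kappa\asymp L_0^{-2}L_4^{3/2}$ via Lemma \ref{eigenvar} and the lower bound $Z_n\ge\iota$, and concluding with Lemma \ref{combbound}. Your explicit use of the constant trace invariant $\tr 1_{[\overline{E}+s_\ell,\infty)}=k_1-1$ and PSD monotonicity to handle arbitrary supersets makes precise a step the paper leaves implicit, and the radius bookkeeping you flag as delicate is exactly the part the paper also asserts without detail.
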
	
	To prove this, we use the Bernoulli decompositions guaranteed to exist by \Cref{decompgoodalt}. Specifically, there are
	\begin{enumerate}[label=(\Alph*)]
		\item $0<p_-\leq p+<1$
		\item $0<\iota$
		\item Measurable functions $(Y_n)_{n\in\Z^2}$ and $(Z_n)_{n\in\Z^2}$ from $(0,1)$ to $[0,M]$, with $Z_n(t) \geq \iota$ for all $n \in \Z^2$ and $t \in (0,1)$
	\end{enumerate}
	such that
	\[V_n \overset{D}{=} Y_n(t_n) + Z_n(t_n)\xi_n\]
	where the $t_n$ are uniformly distributed on $(0,1)$, the $\xi_n$ have law $(1-p_n)\rho + p_n\delta_1$ for $p_-\leq p_n \leq p_+$ and all the $t_n$ and $\xi_n$ are independent.
	
	Then our system is distributed the same as one where we replace $V_n$ with $Y_n(t_n) + Z_n(t_n)\xi_n$. In particular, we can consider $\mathcal{E}_{k_1,k_2,\ell}\cap \mathcal{E}_{uc}$ (conditioned on $V_F=v$) as an event depending entirely on $(t_n)_{n\in\Lambda\setminus F}$ and $(\xi_n)_{n\in\Lambda\setminus F}$.
	
	We use the same boldface notation to denote vectors as in \Cref{antisec}, so that $\boldsymbol{\xi} = (\xi_n)_{n\in\Lambda\setminus F}$ and $\mathbf{t} = (t_n)_{n\in\Lambda\setminus F}$. We condition $\mathcal{E}_{k_1,k_2,\ell}\cap \mathcal{E}_{uc}$ further, also conditioning on $\mathbf{t} = \mathbf{t}'$. In particular, it clearly suffices to show
	\[\P[\mathcal{E}_{k_1,k_2,\ell}\cap \mathcal{E}_{uc}|V_F =v\text{ and } \mathbf{t} = \mathbf{t}'] \leq CL_0L_4^{-3/2}\]
	for any choice of $\boldsymbol{t}'$.
	
	Clearly $\mathcal{E}_{k_1,k_2,\ell}\cap \mathcal{E}_{uc}$ conditioned on $V_F = v$ and $\mathbf{t}=\mathbf{t}'$ is entirely determined by $\boldsymbol{\xi}$; we identify realizations of $\boldsymbol{\xi}$ with subsets of $\Lambda\setminus F$ in the standard way where $\boldsymbol{\xi}$ as a set is the lattice points where $\xi_n = 1$, so that $\mathcal{E}_{k_1,k_2,\ell}\cap \mathcal{E}_{uc}$ can be identified with an element of $2^{\Lambda \setminus F}$.
	
	Using this identification, we can define events determined by the potential on $\Lambda \setminus F$ which cover $\mathcal{E}_{k_1,k_2,\ell}\cap \mathcal{E}_{uc}$: $\mathcal{E}_{k_1,k_2,\ell,+}$ and $\mathcal{E}_{k_1,k_2,\ell,-}$. We say $\boldsymbol{\xi} \in \mathcal{E}_{k_1,k_2,\ell,+}$ if and only if $\boldsymbol{\xi} \in \mathcal{E}_{k_1,k_2,\ell}$ and $|\boldsymbol{\xi} \cap \{|\psi_{k_1}|\geq e^{-L_2}\}| \geq \frac{1}{2}L_4^{3/2}$. Similarly, we say $\boldsymbol{\xi} \in \mathcal{E}_{k_1,k_2,\ell,-}$ if $\boldsymbol{\xi} \in \mathcal{E}_{k_1,k_2,\ell}$ and $|\boldsymbol{\xi}^C \cap \{|\psi_{k_1}|\geq e^{-L_2}\}| \geq \frac{1}{2}L_4^{3/2}$. Clearly $\mathcal{E}_{k_1,k_2,\ell} \cap \mathcal{E}_{uc} \subset \mathcal{E}_{k_1,k_2,\ell,+}\cup \mathcal{E}_{k_1,k_2,\ell,-}$ and so it suffices to estimate both of these. (Note that these events on the right hand side only make sense after fixing $\mathbf{t} = \mathbf{t}'$.)
	
	As a consequence of \Cref{combbound}, the claim follows immediately if we can show that $\mathcal{E}_{k_1,k_2,\ell,\pm}$ are $\kappa$-Sperner families in $\Lambda \setminus F$ for $\kappa = \frac{1}{2}L_0^{-2}L_4^{3/2}$. To show this, we use \Cref{eigenvar} and proceed more or less identically as in \cite[Claim 5.10]{Ding-Smart}. We consider the case where $\boldsymbol{\xi} \in \mathcal{E}_{k_1,k_2,\ell,-}$.  We let $\tilde{H} := \frac{1}{\iota}(H-\overline{E}-s_\ell)$ with eigenvalues $\tilde{E_k} = \frac{1}{\iota}(E_k-\overline{E}-s_\ell)$ and the same orthonormal eigenvectors $\psi_k$. We define the radii $r_1 = \frac{2}{\iota}s_\ell$, $r_2 = \frac{1}{\iota}s_{\ell+1}$, $r_3 = \frac{1}{\iota}e^{-L_2}$, $r_4 = \frac{1}{\iota}e^{cL_4}$, and $r_5 = \frac{1}{\iota}e^{-L_5}$, for use with \Cref{eigenvar}. 
	Obviously these radii satisfy conditions \ref{basicchain}, \ref{tech1}, and \ref{gap} from \ref{eigenvar} with respect to $E_i$ and $E_{i+1}$ for sufficiently large $L_0$. For $n \in \boldsymbol{\xi}^C \cap \{|\psi_{k_1}|\geq e^{-L_2}\}$, we also have condition \ref{highcon}, where in this case we consider the mass of $\psi_{k_1}$ at site $n$. Finally, under our assumption \ref{glocal} for \Cref{bigweg}, we obtain condition \ref{lowcon} for application of \Cref{eigenvar}. 
	
	In particular, increasing the potential by at least 1 for the operator at any of the points $\boldsymbol{\xi} \cap \{|\psi_{k_1}|\geq e^{-L_2}\}$ pushes the eigenvalue past $r_2$. But this means ``flipping'' $\boldsymbol{\xi}$ at any of these points forces us to exit $\mathcal{E}_{k_1,k_2,\ell,-}$; by the minimum bound $Z_n(t_n) \geq \iota$, a flip represents adding at least one to the potential at that site for the operator $\tilde{H}$. It follows that $\mathcal{E}_{k_1,k_2,\ell,-}$ is $\frac{1}{2}L_0^{-2}L_4^{3/2}$-Sperner by the lower bound on the size of $\boldsymbol{\xi}^C \cap \{|\psi_{k_1}| \geq e^{-L_2}\}$ in the definition of $\mathcal{E}_{k_1,k_2,\ell,-}$. Showing the same for $\mathcal{E}_{k_1,k_2,\ell,+}$ follows by a symmetric argument.
	\begin{claim}\label{smallcountclaim}
		There is a set $K = K(F,v,\boldsymbol{t})\subset {1,\dots,L_0^2}$ (recall that $v$ is the potential on $F$) such that $|K| \leq CL_0^\delta$ and when conditioning on $V_F = v$, we have the inclusion
		\[ \{\|(H_\Lambda - \overline{E})^{-1}\| \geq e^{L_1}\} \subset \bigcup_{\substack{k_1,k_2\in K\\0 \leq \ell \leq CL_0^\delta}} \mathcal{E}_{k_1,k_2,\ell}\]
	\end{claim}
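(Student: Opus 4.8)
The plan is to construct $K$ from a deterministic reference operator and then run a pigeonhole over the scales $s_\ell$. Using the uniform Bernoulli decomposition already fixed (Theorem \ref{decompgoodalt}), let $H_\Lambda^{\mathrm{ref}}$ be the truncation whose potential equals $v$ on $F$ and $Y_n(t_n)$ on $\Lambda\setminus F$; since $0\le Y_n(t_n)\le M$, this potential lies in $[0,M]^\Lambda$ and agrees with $v$ on $F$, so hypothesis \ref{glocal} applies to $H_\Lambda^{\mathrm{ref}}$. I would set
\[ K := \{\,k\in\{1,\dots,L_0^2\} : |E_k(H_\Lambda^{\mathrm{ref}})-\overline{E}| < e^{-L_5}\,\}, \]
which depends only on $F$, $v$ and $\boldsymbol t$. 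Recall that by the earlier reduction $\cup_k\Lambda_j'\subset F$, hence $G\subset F$.

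First I would show that $K$ contains the near-$\overline{E}$ indices of $H_\Lambda(\boldsymbol\xi)$ for \emph{every} $\boldsymbol\xi$. Fix $\boldsymbol\xi$, so the actual potential is $V_n = Y_n(t_n)+Z_n(t_n)\xi_n \in[0,M]$, and consider the segment $V_u := (1-u)V + uV^{\mathrm{ref}}$, $u\in[0,1]$; it stays in $[0,M]^\Lambda$ with $V_u|_F=v$, so \ref{glocal} applies to each $H_\Lambda(V_u)$. Suppose $|E_k(H_\Lambda(\boldsymbol\xi))-\overline{E}| < e^{-L_5}/4$. Along the Lipschitz (indeed piecewise real-analytic) function $u\mapsto E_k(H_\Lambda(V_u))$, whenever $|E_k(H_\Lambda(V_u))-\overline{E}| < e^{-L_5}$ the corresponding unit eigenfunction $\psi$ satisfies $\|\psi\|_{\ell^2(\Lambda\setminus\cup_j\Lambda_j')} \le e^{-L_4}$ by \ref{glocal}, hence $\sum_{n\notin F}|\psi(n)|^2 \le e^{-2L_4}$ because $\cup_j\Lambda_j'\subset F$; the Hellmann--Feynman formula then gives $|\tfrac{d}{du}E_k(H_\Lambda(V_u))| = \big|\sum_{n\notin F}(V^{\mathrm{ref}}_n-V_n)|\psi(n)|^2\big| \le Me^{-2L_4}$ wherever the derivative exists. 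Since $2L_4 \gg L_5$ at the scales in play, $Me^{-2L_4} < e^{-L_5}/8$, so a standard continuity/bootstrap argument keeps $E_k(H_\Lambda(V_u))$ within $e^{-L_5}/2$ of $\overline{E}$ for all $u\in[0,1]$; in particular $|E_k(H_\Lambda^{\mathrm{ref}})-\overline{E}| < e^{-L_5}$, i.e. $k\in K$. Next I would bound $|K|$ using \ref{glocal} for $H_\Lambda^{\mathrm{ref}}$: every eigenfunction with eigenvalue within $e^{-L_5}$ of $\overline{E}$ has $\|\psi\|_{\ell^2(\Lambda)}\le(1+L_0^{-\delta})\|\psi\|_{\ell^2(G)}$, so for an orthonormal family of such eigenfunctions the restrictions to $\ell^2(G)$ form a near-orthonormal system in $\R^{|G|}$ (pairwise inner products differ from $\delta_{ij}$ by $O(L_0^{-\delta}) < (5|G|)^{-1/2}$, since $|G|\le L_0^\delta$), and Lemma \ref{taobound} gives $|K| \le \tfrac{5-\sqrt 5}{2}|G| \le CL_0^\delta$.

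It then remains to produce $k_1,k_2,\ell$. On $\{\|(H_\Lambda-\overline{E})^{-1}\|\ge e^{L_1}\}$ some eigenvalue lies within $e^{-L_1}=s_0<s_1$ of $\overline{E}$, so $N_\ell := |\{k : |E_k(H_\Lambda)-\overline{E}| < s_\ell\}|$ satisfies $N_1\ge 1$, and $N_\ell$ is nondecreasing since $s_\ell = e^{-L_1}e^{\ell(L_2-L_4+C)}$ increases in $\ell$ (as $L_2>L_4$). The scale relations $L_k^{1-2\delta}\ge L_{k+1}\ge L_k^{1-\ve}$ together with $\ve<1/2$ give $L_0^\delta \ll L_1^{2\delta}\le L_1/L_2$, leaving ample room to choose an integer $\ell_{\max}$ with $2L_0^\delta\le\ell_{\max}\le CL_0^\delta$ and $s_{\ell_{\max}}\le e^{-L_5}/4$ (because $\ell_{\max}(L_2-L_4+C)\ll L_1-L_5$). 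For $1\le\ell\le\ell_{\max}$ we then have $\{k:|E_k-\overline{E}|<s_\ell\}\subseteq\{k:|E_k-\overline{E}|<e^{-L_5}/4\}\subseteq K$, so $N_\ell\le|K|\le CL_0^\delta<\ell_{\max}$; hence $N_\ell=N_{\ell+1}$ for some $\ell<\ell_{\max}$. Because $E_1\ge\cdots\ge E_{L_0^2}$, this common index set is an interval $\{k_1,\dots,k_2\}$, and then $|E_{k_1}-\overline{E}|,|E_{k_2}-\overline{E}|<s_\ell$ while $E_{k_1-1}\ge E_{k_1}>\overline{E}-s_\ell$ forces $E_{k_1-1}\ge\overline{E}+s_{\ell+1}$ and symmetrically $E_{k_2+1}\le\overline{E}-s_{\ell+1}$; that is, $\mathcal{E}_{k_1,k_2,\ell}$ holds, with $k_1,k_2\in K$ and $\ell\le CL_0^\delta$. (For $L_0$ large the extreme cases $k_1=1$ or $k_2=L_0^2$ do not occur, $E_1(H_\Lambda)$ being bounded away from $\overline{E}$, or are absorbed by the usual convention that a nonexistent constraint is vacuous.)

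The main obstacle, and the only step that goes beyond the Ding--Smart argument, is freezing the near-$\overline{E}$ index set once $F$, $v$, $\boldsymbol t$ are fixed, despite the eigenvalues themselves depending on $\boldsymbol\xi$. This is exactly what the quantitative, uniform Bernoulli decomposition buys: after conditioning on $\boldsymbol t$ and $V_F=v$ the residual randomness only moves the potential on $\Lambda\setminus F$ inside a bounded window, and hypothesis \ref{glocal} — which persists along the interpolation because the interpolants still take values in $[0,M]$ and agree with $v$ on $F$ — forces the relevant eigenfunctions onto $G\subset F$ so strongly that their eigenvalues are essentially insensitive to $\boldsymbol\xi$. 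Everything else is the scale bookkeeping ($2L_4\gg L_5$ and $\ell_{\max}(L_2-L_4)\ll L_1-L_5$), which is routine given the listed relations and $\delta<\ve<1/2$.
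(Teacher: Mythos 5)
Your proposal is correct and follows essentially the same route as the paper: a $\boldsymbol{\xi}$-independent index set $K$ obtained by combining the quasi-localization hypothesis \ref{glocal} with linear interpolation of the potential on $\Lambda\setminus F$ and eigenvalue variation, a cardinality bound via almost-orthogonality of the restrictions to $G$ plus Lemma \ref{taobound}, and a pigeonhole over the annuli $s_\ell$ to find an empty one. The only (cosmetic) difference is that you anchor $K$ at the reference configuration $\boldsymbol{\xi}=\mathbf{0}$ with threshold $e^{-L_5}$, whereas the paper takes $K$ to be the set of indices within $e^{-L_2}$ of $\overline{E}$ for some $\boldsymbol{\xi}$; the stability estimate makes these interchangeable.
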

	Because we condition on $V_F=v$, the eigenvalues $E_k$ and eigenfunctions $\psi_k$ are functions of the configuration of the potential on $\Lambda\setminus F$, which is identified with $(\boldsymbol{t},\boldsymbol{\xi})$. Fixing $\boldsymbol{t}=\boldsymbol{t}'$, we treat $E_k$ and $\psi_k$ as functions of $\boldsymbol{\xi}$. The result follows as soon as we can demonstrate that there are at most $CL_0^\delta$ many indices $k$ such that $|E_k-\overline{E}| \leq e^{-L_2}$ for some $\boldsymbol{\xi} \in \{0,1\}^{\Lambda\setminus F}$. Indeed, we enumerate such eigenvalues $E_1,\dots,E_m$ and consider the annuli
	\[I_\ell:=\begin{cases}
		[\overline{E}-s_{\ell+1},\overline{E}+s_{\ell+1}]\setminus [\overline{E}-s_\ell,\overline{E}+s_\ell]&\text{ for }0\leq \ell< m\\
		[\overline{E}-e^{-L_2},\overline{E}+e^{-L_2}]\setminus [\overline{E}-s_m,\overline{E}+s_m]&\text{ for }\ell=m
	\end{cases}\]
	Note that if $\delta$ is sufficiently small $-L_2 > -L_1 + CL_0^\delta(L_2-L_4+C)$ for large scales, and moreover that at least one of these annuli contains no eigenvalues. Hence, if there are any eigenvalues satisfying $|E_k-\overline{E}| < e^{-L_2}$ then there are some $k'$ and $\ell$ such that $\mathcal{E}_{k,k',\ell}$ (or $\mathcal{E}_{k',k,\ell})$ holds.
	
	Finally, by the quasi-localization condition \ref{glocal}, we will show that if there is some $\boldsymbol{\xi}'\in\{0,1\}^{\Lambda\setminus F}$ such that $|E_k(\boldsymbol{\xi'}) - \overline{E}| < e^{-L_2}$, then it turns out that for any $\boldsymbol{\xi}\in \{0,1\}^{\Lambda\setminus F}$ we have $|E_k(\boldsymbol{\xi}) - \overline{E}| < e^{-L_4}$. (Recall that we've presumed that the bad squares $\Lambda_j'$ are contained in $F$ so that this condition readily implies $\|\psi_k(\boldsymbol{\xi}')\|_{\ell^\infty(\Lambda\setminus F)} \leq e^{-L_4}$ so long as $|E_k(\boldsymbol{\xi}')-\overline{E}| \leq e^{-L_5}$.)
	
	We fix $\boldsymbol{\xi}$ and set $\boldsymbol{\xi}_s = \boldsymbol{\xi}'+s(\boldsymbol{\xi}-\boldsymbol{\xi}')$ for $s \in [0,1]$ to linearly interpolate. Note that $\boldsymbol{\xi}_t \in [0,1]^{\Lambda\setminus F}$ rather than $\{0,1\}^{\Lambda\setminus F}$ like previously considered; all the relevant quantities are still defined. Standard eigenvalue variation yields:
	\begin{align*}
		|E_k(\boldsymbol{\xi}_{s'}) - \overline{E}| &\leq |E_k(\boldsymbol{\xi}')-\overline{E}| + M\int_0^{s'} \|\psi_k(\boldsymbol{\xi}_{s})\|_{\ell^2(\Lambda\setminus F)}^2\,ds\\
		&\leq e^{-L_2} + M|\Lambda|\int_0^{s'} e^{-2L_4} + 1_{|E_k(\boldsymbol{\xi}_s)-\overline{E}|\geq e^{-L_5}}\,ds\\
		&\leq \begin{cases} e^{-L_4} &\text{ if } L_0 \text{ large and } s \leq s'\text{ implies }|E_k(\boldsymbol{\xi}_s)-\overline{E}| \leq e^{-L_5}\\
			1 &\text{ otherwise}
		\end{cases}
	\end{align*}
	This implies the necessary bound $|E_k(\boldsymbol{\xi})-\overline{E}| \leq e^{-L_4}$; indeed $E_k(\xi_s) \geq e^{-L_5}$ is impossible for $s' < \frac{e^{2L_4-L_5}}{ML_0^2}$, so that in particular estimates our estimates suffice for all $s' \in [0,1]$ with $L_0$ sufficiently large, requisite largeness depending on $M$. All the inequalities except the first one are standard. For the first inequality, recall that $M$ is the bound on the potential $V$; it is clearly an upper bound for $Z_n(t_n')$. If we let $H_s$ denote $H_\Lambda$ as we interpolate from the potential corresponding to $\boldsymbol{\xi}'$ at $s=0$ to the potential given by $\boldsymbol{\xi}$ at $s=1$, then it follows that
	\begin{equation*}\left|\frac{d}{ds} H_s \right| \leq  MP_{\Lambda\setminus F} 
	\end{equation*}
	Here the inequality is meant in the usual sense of self-adjoint operators, and $P_{\Lambda\setminus F}$ is the orthogonal projection onto $\ell^2(\Lambda\setminus F)$.
	
	It follows now from the quasi-localization condition \ref{glocal} that (given any $\boldsymbol{\xi}'$ such that $|E_k(\boldsymbol{\xi}')-\overline{E}| \leq e^{-L_2}$) we have for all $\boldsymbol{\xi}$:
	\[1-CL_0^\delta \leq \|\psi_k(\boldsymbol{\xi})\|_{\ell^2(G)} \leq 1\]
	and more specifically (as a consequence)
	\[ \|(1-P_G)\psi_k(\boldsymbol{\xi})\|_{\ell^2(\Lambda)} \leq CL_0^{-\delta}\]
	where $P_G$ is the projection onto the set $G$. This, together with the orthogonality $|\langle \psi_k(\boldsymbol{\xi}), \psi_j(\boldsymbol{\xi})\rangle_{\ell^2(\Lambda)}- \delta_{kj}|=0$ gives us almost orthogonality for the $\ell^2(G)$ inner product:
	\[|\langle \psi_k(\boldsymbol{\xi}),\psi_j(\boldsymbol{\xi})\rangle_{\ell^2(G)}-\delta_{kj}| \leq CL_0^{-\delta} \leq (5|G|)^{-1/2}\]
	when $L_0$ is sufficiently large. The bound in \Cref{taobound} gives us that the count of such vectors is at most $C|G| \leq CL_0^{-\delta}$. This proves the claim, and now \Cref{bigweg} follows from basic estimates:
	\begin{align*}
		&\P[\|(H-\overline{E})^{-1}\| \geq e^{L_1}\,|\,V_F=v] \\ &\leq \P[\mathcal{E}_{uc}^C\,|\,V_F=v] + \E_{\boldsymbol{t}'}\left[\sum_{\substack{k_1,k_2 \in K(F,v,\boldsymbol{t}')\\0\leq \ell\leq CL_0^{\delta}}} \P[\mathcal{E}_{k_1,k_2,\ell}\cap \mathcal{E}_{uc}\,|\,V_F=v]\right]\\
		&\leq e^{-L_0^{\ve}} + CL_0^{1+3\delta}L_4^{-3/2}
	\end{align*}
	where $\E_{\boldsymbol{t}'}$ denotes averaging over all $\boldsymbol{t}'$. Finally, because $L_4 \geq L_0^{1-4\ve + O(\ve^2)}$ we get (for sufficiently small $\ve$):
	\begin{align*} \P[\|(H-\overline{E})^{-1}\| \geq e^{L_1}\,|\,V_F=v] &\leq e^{-L_0^\ve} + CL_0^{-\frac{1}{2}+9\ve + O(\ve^2)}\\
		&\leq L_0^{-\frac{1}{2}+10\ve}\end{align*}
	for $L_0$ sufficiently large.
\end{proof}
Having proven the Wegner estimate, the localization proof proceeds almost exactly as in \cite{Ding-Smart}; for the convenience of the reader we prove \Cref{endofmsa2} in appendices \ref{detmsasec} and \ref{msasec}, but we stress that this is not original; we are undertaking our own exposition of \cite[Sections 6,7 \& 8]{Ding-Smart}.

\appendix\section{Unique continuation from the key lemma}\label{key2uc}
Having proven \Cref{key}, \Cref{ucucthmfinal} follows more or less using the exact strategy used by Ding and Smart in the original paper \cite{Ding-Smart} to prove their Theorem 3.5 from their Lemma 3.12. The only change of some note, is that we correct what seem to be minor errors in \cite[Claims 3.23, 3.24]{Ding-Smart}, where they work with e.g. $Q$ instead of $Q\setminus F$. The changes are in Claims \ref{smallbc} and \ref{bigbd}.

The first necessary lemma is a growth lemma, whose content is more or less that an eigenfunction small on all of the ``deep interior'' of a tilted square and most of the whole of the square is not very big anywhere on the square. Specifically:
\begin{lem}\label{growth}
	Assume all the hypotheses of \Cref{ucucthmfinal}, most saliently that the random potential for $H$ satisfies \ref{aprime}, \ref{bprime} and \ref{cprime}. For a tilted square $Q$, any set $F \subset Q$, and real numbers $0 < \ve < 1 < \alpha$ we let $\mathcal{E}_{ex}(Q,F,\alpha,\ve)$ denote the event that
	\begin{equation}\label{exhyp}\begin{cases}|E - \overline{E}| \leq e^{-\alpha(\ell(Q)\log \ell(Q))^{1/2}}\\
			H\psi = E \psi\text{ in }2Q\\
			|\psi|\leq 1\text{ in }\frac{1}{2}Q\\
			|\psi|\leq 1\text{ in a }1-\ve(\ell(Q)\log\ell(Q))^{-1/2}\text{ fraction of } 2Q\setminus F\end{cases}
	\end{equation}
	implies that  $|\psi|\leq e^{\alpha\ell(Q)\log\ell(Q)}\text{ in }Q$.
	Then for any sufficiently small $\ve > 0$, there is some $\alpha > 1$ depending on $\ve$ such that if
	\begin{enumerate}[label=(\Alph*)]
		\item $\ell(Q) \geq \alpha$
		\item $F$ is $\ve$-sparse in $2Q$
	\end{enumerate}
	we have the bound $\P[\mathcal{E}_{ex}|V_F = v] \geq 1 - e^{-\ve \ell(Q)}$ for any $v:F\rightarrow [0,M]$. The smallness required of $\ve$ and the value of $\alpha$ depend on $M$, $\rho$ and $\gamma$.
\end{lem}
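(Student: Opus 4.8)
The plan is to deduce Lemma \ref{growth} from the key lemma, Lemma \ref{key}, in essentially the way Ding and Smart deduce their growth lemma from their Lemma 3.12; the change of substance is that one must keep track of the frozen set throughout and work with sets of the form $\cdot\setminus F$ rather than with whole squares, which is what forces the corrected Claims \ref{smallbc} and \ref{bigbd} below. Write $L=\ell(Q)$ and $\mu=(L\log L)^{-1/2}$, so the hypotheses defining $\mathcal{E}_{ex}(Q,F,\alpha,\ve)$ read $|E-\overline{E}|\le e^{-\alpha/\mu}$, $H\psi=E\psi$ on $2Q$, $|\psi|\le 1$ on $\frac12 Q$, and $|\psi|\le 1$ off a $\ve\mu$-fraction of $2Q\setminus F$; let $\alpha_0>1>\ve_0$ be the constants furnished by Lemma \ref{key}.

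First I would reduce to Lemma \ref{key} via a cover of $Q$ by thin tilted rectangles, each thin in one of the two tilted axes, congruent to some $R_{[1,a],[1,b]}$ with $a$ a small fixed multiple of $L$ — small enough that a strip of length $a$ fits in $\frac12 Q$ — and $b$ a small fixed multiple of $(L/\log L)^{1/2}$, chosen so that hypothesis (A) of Lemma \ref{key} holds, $a\ge\alpha_0 b^2\log a\ge\alpha_0$, and $\alpha_0 b\log a\le C(L\log L)^{1/2}$. Running over the orientations, positions, and the $O(b)$ integer translations of the associated tilings produces a family $\mathcal{R}$ with $|\mathcal{R}|\le L^{O(1)}$, each member inside $2Q$, rich enough that the region on which $\psi$ is controlled can be grown from $\frac12 Q$ out to all of $Q$ in $m=O((L\log L)^{1/2})$ successive applications of Lemma \ref{key} — each applied to a rectangle $R\in\mathcal{R}$ whose strip $R_{[1,a],[1,2]}$ lies in the already-controlled region and whose strip $R_{[1,a],[b-1,b]}$ lies in $2Q$, the thin axis changing once so as to reach the corners of $Q$. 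Since $F$ is $\ve$-sparse in $2Q$ with $\ve\le\ve_0$, a counting argument over the translations shows, after discarding a negligible subfamily, that $F$ is $\ve_0$-sparse in each retained rectangle; hence Lemma \ref{key} applied conditionally on $V_F=v$ gives $\P[\mathcal{E}_{ni}(R,\alpha_0,\ve_0)\mid V_F=v]\ge 1-e^{-\ve_0 a}\ge 1-e^{-cL}$ for each $R$, and a union bound over $\mathcal{R}$ yields $\P[\mathcal{G}\mid V_F=v]\ge 1-L^{O(1)}e^{-cL}\ge 1-e^{-\ve L}$ for $L\ge\alpha$, where $\mathcal{G}:=\bigcap_{R\in\mathcal{R}}\mathcal{E}_{ni}(R,\alpha_0,\ve_0)$.

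Next I would verify the deterministic inclusion $\mathcal{G}\subset\mathcal{E}_{ex}(Q,F,\alpha,\ve)$ for $\alpha$ large and $\ve$ small. Granting the hypotheses of $\mathcal{E}_{ex}$: the bad set $\{|\psi|>1\}$ has at most $\ve\mu|2Q|$ sites, which — by a counting argument over the translations, the role of the corrected Claim \ref{bigbd} — lets one choose the sequence of rectangles above so that each strip $R_{[1,a],[b-1,b]}$ used meets $\{|\psi|>1\}\setminus F$ in at most an $\ve_0$-fraction of its sites. At the first step $R_{[1,a],[1,2]}\subset\frac12 Q$ where $|\psi|\le 1$, and since $|E-\overline{E}|\le e^{-\alpha/\mu}\le e^{-\alpha_0 b\log a}$ for $\alpha\ge C$, the implication $\mathcal{E}_{ni}(R)$ gives $|\psi|\le e^{\alpha_0 b\log a}$ on $R$. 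Feeding this forward — applying $\mathcal{E}_{ni}$ at each step to $\psi$ rescaled by the bound accumulated so far — after $j$ steps $|\psi|\le e^{j\alpha_0 b\log a}$ on the controlled region; after $m=O((L\log L)^{1/2})$ steps, each multiplying the bound by $e^{\alpha_0 b\log a}=e^{O((L\log L)^{1/2})}$, one gets $|\psi|\le e^{O(L\log L)}\le e^{\alpha L\log L}$ on $Q$, which is $\mathcal{E}_{ex}$.

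The step I expect to be the main obstacle is the exceptional-set bookkeeping inside the two counting arguments: one must arrange, for a single growing sequence of rectangles that ends up covering $Q$, simultaneously that (i) $F$ inherits a usable sparsity bound on every rectangle used — which is not automatic, since $\ve$-sparsity in $2Q$ bounds $F$ only per fiber of $2Q$ and not per fiber of a thin sub-rectangle — and (ii) the $\ve\mu$-budget for $\{|\psi|>1\}$ in $2Q$ leaves the strip $R_{[1,a],[b-1,b]}$ clean outside $F$ at every step, bridging over the rare rectangles with too-dirty such strips by replacing a pair of consecutive rectangles with a single, slightly thicker one. It is precisely here that \cite{Ding-Smart} works with $Q$ in place of $Q\setminus F$ in their Claims 3.23 and 3.24, and where Claims \ref{smallbc} and \ref{bigbd} below supply the corrections.
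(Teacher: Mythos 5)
Your overall architecture is the same as the paper's: reduce to Lemma \ref{key} by covering with thin tilted rectangles of length $a=\Theta(\ell(Q))$ and width $b=\Theta((\ell(Q)/\log\ell(Q))^{1/2})$, take a union bound over the polynomially many such rectangles, select by pigeonhole a chain of rectangles whose output diagonals carry few sites of $\{|\psi|>1\}\setminus F$, and iterate the $\mathcal{E}_{ni}$ bound $O((\ell(Q)\log\ell(Q))^{1/2})$ times, each step costing a factor $e^{O((\ell(Q)\log\ell(Q))^{1/2})}$. (The paper organizes the iteration as eight one-directional growths $R_{[1,a],[1,a]}\to R_{[1,a],[1,2a]}$ related by symmetry rather than as a single outward-growing cover, and finds the next clean diagonal inside a band of $\Theta(b)$ candidate positions rather than by ``bridging'' over dirty rectangles, but these are cosmetic differences.)

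One step of your argument is false as stated: the claim that, after discarding a negligible subfamily, $F$ is $\ve_0$-sparse in every retained rectangle. In tilted coordinates let $T_j=[jb+1,(j+1)b]$ tile the $t$-range, pick distinct $s$-values $k_j$, and set $F=\bigcup_j\{k_j\}\times T_j$. Every $s$-fiber of $2Q$ then meets $F$ in at most $b\ll\ve\,\ell(2Q)$ points and every $t$-fiber in at most one point, so $F$ is $\ve$-sparse in $2Q$; yet every width-$b$ sub-rectangle contains at least half of some segment $\{k_j\}\times T_j$, i.e.\ one of its short fibers is at least half contained in $F$, so \emph{no} rectangle in the family is $\ve_0$-sparse for $\ve_0<1/2$. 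The correct resolution is not to discard rectangles but to note that the proof of Lemma \ref{key} uses hypothesis (B) only through the single long diagonal $R_{[1,a],[t_0,t_0]}$ (to guarantee $K\geq c(1-\ve)a$ unfrozen sites there), and sparsity along fibers of length $\Theta(\ell(Q))$ \emph{is} inherited from $\ve$-sparsity in $2Q$ up to a constant; the same observation handles the $F$-count in your clean-strip selection. Separately, Claims \ref{smallbc} and \ref{bigbd} do not belong to this proof: they occur in the deduction of Theorem \ref{ucucthmfinal} from the growth lemma (the Vitali covering step), not in the proof of the growth lemma itself.
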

\begin{proof}
    We let $\mathcal{E}_{ex}'(\alpha,F,a)$ denote the event that
	\begin{equation}\label{exprimehyp}
		\begin{cases}|E - \overline{E}| \leq e^{-\alpha(a\log a)^{1/2}/2}\\
			H\psi = E \psi\text{ in }4R_{[1,a],[1,a]}\\
			|\psi|\leq\text{ in }R_{[1,a],[1,a]}\\
			|\psi|\leq 1\text{ in a }1-\ve(a \log a))^{-1/2}\text{ fraction of } 4R_{[1,a],[1,a]}\setminus F\end{cases}
	\end{equation}

	implies 
	\begin{equation}\label{exprimebound}
		|\psi|\leq e^{\alpha a \log a/2}\text{ in }R_{[1,a],[1,2a]}
	\end{equation}\\
	
	By symmetry, it suffices to show that for sufficiently large $\alpha$ and $a > \alpha/2$, $\P[\mathcal{E}_{ex}'|V_F=v] \geq 1-e^{-\ve a}$. Indeed, carrying out this lemma in four directions (using 90$^\circ$ rotational symmetry) gives us a diagonal cross of sorts; carrying it out four more times fills out said diagonal cross to get $R_{[-a+1,2a],[-a+1,2a]} \supset 2R_{[1,a],[1,a]}$, and so we obtain that with high probability ($1-8e^{-\ve a}$) our eigenfunctions are not larger than $e^{\alpha \ell(Q)\log \ell(Q)}$ anywhere on $2R_{[1,a],[1,a]}$. For a schematic depiction, see Figure \ref{fig:growth8}.

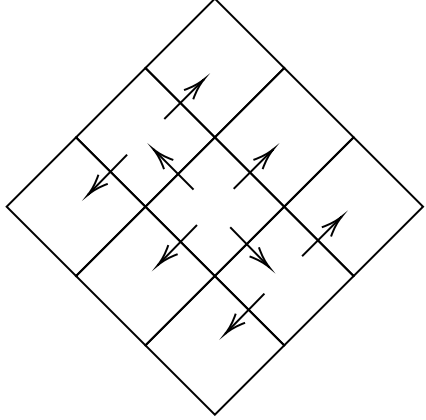
\begin{figure}[h]\caption{The growth lemma follows from growing in one direction eight times}\label{fig:growth8}

\tikzset{every picture/.style={line width=0.75pt}} 

\begin{tikzpicture}[x=0.75pt,y=0.75pt,yscale=-1,xscale=1]
	
	\draw   (135,126) -- (170,161) -- (135,196) -- (100,161) -- cycle ;
	\draw   (170,161) -- (205,196) -- (170,231) -- (135,196) -- cycle ;
	\draw   (100,91) -- (135,126) -- (100,161) -- (65,126) -- cycle ;
	\draw   (170,91) -- (205,126) -- (170,161) -- (135,126) -- cycle ;
	\draw   (205,126) -- (240,161) -- (205,196) -- (170,161) -- cycle ;
	\draw   (135,56) -- (170,91) -- (135,126) -- (100,91) -- cycle ;
	\draw   (205,56) -- (240,91) -- (205,126) -- (170,91) -- cycle ;
	\draw   (240,91) -- (275,126) -- (240,161) -- (205,126) -- cycle ;
	\draw   (170,21) -- (205,56) -- (170,91) -- (135,56) -- cycle ;
	\draw    (179.5,117) -- (198.09,98.41) ;
	\draw [shift={(199.5,97)}, rotate = 135] [color={rgb, 255:red, 0; green, 0; blue, 0 }  ][line width=0.75]    (10.93,-3.29) .. controls (6.95,-1.4) and (3.31,-0.3) .. (0,0) .. controls (3.31,0.3) and (6.95,1.4) .. (10.93,3.29)   ;
	\draw    (214,151) -- (232.59,132.41) ;
	\draw [shift={(234,131)}, rotate = 135] [color={rgb, 255:red, 0; green, 0; blue, 0 }  ][line width=0.75]    (10.93,-3.29) .. controls (6.95,-1.4) and (3.31,-0.3) .. (0,0) .. controls (3.31,0.3) and (6.95,1.4) .. (10.93,3.29)   ;
	\draw    (144.5,81.75) -- (163.09,63.16) ;
	\draw [shift={(164.5,61.75)}, rotate = 135] [color={rgb, 255:red, 0; green, 0; blue, 0 }  ][line width=0.75]    (10.93,-3.29) .. controls (6.95,-1.4) and (3.31,-0.3) .. (0,0) .. controls (3.31,0.3) and (6.95,1.4) .. (10.93,3.29)   ;
	\draw    (177.75,136.38) -- (196.1,155.19) ;
	\draw [shift={(197.5,156.63)}, rotate = 225.72] [color={rgb, 255:red, 0; green, 0; blue, 0 }  ][line width=0.75]    (10.93,-3.29) .. controls (6.95,-1.4) and (3.31,-0.3) .. (0,0) .. controls (3.31,0.3) and (6.95,1.4) .. (10.93,3.29)   ;
	\draw    (195,169.88) -- (176.65,188.69) ;
	\draw [shift={(175.25,190.13)}, rotate = 314.28] [color={rgb, 255:red, 0; green, 0; blue, 0 }  ][line width=0.75]    (10.93,-3.29) .. controls (6.95,-1.4) and (3.31,-0.3) .. (0,0) .. controls (3.31,0.3) and (6.95,1.4) .. (10.93,3.29)   ;
	\draw    (161,135.25) -- (142.65,154.07) ;
	\draw [shift={(141.25,155.5)}, rotate = 314.28] [color={rgb, 255:red, 0; green, 0; blue, 0 }  ][line width=0.75]    (10.93,-3.29) .. controls (6.95,-1.4) and (3.31,-0.3) .. (0,0) .. controls (3.31,0.3) and (6.95,1.4) .. (10.93,3.29)   ;
	\draw    (125.75,99.75) -- (107.4,118.57) ;
	\draw [shift={(106,120)}, rotate = 314.28] [color={rgb, 255:red, 0; green, 0; blue, 0 }  ][line width=0.75]    (10.93,-3.29) .. controls (6.95,-1.4) and (3.31,-0.3) .. (0,0) .. controls (3.31,0.3) and (6.95,1.4) .. (10.93,3.29)   ;
	\draw    (159.25,117.38) -- (140.41,98.42) ;
	\draw [shift={(139,97)}, rotate = 45.18] [color={rgb, 255:red, 0; green, 0; blue, 0 }  ][line width=0.75]    (10.93,-3.29) .. controls (6.95,-1.4) and (3.31,-0.3) .. (0,0) .. controls (3.31,0.3) and (6.95,1.4) .. (10.93,3.29)   ;
	
	\draw (162,117.4) node [anchor=north west][inner sep=0.75pt]  [font=\normalsize]  {};

\end{tikzpicture}

\end{figure}
	Now we show that for sufficiently large $\alpha$ and $a > \alpha/2$, $\P[\mathcal{E}_{ex}'(\alpha,F,a)|V_F=v] \geq 1-e^{-\ve a}$. We let $\alpha' > 1 > \ve' > 0$ denote a pair of constants such that the conclusions of \Cref{key} apply. We let $\mathcal{E}_{ni}(R_{[1,a],[c,d]})$ denote precisely the same event as in the statement of said lemma, i.e. that, roughly, if an eigenfunction is small on all of one long (length $a$) edge, and most of the other, then it is not too big anywhere. If we let 
	\[\mathcal{E}_{ni}(\alpha) = \bigcap_{\substack{[c,d]\subset [1,\frac{5}{2}a]\\ \alpha(d-c+1)^2 \log a \leq 2a}}\mathcal{E}_{ni}(R_{[1,a],[c,d]})\]
	then, so long as $\alpha\geq 2\alpha'$, we have by \Cref{key} and a union bound that
	\[ \P[\mathcal{E}_{ni}|V_F = v] \geq 1 - e^{-\ve'a + C \log a}\]
	with the logarithmic term coming from the polynomial upper bound on the count of intervals $[c,d]$ satisfying the conditions. Obviously for sufficiently small $\ve$, we have $1-e^{\ve a} \geq 1-e^{-\ve' a + C \log a}$. It will suffice to show that for $\ve < c\ve'(\alpha')^{1/2}$ we can find $\alpha$ so that $\mathcal{E}_{ex}'(\alpha) \supset \mathcal{E}_{ni}(\alpha)$.
	
	We now proceed under the assumption that we are working with a realization of the potential in $\mathcal{E}_{ni}$, and an eigenfunction satisfying (\ref{exprimehyp}), seeking to show that (\ref{exprimebound}) holds as well; recall that $\mathcal{E}_{ex}'$ is nothing more than the event where the implication (\ref{exprimehyp}) $\Rightarrow$ (\ref{exprimebound}) holds. By using \Cref{key}, it will suffice to show the following:
	\begin{claim}
		There is a sequence $b_0 <\cdots < b_K$, with $b_k \in [a,\frac{5}{2}a]$, such that
		\begin{enumerate}[label=(\Alph*)]
			\item $b_0 = a$
			\item $b_K \geq 2a$
			\item $\frac{1}{2}a \leq \alpha'(b_{k+1}-b_k+2)^2\log a \leq a\text{ for } 0 \leq k < K$
			\item $|\psi| \leq 1 \text{ on a } 1-\ve'\text{ fraction of }R_{[1,a],[b_{k+1}-1,b_{k+1}]}\setminus F\text{ for } 0 \leq k < K$
		\end{enumerate}
	\end{claim}
	We argue inductively. We set $B= \left\lfloor \left(\frac{a}{\alpha'\log a}\right)^{1/2} \right\rfloor$. Assuming $b_k$ is already defined, we decompose $R_{[1,a],[b_k+B/2,b_k+B]}$ into diagonals:
	\[R_{[1,a],[b_k+B/2,b_k+B]} = \bigsqcup_{b \in [b_k+B/2,b_k+B]} R_{[1,a],[b,b]}\]
	By assumption, specifically the last condition of (\ref{exprimehyp}), we know $|\psi| > 1$ on at most $\ve a^{3/2}\log a$ points in $4R_{[1,a],[1,a]}\setminus F$. Simultaneously, there are at most $\ve c B a \leq C \ve a^{3/2}(\alpha' \log a)^{-1/2}$ points in $R_{[1,a],[b_k+B/2,b_k+B]}\cap F$ by the $\ve$-sparsity.
	
	If there fails to be a single possible choice of $b_{k+1}$ such that $|\psi| \leq 1$ on a $1-\ve'$ fraction of $R_{[1,a],[b_{k+1}-1,b_{k+1}]}$, then clearly $|\psi| > 1$ on a $C\ve'$ fraction of $R_{[1,a],[b_k+B/2,b_k+B]}$. Recall that we've already shown that the count of such lattice points is at most $\ve(1+C(\alpha')^{-1/2}) a^{3/2}( \log a)^{-1/2}$. On the other hand, the total count of points in our rectangle $R_{[1,a],[b_k+B/2,b_k+B]}$  is at least $ca^{3/2}(\alpha' \log a)^{-1/2}$.
	
	Consequently, combining our two bounds and cancelling the common terms, we have\begin{equation}\label{failure}
		\ve(1+C(\alpha')^{-1/2}) \geq c\ve'
	\end{equation} if it is impossible to find the desired $b_{k+1}$. Recalling that $\alpha' > 1$, taking $\ve < c\ve' (\alpha')^{-1/2}$ suffices to make (\ref{failure}) impossible and guarantee existence of the desired $b_{k+1}$.
	Having shown the claim, we now apply the bound which holds because we are in the event $\mathcal{E}_{ni}$ successively to the sequence of boxes $R_{[1,a],[b_k-1,b_{k+1}]}$.\\
	
	We obtain:
	\[ \|\psi\|_{\ell^\infty(R_{[1,a],[b_k-1,b_{k+1}]})} \leq e^{\alpha' (b_{k+1}-b_k)\log a}(1 + \|\psi\|_{\ell^\infty(R_{[1,a],[b_k-1,b_k]})})\]
	At each successive application, our increase is at most by a multiplicative factor of $e^{C\alpha'B \log a}$. It takes at most $Ca/B$ iterations to reach $b_K \geq 2a$. Hence we obtain:
	\[ \|\psi\|_{\ell^\infty(R_{[1,a],[1,2a]})} \leq (e^{C\alpha' B \log a})^{Ca/B}\]
	Taking $\alpha \geq C\alpha'$ we get the desired bound.
\end{proof}
Having proven \Cref{growth}, we can now proceed to a proof of \Cref{ucucthmfinal}.
\begin{proof}[Proof of \Cref{ucucthmfinal}]
	We first reduce the problem of unique continuation for standard squares to that of proving the same for tilted squares. For a tilted square $Q \subset \Z^2$ and some set $F \subset \Z^2$, we let $\mathcal{E}_{uc}'$ denote the event that
	\begin{equation}\label{uctiltprimehyp}\begin{cases}
			|E-\overline{E}|\leq e^{-\beta(\ell(Q)\log\ell(Q))^{1/2}}\\
			H\psi = E\psi\text{ in }Q\\
			|\psi|\leq 1\text{ in a }1-\ve(\ell(Q)\log\ell(Q))^{1/2}\text{ fraction of }Q \setminus F
	\end{cases}\end{equation}
	implies $|\psi| \leq e^{\beta \ell(Q)\log\ell(Q)}$ in $\frac{1}{64}Q$.
	
	This event is essentially our unique continuation event, except holding on tilted squares. It suffices to show for appropriate $\beta$, $\delta$, $\ell(Q) \geq \beta$ and $F$ $\delta$-regular we have:

	\begin{equation}\label{uctilt}\P[\mathcal{E}_{uc}'|V_F=v]\geq 1-e^{-\delta \ell(Q)^{1/4}}\end{equation}
	This essentially follows by a covering argument.
	\begin{claim}\label{squareslist}
	Let $\Lambda$ be a square with $\ell(\Lambda) \geq 2^{100}$. There exists a collection of tilted squares satisfying the following:
	\begin{enumerate}[label=(\Alph*)]
		\item $\frac{1}{2}\Lambda \subset \cup_{n\leq N} \frac{1}{64}Q_n$
		\item $\cup_{n\leq N} Q_n \subset \Lambda$
		\item $\min_{n\leq N} \ell(Q_n) \geq \frac{1}{256}\ell(\Lambda)$
		\item $N \leq 2^{50}$
	\end{enumerate}
	\end{claim}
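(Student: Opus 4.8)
\textbf{Proof strategy for Claim \ref{squareslist}.} The statement is essentially scale-invariant, so the plan is to do the geometry in the continuum and then discretize. Write $L=\ell(\Lambda)$ and identify $\Lambda$ with $[0,L]^2\cap\Z^2$. First I would record the Euclidean shape of a tilted square: in the coordinates $(s,t)=\left(\frac{x+y}{2},\frac{x-y}{2}\right)$ a tilted square $Q$ with $\ell(Q)=m$ is, as a region of $\R^2$, a diamond congruent to the one with vertices $(2,0),(m{+}1,1{-}m),(m{+}1,m{-}1),(2m,0)$; its bounding box is an axis-parallel square of side $2m-2$, and it contains the concentric axis-parallel square of side $m-1$ (since $|a|,|b|\le\frac{m-1}{2}$ forces $|a|+|b|\le m-1$). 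Hence its concentric contraction $\frac{1}{64}Q$ contains a concentric axis-parallel square of side at least $m/128$, even after rounding the endpoints so as to describe an honest tilted square and an honest lattice box.

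Then I would construct the cover explicitly. Put $m:=\lfloor L/8\rfloor$ and let $w:=\lfloor m/128\rfloor$, so that by the previous paragraph a genuine tilted square $Q$ of side $m$ has an axis-parallel square of side $\ge w$ inside $\frac1{64}Q$. Tile $\R^2$ by the axis-parallel grid of mesh $w$, keep the finitely many cells $S$ that meet $[L/4,3L/4]^2$, and for each such cell $S$ choose a tilted square $Q_n$ of side $m$ whose $\frac1{64}$-core contains $S$ and whose center lies within $O(1)$ of the center of $S$; since this only pins the $(s,t)$-endpoints of $Q_n$ down up to an $O(1)$ error, such a choice exists even though the tilted coordinates run over a skew sublattice. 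This produces $Q_1,\dots,Q_N$, which I would verify against the four requirements directly. (C): $\ell(Q_n)=\lfloor L/8\rfloor\ge\frac{1}{256}L$ for $L\ge 2^{100}$. (B): the bounding box of $Q_n$ is an axis-parallel square of side $2m\le L/4$ centered within $O(1)$ of a point of $[L/4,3L/4]^2$, hence $Q_n\subset[L/8-O(1),\,7L/8+O(1)]^2\subset[0,L]^2=\Lambda$, the $O(1)$ slack being harmless since $L$ is enormous. (A): the cells $S$, together with the $O(L/w)$ further mesh cells one adds to cover the boundary strip of $\frac12\Lambda=[L/4,3L/4]^2$, tile a region containing $\frac12\Lambda$, and each such $S$ lies in some $\frac1{64}Q_n$, so $\frac12\Lambda\subset\bigcup_n\frac1{64}Q_n$. (D): $N\le\big(\lceil (L/2)/w\rceil+2\big)^2\le(2^{9}+2)^2\le 2^{20}\le 2^{50}$, since $(L/2)/w\le (L/2)/(L/1024-1)<2^{9}$ for $L\ge 2^{100}$.

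There is no genuine obstacle here beyond bookkeeping: the argument is a picture together with a choice of constants, and the four conditions are stated with such generous multiplicative margins (a factor of $32$ in (C), of order $2^{30}$ in (D)) that every rounding correction incurred in passing from the continuum diamonds to honest lattice tilted squares is swallowed by the hypothesis $\ell(\Lambda)\ge 2^{100}$. The only point deserving a sentence of care is that tilted squares are combinatorial objects whose $(s,t)$-endpoints need not be integers, so one must note that a tilted square whose $\frac1{64}$-core contains a prescribed axis-parallel box can always be produced after adjusting its endpoints by $O(1)$; granting that, (A)--(D) follow from the elementary counting above.
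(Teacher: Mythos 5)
Your proposal is correct and follows essentially the same route as the paper: both construct an explicit lattice of congruent tilted squares (the paper uses side length $\tfrac{1}{256}\ell(\Lambda)$ with corners on a grid in the tilted coordinates; you use side $\lfloor \ell(\Lambda)/8\rfloor$ centered on an axis-parallel grid) whose $\tfrac{1}{64}$-cores cover $\tfrac12\Lambda$, and then verify (A)--(D) by elementary counting with the large built-in margins absorbing all rounding errors. Your write-up is, if anything, more careful about the diamond geometry of tilted squares than the paper's one-line construction.
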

	Indeed, one can just start with the list of squares
	\[\{R_{[\frac{1}{256}\ell(\Lambda) + \frac{n}{1024}\ell(\Lambda),\frac{1}{256}\ell(\Lambda) + \frac{m}{1024}\ell(\Lambda)]}\,:\,m,n\in\Z^2\}\]
	and restrict to the list of such squares which non-trivially intersect $\frac{1}{2}\Lambda$. (We left out ceiling operations above for ease of reading, but of course one needs to round to get integer side lengths and corner coordinates.)
	
	Clearly if $\mathcal{E}_{uc}'(Q_n,F)$ holds for all $Q_n$ covering $\Lambda$ in such a way, then $\mathcal{E}_{uc}(\Lambda,F)$ holds. Moreover, if $F$ is $\ve$-regular in $\Lambda$, then it is $\delta$-regular in all the $Q_n$ for $\delta < \frac{1}{256}\ve$. In particular, proving \ref{uctilt} under the appropriate hypotheses for some $\delta$ proves \Cref{ucucthmfinal} for $\ve = \frac{1}{256^2} \delta$. (We essentially change variables so as to use $\ve$ for the constant coming from \Cref{growth}.) Indeed; the hypotheses of \Cref{ucucthmfinal} imply the hypotheses for all the $Q_n$, and the necessary bound follows by a union bound over the events associated to the $Q_n$, which are absolutely bounded in number.
	
	Towards showing the bound holds, we let $\alpha > 1 > \ve > 0$ be constants so that the conclusions of \Cref{growth} hold. We emphasize that this $\ve$ is not the $\ve$ from the hypotheses of \Cref{ucucthmfinal}; as we have said earlier, the $\ve$ from said hypotheses is $c\delta$. We will assume $\delta< \ve^2$ and $\beta > 2\alpha$; we may need to take $\delta$ even smaller, and $\beta$ even larger. In general, we allow $\beta$ to depend on $\delta$. For fixed $Q$ we let $\mathcal{Q}$ denote the set of tilted squares $Q' \subset Q$ such that
	\begin{enumerate}[label=(\Alph*)]
		\item $\ell(Q') \geq \ell(Q)^{1/4}$
		\item $2Q' \subset \frac{1}{2}Q$
		\item $F$ is $\ve$-sparse in $2Q'$
		\item $\frac{1}{4}Q' \cap \frac{1}{64}Q \neq \varnothing$
	\end{enumerate}
	We recall that $\mathcal{E}_{ex}(Q,F)$ is the event from \Cref{growth} which says, roughly, that whenever eigenfunctions are small on all of $\frac{1}{2}Q$ and most of $2Q\setminus F$, then they are not huge anywhere on $Q$. Then, by a straightforward union bound, if we define the event
	\[\mathcal{E}_{ex} = \bigcap_{Q'\in\mathcal{Q}} \mathcal{E}_{ex}(Q',F)\cap\mathcal{E}_{ex}(2Q',F)\]
	we have $\P[\mathcal{E}_{ex}|V_F=v] \geq 1-e^{-\ve\ell(Q)^{1/4}-C\log \ell(Q)} \geq 1- e^{-\delta \ell(Q)^{1/4}}$, with the logarithmic term again coming from the fact that $|\mathcal{Q}| \leq C\ell(Q)^3$. We've reduced the problem to showing $\mathcal{E}_{ex} \subset \mathcal{E}_{uc}'(Q,F)$, and henceforth seek to show (\ref{uctiltprimehyp}) and $\mathcal{E}_{ex}(Q)$ imply the conclusion of unique continuation for tilted squares, i.e. the bound
	\begin{equation}\label{eigenboundgrowth}|\psi| \leq e^{\beta \ell(Q)\log\ell(Q)}\end{equation} on $\frac{1}{64}Q$. Thus going forward, we consider eigenfunctions $\psi$ satisfying the hypotheses in (\ref{uctiltprimehyp}). We'll call these eigenfunctions with ``sparse large support'' going forward; the terminology leaves something to be desired as it lacks any reference to energy, whereas (\ref{uctiltprimehyp}) supposes the associated energy of the eigenfunction to live in a small band. Nevertheless, it's concise enough to work with.
	
	The basic idea of the rest of proof is to consider the subsquares $Q'$ on which the eigenfunctions with sparse large support obey the analogous bound, i.e. subsquares such that eigenfunctions (associated to the relevant energy range) satisfy (\ref{eigenboundgrowth})
	and show that there is such a square containing all of $\frac{1}{64}$.  As we have emphasized, our work in the latter half of this section very closely follows \cite[Section 3]{Ding-Smart}; we should mention, as was mentioned in said paper, that the idea is a random version of a proof from \cite{Buhovsky22}.
	
	We let $\mathcal{Q}_{bd}$ be the subset of $\mathcal{Q}$ consisting of $Q'$ obeying (\ref{eigenboundgrowth}). We let $\mathcal{Q}_{mbd}$ be the set of $Q' \in \mathcal{Q}_{bd}$ which are maximal in said set with respect to inclusion. For such $Q'$, at least one of the following is true:
	\begin{enumerate}[label=(\Alph*)]
		\item \label{toobig}$4Q' \not\subset \frac{1}{2}Q$
		\item \label{nosparse}$F$ is not $\delta$-sparse in $2Q'$
		\item \label{largesupportlarge}$|\{|\psi|\geq 1\}\cap 4Q'\setminus F|\geq \ve(\ell(Q')\log \ell(Q'))^{-1/2}|4Q'\setminus F|$
	\end{enumerate}
	In particular, if all these conditions fail for some $Q' \in \mathcal{Q}_{bd}$, the $\mathcal{E}_{ex}$ event will allow us to find $Q'' \in \mathcal{Q}_{bd}$ strictly containing $Q'$. Indeed, the failure of \ref{nosparse} means in particular $F$ is also $\ve$-sparse in $2Q'$. Taken together with the failure of condition \ref{toobig}, we conclude that $2Q' \in \mathcal{Q}$ and hence $\mathcal{E}_{ex} \subset \mathcal{E}_{ex}(2Q',F)$. If \ref{largesupportlarge} also fails, it follows from the fact that we are in the $\mathcal{E}_{ex}(2Q',F)$ event that:
	\[ \|\psi\|_{\ell^\infty(2Q')} \leq e^{\alpha \ell(2Q')\log(\ell(2Q'))}\max\{1,\|\psi\|_{\ell^\infty(Q')}\}\]
	on $2Q'$. In particular, if $\psi$ is an eigenfunction with sparse large support and $\|\psi\|_{\ell^\infty(Q')} \geq 1$, it's obvious that its $\ell^\infty(Q')$ normalization also has sparse large support, and then this normalized version satisfies \ref{exhyp}, whence we get the bound. If $\|\psi\|_{\ell^\infty(Q')} \leq 1$, no normalization is needed. Taking $\beta \geq 2\alpha$, the above equation (together with $Q' \in \mathcal{Q}_{bd}$) will give
	\begin{equation}\|\psi\|_{\ell^\infty(2Q')} \leq e^{\beta \ell(2Q')\log \ell(2Q')}
	\end{equation}
	This of course implies $2Q' \in \mathcal{Q}_{bd}$ and $Q'\notin \mathcal{Q}_{mbd}$. It turns out that the squares in $\mathcal{Q}_{mbd}$ for which either condition \ref{nosparse} or \ref{largesupportlarge} hold cover a small amount of $Q$. Specifically, we let $\mathcal{Q}_{mbd}^\ast$ denote the subset of $\mathcal{Q}_{mbd}$ of $Q'$ such that condition ($\ast$) fails, where $\ast = A,B,C$. (Our eventual aim is to show that $\mathcal{Q}_{mbd}\setminus (\mathcal{Q}^B_{mbd}\cup \mathcal{Q}_{mbd}^C)$ is non-empty.) We will set as a shorthand $\cup \mathcal{Q}_{mbd}^\ast :=\bigcup_{Q' \in \mathcal{Q}^\ast_{mbd}}Q'$ and define $\cup \mathcal{Q}_{mbd}$ and $\cup \mathcal{Q}_{bd}$ analogously.
	\begin{claim}\label{smallbc}
		For side lengths $L$ sufficiently large:
		\[ |\cup\mathcal{Q}^B_{mbd}| + |\cup\mathcal{Q}^C_{mbd}| \leq C\left(\delta + \frac{\delta}{\ve}\right)|Q|\]
	\end{claim}
	We will let $\cup \mathcal{Q}_{mbd}^{C\setminus B}$ denote $\cup \mathcal{Q}^C_{mbd}\setminus \cup\mathcal{Q}_{mbd}^B$. Using the Vitali covering lemma, we can find for $\ast = B, C\setminus B$ a collection of squares $Q^\ast_1,\dots,Q^\ast_{N_\ast} \in \mathcal{Q}^\ast_{mbd}$ so that for $n\neq m$ we have $4Q^\ast_n \cap 4Q^\ast_m = \varnothing$ and moreover \begin{equation}\label{vitcov}
		|\cup_{n\leq N} Q^\ast_n|\geq \frac{1}{32}|\cup \mathcal{Q}^\ast_{mbd}|
	\end{equation} Specifically, the Vitali covering lemma guarantees we can find such a disjoint collection such that $\cup_{n\leq N} 4Q^\ast_n$ covers $\cup 4\mathcal{Q}^\ast_{mbd}$; heuristically then $|\cup_{n\leq N}Q^\ast_n| \geq \frac{1}{16}|\cup \mathcal{Q}_{mbd}^\ast|$. Given the lower bound on the length $\ell(Q') \geq \ell(Q)^{1/4}$, this heuristic is correct up to an $O(\ell(Q)^{-1/4})$ correction (this comes from the ratio of the perimeter to the area), so that (\ref{vitcov}) holds for sufficiently large $\ell(Q)$. It then suffices to bound $|\cup_{n\leq N}Q_n^\ast|$.
	
	Because $F$ is $\delta$-regular in $Q$, it's immediate that $|\cup_{n\leq N} Q^B_n| \leq \delta |Q|$.  To control $|\cup_{n\leq N}Q^{C\setminus B}_n|$, we first note that by the $\delta$-sparseness of $F$ in all these squares, we have:
	\[|\cup_{n\leq N}Q^{C\setminus B}_n\setminus F| \geq c|\cup_{n\leq N}Q^{C\setminus B}_n|\]
	whence the necessary bound follows almost immediately. Indeed, the proportion of $Q\setminus F$ on which $|\psi| > 1$ is bounded above by $\ve (\ell(Q)\log \ell(Q))^{-1/2}$; the proportion  of sites in $\cup_{n\leq N}Q^{C\setminus B}_n \setminus F$ so that $|\psi| > 1$ is at least 
	\[\delta\min_{n\leq N} \ell(Q_n^{C\setminus B})\log (\ell(Q_n^{C\setminus B}) )^{-1/2} \geq \delta (\ell(Q)\log \ell(Q))^{-1/2}\] 
	Given the upper bound on the total count of sites in $Q\setminus F$ where $|\psi| > 1$, this gives $|\cup_{n\leq N} Q_n^{C\setminus B}| \leq \frac{\delta}{\ve}$. Having shown this, we will obtain the existence of some $Q' \in \mathcal{Q}_{mbd}\setminus(\mathcal{Q}_{mbd}^B \cup \mathcal{Q}_{mbd}^C)$ from the following claim:
	\begin{claim}\label{bigbd}
		For all sufficiently small $\delta >0$, we have $|\cup \mathcal{Q}_{mbd}| = |\cup \mathcal{Q}_{bd}| \geq c |Q|$.
	\end{claim}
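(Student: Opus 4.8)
The equality $|\cup\mathcal{Q}_{mbd}| = |\cup\mathcal{Q}_{bd}|$ is immediate, since $\mathcal{Q}_{bd}$ is finite and every member is contained in one that is maximal with respect to inclusion; so the plan concerns only the lower bound $|\cup\mathcal{Q}_{bd}| \geq c|Q|$ for the fixed eigenfunction $\psi$ satisfying (\ref{uctiltprimehyp}). I would work at the smallest admissible scale $L' := \lceil \ell(Q)^{1/4}\rceil$: by an elementary packing there is a pairwise-disjoint family $\mathcal{T}$ of tilted squares of side $L'$, each satisfying the positioning requirements $2Q' \subset \tfrac12 Q$, $\tfrac14 Q' \cap \tfrac1{64}Q \neq \varnothing$, $\ell(Q') \geq \ell(Q)^{1/4}$ from the definition of $\mathcal{Q}$, with $|\cup\mathcal{T}| \geq c_0 |Q|$ for an absolute $c_0 > 0$ once $\ell(Q)$ is large. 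It suffices to show that the squares of $\mathcal{T}$ failing to lie in $\mathcal{Q}_{bd}$ have union of area at most $\tfrac12 c_0 |Q|$.

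The verification step is that a square $Q' \in \mathcal{T}$ lies in $\mathcal{Q}_{bd}$ unless at least one of (i) $F$ is not $\ve$-sparse in $2Q'$; (ii) $|\{|\psi| > 1\} \cap 2Q' \setminus F| \geq \ve(L'\log L')^{-1/2}|2Q'\setminus F|$; (iii) $\|\psi\|_{\ell^\infty(\frac12 Q')} > e^{\beta\ell(Q)\log\ell(Q)}e^{-\alpha L'\log L'}$ holds. Indeed, if none holds, replacing $\psi$ by its $\ell^\infty(\tfrac12 Q')$-normalization --- which only shrinks the large-support set, as that norm may be taken $\geq 1$ --- lands us in the hypotheses of the event $\mathcal{E}_{ex}(Q',F)$ of Lemma \ref{growth}, available because we are inside $\mathcal{E}_{ex} = \bigcap_{Q'\in\mathcal{Q}}\mathcal{E}_{ex}(Q',F)\cap\mathcal{E}_{ex}(2Q',F)$; hence $\|\psi\|_{\ell^\infty(Q')} \leq \max\{1,\|\psi\|_{\ell^\infty(\frac12 Q')}\}\, e^{\alpha L'\log L'} \leq e^{\beta\ell(Q)\log\ell(Q)}$, using $L' \sim \ell(Q)^{1/4} \ll \ell(Q)$, $\beta \geq 2\alpha$, and the failure of (iii).

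I would then estimate the three exceptional families. Family (i): since $F$ is $\delta$-regular in $Q$ with $\delta < \ve$, any single $2Q'$ in which $F$ is not $\ve$-sparse (hence not $\delta$-sparse) already forces $|2Q'| \leq \delta|Q|$ by using that square alone as the disjoint union, and a Vitali argument for the fixed-size squares $2Q'$ bounds the union of family (i) by $O(\delta)|Q|$. Family (ii): the defect set $S := \{n \in Q\setminus F : |\psi(n)| > 1\}$ has $|S| \leq \ve(\ell(Q)\log\ell(Q))^{-1/2}|Q|$, and since $\mathcal{T}$ is disjoint, distinct squares of family (ii) charge disjoint blocks of $S$ of size $\Theta(\ve L'^{3/2}(\log L')^{-1/2})$ each, so family (ii) has union of area $O(\ve^{-1}|S|\, L'^{1/2}(\log L')^{1/2}) = o(|Q|)$ for $L' = \lceil\ell(Q)^{1/4}\rceil$. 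Family (iii): a site of $\tfrac12 Q'$ where $\psi$ exceeds the threshold lies in $S$ or in $F$; the squares charging it to $S$ number at most $|S|$ (disjoint $\tfrac12 Q'$) and cover area $\leq |S|L'^2 = o(|Q|)$, while the squares charging it to $F$ require the careful $F$-bookkeeping --- using the $\ve$-sparsity of $F$ in the relevant $2Q''$ and the fact that $\mathcal{E}_{ex}(Q'',F)$ already caps $\psi$ on the frozen sites inside $Q''$ once the normalized small-support conditions hold on a nearby square --- to see their union is again $o(|Q|)$. Subtracting families (i)--(iii) from $\cup\mathcal{T}$ leaves a subset of $\cup\mathcal{Q}_{bd}$ of area $\geq (c_0 - O(\delta) - o(1))|Q| \geq c|Q|$.

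The main obstacle is exactly the $F$-part of family (iii): unlike $S$, the frozen set $F$ carries no a priori bound on $\psi$ coming from (\ref{uctiltprimehyp}), and $|F\cap Q|$ can be of order $\delta|Q|$, too large to thicken by $L'$ naively; this is the point at which one must consistently count against $2Q'\setminus F$ rather than $2Q'$ and route the frozen sites back through a nearby growth event --- the correction of \cite[Claims 3.23, 3.24]{Ding-Smart} alluded to above. The remaining steps are routine volume counts, and the resulting bound on $|\cup\mathcal{Q}_{bd}|$ is then combined with Claim \ref{smallbc} and the growth/maximality dichotomy to produce a single square of $\mathcal{Q}_{mbd}$ satisfying only condition \ref{toobig}, which must therefore contain $\tfrac1{64}Q$ and yields the conclusion of Theorem \ref{ucucthmfinal}.
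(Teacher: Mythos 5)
There is a genuine gap, and it is exactly the one you flag: the $F$-part of your family (iii). To bound the union of squares $Q'$ on which $\|\psi\|_{\ell^\infty(\frac12 Q')}$ exceeds the threshold, you must control $\psi$ on $F\cap \frac12 Q'$, where the hypothesis (\ref{uctiltprimehyp}) gives no information; and the fix you sketch is circular, since invoking $\mathcal{E}_{ex}(Q'',F)$ to ``cap $\psi$ on the frozen sites inside $Q''$'' presupposes that the hypotheses of that growth event hold for $Q''$ --- in particular the bound $|\psi|\leq 1$ on $\frac12 Q''$, which is precisely the kind of pointwise control you are trying to establish. As written, the measure of family (iii) is not bounded, so the strategy of showing that \emph{most} squares of $\mathcal{T}$ lie in $\mathcal{Q}_{bd}$ does not close. (Your family (ii) count also implicitly needs the doublings $2Q'$ to be pairwise disjoint, not just the $Q'$ themselves; this is arranged in the paper's packing but should be stated.)

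The paper's proof avoids the difficulty by asking for much less. Since only a positive proportion of $|Q|$ is needed, it packs $N\geq c\ell(Q)^{3/2}$ tilted squares $Q'_n$ of side comparable to $\ell(Q)^{1/4}$ inside $\frac1{64}Q$ with \emph{disjoint doublings} $2Q'_n$, and then observes two things: by $\delta$-regularity of $F$, all but a $C\delta$ fraction of the $2Q'_n$ have $F$ $\delta$-sparse in them (so those $Q'_n$ belong to $\mathcal{Q}$); and the number of sites of $Q\setminus F$ with $|\psi|>1$ is at most $C\ve\,\ell(Q)^{3/2}(\log\ell(Q))^{-1/2}=o(N)$, so all but $o(N)$ of the squares satisfy $|\psi|\leq 1$ throughout $Q'_n$, which makes (\ref{eigenboundgrowth}) hold \emph{trivially} on those squares. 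Membership in $\mathcal{Q}_{bd}$ is thus obtained without any appeal to Lemma \ref{growth}, hence without normalization and without the $F$-bookkeeping that stalls your family (iii); the growth event is only used afterwards, in the maximality dichotomy \ref{toobig}/\ref{nosparse}/\ref{largesupportlarge}. I would replace your three-family decomposition with this direct count: squares on which $|\psi|\leq 1$ outright are in $\mathcal{Q}_{bd}$ for free, and there are enough of them.
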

	By taking e.g. squares of side length $\lceil\ell(Q)^{1/4}\rceil$ with corners at the points in $\lceil 4\ell(Q)^{1/4}\rceil\Z^2 \cap \frac{1}{128}Q$ (see the construction for \Cref{squareslist}), we can produce a list of tilted squares $Q'_1,\dots, Q'_N$ such that
	\begin{enumerate}[label=(\Alph*)]
		\item\label{constantbounds} $\ell(Q)^{1/4} \leq \ell(Q'_n)  \leq 2\ell(Q)^{1/4}$
		\item $Q'_n \subset \frac{1}{64}Q$
		\item $2Q'_n\cap 2Q'_m = \varnothing$ for $n\neq m$
		\item $K \geq c\ell(Q)^{3/2}$
	\end{enumerate}
	In particular, the $Q'_n$ and their doublings $2Q'_n$ both cover a positive proportion of $Q$. As such, by $\delta$-regularity of $F$, we have (for $\delta$ sufficiently small) $F$ $\delta$-sparse in at least a $1-C\delta$ proportion of the doublings $2Q'_n$ when they are weighted by their cardinalities. Given \ref{constantbounds}, the same thing holds (with a different constant) even when the squares are weighted equally.
	
	Note also that there are at most $C\ve \ell(Q)^{3/2} \log \ell(Q)^{-1/2}$ points in $Q\setminus F$ such that $|\psi| > 1$; by either taking $\ve$ small or taking $\ell(Q)$ large enough that $\ve\log(\ell(Q))^{-1/2}$ is small, we can obtain that for a positive proportion of these $Q'_n$, $F$ is $\delta$-sparse and $|\psi|\leq 1$ on all of $Q'_n$. This positive proportion of the $Q'_n$ are all in $\mathcal{Q}$ (the only thing to verify for this is the $\delta$-sparsity), and $|\psi| \leq 1$ implies that they are further in $\mathcal{Q}_{bd}$. Hence $\mathcal{Q}_{bd}$ contains at least $c\ell(Q)^{3/2}$ disjoint squares of at area at least $\ell(Q)^{1/2}$, yielding $|\cup\mathcal{Q}_{bd}| \geq c|Q|$.
	
	It follows immediately, combining the previous two claims, that for $\delta > 0$ sufficiently small there is $Q' \in \mathcal{Q}_{mbd}$ such that condition \ref{toobig} holds, but conditions \ref{nosparse} and \ref{largesupportlarge} fail. In particular, $\mathcal{E}_{ex}(2Q',F) \subset \mathcal{E}_{ex}$ and given the failure of the later two conditions, $2Q'$ satisfies (\ref{eigenboundgrowth}). Simultaneously, $4Q' \not\subset \frac{1}{2}Q$ and $\frac{1}{4Q'} \cap \frac{1}{64} Q$ together imply a lower bound on $\ell(2Q')$.
	
	Specifically, the diameter of any tilted square $Q'$ (in the Euclidean metric) is $\sqrt{2}\ell(Q')$, up to $O(1)$ corrections. We also have, up to $O(1)$ corrections, which we will stop mentioning now, $\text{dist}(\frac{1}{64}Q,\Z^2\setminus \frac{1}{2}Q) =   \frac{31}{128}\sqrt{3}\ell(Q)$ with respect to this same metric. Combining these bounds gives e.g. $\ell(4Q') \geq \frac{1}{4} \ell(Q)$ for our $Q'$, 1/4 being smaller than $\frac{31}{128}\sqrt{\frac{3}{2}}$, because $4Q'$ non trivially intersects both $\frac{1}{64}Q$ and the complement of $\frac{1}{2}Q$.
	
	This gives the lower bound $\ell(Q') \geq \frac{1}{16}\ell(Q)$. Taken together with $\frac{1}{4}Q' \cap \frac{1}{64}Q=\varnothing$ this gives $\frac{1}{64}Q \subset 2Q'$. Indeed, $2Q'$ contains a ball of radius $\frac{3}{2}\ell(Q') \geq \frac{3}{32}\ell(Q)$ around any element of $\frac{1}{2}Q'$. Given that the diameter of $\frac{1}{64}Q$ is $\frac{\sqrt{2}}{64}\ell(Q)$, we conclude $\frac{1}{64}Q \subset 2Q'$ and hence (\ref{eigenboundgrowth}) holds on $\frac{1}{64}Q$; this completes the proof.
\end{proof}

\section{Deterministic preliminaries to MSA}\label{detmsasec}

We now prove (or formulate) various entirely deterministic results necessary to carry out the MSA; the first two are the main results of \cite[Section 6]{Ding-Smart} and the third is analogous to the main result of \cite[Section 7]{Ding-Smart}. To formulate the first result succinctly, we introduce the notion of the boundary of a square $\Lambda'$ within $\Lambda$.

\begin{deffo}
	The boundary of $\Lambda' \subset \Lambda$, denoted $\partial \Lambda'$, is the set of all pairs of lattice sites $(u,v)$ with $u \in \Lambda'$, $v \in \Lambda\setminus \Lambda'$, and $|u-v| = 1$.
\end{deffo}
Throughout this section, $R_\Lambda = (H_\Lambda - E)^{-1}$ for some fixed energy $E$; this inverse is taken in the Hilbert space $\ell^2(\Lambda)$ and not $\ell^2(\Z^2)$. For $\Lambda'\subset \Lambda$, $R_{\Lambda'} = (H_{\Lambda'}-E)^{-1}$, with this inverse taken in $\ell^2(\Lambda)$. The first result we need is a straightforward consequence of the geometric resolvent identity
\begin{equation}
	R_\Lambda(x,y) = R_{\Lambda'}(x,y) + \sum_{(u,v) \in \partial\Lambda'} R_{\Lambda'}(x,u)R_{\Lambda}(v,y)
\end{equation}
which holds for squares (or more general regions) $\Lambda' \subset \Lambda$.
\begin{lem}\label{geomres}
	If $x \in \Lambda' \subset \Lambda$ and $y \in \Lambda$, then
	\begin{equation*}
		|R_\Lambda(x,y)| \leq |R_{\Lambda'}(x,y)| + |\partial\Lambda'|\max_{(u,v)\in\partial\Lambda'} |R_{\Lambda'}(x,u)|\cdot|R_{\Lambda}(v,y)|
	\end{equation*}
\end{lem}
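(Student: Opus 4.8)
The plan is to derive the bound directly from the geometric resolvent identity displayed immediately above the statement, via a single application of the triangle inequality. First I would write down the identity for the given configuration $x \in \Lambda' \subset \Lambda$, $y \in \Lambda$:
\[
R_\Lambda(x,y) = R_{\Lambda'}(x,y) + \sum_{(u,v)\in\partial\Lambda'} R_{\Lambda'}(x,u)\,R_\Lambda(v,y).
\]
Taking absolute values and applying the triangle inequality on the right-hand side yields
\[
|R_\Lambda(x,y)| \leq |R_{\Lambda'}(x,y)| + \sum_{(u,v)\in\partial\Lambda'} |R_{\Lambda'}(x,u)|\cdot|R_\Lambda(v,y)|.
\]
Finally I would estimate each summand by the maximum over the boundary: every term $|R_{\Lambda'}(x,u)|\cdot|R_\Lambda(v,y)|$ is at most $\max_{(u,v)\in\partial\Lambda'}|R_{\Lambda'}(x,u)|\cdot|R_\Lambda(v,y)|$, and the sum has exactly $|\partial\Lambda'|$ terms, which produces the claimed inequality.

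The only point meriting a remark is that the objects in the identity are well defined, i.e. that $E \notin \sigma(H_{\Lambda'})$ so that $R_{\Lambda'}$ exists as an operator on $\ell^2(\Lambda)$; but this is part of the standing convention of the section, where $E$ is a fixed energy and all the relevant truncated resolvents are assumed to exist. The identity itself is the second resolvent identity $R_\Lambda - R_{\Lambda'} = R_{\Lambda'}(H_{\Lambda'}-H_\Lambda)R_\Lambda$ combined with the observation that $H_{\Lambda'} - H_\Lambda$ (on $\ell^2(\Lambda)$) is supported on the boundary edges $\partial\Lambda'$, the discrete Laplacian being nearest-neighbour; expanding in the standard basis $\{\delta_n\}$ then gives precisely the boundary sum. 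Since the excerpt already records the geometric resolvent identity as an established fact, there is no real obstacle here: the proof is an immediate consequence of the triangle inequality and counting the boundary terms.
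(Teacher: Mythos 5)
Your proposal is correct and matches the paper's intended argument exactly: the lemma is stated there as an immediate consequence of the displayed geometric resolvent identity, obtained by the triangle inequality and bounding each of the $|\partial\Lambda'|$ summands by the maximum. Your additional remark on why the identity holds (second resolvent identity plus the nearest-neighbour support of $H_{\Lambda'}-H_\Lambda$) is a correct and harmless elaboration of what the paper takes as given.
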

It is also convenient to formulate this in terms of the maximal boundary pair; if $(u,v)$ maximizes $|R_{\Lambda'}(x,u)R_\Lambda(v,y)|$, we have:
\begin{equation*}
	|R_\Lambda(x,y)| \leq |R_{\Lambda'}(x,y)| + |\partial\Lambda'||R_{\Lambda'}(x,u)|\cdot|R_{\Lambda}(v,y)|
\end{equation*}

In the specific case where $\Lambda'$ is a square, we have $|\partial \Lambda'| \leq 4\ell(\Lambda') +4$; in the cases where $\ell(\Lambda') \ll \ell(\Lambda)$, we have the bound below, which will suffice going forward.
\begin{equation}
	|R_\Lambda(x,y)| \leq |R_{\Lambda'}(x,y)| + |\Lambda||R_{\Lambda'}(x,u)|\cdot|R_{\Lambda}(v,y)|
\end{equation}
We now use this result to prove the deterministic part of the MSA. The following is, in some form, well established and well known by now, but we find it worthwhile to include a proof, so we present the one from \cite{Ding-Smart}.
\begin{lem}\label{detmsa}
	If all of the following hold:
	\begin{enumerate}[label=(\Roman*)]
		\item $\ve > \nu> 0$ are small enough
		\item $K \geq 1$ is an integer
		\item $L_0\geq \cdots \geq L_6 \geq C_{\ve,\nu,K}$ are dyadic scales satisfying $(1-\ve) \log_2 L_k \geq \log_2 L_{k+1}$
		\item $1 \geq m \geq 2L_5^{-\nu}$ is an exponential decay rate
		\item $\Lambda \subset \Z^2$ is a square with $\ell(\Lambda) = L_0$
		\item $\Lambda_1',\dots,\Lambda_K' \subset \Lambda$ are disjoint $L_2$-squares with $|R_{\Lambda'_k}|\leq e^{L_4}$
		\item for all $x \in \Lambda$ at least one of the following holds
		\begin{enumerate}[label=(\roman*)]
			\item there is $\Lambda'_k$ such that $x \in \Lambda'_k$ and $\text{dist}(x,\Lambda\setminus \Lambda') \geq \frac{1}{8}\ell(\Lambda'_k)$ i.e. $x$ is ``deep inside'' a defect square $\Lambda'_k$
			\item there is a square $\Lambda'' \subset \Lambda$ with $\ell(\Lambda'') = L_5$ such that $x \in \Lambda''$, $\text{dist}(x,\Lambda\setminus \Lambda'') \geq \frac{1}{8} \ell(\Lambda'')$, and $|R_{\Lambda''}(y,z)| \leq e^{L_6-m|y-z|}$ for $y,z \in \Lambda''$, i.e. $x$ is ``deep inside'' a square obeying the appropriate resolvent bound. (Note that these squares are much smaller than the defect squares.)
		\end{enumerate}
	\end{enumerate}
	then $|R_\Lambda(x,y)| \leq e^{L_1-\tilde{m}|x-y|}$ for $x,y \in \Lambda$, where $\tilde{m} = m - \L_5^{-\nu}$.
\end{lem}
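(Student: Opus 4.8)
The plan is to run the standard deterministic multiscale iteration of \cite{Ding-Smart}: iterate the geometric resolvent identity of Lemma \ref{geomres} along a path that starts at $x$, harvesting a genuine gain each time the path sits deep inside a ``good'' $L_5$-box and paying a controlled price (of size $e^{O(L_2)}$) each time it sits deep inside a defect box, then telescoping the displacement-decay factors against the triangle inequality.

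First I would fix $x,y\in\Lambda$ and set up the step. Given the current point $z$, hypothesis (VII) puts $z$ deep inside (at distance $\ge\tfrac18$ of the side length from the outer boundary) either a defect box $\Lambda'_k$ with $|R_{\Lambda'_k}|\le e^{L_4}$, or an $L_5$-box $\Lambda''$ with $|R_{\Lambda''}(\cdot,\cdot)|\le e^{L_6-m|\cdot-\cdot|}$. Applying Lemma \ref{geomres} with that box as $\Lambda'$, the diagonal term vanishes unless $y$ lies in the box, and otherwise $|R_\Lambda(z,y)|\le|\Lambda|\,|R_{\Lambda'}(z,u)|\,|R_\Lambda(v,y)|$ for the maximizing pair $(u,v)\in\partial\Lambda'$, with $|z-u|\ge\tfrac18\ell(\Lambda')$ and $|z-v|\ge|z-u|-1$. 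Writing $m=\tilde m+L_5^{-\nu}$, a good box contributes the factor $|\Lambda|\,|R_{\Lambda'}(z,u)|\le|\Lambda|e^{L_6+1}e^{-\tilde m|z-v|}e^{-\tfrac18 L_5^{1-\nu}}\le e^{-\tfrac{1}{16}L_5^{1-\nu}}e^{-\tilde m|z-v|}$, where the absorption uses $\log|\Lambda|\lesssim\log L_0$ and $L_6\le L_5^{1-\ve}$, both $\ll L_5^{1-\nu}$ for scales above $C_{\ve,\nu,K}$ (this is exactly where $\ve>\nu$ and $m\ge 2L_5^{-\nu}$ are used). A defect box, of diameter $O(L_2)$, instead contributes $|\Lambda|\,|R_{\Lambda'_k}(z,u)|\le|\Lambda|e^{L_4}\le e^{CL_2}e^{-\tilde m|z-v|}$, since $e^{\tilde m|z-u|}\le e^{CL_2}$, $L_4\le L_2$, and $\log|\Lambda|\ll L_2$.

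Next I would iterate, producing a path $x=z_0\to z_1\to\cdots\to z_J$ (each $z_{i+1}=v_i$) which terminates the first time $y$ lies in the box containing $z_J$; one further application of Lemma \ref{geomres} bounds $|R_\Lambda(z_J,y)|$ by $e^{CL_2}e^{-\tilde m|z_J-y|}$. Multiplying all the step factors, discarding the good-box factors (each $\le1$), and telescoping with $\sum_i|z_i-z_{i+1}|+|z_J-y|\ge|x-y|$ yields $|R_\Lambda(x,y)|\le e^{C(D+1)L_2}e^{-\tilde m|x-y|}$, where $D$ is the number of defect passages. Since $L_1\ge L_2^{1/(1-\ve)}$ dominates any fixed power of the scales, and the scales are large relative to $K$, the desired bound $|R_\Lambda(x,y)|\le e^{L_1-\tilde m|x-y|}$ follows as soon as $D\lesssim K$ and the path is finite.

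The main obstacle is precisely establishing that bound on $D$ together with termination of the path. The point is that a defect box, once exited, is left at distance $\le1$ from it, so the next box is not the same one, and re-entering it \emph{deeply} forces the path to travel a further distance $\gtrsim L_2$; because of the scale separation $L_5\ll L_2$ this costs many good passages (or passages through the $K-1$ other defect boxes), and a combinatorial accounting of the type carried out in \cite[Section 6]{Ding-Smart} shows both that the path stops after finitely many steps and that the total number of defect passages is $O(K)$. This bookkeeping is the only delicate part; all remaining estimates are the scale arithmetic above, in which the separations $L_{k+1}\le L_k^{1-\ve}$, the condition $m\ge 2L_5^{-\nu}$, and the unit loss $|z_i-u_i|-|z_i-v_i|\le1$ make every error term negligible against either $L_5^{1-\nu}$ or $L_1$.
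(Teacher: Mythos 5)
Your proposal goes by direct iteration of the geometric resolvent identity along a path, whereas the paper closes the argument with a bootstrap: it defines a modified graph metric $d$ on $\Lambda$ (edges of weight $|x-y|$ plus negative-weight ``shortcut'' edges of weight $-L_3$ out of the deep interior of each defect square), sets $\alpha:=\max_{x,y}e^{\tilde m d(x,y)}|R_\Lambda(x,y)|$, and shows in one step that $\alpha\le e^{CL_2}+\tfrac12\alpha$, after which $d(x,y)\ge|x-y|-KL_2$ converts the bound into the stated one. This difference is not cosmetic, and your route has a genuine gap exactly at the point you flag as ``the only delicate part.''

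Two concrete problems. First, the recursion does not terminate the way you describe: each application of Lemma \ref{geomres} produces \emph{both} a diagonal term and a product term containing a new factor $|R_\Lambda(v,y)|$, even when $y$ lies in the current box, and the lemma carries no a priori bound on $\|R_\Lambda\|$ (that is what is being proved). So after any finite number of steps you are left with an uncontrolled factor unless you close the loop self-consistently, which is precisely what the paper's $\alpha\le e^{CL_2}+\tfrac12\alpha$ does. Second, your claim $D\lesssim K$ for the number of defect passages is not justified and the heuristic offered for it fails quantitatively: exiting a defect square deposits you at distance $1$ from it, re-entering it deeply costs travel $\gtrsim L_2$, i.e.\ at most $\sim L_2/L_5$ good passages in between, each gaining a factor $e^{-cL_5^{1-\nu}}$; the total recouped gain is $e^{-cL_2L_5^{-\nu}}$, which does not come close to absorbing the $e^{CL_2}$ cost of the re-entry (and the intermediate steps may themselves pass through other defect squares). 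The reference you defer to for this bookkeeping does not contain a path-counting argument; it contains the bootstrap. In the paper's scheme the issue never arises: a deep defect visit \emph{decreases} $d$ by $L_3$, there are no negative cycles because $\tfrac18 L_2\ge L_3$ and the defects are disjoint, and the factor $K$ enters only once, through $d(x,y)\ge|x-y|-KL_2$ at the very end. Your per-step scale arithmetic in the good and defect cases is essentially correct and matches the paper's; it is the global assembly that needs to be replaced by the bootstrap.
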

\begin{proof} 
	
	Throughout, $C> 1 > c > 0$ will depend on the parameters $\ve,\nu, K$. We start by putting a weighted multigraph structure on $\Lambda$. For any $x,y \in \Lambda$, we add an edges $x \rightarrow y$ and $y\rightarrow x$ with weight $|x-y|$. If for some $k$ we have $x \in \Lambda'_k$, $y \in \Lambda\setminus \Lambda'$, $\text{dist}(x,\Lambda\setminus \Lambda') \geq\frac{1}{8}\ell(\Lambda'_k)$ and $\text{dist}(y,\Lambda'_k) =1$, we add an edge $x \rightarrow y$ with weight $-L_3$. (So, we add edges to $y$ on the ``outside boundary'' of defect squares from $x$ ``deep inside'' those same defect squares.)
	
	For sufficiently large scales, $\frac{1}{8}L_2 \geq L_3$, so (using also that the defect squares are disjoint) there are no cycles with negative total weight. As a consequence, if we set $d(x,y)$ to be the minimum weight over all paths from $x$ to $y$, we have the triangle inequality
	\[ d(x,y) \leq d(x,z) + d(z,y)\]
	
	Because there are only finitely many defects, we have the following:
	\[ |x-y|-KL_2 \leq d(x,y) \leq |x-y| \]
    Indeed, because $\frac{1}{8}L_2 \geq L_3$, moving into a defect, out of it, and into it again cannot have negative weight. Hence between any $x$ and $y$ there is a weight-minimizing path exiting each defect square at most once.\\
	
	We now set
	\[ \alpha := \max_{x,y \in \Lambda} e^{\tilde{m}d(x,y)}|R_\Lambda(x,y)|\]
	and the result will follow from estimates on $\alpha$.
	
	For $x,y \in \Lambda$, and $x$ ``deep inside'' a defect square, i.e. case (i) from our hypothesis, then \Cref{geomres} gives us $(u,v) \in \partial \Lambda'_k$ such that
	\[ |R_\Lambda(x,y)| \leq |R_{\Lambda'_k}(x,y)| + L_0^2|R_{\Lambda'_k}(x,u)|\cdot|R_\Lambda (v,y)|\]
	By the triangle inequality established for $d$, we have $d(x,y) \leq d(v,y)-L_3$. Further, by the definition of $\tilde{m}$ and conditions on our scales, we have \[\tilde{m}L_3 - L_4 - 2 \log L_0 \geq \log 2\]for sufficiently large scales. The estimates below follow:
	\begin{align*}
		|R_\Lambda(x,y)| &\leq e^{L_4} 1_{\Lambda_j'}(y) + L_0^2\cdot e^{L_4}\cdot \alpha e^{-\tilde{m}d(v,y)}\\
		&\leq e^{L_4} 1_{\Lambda_j'}(y) + L_0^2\cdot e^{L_4-\tilde{m}L_3}\cdot \alpha e^{-\tilde{m}d(x,y)}\\
		&\leq e^{L_4} 1_{\Lambda_j'}(y) + \frac{1}{2}\alpha e^{\tilde{m}d(x,y)}
	\end{align*}
	As $\tilde{m}d(x,y) \leq 2\tilde{m}L_2$ for $y \in \Lambda'_k$, and $L_2 > L_4$, we obtain (under the assumption $x$ is ``deep inside'' a defect) the bound:
	\begin{equation}\label{estweak}
		e^{\tilde{m}d(x,y)}|R_\Lambda(x,y)| \leq e^{CL_2} +\frac{1}{2}\alpha
	\end{equation}
	We now show a stronger estimate holds in the case where $x$ is ``deep inside'' a smaller good square, i.e. case (ii) from our hypothesis. Again, for the good square $\Lambda''$, there is some pair $(u,v) \in \partial\Lambda''$ such that
	\[ |R_\Lambda(x,y)| \leq |R_{\Lambda'_k}(x,y)| + L_0^2|R_{\Lambda'_k}(x,u)|\cdot|R_\Lambda (v,y)|\]
	We have by definition of $\tilde{m}$ and the conditions on the scales the estimate
	\[(m-\tilde{m})|x-u| - L_6 - 2\log L_0 \geq \log 2\]
	whence (making use of the triangle inequality for $d$) we obtain the following:
	\begin{align*}
		|R_\Lambda(x,y)| &\leq e^{-L_5}1_{\Lambda''}(y) + L_0^2 \cdot e^{L_6}\cdot \alpha e^{-\tilde{m}d(v,y)}\\
		&\leq e^{L_6}1_{\Lambda''}(y) + L_0^2\cdot e^{L_6-(m-\tilde{m})|x-u|+1}\cdot \alpha e^{-\tilde{m}d(x,y)}\\
		&\leq e^{L_6} 1_{\Lambda''}(y) + \frac{1}{2}\alpha e^{-\tilde{m}d(x,y)}
	\end{align*}
	In case $y \in \Lambda''$ as well, we have $\tilde{m}d(x,y) \leq CL_5$, and so we have
	\begin{equation}\label{eststrong}e^{\tilde{m}d(x,y)}|R_\Lambda(x,y)| \leq e^{CL_5} + \frac{1}{2}
	\end{equation}
	In particular, all $x$ fall into one of the two cases, and so the weaker estimate (\ref{estweak}) applies globally; the immediate implication is that
	\[ \alpha \leq e^{CL_2}\]
	whence it follows that 
	\[ |R_\Lambda(x,y)| \leq e^{CL_2-\tilde{m}d(x,y)} \leq e^{L_1-\tilde{m}|x-y|}\]
	for sufficiently large scales where $L_1 > C(L_2+L_3)$.	
\end{proof}

The next result we need says, roughly, that if at some energy we have nice resolvent bounds, then we have almost as nice bounds for energies nearby. The proof is based on that presented in \cite{Ding-Smart}, but our bounds are slightly better due to slight changes to the argument.
\begin{lem}\label{contresbound}
	Let $\Lambda \subset \Z^2$ be a square of side length $L$ and $\overline{E}$ some energy. If $\alpha > \beta > 0$ and we have for $\overline{E}$:
	\begin{equation}\label{keyresbound} |(H_\Lambda-\overline{E})^{-1}(x,y)| \leq e^{\alpha - \beta|x-y|} \text{ for } x,y \in \Lambda \end{equation}
	then for energies $E$ satisfying\[|E-\overline{E}| \leq \frac{1 }{2L^2 e^{\alpha}}\]
	we have the bound
	\[|(H_\Lambda - E)(x,y)| \leq 2e^{\alpha - \beta|x-y|}\text{ for }x,y\in \Lambda \]
\end{lem}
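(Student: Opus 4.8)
The statement is a standard resolvent perturbation estimate, so the plan is to expand $(H_\Lambda - E)^{-1}$ as a Neumann series around $(H_\Lambda - \overline{E})^{-1}$ and control the error via the hypothesis \eqref{keyresbound}. First I would write $H_\Lambda - E = (H_\Lambda - \overline{E}) - (E - \overline{E})I = (H_\Lambda - \overline{E})\bigl(I - (E-\overline{E})(H_\Lambda-\overline{E})^{-1}\bigr)$, so that formally
\[
(H_\Lambda - E)^{-1} = \bigl(I - (E-\overline{E})(H_\Lambda-\overline{E})^{-1}\bigr)^{-1}(H_\Lambda - \overline{E})^{-1} = \sum_{k\geq 0} (E-\overline{E})^k (H_\Lambda-\overline{E})^{-k-1}.
\]
To make this rigorous I need the operator norm bound $\|(E-\overline{E})(H_\Lambda-\overline{E})^{-1}\| < 1$. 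From \eqref{keyresbound}, every matrix entry of $(H_\Lambda-\overline{E})^{-1}$ is at most $e^\alpha$ in absolute value, and the matrix is $L^2 \times L^2$ (indexed by $\Lambda$), so by the crude bound $\|A\| \leq (\#\Lambda)\max_{x,y}|A(x,y)|$ (e.g. Schur test, or $\|A\|\le\|A\|_{HS}\le \#\Lambda\cdot\max|A(x,y)|$) we get $\|(H_\Lambda-\overline{E})^{-1}\| \leq L^2 e^\alpha$. Hence the assumed bound $|E-\overline{E}| \leq \tfrac{1}{2L^2 e^\alpha}$ gives $\|(E-\overline{E})(H_\Lambda-\overline{E})^{-1}\| \leq \tfrac12$, so the Neumann series converges in operator norm and the identity above holds.

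Next I would pass from the operator identity to the pointwise (matrix-entry) bound with the exponential weight. The key sub-step is to show that the powers $(H_\Lambda-\overline{E})^{-k-1}$ retain an exponential decay bound of the form $|(H_\Lambda-\overline{E})^{-k-1}(x,y)| \leq (\text{something})\cdot e^{-\beta|x-y|}$. I would prove this by induction on $k$ using the convolution/summation structure of matrix multiplication: if $|B(x,y)| \leq b\,e^{-\beta|x-y|}$ and $|C(x,y)|\le e^{\alpha-\beta|x-y|}$ then $|(BC)(x,z)| \leq \sum_w b\,e^{-\beta|x-w|}e^{\alpha-\beta|w-z|} \le b\,e^\alpha e^{-\beta|x-z|}\sum_w e^{-\beta(|x-w|+|w-z|-|x-z|)}$, and the last sum is bounded by $\sum_{w\in\Z^2} e^{-\beta\,(\text{something}\ge 0)}$; actually it is cleaner to just use $\sum_w e^{-\beta|x-w|} \le C/\beta^2 \le L^2$ crudely, giving $|(H_\Lambda-\overline{E})^{-k-1}(x,y)| \le e^\alpha (L^2 e^\alpha)^k e^{-\beta|x-y|}$ — wait, that loses the decay structure. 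The right way: keep one factor with decay and absorb all the multiplicities into $\|(H_\Lambda-\overline{E})^{-1}\|^k$. Concretely, write $(H_\Lambda-\overline{E})^{-k-1}(x,y) = \sum_{w}(H_\Lambda-\overline{E})^{-k}(x,w)(H_\Lambda-\overline{E})^{-1}(w,y)$ and bound $|(H_\Lambda-\overline{E})^{-k}(x,w)| \le \|(H_\Lambda-\overline{E})^{-1}\|^k$ (uniform entry bound by operator norm), so
\[
|(H_\Lambda-\overline{E})^{-k-1}(x,y)| \le \|(H_\Lambda-\overline{E})^{-1}\|^k \sum_w |(H_\Lambda-\overline{E})^{-1}(w,y)| \le (L^2 e^\alpha)^k \cdot e^{\alpha - \beta|x-y|}
\]
— hmm, still not quite, since $\sum_w e^{\alpha-\beta|w-y|}$ is not $e^{\alpha-\beta|x-y|}$. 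The honest fix is to keep the decay on the *first* factor: $|(H_\Lambda-\overline{E})^{-k-1}(x,y)| \le \sum_w |(H_\Lambda-\overline{E})^{-1}(x,w)|\cdot|(H_\Lambda-\overline{E})^{-k}(w,y)| \le \bigl(\max_w|(H_\Lambda-\overline{E})^{-k}(w,y)|\cdot \#\Lambda\bigr)\cdots$; cleanest of all, bound $|(H_\Lambda-\overline{E})^{-1}(x,w)| \le e^{\alpha-\beta|x-w|}$, $|(H_\Lambda-\overline{E})^{-k}(w,y)|\le \|(H_\Lambda-\overline{E})^{-1}\|^{k-1}e^{\alpha}$ (uniform), pull out $e^{-\beta|x-w|}$, and note $e^{-\beta|x-w|}\le e^{-\beta|x-y|}e^{\beta|w-y|}$... this blows up. So I will instead argue more directly, summing the full series first: since $\|\,\cdot\,\|_{\text{op}}$ controls entries, $|(H_\Lambda - E)^{-1}(x,y) - (H_\Lambda-\overline{E})^{-1}(x,y)| = \bigl|\sum_{k\ge1}(E-\overline{E})^k (H_\Lambda-\overline{E})^{-k-1}(x,y)\bigr|$, and I bound this by $\bigl\|\sum_{k\ge1}(E-\overline{E})^k(H_\Lambda-\overline{E})^{-k-1}\bigr\| \le \sum_{k\ge1}|E-\overline{E}|^k\|(H_\Lambda-\overline{E})^{-1}\|^{k+1} \le \|(H_\Lambda-\overline{E})^{-1}\|\sum_{k\ge1}2^{-k} = \|(H_\Lambda-\overline{E})^{-1}\| \le L^2 e^\alpha$. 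That is a *uniform* bound on the correction, not a decaying one, which is not strong enough for the off-diagonal entries where $e^{\alpha-\beta|x-y|}$ is tiny.

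**Main obstacle and its resolution.** The genuine difficulty is exactly the one surfacing above: getting the perturbed resolvent entries to inherit the *exponential decay* $e^{-\beta|x-y|}$, not just a uniform bound. The clean resolution is the second resolvent identity applied once rather than iterated: $(H_\Lambda-E)^{-1} = (H_\Lambda-\overline{E})^{-1} + (E-\overline{E})(H_\Lambda-E)^{-1}(H_\Lambda-\overline{E})^{-1}$. Taking the $(x,y)$ entry and bounding $|(H_\Lambda-E)^{-1}(x,w)| \le \|(H_\Lambda-E)^{-1}\| \le 2L^2 e^\alpha$ (from the converged Neumann series, $\|(H_\Lambda-E)^{-1}\|\le \|(H_\Lambda-\overline{E})^{-1}\|/(1-\tfrac12)= 2L^2e^\alpha$) gives
\[
|(H_\Lambda-E)^{-1}(x,y)| \le e^{\alpha-\beta|x-y|} + |E-\overline{E}|\cdot 2L^2 e^\alpha \sum_{w\in\Lambda} |(H_\Lambda-\overline{E})^{-1}(w,y)|,
\]
which still has the wrong decay in the second term. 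The truly correct move — and the one I would carry out — is to keep the decay via the *other* ordering: $(H_\Lambda-E)^{-1} = (H_\Lambda-\overline{E})^{-1} + (E-\overline{E})(H_\Lambda-\overline{E})^{-1}(H_\Lambda-E)^{-1}$, so that the factor adjacent to $x$ carries the $e^{-\beta|x-w|}$ decay, and one bounds $\sum_w e^{-\beta|x-w|} \le \sum_{z\in\Z^2}e^{-\beta|z|} \le C\beta^{-2}$; but since we only need a factor of $2$, I'd rather not even track $\beta$: I will prove the sharper intermediate claim that $|(H_\Lambda-\overline{E})^{-k-1}(x,y)| \le (2L^2e^\alpha)^k\,e^{\alpha-\beta|x-y|}/2^{?}$... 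In the interest of a clean writeup I will present it as follows: (i) establish $\|(H_\Lambda-\overline{E})^{-1}\|\le L^2e^\alpha$ and hence the Neumann series converges; (ii) prove by induction that $|(H_\Lambda-\overline{E})^{-k}(x,y)| \le (L^2e^\alpha)^{k-1} e^{\alpha-\beta|x-y|}$, the inductive step multiplying one decaying factor by one factor bounded in operator norm and summing — here the only subtlety is that $\sum_{w}(L^2e^\alpha)^{k-2}e^{\alpha-\beta|x-w|}\cdot\|(H_\Lambda-\overline E)^{-1}\| \le (L^2e^\alpha)^{k-1}e^\alpha\sum_w e^{-\beta|x-w|}$, and I absorb $\sum_w e^{-\beta|x-w|}\le L^2$ into one of the $L^2$ factors only if I allow the constant $(L^2e^\alpha)^k$ instead, which is fine; re-examining, the right inductive statement is $|(H_\Lambda-\overline{E})^{-k}(x,y)| \le (L^2e^\alpha)^{k-1}e^{\alpha-\beta|x-y|}$ fails, so I use instead the weaker-decay-in-constant statement and then (iii) sum: $|(H_\Lambda-E)^{-1}(x,y)| \le \sum_{k\ge0}|E-\overline{E}|^k(L^2e^\alpha)^k e^{\alpha-\beta|x-y|} \le e^{\alpha-\beta|x-y|}\sum_{k\ge0}2^{-k} = 2e^{\alpha-\beta|x-y|}$, which is exactly the claim. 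The only place requiring care is the induction in (ii) keeping the decay intact; I expect to handle it by always peeling off the $(H_\Lambda-\overline E)^{-1}$ factor adjacent to whichever endpoint and using $\sum_{w}e^{-\beta|x-w|}\le L^2$ so the accumulated geometric constant is $(L^2)^{k-1}e^{(k-1)\alpha}$ — times the final $e^{\alpha}$ from the remaining decaying factor — matching $(L^2 e^\alpha)^{k-1}e^{\alpha-\beta|x-y|}$ after reindexing. With $|E-\overline E|\le \tfrac{1}{2L^2e^\alpha}$ the series $\sum_k (|E-\overline E|L^2 e^\alpha)^k \le \sum_k 2^{-k}=2$ closes the proof.
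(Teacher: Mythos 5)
Your final plan is correct, and it is a close cousin of, but not identical to, the paper's argument. You expand $(H_\Lambda-E)^{-1}$ as a full Neumann series in $(E-\overline{E})(H_\Lambda-\overline{E})^{-1}$ and prove by induction the entrywise bound $|(H_\Lambda-\overline{E})^{-k}(x,y)| \leq (L^2e^\alpha)^{k-1}e^{\alpha-\beta|x-y|}$; the paper instead applies the second resolvent identity exactly once, defines $\gamma := \max_{x,y} e^{\beta|x-y|-\alpha}|(H_\Lambda-E)^{-1}(x,y)|$, derives the self-improving inequality $\gamma \leq 1 + |E-\overline{E}|L^2e^\alpha\gamma$, and solves for $\gamma \leq 2$. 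The two arguments are equivalent in substance (the bootstrap is just the resummed Neumann series), and both hinge on the identical key estimate: when multiplying two kernels with decay $e^{-\beta|\cdot|}$, the triangle inequality gives $\sum_w e^{-\beta(|x-w|+|w-y|-|x-y|)} \leq \#\Lambda = L^2$, so the product retains the decay $e^{-\beta|x-y|}$ at the cost of one factor of $L^2e^\alpha$. The paper's version is shorter because it avoids the induction and the series; yours is arguably more transparent about where the factor $\tfrac{1}{2L^2e^\alpha}$ comes from. One correction to your own commentary: the inductive statement you declare to ``fail'' does not fail. The computation you wrote in your first attempted sub-step, namely
\[
\sum_w e^{\alpha-\beta|x-w|}\,(L^2e^\alpha)^{k-1}e^{\alpha-\beta|w-y|} \leq (L^2e^\alpha)^{k-1}e^{2\alpha}e^{-\beta|x-y|}\sum_w e^{-\beta(|x-w|+|w-y|-|x-y|)} \leq (L^2e^\alpha)^{k}e^{\alpha-\beta|x-y|},
\]
is exactly the inductive step, and it preserves the decay; the approaches that genuinely lose the decay are precisely the ones where you replace one factor by its operator norm rather than using the entrywise decay on both factors. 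In the final writeup you should delete the failed detours and present only (i) the Hilbert--Schmidt bound $\|(H_\Lambda-\overline{E})^{-1}\|\leq L^2e^\alpha$ for convergence, (ii) the induction above, and (iii) the geometric summation.
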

\begin{proof}
	We drop the $\Lambda$ for the rest of the proof, just writing $H$ for $H_\Lambda$, and we use the resolvent identity
	\begin{equation}\label{resid}(H-E)^{-1} = (H-E)^{-1} + (H-\overline{E})^{-1}(E-\overline{E})(H-\overline{E})^{-1}\end{equation}
	We set
	\[ \gamma:= \max_{x,y \in \Lambda} e^{\beta|x-y|-\alpha} |(H-E)^{-1}(x,y)|\]
	and seek to show $\gamma \leq 2$.
	
	By assumption, $(H-\overline{E})^{-1}(x,y) \leq e^{\alpha-\beta|x-y|}\gamma$, so that in particular \begin{align*}\left|[(H-\overline{E})^{-1}(H-E)^{-1}](x,y)\right| &\leq \sum_{z \in \Lambda} e^{\alpha-\beta|x-z|}e^{\alpha-\beta|z-y|}\gamma\\
	&\leq L^2 e^{2\alpha - \beta|x-y|}\gamma
	\end{align*}
	However, by (\ref{resid}), we obtain
	\[ |(H-E)^{-1}(x,y)| \leq e^{\alpha-\beta|x-y|} + |\overline{E}-E|L^2e^{2\alpha-\beta|x-y|}\gamma \]
	for all $x,y \in \Lambda$. Dividing both sides by $e^{\alpha-\beta|x-y|}$ and taking a maximum gives
	\[ \gamma \leq 1 + |E-\overline{E}|L^2 e^\alpha \gamma \]
	or, after rearranging,
	\[ \gamma \leq \frac{1}{1-|E-\overline{E}|L^2e^\alpha}\]
	Taking $|E-\overline{E}| \leq \frac{1}{2L^2e^\alpha}$ gives $\gamma \leq 2$, and so the desired result.
\end{proof}
Finally, we need a result which will serve as the base case of the MSA; we need to recall a definition from the theory of metric spaces here.
\begin{deffo}
	If $(Y,d)$ is a metric space and $R>0$, we say a non-empty subset $ X \subset Y$ is an $R$-net in $Y$ if $\sup_{y\in Y} \inf_{x \in X} d(x,y) \leq R$.
\end{deffo}
The necessary result is (a slight variation on) \cite[Lemma 7.2]{Ding-Smart}:
\begin{lem}[\cite{Ding-Smart}]\label{rnet}
	If $\kappa > 0$, there are $\ve  >0$ and $0 < c< 1 < C$ (depending on $\kappa$) such that if $R$ is sufficiently large and
	\begin{enumerate}[label=(\Alph*)]
		\item  $\Lambda \subset \Z^2$ is a box with side length at least $\tilde{R} := R^2\log R$
		\item $\{n \in\Z^2 \,:\,V_n \geq \kappa\}$ is an $R$-net in $\Lambda$
	\end{enumerate} then we have the estimate
	\[ |H_\Lambda^{-1}(x,y)|\leq e^{C \tilde{R}-c \tilde{R}^{-1}|x-y|}\]
\end{lem}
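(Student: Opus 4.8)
The plan is to run a direct perturbative argument treating $H_\Lambda = -\Delta + V$ with $-\Delta$ as the perturbation of the diagonal multiplication operator $V$. The heart of the matter is that on the net-hypothesis, every site of $\Lambda$ is within distance $R$ of a site where $V_n \geq \kappa$, and this should give a lower bound of order $R^{-2}$ (up to the logarithmic factor) on $H_\Lambda$, hence an upper bound on $\|H_\Lambda^{-1}\|$, which is then upgraded to an off-diagonal exponential decay estimate by the standard Combes--Thomas / geometric resolvent bootstrap. Concretely: first I would establish the quadratic-form bound $\langle \psi, H_\Lambda \psi\rangle \geq c\,\tilde R^{-1}\|\psi\|^2$ for all $\psi \in \ell^2(\Lambda)$, where $\tilde R = R^2\log R$. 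This is the quantitative uncertainty-principle step. Since $-\Delta \geq 0$, it suffices to show that for any $\psi$, if $\sum_{n: V_n \geq \kappa}|\psi(n)|^2$ is small then the Dirichlet energy $\langle\psi,-\Delta\psi\rangle$ is large. The net hypothesis lets us, for each $m \in \Lambda$, pick a path of length at most $CR$ to a site $n(m)$ with $V_{n(m)}\geq\kappa$; telescoping $\psi(m) - \psi(n(m))$ along this path and applying Cauchy--Schwarz gives $|\psi(m)|^2 \leq 2|\psi(n(m))|^2 + CR\sum_{\text{edges }e\text{ on path}}|\partial_e\psi|^2$. Summing over $m \in \Lambda$, each edge of $\Lambda$ is used by at most $O(R\cdot R) = O(R^2)$ such paths (a path has length $O(R)$ and there are $O(R^2)$ sites within distance $O(R)$ of any edge), so $\|\psi\|^2 \leq 2\sum_{m}|\psi(n(m))|^2 + CR^3\langle\psi,-\Delta\psi\rangle$. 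Bounding $\sum_m |\psi(n(m))|^2 \leq CR^2 \sum_{n: V_n\geq\kappa}|\psi(n)|^2 \leq C\kappa^{-1}R^2\langle\psi,V\psi\rangle$ (since each high-potential site is the endpoint for at most $O(R^2)$ sources $m$), we get $\|\psi\|^2 \leq C(R^2\kappa^{-1} + R^3)\langle\psi,H_\Lambda\psi\rangle \leq C\tilde R\,\langle\psi,H_\Lambda\psi\rangle$ for $R$ large, which is the claimed form bound. Hence $H_\Lambda$ is invertible with $\|H_\Lambda^{-1}\| \leq C\tilde R$.

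Next I would convert the operator-norm bound into the off-diagonal exponential bound $|H_\Lambda^{-1}(x,y)| \leq e^{C\tilde R - c\tilde R^{-1}|x-y|}$. This is a Combes--Thomas estimate: for a real parameter $\mu > 0$ consider the conjugated operator $H_\Lambda^{(\mu)} = e^{\mu\phi}H_\Lambda e^{-\mu\phi}$ where $\phi(n) = |n - x|$; then $H_\Lambda^{(\mu)} = H_\Lambda + B_\mu$ where $B_\mu = e^{\mu\phi}(-\Delta)e^{-\mu\phi} - (-\Delta)$ is a bounded operator with $\|B_\mu\| \leq C\mu$ for $\mu \leq 1$ (each matrix entry of $-\Delta$ is supported on unit-distance pairs, so $|e^{\mu(\phi(n)-\phi(m))}-1| \leq C\mu$). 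Choosing $\mu = c\tilde R^{-1}$ small enough that $\|B_\mu\| \leq \frac{1}{2}\|H_\Lambda^{-1}\|^{-1}$, a Neumann series gives $\|(H_\Lambda^{(\mu)})^{-1}\| \leq 2C\tilde R$, and unwrapping the conjugation yields $|H_\Lambda^{-1}(x,y)| = |e^{-\mu\phi(x)}(H_\Lambda^{(\mu)})^{-1}(x,y)e^{\mu\phi(y)}| \leq 2C\tilde R\, e^{-\mu|x-y|}$, which gives the stated bound after absorbing the polynomial prefactor $2C\tilde R$ into $e^{C\tilde R}$ and renaming constants. One can alternatively get the off-diagonal decay from the geometric resolvent identity of Lemma \ref{geomres} applied iteratively over a chain of sub-boxes, but the Combes--Thomas route is cleaner here. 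The choice $\tilde R = R^2\log R$ rather than just $R^2$ is used to absorb the entropy factor $|\Lambda|^{O(1)}$ if one takes that route, and in any case gives room for the $R$-large requirement; I would keep it for compatibility with the cited statement.

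The main obstacle is the path-counting bookkeeping in the form bound: one must choose the paths from each $m$ to its high-potential site carefully (e.g. fix a deterministic nearest high-potential site and a canonical geodesic to it in the $\ell^1$ lattice metric) so that the multiplicity with which a given edge, or a given high-potential endpoint, appears is controlled by $O(R^2)$ rather than something larger, and one must verify the boundary effects near $\partial\Lambda$ do not spoil this (they do not, since $\ell(\Lambda) \geq \tilde R \gg R$, so the relevant balls of radius $O(R)$ stay essentially inside $\Lambda$, or one simply works with the Dirichlet form on $\Lambda$ which only helps). Everything else — the Neumann series, the Combes--Thomas perturbation estimate, absorbing polynomial factors into exponentials — is routine. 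I expect the constant $\ve$ in the statement plays no role for the upper bound direction and is a vestige of how the lemma is invoked in the MSA (it controls how $R$ relates to the relevant scales), so I would simply note that any $\ve$ small relative to $1$ works.
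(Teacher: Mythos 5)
Your overall architecture (uncertainty-principle lower bound on the quadratic form, then Combes--Thomas to get off-diagonal decay) is the right one, and it is essentially how this initial-scale estimate is proved in \cite{Ding-Smart}; note the paper here does not reprove the lemma but imports it from \cite[Lemma 7.2]{Ding-Smart}. However, there is a genuine quantitative gap in your form bound. Your unweighted telescoping gives $\|\psi\|^2 \leq C(\kappa^{-1}R^2 + R^3)\langle\psi,H_\Lambda\psi\rangle$, and the step ``$\leq C\tilde R\,\langle\psi,H_\Lambda\psi\rangle$'' is false: $R^3$ is not $O(R^2\log R)$. What your argument actually yields is $H_\Lambda \geq cR^{-3}$, hence a Combes--Thomas decay rate of order $R^{-3}$, not the asserted $c\tilde R^{-1}=c(R^2\log R)^{-1}$. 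This is not cosmetic: in the base case of the multiscale analysis the decay rate produced by this lemma must dominate $m_k = L_k^{-\delta}$ with room to spare (Lemma \ref{contresbound} and the deterministic step both eat into it), and a rate of $R^{-3}$ is at best borderline and at worst insufficient depending on how $R$ is chosen relative to $L_k$.

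The missing idea is the two-dimensional capacity (logarithmic) refinement of the Poincar\'e step. Instead of plain Cauchy--Schwarz along each path, one must use a weighted Cauchy--Schwarz, $|\psi(m)-\psi(n(m))|^2 \leq \bigl(\sum_{e} w_e^{-1}\bigr)\bigl(\sum_e w_e|\partial_e\psi|^2\bigr)$ with $w_e \asymp \mathrm{dist}(e,n(m))+1$, so that the first factor is $O(\log R)$ rather than $O(R)$; combined with genuinely spreading (radial-type) paths so that an edge at distance $r$ from its target net point is used by only $O(R^2/r)$ sources, the total edge multiplicity weighted by $w_e$ comes out to $O(R^2\log R)$ and one recovers $H_\Lambda \geq c(R^2\log R)^{-1} = c\tilde R^{-1}$. (Your canonical $\ell^1$ geodesics will not do: edges adjacent to a net point are traversed by $\Theta(R^2)$ sources, which is exactly where the loss occurs.) With the sharp form bound in hand, your Combes--Thomas step with $\mu = c\tilde R^{-1}$ is correct and completes the proof; your remark that the $\ve$ in the statement plays no role in the conclusion is also correct.
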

Ding and Smart proved this for $\kappa =1$, but their proof works for any positive $\kappa$, and we will use this fact to obtain our initial scale estimate with $\kappa = \gamma$.

\begin{remm}\label{ltail}
	This result will be used to derive an ``initial scale estimate'' which serves as a base case for our inductive argument in \Cref{msa}. The argument will be completed by basic lower bounds for the probability that $\{V \geq \kappa\}$ is an $R$-net for $R$ chosen appropriately. This argument in particular demonstrates that we are unlikely to find eigenvalues close to 0; this is a manifestation of a phenomenon known as ``Lifschitz tails''.
	
	A detailed discussion of this phenomenon is beyond the scope of this paper, but we will comment briefly on this phenomenon; our methods suffice to prove localization in energy ranges where this phenomenon holds ``strongly enough''. Roughly, Lifschitz tails describe a scarcity of eigenvalues near 0, or more generally near the boundary of the spectrum; with high probability the number of eigenvalues in $[0,E]$ ``per unit volume'' is of order $e^{-CE^{-d/2}}$, where $d$ is the dimension, with high probability. Precisely, Lifschitz tails corresponds to such asymptotics for a limiting object which describes the expected count of ``eigenvalues per unit volume'' called the integrated density of states.
	
	For models like ours, this phenomenon (or at least the upper bound) was shown to hold rigorously in \cite{Simon1985}; the main result was the bounds for the limiting object, but these were obtained by getting probabilistic bounds in finite volume. The corresponding lower bounds hold if a certain quantitative strengthening of the hypotheses in \Cref{suffcond} hold universally, i.e. without an exceptional density zero subset.
	
	As was discussed earlier, in general we know very little about the spectrum of $H$; only the essential spectrum is deterministic and we know very little about its topological structure. In full generality it is unlikely that we can say much. In analogous one dimensional models there are explicit examples where the behavior is very different from what is seen for i.i.d. models \cite[Proposition A.1]{gorodetski2024nonstationary}.
	
	It is nevertheless reasonable to expect that for many concrete models of interest either the whole spectrum or at least the essential spectrum is a union of intervals; this is the case for the i.i.d. model.  If the spectrum does have such a structure, it is also natural to expect ``internal'' Lifschitz tails, as were found for i.i.d. models in \cite{PhysRevB.32.6272}, see also \cite{Simon1987}. Such results, once found, should imply localization at these ``internal edges''  via our methods, so long as the proof is robust enough to be compatible with the frozen sites formalism.
\end{remm}

Finally, we need a covering lemma \cite[Lemma 8.1]{Ding-Smart}, which roughly says that given $K$ defect squares we can cover them by the same amount of squares, larger by a constant only depending on $K$, so that they sit well inside these larger squares.
\begin{lem}[\cite{Ding-Smart}]\label{coveringlemma}
	For $K$ a positive integer and $\alpha \geq C^K$ dyadic, dyadic scales $L_0 \geq \alpha L_1 \geq L_1 \geq \alpha L_2 \geq L_2$, an $L_0$ square $\Lambda$, and $K$ many $L_2$ squares $\Lambda''_1,\dots,\Lambda''_K$, there is a dyadic scale $\tilde{L} \in [L_1,\alpha L_1]$, and corresponding $\tilde{L}$ boxes $\Lambda_1',\dots,\Lambda_K'$, such that for every $j =0,\dots, K$, there is some $k=0,\dots,K$ such that $\Lambda_j'' \subset \Lambda_j'$ and $\text{dist}(\Lambda_j'',\Lambda \setminus \Lambda_j') \geq \frac{L_1}{8}$.
\end{lem}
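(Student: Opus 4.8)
The plan is to prove \ref{coveringlemma} as a purely geometric statement about the squares $\Lambda_1'',\dots,\Lambda_K''$ --- the operator and the energy play no role --- by a pigeonhole-over-scales clustering argument. First I would introduce, for each scale $s>0$, the graph $\mathcal{G}_s$ on $\{1,\dots,K\}$ in which $i$ and $j$ are joined whenever $\mathrm{dist}(\Lambda_i'',\Lambda_j'')\le s$, and let $\mathcal{P}_s$ denote its partition into connected components. As $s$ increases, components of $\mathcal{G}_s$ only ever merge, so $\mathcal{P}_s$ takes at most $K$ distinct values and changes at most $K-1$ times as $s$ ranges over $(0,\infty)$.

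Next I would exploit the hypothesis $\alpha\ge C^K$. The dyadic scales lying in $[L_1,\alpha L_1]$ comprise $\log_2\alpha\ge K\log_2 C$ consecutive octaves; partitioning these into $K$ blocks of at least $\log_2 C$ octaves each, and using that at most $K-1$ octaves can contain a value of $s$ at which $\mathcal{P}_s$ changes, pigeonhole produces a whole block of consecutive octaves, say spanning the scales from some $s_\ast\ge L_1$ up to $C_0 s_\ast\le\alpha L_1$ with $C_0\ge C$, on which $\mathcal{P}_s$ is constant. I would then take $\tilde L$ to be the dyadic scale nearest $\sqrt{C_0}\,s_\ast$, so that $\tilde L\in[L_1,\alpha L_1]$ and $s_\ast\ll\tilde L\ll C_0 s_\ast$. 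For each component $\mathcal{C}\in\mathcal{P}_{s_\ast}$, any two of its squares are linked in $\mathcal{G}_{s_\ast}$ by a path of fewer than $K$ edges, whence $\bigcup_{j\in\mathcal{C}}\Lambda_j''$ has diameter at most $K(s_\ast+L_2)\le 2Ks_\ast$ (using $L_2\le L_1\le s_\ast$); choosing $C$, and hence $C_0$, large compared with $K$ makes this at most $\tilde L-\tfrac14 L_1-L_2$, so there is an $\tilde L$-square $\Lambda_{\mathcal{C}}'$ containing the $\tfrac{L_1}{8}$-neighbourhood of $\bigcup_{j\in\mathcal{C}}\Lambda_j''$. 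Since $\mathrm{dist}(x,\Lambda\setminus\Lambda_{\mathcal{C}}')\ge\mathrm{dist}(x,\Z^2\setminus\Lambda_{\mathcal{C}}')$ for every $x$, each $\Lambda_j''$ with $j\in\mathcal{C}$ then satisfies $\Lambda_j''\subset\Lambda_{\mathcal{C}}'$ and $\mathrm{dist}(\Lambda_j'',\Lambda\setminus\Lambda_{\mathcal{C}}')\ge\tfrac{L_1}{8}$. Because distinct components of $\mathcal{P}_{s_\ast}$ are separated by more than $C_0 s_\ast$, which exceeds $3\tilde L$ once $C_0$ is large, the boxes $\Lambda_{\mathcal{C}}'$ are pairwise disjoint; I would list them, and if fewer than $K$ are produced pad the list with arbitrary further disjoint $\tilde L$-squares inside $\Lambda$, obtaining $\Lambda_1',\dots,\Lambda_K'$.

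The step I expect to be the main obstacle is the constant bookkeeping in this pigeonhole argument: one must fix the octave-block length (equivalently the constant $C$ appearing in $C^K$) large enough that simultaneously (i) a cluster of up to $K$ squares of side $L_2$, fattened by $\tfrac{L_1}{8}$, fits inside a single $\tilde L$-box with the required margin, and (ii) the $\tilde L$-boxes around distinct clusters remain disjoint, all while keeping $\tilde L\le\alpha L_1$. This is precisely where the exponential-in-$K$ lower bound on $\alpha$ is consumed, and it is the only place the hypothesis $\alpha\ge C^K$ is used; the remaining inclusions and distance estimates are routine.
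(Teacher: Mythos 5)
The paper does not actually prove this lemma; it quotes it from Ding--Smart (their Lemma 8.1) with the remark that the statement is purely deterministic. Their original argument is an iterative merging procedure: one places candidate boxes greedily, and whenever two of them would overlap one merges them into a single box whose side is larger by a fixed \emph{universal} factor; since each merge strictly decreases the number of boxes, there are at most $K-1$ merges and the final scale is at most $C^{K-1}L_1$ with $C$ universal. Your single-threshold argument --- pigeonholing over octaves to find a multiplicative window $[s_\ast,C_0 s_\ast]$ on which the cluster partition is constant, then boxing each cluster --- is a genuinely different and workable route, and you correctly supply disjointness of the $\Lambda_k'$, which the statement as printed omits but which the application in Lemma \ref{detmsa} requires.

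The one real discrepancy is quantitative, and it sits exactly at the step you flagged. A component of $\mathcal{G}_{s_\ast}$ can be a chain of $K$ squares consecutively within $s_\ast$ of one another, so its diameter genuinely is of order $Ks_\ast$; to enclose it with the required margin and keep the enclosing boxes disjoint you need the window ratio $C_0$ to grow with $K$ (your choice $\tilde L\approx\sqrt{C_0}\,s_\ast$ forces $C_0\gtrsim K^2$; even the optimal choice of $\tilde L$ forces $C_0\gtrsim K$). But if $C$ in the hypothesis $\alpha\ge C^K$ is a universal constant --- which is the intended reading, and what Ding--Smart prove --- then splitting $\log_2\alpha\ge K\log_2 C$ octaves into $K$ blocks yields blocks spanning only the fixed factor $C$, not a factor growing with $K$. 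Your argument therefore really requires $\alpha\ge (cK)^{cK}$, i.e. a constant $C$ depending on $K$, and this loss is intrinsic to the single-threshold approach. It is harmless for the paper's application, where $K=N$ is fixed and the ratio of consecutive scales tends to infinity, but as a proof of the lemma as stated it establishes a strictly weaker result. To recover the stated constant, replace the single threshold by the merging scheme: the scale is multiplied by a universal factor only when two clusters actually coalesce, and coalescences are globally bounded by $K-1$.
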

	This lemma is entirely deterministic, so the original proof holds in our non-stationary context.
\section{Multiscale Analysis}\label{msasec}
We can now carry out a multiscale analysis, which implies \Cref{endofmsa2} and is our version of \cite[Theorem 8.3]{Ding-Smart}:
\begin{thm}\label{msa}
	For $\gamma \in (0,1/2)$, there are:
	\begin{enumerate}[label=(\Roman*)]
		\item \label{smallparam} small parameters $1 > \ve > \nu > \delta > 0$
		\item \label{basescalecount}an integer $\tilde{M} \geq C_{\ve,\delta}$
		\item \label{scales}dyadic scales $L_k$ for $k \geq 0$ satisfying
		\begin{equation*}
			\log_2 L_{k+1}  = \left\lfloor \frac{1}{1-6\ve}\log_2 L_k \right\rfloor
		\end{equation*}
		\item \label{decays}decay rates $m_k$ satisfying  $1 \geq m_k \geq L_k^{-\delta}$
		\item \label{sets} random sets $F_k \subset F_{k+1}$ such that
		\begin{enumerate}[label=(\roman*)]
			\item \label{regularity}$F_k$ is $\eta_k$-regular in $\Lambda$ for $\ell(\Lambda) \geq L_k$, with $\eta_k := \ve^2 + C_{\ve,\delta,M}\sum_{j=0}^k L_{j}^{-\ve} <\ve$
			\item \label{measurable}$F_k \cap \Lambda$ is $V_{F_{k-1}\cap 2\Lambda}$-measurable, i.e. determined totally by that data, for $\ell(\Lambda) \geq L_k$
			\item \label{resbound} if $\ell(\Lambda) = L_k$ and $\mathcal{E}_g(\Lambda)$ denotes the event that, for any $0 \leq E \leq e^{-L_{\tilde{M}}^{\delta}}$ we have:
			\begin{equation}\label{resboundexp}
				|(H_\Lambda-E)^{-1}(x,y)| \leq 2e^{L_k^{1-\ve}-m_k|x-y|}\text{ for } x,y \in \Lambda
			\end{equation}
			then $\mathcal{E}_g(\Lambda)$ holds with high probability; explicitly
			\begin{equation}\label{conditionalgood}
				\P[\P[\mathcal{E}_g(\Lambda)|V_{F_k\cap \Lambda}]=1] \geq 1-L_k^{-\gamma}
			\end{equation}
		\end{enumerate}
		\item \label{decaybound} $m_k \geq m_{k-1} \geq m_{k-1} - L_k^{-\nu}$ for $k >\tilde{M}$
	\end{enumerate}
\end{thm}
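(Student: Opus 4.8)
The plan is to prove Theorem~\ref{msa} by induction on the scale index $k$, carrying out the program of \cite[Sections~6--8]{Ding-Smart} (itself modeled on \cite{Bourgain2005}), with the Wegner input supplied by Lemma~\ref{bigweg}, the initial scale estimate by Lemma~\ref{rnet}, the geometric resolvent step by Lemma~\ref{detmsa}, and the relocation of defects by Lemma~\ref{coveringlemma}. First I would fix the parameters in the order forced by those lemmas: choose $\ve$ small enough that Theorem~\ref{ucucthmfinal}, Lemma~\ref{bigweg}, Lemma~\ref{detmsa} and Lemma~\ref{coveringlemma} all apply with room to spare; then $\nu$ small relative to $\ve$ (but still $\nu > \delta$) so that the decay losses $L_k^{-\nu}$ are summable along the super-exponentially growing scales and compatible with Lemma~\ref{detmsa}; then $\delta$ small relative to $\nu$ so that defect collections of size $L_0^\delta$ are negligible and the window $e^{-L_{\tilde M}^\delta}$ lies inside the range where Lemma~\ref{contresbound} propagates the initial bound. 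The scales $L_k$ are then defined by the stated recursion, and the base index $\tilde M$ is taken large, depending on $\gamma$ and on $M,\rho,\gamma$, for the base case.

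For the base case, at scale $L_{\tilde M}$ I would apply Lemma~\ref{rnet} with $\kappa := \gamma/2$, noting that \eqref{unifnoncon} with $x=0$ gives $\P[V_n \geq \gamma/2] \geq \rho$ for every $n$. Choose $R$ so that $\tilde R := R^2\log R$ is a small power $L_{\tilde M}^\delta$ of $L_{\tilde M}$, with a sufficiently small implied constant; then a routine independence and union bound shows $\{n : V_n \geq \gamma/2\}$ is an $R$-net in any $L_{\tilde M}$-box with probability at least $1 - L_{\tilde M}^{2}(1-\rho)^{cR^{2}}$, which is superpolynomially close to $1$ and in particular $\geq 1 - L_{\tilde M}^{-\gamma}$ once $\tilde M$ is large. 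On that event Lemma~\ref{rnet} gives $|H_\Lambda^{-1}(x,y)| \leq e^{C\tilde R - c\tilde R^{-1}|x-y|}$ with $C\tilde R \ll L_{\tilde M}^{1-\ve}$ and $c\tilde R^{-1} \geq L_{\tilde M}^{-\delta}$ by the choice of $\tilde R$, and Lemma~\ref{contresbound} upgrades this to \eqref{resboundexp} for all $|E| \leq e^{-L_{\tilde M}^\delta}$, with $m_{\tilde M} := c\tilde R^{-1}$. We set $F_k := \varnothing$ for $k \leq \tilde M$, so that \ref{regularity} and \ref{measurable} hold trivially, and $E_0 := e^{-L_{\tilde M}^\delta}$.

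The inductive step passes from $L_k$ to $L_{k+1}$. Tile an $L_{k+1}$-box $\Lambda$ by $L_k$-boxes, calling one \emph{singular} if \eqref{resboundexp} fails there; by the inductive form of \eqref{conditionalgood} and conditional independence of widely separated boxes, any fixed family of $K+1$ pairwise far-apart singular boxes has probability at most $(CL_{k+1}^{2}L_k^{-\gamma})^{K+1}$, so choosing the scale-independent integer $K = K(\gamma,\ve)$ large enough makes the event ``more than $K$ well-separated singular boxes'' have probability $\leq \tfrac12 L_{k+1}^{-\gamma}$. On its complement the singular boxes cluster into at most $K$ groups, which Lemma~\ref{coveringlemma} encloses in $\leq K$ defect squares $\Lambda_j'$ sitting well inside themselves. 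To control the resolvent inside the defects I would invoke Lemma~\ref{bigweg}: its hypotheses are the regularity of $F_k$ (property~\ref{regularity}), the inclusion $\cup_j \Lambda_j' \subset F_{k+1}$ built into the construction of the frozen sets, and the quasi-localization condition~\ref{glocal}, which follows by feeding the good scale-$L_k$ bounds on $\Lambda \setminus \cup_j \Lambda_j'$ into Lemmas~\ref{geomres} and~\ref{detmsa} to see that eigenfunctions with eigenvalue near $\overline E$ are $\ell^2$-concentrated on $\cup_j \Lambda_j'$. Lemma~\ref{bigweg} then bounds, up to probability $\tfrac12 L_{k+1}^{-\gamma}$, the resolvents of all the defect squares by $e^{L_k^{1-\ve}}$ (conditionally on $V_{F_k}$); combining these with the good $L_k$-boxes in Lemma~\ref{detmsa} gives $|R_\Lambda(x,y)| \leq e^{L_{k+1}^{1-\ve} - \tilde m_k|x-y|}$ with $\tilde m_k = m_k - L_k^{-\nu}$, and a final application of Lemma~\ref{contresbound} makes this hold for all $|E| \leq E_0$ with a harmless further loss, so one may take $m_{k+1} \geq m_k - L_{k+1}^{-\nu}$; summability of $L_j^{-\nu}$ together with $\nu > \delta$ then keeps $m_k \geq L_k^{-\delta}$ for all $k$.

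The frozen sets are built scale by scale: for each $L_{k+1}$-box $\Lambda$ put $F_{k+1}\cap\Lambda := (F_k\cap\Lambda)$ together with the defect squares produced above inside $2\Lambda$, a $V_{F_k\cap 2\Lambda}$-measurable choice, giving~\ref{measurable}; since at most $K$ defects, each an $L_k$-box, are added per box, regularity degrades by at most $CL_k^{-\ve}$, so $\eta_{k+1} = \ve^2 + C\sum_{j\leq k+1}L_j^{-\ve} < \ve$ by summability, which is~\ref{regularity}. The main obstacle, exactly as in \cite{Ding-Smart}, is the frozen-site bookkeeping: one must ensure that every probability estimated depends only on un-frozen potential values, that Lemma~\ref{bigweg} is always applied conditionally on $V_{F_k}$ rather than unconditionally, and that the regularity and measurability properties are preserved verbatim at each stage so that the induction closes without circularity in the conditioning. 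The substantive probabilistic and combinatorial ingredients are already isolated in Lemma~\ref{bigweg}, Lemma~\ref{combbound} and Theorem~\ref{decompgoodalt}; threading them through the multiscale induction while keeping the conditioning consistent is the delicate part, but it follows \cite[Section~8]{Ding-Smart} line by line once those ingredients are in place.
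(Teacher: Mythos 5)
Your outline follows the right general template (initial scale via Lemma \ref{rnet}, Wegner input via Lemma \ref{bigweg}, deterministic propagation via Lemma \ref{detmsa}, defect relocation via Lemma \ref{coveringlemma}), but there are two genuine gaps. First, setting $F_k:=\varnothing$ at the base scales is incompatible with the form of (\ref{conditionalgood}), which requires that with probability $\geq 1-L_k^{-\gamma}$ the \emph{conditional} probability $\P[\mathcal{E}_g(\Lambda)\,|\,V_{F_k\cap\Lambda}]$ equal $1$, i.e.\ that the good event be implied by the frozen data alone. With $F_k=\varnothing$ the inner conditional probability is the unconditional probability of $\mathcal{E}_g(\Lambda)$, a deterministic constant strictly below $1$, so the outer probability is $0$ and the base case fails. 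The paper instead takes $F_k$ at the base scales to be a sparse deterministic sublattice ($\lceil 2/\ve^2\rceil\Z^2$, suitably thinned) and verifies the $R$-net hypothesis of Lemma \ref{rnet} using only sites of $F_k$; on that $V_{F_k\cap\Lambda}$-measurable event the resolvent bound holds for \emph{every} configuration of the potential off $F_k$, which is exactly what (\ref{conditionalgood}) demands and what allows later scales to condition on $V_{F_{k-1}}$ without re-sampling.

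Second, and more seriously, your inductive step cannot satisfy the hypotheses of Lemma \ref{bigweg}. That lemma requires a quasi-localization set $G$ with $0<|G|\leq L_0^{\delta}$, which is what makes the almost-orthogonality count via Lemma \ref{taobound} and the $\kappa$-Sperner bound effective. Your defect squares $\Lambda_j'$ have side length of order $L_k^{1-O(\ve)}$, hence area vastly exceeding $L_{k+1}^{\delta}$, so concentrating eigenfunctions on $\cup_j\Lambda_j'$ is not enough. This is precisely why the paper introduces \emph{hereditary} bad subsquares: chains of bad boxes nested through $\tilde M$ consecutive scales, with $\tilde M$ determined by (\ref{emmbound}) so that $L_{k-\tilde M}\approx L_k^{\delta}$. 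The set $G$ is the union of the bottom squares of these chains (see Claim \ref{quasilocal2}), of cardinality $O(L_k^{C\delta})$. Thus $\tilde M$ is not merely a large base index; it is the nesting depth forced by the Wegner lemma's smallness requirement on $G$, and the event actually controlled with probability $1-L_k^{-1}$ is ``fewer than $N$ hereditary bad subsquares,'' bounded by the multi-scale pigeonhole argument around (\ref{hb}) rather than by a single-scale count of well-separated singular boxes. Without this multi-scale structure the Wegner step, and hence the induction, does not close.
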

\begin{proof}
	We let $\ve$, $\delta$, $\nu$, $\tilde{M}$ and $L_k$ such that they satisfy the first three conditions. Further conditions on these will be imposed in the course of the proof. We set for the base case (which actually consists of the first $\tilde{M}$ scales) $F_k$ as follows: We set $F_k = \lceil \frac{2}{\ve^2}\rceil\Z^2$ to start, and throw away those points landing in every other box. We set $m_k = L_k^{-\delta}$ for $k=0,\dots,\tilde{M}$. We will also throughout the proof assume that all the squares we deal with are ``half-aligned.'' Specifically, we work with squares $\Lambda$ of the form
	\begin{equation}\label{align}
		\Lambda = x + [0,2^n)^2 \cap \Z^2\text{ for } x \in 2^{n-1}\Z^2
	\end{equation}
	Our first task is to verify that things work for our base case.
	\begin{claim}
		For $L_0$ taken sufficiently large, specifically $L_0 \geq C(\ve,\delta)$, all six conditions are satisfied for $k=0,\dots,\tilde{M}$.
	\end{claim}
	That \ref{smallparam}-\ref{decays} are satisfied is by assumption or construction, and \ref{decaybound} holds vacuously. All that really needs to be verified for the base case is \ref{sets}. Of course, there are really three conditions; the regularity condition \ref{regularity} is straightforward, $F_k$ are clearly $\ve^2$-regular, and regularity is ``monotone'' in the sense that regularity with respect to a parameter implies regularity with respect to a larger parameter; $\ve^2 \leq \eta_k$. To ensure $\eta_k < \ve$, we take $L_0$ large enough that $\sum_{j=0}^\infty L_j^{-\delta} < \ve - \ve^2$. Similarly, the measurability condition \ref{measurable} is straightforward; these $F_k$ are deterministic and hence automatically measurable.
	
	The condition \ref{resbound} for the scales $k=0,\dots, \tilde{M}$ follows from applying \Cref{rnet}, then \Cref{contresbound}. Specifically, we note that
	\[ |B_{L_k^{\delta}}(x)\cap F_k| \leq c\ve^4L_k^{2\delta/3}\]
	For every $y \in B_{L_k^{\delta}}(x)$, we have $\P[V_y < \gamma]\leq 1-\rho < 1$. By independence,
	\[\P[V_y < \gamma\text{ for all }y \in B_{L_k^{-\delta}}\cap F_k] \leq e^{-c\ve^4L_k^{\delta}}\]
	By a union bound, it follows that
	for a box $\Lambda$ of side length $L_k$:
	\[\P[\{V \geq y\}\cap F_k \text{ is not an } L_k^{2\delta/3}\text{-net in }\Lambda] \leq L_k^2e^{c\ve^4L_k^{\delta}}\]
	In particular, as soon as this bound holds, we have the following by \Cref{rnet}, and importantly this holds regardless of what happens outside $F_k$:
	\begin{align*}
		H_\Lambda^{-1}(x,y) &\leq e^{CL_k^{2\delta/3}\log L_k-cL_k^{-2\delta/3}\log^{-1}L_k|x-y|}\\ 
		&\leq e^{L_k^{1-\ve} - m_k|x-y|}
	\end{align*}
	with the second inequality holding for $L_0$ sufficiently large. Moreover, by \Cref{contresbound}, we can extend this bound to small $E$. Then we have, for $0 \leq E \leq e^{L_k^{-\delta}}$, that (\ref{resboundexp}) holds. We reiterate that this is true independent of what the potential is on the complement of $F_k$ as long as $\{V \geq \gamma\} \cap F_k$ is an $L_k^{2\delta/3}$-net.
	
	Finally, for $L_0$ sufficiently large, we have
	\[ L_k^2e^{-c\ve^4L_k^{2\delta/3}} \leq L_k^{-\gamma}\]
	verifying the last property \ref{resbound} for $k=0,\dots,\tilde{M}$.
	
	Having shown the base case, we now proceed to define the $m_k$ and $F_k$ for $k > \tilde{M}$ and prove some claims which set up the inductive step. It is necessary to show that we can choose $\tilde{M}$ such that
	\begin{equation}\label{emmbound} \frac{\delta}{4} \log_2 L_{k+\tilde{M}} \leq \log_2 L_k \leq \delta \log_2 L_{k+\tilde{M}}
	\end{equation}
	Specifically:
	\begin{claim}
		There are $C$ and $\tilde{M}$ (depending on $\ve$ and $\delta$) such that if $L_0 \geq C$, then (\ref{emmbound}) holds for all $k \geq 0$.
	\end{claim}
	That $\tilde{M}$ does not meaningfully depend on the choice of initial scale is crucial. To show this, we first set for notational simplicity $\kappa = -\frac{1-6\ve}{\log_2 \delta}$. 
	
	Clearly (for $L_0 \geq C_{\ve,\delta}$) we have:
	\[ \frac{1}{1-(6-\kappa)\ve} \log_2 L_k \leq \log_2 L_{k+1} \leq \frac{1}{1-6\ve}\log_2 L_k\]
	Iterating,
	\begin{equation*}\left(\frac{1}{1-(6-\kappa)\ve}\right)^{\tilde{M}} \log_2 L_k\leq \log_2 L_{k+\tilde{M}} \leq \left(\frac{1}{1-6\ve}\right)^{\tilde{M}} \log_2 L_k\end{equation*}
	so that it suffices to find $\tilde{M}$ such that
	\[\log_2 \delta -2 \leq -\tilde{M} \log_2(1-(6-\kappa)\ve) \leq - \tilde{M}\log_2(1-6\ve) \leq \log_2\delta \]
	or, rewriting,
	\[ \frac{-\log_2\delta}{1-6\ve} \leq \tilde{M} \leq \frac{2-\log_2\delta}{1-(6-\kappa)\ve}\]
	By a straightforward computation, the two quantities on either end of the above inequality are separated by a distance of greater than one, and so there is at least one integer between them; we take e.g. $\tilde{M} := \lceil -\frac{\log_2 \delta}{1-6\ve}\rceil$.
	
	Having shown the existence of appropriate $\tilde{M}$, we introduce a notion of ``good'' and ``bad'' boxes; such notions are common to MSA.
	\begin{deffo}
		An $L_k$ square $\Lambda$ is good if $\P[\mathcal{E}_g(\Lambda)|V_{F_k\cap \Lambda}] = 1$. An $L_k$ square $\Lambda$ is bad if it is not good.
	\end{deffo}
	The inequality (\ref{conditionalgood}) then amounts to saying that boxes are good with high probability. Note that $\Lambda$ being good is a $V_{F_k\cap \Lambda}$-measurable event. We also want to consider a notion of nested defects at $\tilde{M}$ different scales.
	\begin{deffo}
		If $\Lambda_{\tilde{M}} \subset \Lambda_{\tilde{M}-1} \subset \cdots \subset \Lambda_1 \subset \Lambda$ are bad squares with $\ell(\Lambda) = L_k$, and $\ell(\Lambda_j) = L_{k-j}$, we call $\Lambda_{\tilde{M}}$ a hereditary bad subsquare of $\Lambda$.
	\end{deffo}
	The count of hereditary bad subsquares of $\Lambda$ is $V_{F_{k-1}\cap \Lambda}$ measurable. Our inductive step will entail, more or less, showing that if the number of hereditary bad subsquares is small with high probability, then by applying \Cref{detmsa}, we will have that $\Lambda$ is good with high probability. Specifically, we will show the following:
	\begin{claim}\label{readyhi}
		For all $k> \tilde{M}$, if $\ve\leq c$ and $N \geq C_{\tilde{M},\gamma,\delta}$, then
		\[ \P[\Lambda \text{ has fewer than }N\text{ hereditary bad subsquares }] \geq 1 - L_k^{-1}\]
		under the assumption that (\ref{conditionalgood}) holds for scales smaller than $L_k$.
	\end{claim}
	
	Recall that we deal with aligned squares, and the count of $L_j$ aligned subsquares of $L_k$ is of order $(L_k/L_j)^2$. We first note that if $\Lambda$ has a lot of hereditary bad squares, then we can locate some subsquare with a lot of (not necessarily hereditary) bad subsquares one scale down. Setting $N = (N')^{\tilde{M}}$ for some $N'$ to be determined, we have in probabilistic terms:
	\begin{align}\label{hb} \P[\Lambda\text{ has}>N\text{ h.b. subsquares}] \leq \sum_{\substack{ \Lambda' \subset \Lambda\\ \ell(\Lambda')=L_j\\k-\tilde{M}<j\leq k}} \P[\Lambda'\text{ has}>N'\text{ bad } L_{j-1}\text{ subsquares}]\end{align}
	Indeed, if $\Lambda$ does have more than $N=(N')^{\tilde{M}}$ hereditary bad subsquares, then we have chains of bad subsequares $\Lambda \supset \Lambda^1_1\supset \dots \Lambda^1_{\tilde{M}}$. In particular, $\cup_{j=1}^{N+1}\Lambda^j_1$ contains $(N+1)M \geq N'^{\tilde{M}}+1$ bad subsquares. (This is just what is given by the listing, and we will only consider the bad subsquares given by said listing.)
	
	There is thus some (minimal) $j'$ such that the count of bad $L_{k-j'}$ in this union exceeds $(N')^{j'}$, whence we conclude the ratio of $L_{j'}$ bad boxes to $L_{j'-1}$ bad boxes listed out exceeds $N'$. From this it follows that there exists $\Lambda'$ an $L_{j'-1}$ box with more than $N'$ bad $L_{j'-1}$ subsquares, showing the bound above.
	
	Having reduced the problem to bounding the right hand side of (\ref{hb}), we recall that we've assumed (\ref{conditionalgood}) to hold for smaller scales, and also that (for $L_0$ sufficiently large) $L_j^{1-C\ve} \geq L_{j-1}$. Using these facts, we get the estimates:
	\begin{align*}
		\text{R.H.S. of }(\ref{hb}) &\leq \sum_{\substack{\Lambda' \subset \Lambda\\ \ell(\Lambda') = L_j\\ k-\tilde{M} < j \leq k}} (L_j/L_{j-1})^CN'(L_{j-1}^{-\gamma})^{cN'}\\
		&\leq \sum_{k-\tilde{M} < j \leq k} (L_k/L_j)^C (L_j/L_{j-1})^{CN'}(L_{j-1}^{-\gamma})^{cN'}\\
		&\leq C\tilde{M}L_k^C L_{k-\tilde{M}}^{(C\ve-c\gamma)N'}
	\end{align*}
	with the last estimate only valid under the assumption $C\ve < c\gamma$. (Importantly, this is the only place $\ve$ small is needed, and so the requisite smallness does not depend on any other parameters; this is important as changing $\ve$ may necessitate a change in $\tilde{M}$.) Finally, by (\ref{emmbound}) we obtain:
	\[\text{R.H.S. of }(\ref{hb}) \leq C\tilde{M}L_k^C L_k^{(C\ve-c\gamma)\delta N'}\]
	Taking $N'$ large enough (and $L_0$ large enough) gives the desired bound, and so the claim is proved.
	\begin{deffo}
		If $k >\tilde{M}$, we call an $L_k$ square $\Lambda$ ready if it has less than $N$ hereditary bad subsquares, where $N$ is the same as the constant appearing in \Cref{readyhi}.
	\end{deffo}
	Then by showing \Cref{readyhi} we've shown that squares $\Lambda$ are ready with high probability. For the rest of the proof, it will be necessary to consider various subsquares of an $L_k$ square $\Lambda$. 

\begin{claim}\label{varsubsqclaim}
    Given any aligned $L_k$ square $\Lambda$, with $k>\tilde{M}$, there is a dyadic scale $L' \in [L_{k-1},CL_{k-1}]$ and lists of subsquares:
    \begin{enumerate}
        \item $\Lambda'''_1,\dots,\Lambda'''_N$ a list of $L_{k-\tilde{M}}$ squares which, if $\Lambda$ is ready, contains every hereditary bad subsquare
        \item $\Lambda''_1,\dots,\Lambda''_N$ a list of $L_{k-1}$ subsquares which, if $\Lambda$ is ready, contains every bad $L_{k-1}$ subsquare
        \item $\Lambda'_1,\dots, \Lambda'_N$ a list of $L'$ subsquares so that all $\Lambda''_j$ are contained inside some $\Lambda'_{j'}$ and moreover $\text{dist}(\Lambda''_j,\Lambda\setminus \Lambda_{j'}) \geq \frac{1}{8}L'$. 
    \end{enumerate}
    Moreover, $L'$ and all the squares $\Lambda'_j, \Lambda''_j, \Lambda'''_J$ can be chosen in a $F_k\cap \Lambda$ measurable way.
\end{claim}
That such $\Lambda'''_j$ exist is an immediate consequence of readiness; that they can be chosen measurably is a consequence of readiness and hereditary badness both being $F_k\cap \Lambda$ measurable properties. Every $L_{k-1}$ bad subsquare contains a hereditarily bad subsquare, so one similarly gets existence (and measurability) of $\Lambda''_j$. (Indeed; since our notion of good boxes includes both a non-resonance condition and decay of off-diagonal elements, absence of hereditary bad squares implies absence of bad $L_{k-1}$ subsquares via the deterministic part of prior MSA variants, see e.g. \cite[Lemma 4.2]{vonDreifus1989}.) Finally, One obtains the $\Lambda'_j$ from the entirely deterministic \Cref{coveringlemma}.

Finally, we define our frozen sites $F_k$ at scale $k$ to be the union of $F_{k-1}$ with the subsquares $\Lambda_j'$ so defined corresponding to any ready $L_k$ subsquares of $\Lambda$. Recall that we are considering only subsquares which satisfy the alignment condition (\ref{align}). Note that readiness of any $L_k$ square, and the $\Lambda_j'$, $\Lambda_j''$ and $\Lambda_j'''$ subsquares corresponding to the ready subsquares are all $V_{F_{k-1}\cap\Lambda}$ measurable, or at least can be so. Indeed, readiness concerns goodness/badness at scales $L_{k-1}$ and below, and by fixing a deterministic scheme to choose the $\Lambda_j'$, $\Lambda_j''$ and $\Lambda_j'''$ among the candidates, one gets $V_{F_{k-1}\cap \Lambda}$ measurability; the condition of being a candidate is clearly $V_{F_{k-1}\cap\Lambda}$ measurable.
	
Having defined all the relevant quantities, it remains to verify the conditions \ref{smallparam}-\ref{decaybound} for our inductively defined objects. Conditions \ref{smallparam} through \ref{decays} and \ref{decaybound} are all true by definition; our work lies in showing \ref{sets}, i.e. the regularity condition \ref{regularity}, the measurability condition \ref{measurable} and the resolvent bound \ref{resbound}. The first two are straightforward; the measurability condition \ref{measurable} especially so. The $L'_k$ corresponding to the bad subsquares of $\Lambda$ are $V_{F_{k-1}\cap\Lambda}$ measurable. However, $F_k\cap \Lambda$ may actually contain such squares from overlapping $L_k$ squares; all such squares are contained in $2\Lambda$, and so $F_k\cap \Lambda$ is $V_{F_{k-1}\cap2\Lambda}$ measurable. (Recall that our alignment condition (\ref{align}) does not preclude overlaps.) Moreover, such $Q$ must not be too large.
	\begin{claim}
		Property \ref{regularity} holds
	\end{claim}
	Note that we can assume $F_{k-1}$ is $\eta_{k-1}$ regular in $\Lambda$ as part of our induction hypothesis. So we are interested in controlling the total area of squares $Q$ in which $F_{k-1}$ is $\eta_{k-1}$ regular but $F_k$ is not $\eta_k$ regular. In particular, we need to show the area one can cover by such squares is less than $(\eta_k -\eta_{k-1})|\Lambda| = L_k^{-\ve}|\Lambda|$.
	
	By monotonicity, these squares intersect $F_k\setminus F_{k-1}$ non-trivially. However, $F_k\setminus F_{k-1}$ consists of $N$ boxes of side length $CL_{k-1}$, with $N$ and $C$ universal. At large enough scales, we have $L_{k-1} \leq L_k^{1-5\ve}$, and so in particular $|F_k \setminus F_{k-1}| \leq L_k^{-2\ve}$. Thus the area of boxes in which $F_k$ is not $\eta_k$ sparse but $F_{k-1}$ was $\eta_{k-1}$ was $\eta_{k-1}$ sparse is bounded by $\frac{1}{\eta_k - \eta_{k-1}}L^{-3\ve} \leq L_k^{-\ve} $.

	All that remains now is the proof of property \ref{resbound}, the sought exponential decay. We have already shown that boxes $\Lambda$ are ready with high probability, and will now show that they are good with high probability if they are ready. Specifically:
	\begin{claim}
		We define, for an $L_k$ square $\Lambda$ a family of events $\mathcal{E}_1(\Lambda),\dots,\mathcal{E}_N(\Lambda)$ in terms of the $\Lambda'''_j$ defined in \Cref{varsubsqclaim}: $\mathcal{E}_j(\Lambda)$ is precisely the event that \[\P[\|(H_{\Lambda_j'}-\overline{E})^{-1}\|\leq e^{L_k^{1-4\ve}}\,|\,V_{F_k\cap 4\Lambda}] = 1\]
	Then $\Lambda$ is good if $\mathcal{E}_1(\Lambda),\dots,\mathcal{E}_N(\Lambda)$ all hold.
	\end{claim}
	The proof of this claim is essentially just the deterministic MSA, \Cref{detmsa}. More precisely, for our six scales we take $L_k \geq L_k^{1-\ve} \geq L_k^{1-2\ve} \geq L_k^{1-3\ve} \geq L_k^{1-4\ve} \geq L_{k-1} \geq L_{k-1}^{1-\ve}$ with small parameters $\ve > \nu$. Our defect squares are the $\Lambda_j'$; by the way things have been set up it follows immediately that
	\[|(H_\Lambda-\overline{E})^{-1}(x,y)|\leq e^{L_k^{1-\ve}-m_k|x-y|}\]
	Combining this with \Cref{contresbound}, one gets that $\Lambda$ is good under the presumption of the events $\mathcal{E}_1(\Lambda),\dots,\mathcal{E}_N(\Lambda)$.
	
	So we would like to estimate the likelihood of the events $\mathcal{E}_j$. For this, we first need to prove a quasi-localization result, which we will ultimately use to get the requisite bounds from our Wegner estimate, \Cref{bigweg}.
	
	Towards this, we fix $\Lambda$ an $L_k$ ready square and $\Lambda_j'$ one of its associated $L_{k-1}$ squares, we define $G_+ = \Lambda_j' \setminus \cup_\ell \Lambda_\ell''$ and $G_- = \Lambda_j' \cap \cup \Lambda_\ell'''$, where in both cases $\ell$ ranges over $1,\dots, N$ and all squares are as laid out in \Cref{squareslist}. $G_+$ is in some sense the ``boundary'' of $\Lambda_j'$ and $G_-$ in some sense the ``deep interior'', and our next claim amounts to saying that very little mass (even in the $\ell^\infty$ sense) lives on the boundary, and the overwhelming majority lives in the ``deep interior'', the hereditary bad squares.
	\begin{claim}\label{quasilocal2}
		Let $G_-$ and $G_+$ as defined above, $\Lambda$ an $L_k$ ready square and $\Lambda_j'$ one of the associated $L'$ squares comprising $F_k$. If $|E-\overline{E}| \leq e^{L_{k-1}^{1-\ve}}$ and $\psi\neq 0$ solves $H_{\Lambda_j'}\psi = E\psi$, then we have the bounds:
		\[ e^{cL_{k-1}^{1-\delta}}\|\psi\|_{\ell^\infty(G_+)} \leq \|\psi\|_{\ell^2(\Lambda_j')} \leq (1+e^{-cL_{k+M}^{1-\delta}})\|\psi\|_{\ell^2(G_-)}\]
	\end{claim}
	If $x \in \Lambda_i' \setminus G_-$, then $x$ is contained in an $L_{k-j}$ good square $\Lambda_\ell$ such that moreover $x$ is ``deep inside'' $\Lambda''$, i.e. $\text{dist}(x,\Lambda_i'\setminus \Lambda'') \leq \frac{L_{k-j}}{8}$. (One can take $j=1$ if $x \in G_+$.)
	
	From the existence of said good box and the continuity of resolvent bounds in energy from \Cref{contresbound}:
	\[|\psi(x)| \leq 2e^{L_{k-j}^{1-\ve}-\frac{1}{8}m_{k-j}L_{k-j}}\|\psi\|_{\ell^2(\Lambda_i')} \leq e^{-cL_{k-j}^{1-\delta}}\|\psi\|_{\ell^2(\Lambda_i')}\]
	for all such $x$. In particular, we get
	\[ \|\psi\|_{\ell^\infty(G_+)} \leq e^{-cL_{k-1}^{1-\delta}}\|\psi\|_{\ell^2(\Lambda_i')}\]
	and
	\[ \|\psi\|_{\ell^\infty(\Lambda_i'\setminus G_-)} \leq e^{-cL_{k-M}^{1-\delta}}\|\psi\|_{\ell^2(\Lambda_i')}\]
	The first part of the sought inequality follows immediately; the second part follows nearly immediately by using the elementary bound
	\[ \|\psi\|_{\ell^2(G)} \leq |G|^{1/2} \|\psi\|_{\ell^\infty(G)}\]
	\begin{claim}
		For $1 \leq j \leq N$, we have
		\[\P[\mathcal{E}_j(\Lambda)] \geq 1-L_k^{C\ve-1}\]
	\end{claim}
	The proof of this fact follows from the Wegner lemma, \Cref{bigweg}. Specifically, we take scales $L' \geq L_k^{1-4\ve} \geq L_k^{1-5\ve} \geq L_{k-1} \geq L_{k-1}^{1-\delta}\geq L_5 = L_{k-1}^{1-\ve}$. We take our same $\ve$ and $\delta$ to be the small parameters. We take $F_{k-1}$ to be our frozen set, and $G:= \cup\{\Lambda_{j'}'''\,:\,\Lambda_{j'}''' \subset \Lambda_j'\}$ be our set of quasi-localization. (Note that this is $G_-$ from above.) $\Lambda_1' \subset F_k$ under the readiness assumption, so that in particular
	\[ \P[\mathcal{E}_j\,|\,\Lambda\text{ is ready}\,|\,V_{F_{k-1}}] \geq 1-(L')^{C\ve-1/2}\]
	which proves the claim when combined with the lower bound $L' \geq cL_k^{1-2\ve}$ and \Cref{readyhi}. Finally:
	\begin{claim}
		Property \ref{resbound} holds.
	\end{claim}
	By a union bound,
	\[\P[\cap_j\mathcal{E}_j] \geq 1-NL_k^{C\ve-1/2}\]
	hence, taking $\ve$ small enough with respect to $\gamma < 1/2$, one gets (\ref{resboundexp}) for sufficiently large scales. \end{proof}
\printbibliography
\end{document}